\newtheorem{Theorem}{Theorem}
\newtheorem{Definition}{Definition}
\newtheorem{Corollary}{Corollary}
\theoremstyle{remark}
\newcommand{\Q}{{\mathbb{Q}}}
\title{Bounds and Bugs: \\ The Limits of Symmetry Metrics to Detect Partisan Gerrymandering}
\author{
Daryl DeFord \\
\small Department of Mathematics and Statistics \\[-.8ex]
\small Vassar College  \\[-.8ex]
\small \tt ddeford@vassar.edu 
\and
Ellen Veomett \\
\small Department of Computer Science \\[-0.8ex]
\small University of San Francisco \\[-0.8ex]
\small \tt eveomett@usfca.edu
}
\begin{document}

\maketitle

\begin{abstract}

We consider two symmetry metrics commonly used to analyze partisan gerrymandering: the Mean-Median Difference (MM) and Partisan Bias (PB). 
Our main results compare, for combinations of seats and votes achievable in districted elections, the number of districts won by each party to the extent of potential deviation from the ideal metric values, taking into account the political geography of the state.   
These comparisons are motivated by examples where the MM and PB have been used in efforts to detect when a districting plan awards extreme number of districts won by some party.  These examples include expert testimony, public-facing apps, recommendations by experts to redistricting commissions, and public policy proposals.

To achieve this goal we perform both theoretical and empirical analyses of the MM and PB.  In our theoretical analysis, we consider vote-share, seat-share pairs $(V, S)$ for which one can construct election data having vote share $V$ and seat share $S$, and turnout is equal in each district.  We calculate the range of values that MM and PB can achieve on that constructed election data.  In the process, we find the range of 
$(V, S)$ pairs 
that achieve $MM=0$, and see that the corresponding range for PB is the same set of $(V,S)$ pairs.  We show how the set of such $(V,S)$ pairs allowing for $MM=0$ (and $PB=0$) changes when turnout in each district is allowed to vary.  
By observing the results of this theoretical analysis, we can show  that the values taken on by these metrics do not necessarily attain more extreme values in plans with more extreme numbers of districts won. We also analyze specific example elections, showing how these metrics can return unintuitive results.
We follow this with an empirical study, where we show  that on 18 different U.S. maps these  metrics can fail to detect extreme seats outcomes. 

\end{abstract}

\section{Introduction}\label{sec:Introduction}

Partisan gerrymandering has recently become a rich topic of study for mathematicians and computer scientists.  The most straightforward and basic tools that have been proposed to detect gerrymandering are metrics that associate a single numerical value with a given combination of districting plan and election data. 
Historically, shape metrics such as the Polsby-Popper,  Reock, and Convex Hull metrics have been used to measure the irregularity of a district's shape (see, for example, \cite{shape_metrics_webpage}).  But recent studies have shown that having reasonable shape metric scores does not preclude a map from being an extreme partisan gerrymander (see, for example, \cite{DukeNC}), and that even calculating shape metric values is subject to manipulation and abuse \cite{PA_Gerrymandering_Compactness}.

Thus, the focus on metrics to detect gerrymandering has shifted towards metrics  which do not depend on district shapes.   There has been a flurry in the past decade to produce such metrics, which include the  Efficiency Gap \cite{PartisanGerrymanderingEfficiencyGap},  Declination \cite{WarringtonDeclinationELJ}, and GEO metric \cite{geo_paper}.  The Mean-Median Difference and Partisan Bias are two of the older (and thus arguably more well-established) metrics\footnote{All of these non-shape metrics just named (the Efficiency Gap, Declination, GEO metric, Partisan Bias, and Mean-Median Difference) are designed to be used for a two-party system.  In our theoretical discussions, we will thus refer to ``party $A$'' and ``party $B$.'' In our empirical results we consider two-party vote shares for  the Democratic party and the Republican party.},
falling under the family of partisan symmetry metrics. 
Symmetry metrics hinge on the idea that gerrymandering happens when the two parties are not treated equally under shifts in the underlying vote data.  For example, suppose a map  gives party $A$ seat share $S$ when the vote share for party $A$ is $V$.  Symmetry metrics have values far from 0 (indicating gerrymandering) when this same map would award party $B$ a seat share very different from $S$ when party $B$'s vote share is $V$\footnote{The sign of the partisan symmetry metric is intended to indicate which party is being advantaged; presumably this sign should indicate that party $A$ is advantaged if party $B$ would win a seat share \emph{less} than $S$ when party $B$'s vote share is $V$, and the sign should indicate that party $B$ is advantaged if party $B$ would win a seat share \emph{more} than $S$ when party $B$'s vote share is $V$.}.  The desire for partisan symmetry was highlighted by Niemi and Deegan in \cite{TheoryRedistricting}.  Grofman, King,  Gelman, and others \cite{ GelmanKingEvaluatingElectoralSystems, GrofmanBias, PBGrofmanKing, KKR, KingBrowningPB} have studied and promoted the Partisan Bias metric; the Mean-Median Difference was explicitly discussed as a metric by McDonald and Best in \cite{MeanMedian} and by Nagle in \cite{Nagle}. See the extensive information presented in \cite{ImplementingPartisanSymmetry} for additional historical and practical discussion of these metrics.

In \cite{ImplementingPartisanSymmetry} these metrics were analyzed from a theoretical perspective in terms of the `gaps' between the values in the sorted vote vector and related to the Partisan Gini score analyzed in \cite{KKR}. The main theoretical result of \cite{ImplementingPartisanSymmetry} focused on characterizing the cases where where the Partisan Gini (and hence MM and PB) metric is zero and then gave a series of paradoxes and empirical examples showing that the results of these metrics can be unintuitive or uncorrelated with seats outcomes. This paper extends the analysis and results of \cite{ImplementingPartisanSymmetry} by considering the behavior of the metrics across their full range of  potential values. 

As evidenced above, the Mean-Median Difference and Partisan Bias have been part of the academic study of gerrymandering for many decades.  They have also been used in on-the-ground expert testimony for many years.  These metrics are embedded in publicly available websites to evaluate maps such as Dave's Redistricting App \cite{DRA} and PlanScore \cite{PlanScore}, and software such as Maptitude for Redistricting \cite{Maptitude}.  They can be calculated using built-in functions in Python libraries that are used by researchers to study gerrymandering such as GerryChain \cite{GerryDetails}.  While some excellent studies have revealed the ``problems and paradoxes'' of these kinds of metrics \cite{ImplementingPartisanSymmetry}, they remain ubiquitous as established metrics to detect gerrymandering.

\subsection{Contributions}

In this article, we describe the bounds of values that these metrics can take on, and the ``bugs'' that these metrics exhibit in detecting partisan gerrymandering as measured by the number of districts in which a party is favored.  
Specifically, the framework we apply is to evaluate a metric's ability to suggest when a districting plan has an extreme number of districts won by a particular party.  While this is often the way the general public understands gerrymandering, and is also the focus of many experts and researchers, we note that there are many other types of gerrymandering, such as incumbency protection or anti-competitive gerrymanders, that are not directly addressed by this analysis. However, we believe that understanding and modeling this relationship is important, and in Section \ref{sec:MotivatingExamples}, we point out many instances in which the Mean-Median Difference and Partisan Bias are used for that purpose, such as expert testimony, public-facing apps, recommendations to independent redistricting commissions, and public policy.

With this framework in place, we present two types of studies; a theoretical analysis of the values that can be taken on by these metrics for a fixed vote-share, seat-share pair $(V, S)$ (the ``bounds'' study), and an empirical study of the actual values that these metrics take on for potential redistricting maps, using a fixed election as partisan data (the ``bugs'' study).  In our bounds study, we  point out the ways in which the values taken on by these metrics run counter to the expectation that maps with more extreme numbers of districts won should \emph{be able to} exhibit more extreme values on these metrics.  And in our bugs study, we show that these metrics \emph{do not} take on more extreme values in maps with more extreme numbers of districts won.

We then recommend that the Mean-Median Difference and Partisan Bias not be used as the sole metric  to detect gerrymandering in settings such as public facing apps, by line drawing commissions, court expert reports, and in public policy discussions.   In particular, we recommend that they not be used as a proxy for ``extreme number of districts won.'' 

We provide the motivation for this analysis and the main assumptions we are operating under  in Section \ref{sec:framework}.  We then give the mathematical definitions of the Mean-Median Difference and Partisan Bias in Section \ref{sec:definitions}, which also summarizes other previously known facts about these metrics.  We give the mathematical results (the ``bounds'' study) in Section \ref{sec:bounds}; the proofs of those results are in Appendix \ref{sec:proofs}.  The empirical results (the ``bugs study'') are in Section \ref{sec:bugs}.

\section{Motivation and Background}\label{sec:framework}

In this section we provide motivation for this work by describing situations in which the partisan symmetry measures are used in practice to evaluate and optimize redistricting plans.  We also point out that the Mean-Median Difference and Partisan Bias are ubiquitously used in an effort to describe situations where a party wins a larger than expected number of districts.
This provides some context for our work comparing the results of partisan symmetry metrics on a fixed districting plan to the number of seats won by a party under that plan.  Specifically, throughout this paper we focus on the ability of a metric to reflect situations where an extreme number of districts is won by a particular party as motivated by the use cases described below. Much of the background material below is motivated by the thorough coverage in \cite{ImplementingPartisanSymmetry}, which discusses many of the same issues.  

\subsection{Counting Districts Won}\label{sec:districts_won}

That the focus of a partisan gerrymander is to create an extreme number of districts won by one party is a point of agreement among quantitative social scientists, political scientists, mathematicians,  and the general public.  In their recent article \cite{KKR}, Katz, King, and Rosenblatt state that ``Partisan gerrymanderers use their knowledge of voter preferences and their ability to draw favorable redistricting plans to maximize their party’s seat share.''  They go on to emphasize that a gerrymanderer cares not about issues such as turnout or electoral responsiveness ``unless it helps them win more seats.''  In his ``Defining a Gerrymander'' section of \cite{doi:10.1089/elj.2017.0459}, Grofman explains that gerrymandering results in one party winning an extreme number of seats compared to the number they might win in ``a plan drawn on the basis of neutral principles.''  In \cite{3PracticalTestsWang}, Wang describes the effects of partisan gerrymandering on the US House in 2012 by the number of seats won by each party, and states that ``In summary, the effects of partisan redistricting exceeded the amount of asymmetry caused by natural patterns of population.''  And in \cite{ImplementingPartisanSymmetry}, DeFord et al.\ state that ``the express intent of a partisan gerrymander is to secure for their party as many
seats as possible under the constraints of voter geography and the other rules of redistricting.''  While several of these authors caveat that what counts as ``extreme'' may depend on a region's political geography\footnote{We acknowledge that a state's political geography impacts what specific numbers may be considered ``extreme number of districts won.''  This is why we conduct our empirical analysis in Section \ref{sec:bugs}; in that section, we produce extreme maps using a state's political geography.}, they all agree that number of districts won is the foundation of a partisan gerrymander.

The general public agrees.  News stories on gerrymandering virtually always cite the extreme number of districts won as evidence of a gerrymander (see \cite{TNgerrymanderNews}, for example).  Of course, some who seek to draw maps to benefit their political party, and some of those who study such maps, may take other factors into consideration such as protecting incumbents or creating partisan firewalls\footnote{A partisan firewall can be created by, for example, ensuring a durable majority in a legislative body without necessarily \emph{maximizing} the number of seats won.}.  We agree that those strategies also disenfranchise voters and that a gerrymandering mapmaker may draw the map also aiming at those goals.  However, incumbency protection and partisan firewalls are not the first things that the general public and and many researchers consider when looking to identify and analyze partisan gerrymandering.  For the purposes of this article, we take the common position of the above scholars (and many others) that, in order to detect partisan gerrymandering, we must, first and foremost, be able to detect an extreme number of districts won.

We also agree with Wang \cite{3PracticalTestsWang} and DeFord \cite{ImplementingPartisanSymmetry} that the political geography of a region being redistricted has significant impact on what number of districts won should be considered ``extreme.''  For example, Duchin et al.\ showed that it was virtually impossible for Massachusetts to have a US House District with a majority of Republican voters, due to the political geography of that state \cite{MA_DemDomination}.  Thus, a result of 9 out of 9 seats for Democrats should  not be flagged as extreme at all for Massachusetts. We take into account each state's political geography in our empirical analyses in Section \ref{sec:bugs}.

For the purposes of our analysis, we need partisan data in order to have an estimate, for a potential redistricting map, as to how many seats each party could win with that map.  To do this, we use past election data as a proxy for partisan preference.  Using a single election outcome gives a clear interpretation of the partisan data, as opposed to an index which combines a variety of different data.   This is why, in our analyses, we refer to the number of districts ``won:''  we report the number of districts that would have been won with a potential map, using partisan data from a past statewide election. In this paper we address the question of when partisan symmetry metrics do or do not effectively represent the number of districts won by a party, and we are particularly interested in the situation where extreme values in the metrics do not correspond to extreme values in the number of seats won and vice versa.

\subsection{Motivating Examples}\label{sec:MotivatingExamples}

A key reason for focusing on MM and PB in this article is that these metrics have found common usage among the community of experts analyzing redistricting plans. These use cases have become quite broad and have begun to be adopted by members of the public, including through  public-facing apps and in commissions. In these settings, much of the nuance of the metrics can be lost, leading to individuals focusing on simply minimizing these values as a goal to reduce vote dilution or to increase fairness. We argue in Sections \ref{sec:bounds} and \ref{sec:bugs} that this approach is unlikely to be generically successful, but in this section we provide some examples of the ways these metrics are being used in practice to support the need for a closer analysis of their consequences.  One  area of concern here is that individuals will attempt to optimize for these metrics and claim that any map reaching the values close to zero must be fair. Given the range of seats outcomes possible with these metric scores, this is a potentially misleading conclusion. 

\subsubsection{Expert Testimony}
Both the Partisan-Bias and Mean-Median difference have become ubiquitous in expert reports analyzing redistricting plans for court cases. This includes experts writing in amicus briefs regarding gerrymandering, including LULAC v. Perry (2006) \cite{Amicus_LulacvPerry}, Whitford v. Gill (2018) \cite{Amicus_WhitfordvGill}, and Rucho v. Common Cause (2019) \cite{Amicus_RuchovCommonCause}, as well as by testifying experts in their reports 
including Session v. Perry (2003) \cite{EW_SessionvPerry}, Perez v Perry (2011) \cite{EW_PerezvPerry}, Favors v. Cuomo (2012) \cite{EW_FavorsvCuomo}, and an appeal of Clarke v. Wisconsin Elections Commission (2023) \cite{ WI_All, EW_ClarkevWisconsin}. In court cases where many parties have proposed maps to be evaluated, it is often the case that the experts representing each party compute and discuss the symmetry measures. As an example, in the Congressional and Legislative redistricting lawsuits in Pennsylvania in 2022, over half a dozen experts used these values in their analyses  \cite{PA2022}. These examples and the judicial opinions relying on them highlight just how ubiquitous these measures have become for evaluating redistricting plans. 

In addition to the amicus briefs and reports listed above, court consultants and special masters have also relied on these metrics to analyze redistricting plans.  A significant example from 2024 includes the analysis of Cervas and Grofman as court appointed consultants in Wisconsin \cite{GrofmanCervas}. In their report they analyze both the MM and PB across a variety of elections to determine whether there is bias in the proposed maps. As an example, after measuring the PB for all of the plans across 13 elections they write: \begin{quotation}
    The deviations from political neutrality in these maps are a significant reduction from the Current plan and are similar to values that other state courts have viewed as acceptable compliance with their state constitution regarding neither favoring nor disfavoring a particular party (though 0 bias is preferable).  
\end{quotation} 

This type of statement directly incentivizes line drawers to attempt to minimize PB, perhaps across a set of available election data, in order to construct maps achieving this `preferable' value. Regardless of what the creators and promoters of the Mean-Median Difference and Partisan Bias metrics intended those metrics to detect and measure, a takeaway from these reports in the court of public opinion is that they detect gerrymandering and by proxy whether an unfair number of districts would be won by one party under the plan under consideration.

\subsubsection{Public-Facing Apps}

Several apps have recently become available that make it easier than ever for the public to engage with the line drawing process, including drawing their own maps and evaluating current and proposed maps in their states. Examples of  a line drawing app include 
Dave's Redistricting App \cite{DRA}, while PlanScore has become a popular tool for scoring  enacted and proposed redistricting plans\cite{PlanScore}. The scores are also available to assess maps in the Maptitude for Redistricting software \cite{Maptitude}.   These apps offer friendly interfaces and results, and each of them reports on the partisan symmetry measures, often describing them as effective ways to detect unfairness or gerrymandering, and complementarily as optimization targets for fair maps. For example, the Dave's Redistricting App blog about Partisan Bias \cite{DRA_PB} writes that ``Because larger values mean more bias, smaller raw values of partisan bias are better.'' This perspective is often adopted by individuals that use these apps, although the app developers themselves note that there are significant complexities in comparing proportionality to partisan symmetry scores \cite{DRA_FAIR}.

Additionally, the Princeton Gerrymandering Project \cite{PGP} and the Harvard-based ALARM Project \cite{ALARM} have recently released ensemble analyses of most states in the country. Most of the ``Redistricting Report Cards'' that can be found on the Princeton Gerrymandering Project's site \cite{PGP} report both the Mean-Median Difference and Partisan Bias, and the ALARM Project reports the Partisan Bias and Efficiency Gap scores for its ensembles. The tooltip on the PGP report card for the Mean-Median Difference states that ``If the difference is large it could indicate a gerrymandered map where voters are crammed in a few districts.'' As with the apps, these comments lead the audience to the conclusion that smaller values lead to better representational outcomes.

\subsubsection{Optimizing for Mean-Median} 

Another formal setting in which the partisan symmetry metrics were analyzed was during the line drawing commissions work in Michigan. In the report, ``A Commissioners Guide to Redistricting in Michigan,'' prepared by researchers at Princeton University \cite{Com_Guide_MI}, a clear directive is given: ``Commissioners should make an effort to minimize the degree of
lopsided wins, mean-median difference, and the efficiency gap while
being consistent with other higher-ranked criteria.'' 
In this case a low MM value was sought as an ideal property of a plan, although the analysis in \cite{ImplementingPartisanSymmetry} shows that this is not an effective approach on its own. The Mean-Median Difference was used in an expert report to the Michigan Independent Citizens Redistricting Commission \cite{CommissionHandley}, and news reports suggest that the Michigan Independent Citizens Redistricting Commission was recommended to use maps having the Mean-Median Difference within symmetric bounds around 0, with the expectation that this would limit partisan gerrymandering \cite{HandleyNews}.

Additionally, partially fueled by the rise of the public-facing apps discussed above, this redistricting cycle saw a large surge in map submissions from the general public. Sometimes in response to map making competitions, these map makers often argued for the adoption of their plans based on scores computed from Dave's Redistricting App. An example of this is the proposals and amicus briefs filed by Petering in the Wisconsin redistricting litigation discussed above. In these briefs \cite{WI_All} Petering argued for the adoption of exactly the metrics reported by Dave's Redistricting App and used those metrics to advance arguments about the maps under consideration. 
 The Wisconsin court case where Petering filed his maps was particularly interesting, as the parties were required to upload copies of their maps to Dave's Redistricting App upon submission at the request of the court appointed consultants quoted above. 

\subsubsection{Public Policy} Recent legislative efforts in the United States, at both the state and federal levels, have attempted to incorporate standards based on partisan symmetry into criteria for evaluating plans and preventing gerrymandering. Consider the following text from H.R. 1 as passed by the House of Representatives in 2021: 

\begin{verbatim}
   [...] a redistricting plan shall be deemed to have the effect of unduly 
                favoring or disfavoring a political party if--
                            (i) modeling based on relevant historical 
                        voting patterns shows that the plan is 
                        statistically likely to result in a partisan 
                        bias of more than one seat in States with 20 or 
                        fewer congressional districts or a partisan 
                        bias of more than 2 seats in States with more 
                        than 20 congressional districts, as determined 
                        using quantitative measures of partisan 
                        fairness, which may include, but are not 
                        limited to, the seats-to-votes curve for an 
                        enacted plan, the efficiency gap, the 
                        declination, partisan asymmetry, and the mean-
                        median difference, [...]
\end{verbatim}

This very clearly situates the Partisan Bias and Mean-Median Difference as metrics for detecting gerrymandering and hence could incentivize line drawers to minimize these quantities in order to pass the relevant test. See further discussion in \cite{Forum} of the relationship between the test proposed in the Freedom to Vote Act and partisan symmetry measures, including in terms of their normative properties.

Legislation addressing partisan fairness measures in redistricting has also been proposed at the state level, particularly in states with a citizen initiative process. There has often been a lack of modeling in advance of the adoption of these metrics, which can lead to unintended consequences when line drawers attempt to optimize for these ostensibly neutral measures \cite{ImplementingPartisanSymmetry, Competitiveness}. These conflicts are often unintuitive to the public and can lead to partisan tension focused on language. The public discussion around the  proportionality requirement and corresponding failure of Ohio's Issue 1 provides an example of this type of conflict. 

\subsubsection{Motivating Examples: Summary}

In each of these domains the MM and PB, among other partisan symmetry metrics, are being used to support the analysis of districting plans and to detect or mitigate gerrymandering. In many of these cases, there are unstated assumptions about the relationship between idealized values of the metrics and the common public understanding  of gerrymandering in terms of the number of seats won by each party. Throughout this paper and particularly in Section \ref{sec:boundsDiscussion} we use these practical examples to motivate our results.

\section{Definitions and other Established Facts}\label{sec:definitions}

The Mean-Median Difference and Partisan Bias metrics are well-known in the community, but for completeness we give their definitions here.  We shall also share other established facts about these metrics, to help the unfamiliar reader to form an intuition and deeper understanding of these metrics.

\begin{Definition}\label{def:MM}
Consider an election with $n$ districts.  Let $V_1 \leq V_2 \leq \cdots \leq V_n$ be the vote shares for party $A$.  Then the \emph{Mean-Median Difference} for this election is
\begin{equation*}
MM = \text{median}\left\{V_1, V_2, \dots, V_n\right\} - \text{mean}\left\{V_1, V_2, \dots, V_n\right\}
\end{equation*}
\end{Definition}

\begin{Definition}\label{def:PB}
Consider an election with $n$ districts.  Let $V_1 \leq V_2 \leq \cdots \leq V_n$ be the vote shares for party $A$, and let $\overline{V} =\text{mean}\left\{V_1, V_2, \dots, V_n\right\}$ .  Then the \emph{Partisan Bias} for this election is
\begin{equation*}
PB = \frac{1}{2} \left(\text{Proportion of } V_i \text{ larger than } \overline{V} - \text{Proportion of } V_i \text{ smaller than } \overline{V}\right)
\end{equation*}
\end{Definition}

In the next couple of sections, (Sections \ref{sec:SVcurve} and \ref{sec:Relationships}) we assume that the \emph{statewide} vote share $V$ is simply the average of the vote shares in each district: $V = \frac{1}{n} \sum_{i=1}^n V_i$.  In particular, note that the creation of the seats-votes curve assumes that the statewide vote share $V$ satisfies $V = \frac{1}{n} \sum_{i=1}^n V_i$.  This is true when turnout in each district is the same (though is not true in general).  We drop this assumption in Section \ref{sec:UnequalTurnout}.  

\subsection{The Seats-Votes Curve}\label{sec:SVcurve}

Both the Mean-Median Difference and the Partisan Bias can be interpreted in terms of the seats-votes curve when drawn under the assumption of \emph{uniform partisan swing}.  It is worth noting that there are many different ways to draw a seats-votes curve; see Section 1.1.1 of \cite{ImplementingPartisanSymmetry} for a nice summary of various methods.  The construction of the seats-votes curve using uniform partisan swing starts with estimates for the party $A$ vote share $V_i$ in each district $i = 1, 2, \dots, n$.  This data corresponds to a statewide vote share $V$ (for now assumed to be the mean of the $V_i$s), and seat share $S$.  One assumes that if the statewide vote share swings towards party $A$, the vote shares $V_i$ swing towards party $A$ in a uniform manner; they all increase by the same vote share or decrease by the same vote share.  This results in a new vote share $V'$, and (depending on whether that vote share swing was sufficient to flip a district) a potentially new seat share $S'$.  All such new $(V', S')$ are drawn in the plane, which results in a curve which is a step function.  Two examples can be seen in Figure \ref{fig:SVcurve} below.  One curve corresponds to the seats-votes curve for a map whose vote shares are  0.2, 0.3, 0.55, 0.6, and 0.65; the other corresponds to real data from the Massachusetts 2011 congressional map.

\begin{figure}[h]
\centering
\includegraphics[width=2.5in]{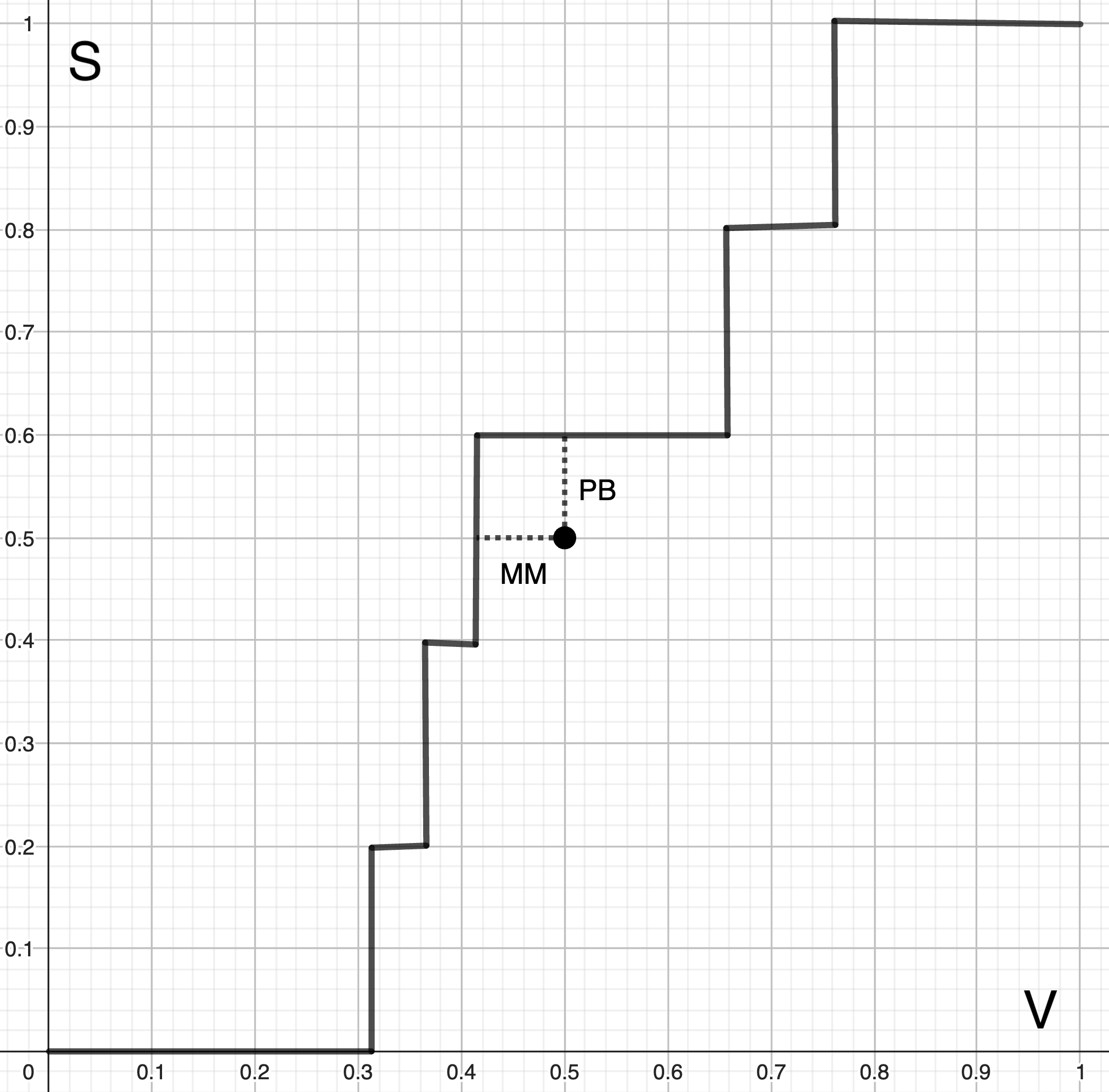} \hspace{0.5 cm}
\includegraphics[width=2.5in]{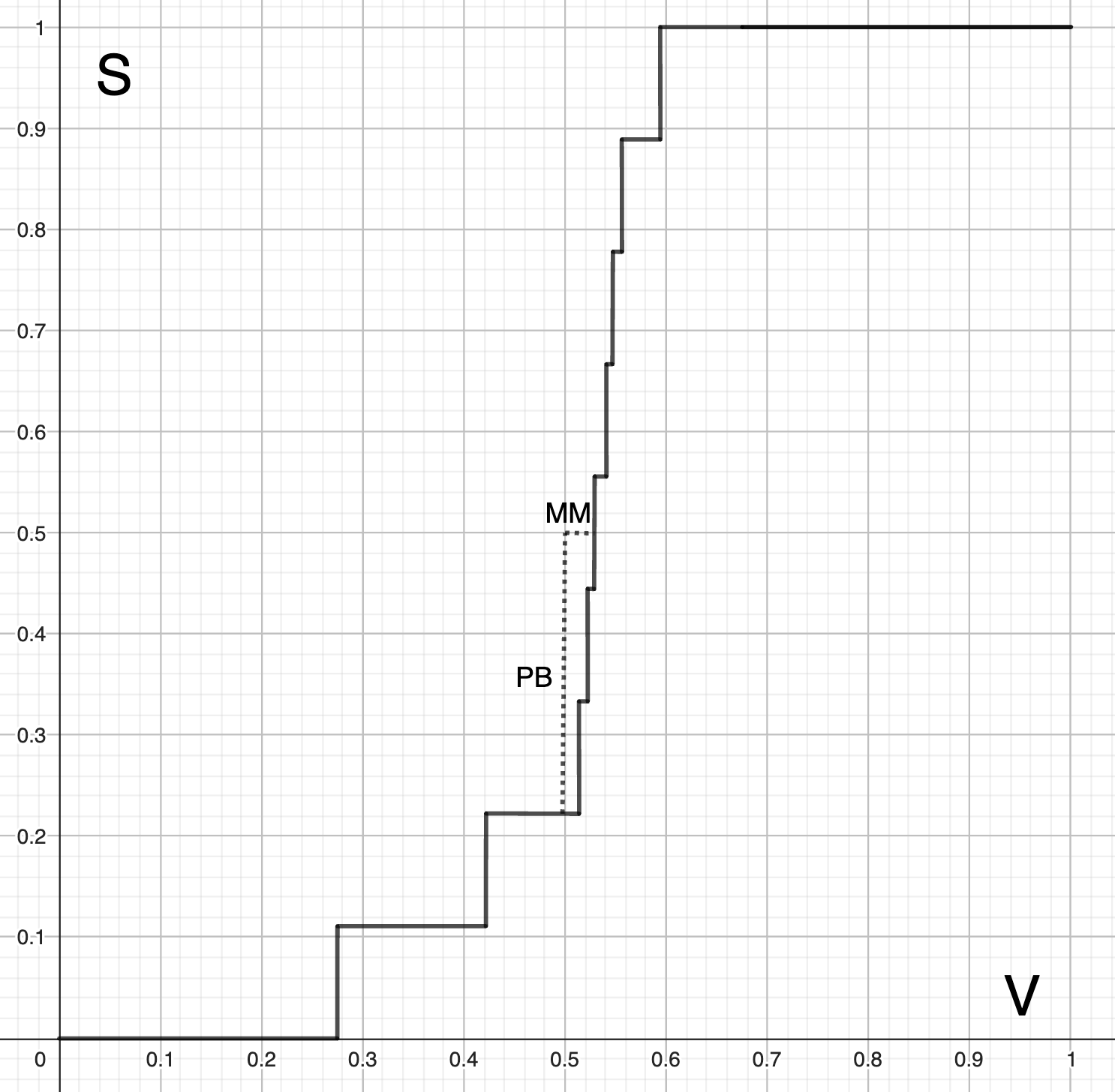}
\caption{On the left: seats-votes curve for an election with districts receiving vote shares of 0.2, 0.3, 0.55, 0.6, and 0.65.  On the right: seats-votes curve for Massachusetts 2011 congressional map and the 2016 presidential election.  Massachusetts data from \cite{MGGGstates}.}
\label{fig:SVcurve}
\end{figure}

It is well-known (see, for example, \cite{ImplementingPartisanSymmetry}) that the Mean-Median difference is the distance between the point (0.5, 0.5) and the intersection point between the seats-votes curve and the line $S = 0.5$.  Indeed, one can see from the figure that the distance between the point (0.5, 0.5) and the intersection of the seats-votes curve with the line $S = 0.5$ is a little under 0.1.  If we calculate its value using Definition \ref{def:MM}, we have $MM = 0.09$.  Because of the way it can be visualized on the seats-votes curve, the Mean-Median Difference is described as the amount the vote share can fall below (or exceed) 50\% while resulting in a seat share of 50\%.

It is also well-known (see again \cite{ImplementingPartisanSymmetry}) that the Partisan Bias is the distance between the point (0.5, 0.5) and the intersection point between the seats-votes curve and the line $V = 0.5$.   One can see from the figure that the distance between the point (0.5, 0.5) and the intersection of the seats-votes curve with the line $V = 0.5$ is approximately 0.1.  If we calculate its value using Definition \ref{def:PB}, we have $PB = 0.1$.  Because of the way it can be visualized on the seats-votes curve, the Partisan Bias is described as the amount the seat share can exceed (or fall below) 50\% when the vote share is 50\%.

The correspondence between the $MM$, $PB$, and the seats-votes curve described above is only true when the curve is drawn under the assumption of uniform partisan swing.  
Of course, uniform partisan swing is by no means guaranteed or even likely to happen in practice (see \cite{PartisanSwingModels} for other models of swing).  Grofman and King \cite{PBGrofmanKing} do not even define the Partisan Bias as we do in Definition \ref{def:PB}, and instead define it using the intersection of the seats-votes curve with the line $V = 0.5$, where the seats-votes curve is constructed not by assuming uniform partisan swing, but by using statistical methods focused on the rank order\footnote{Rank is in terms of partisan support for a particular political party.} of districts.  Grofman and King acknowledge that Partisan Bias is frequently defined as in our Definition \ref{def:PB}, but call the definition of the Partisan Bias which uses the aforementioned statistical technique of defining the seats-votes curve  the ``state of the art'' version of the metric \cite{PBGrofmanKing}.  However, as DeFord et al.\ point out,  complex definitions (such as Grofman and King's ``state of the art'' version of Partisan Bias) must be ``translated to the real-world `political thicket' of redistricting'' \cite{DeFord_Dhamankar_Duchin_Gupta_McPike_Schoenbach_Sim_2023}.  Wang agrees, by stating that a manageable standard to detect partisan gerrymandering ``should be able to be clearly stated without case-specific or mathematics-intensive assumptions'' \cite{3PracticalTestsWang}.

Indeed, our Definitions \ref{def:MM} and \ref{def:PB} are precisely the implementations of Mean-Median Difference and Partisan Bias that are currently being used in the ``real-world political thicket of redistricting.'' Dave's Redistricting App and Maptitude both calculate the Mean-Median Difference and Partisan Bias identically to Definitions \ref{def:MM} and \ref{def:PB}, subject to choice of underlying vote data.  Similarly, the Princeton Gerrymandering Project's Report Cards \cite{PGP} use the same type of calculation and  the Python library GerryChain \cite{GerryDetails}   has built-in methods for calculating the Mean-Median Difference and Partisan Bias, again using formulations identical to our Definitions \ref{def:MM} and \ref{def:PB}.  PlanScore \cite{PlanScore} also reports values of the Mean-Median Difference and Partisan Bias, contributing to the ubiquity of these metrics\footnote{The definitions, vote data, and metric scores on PlanScore's site are less clear about the exact aproach used by Planscore to calculate the MM and PB.}  
These calculations are also ubiquitous in academic research and expert testimony. 
We believe it not only useful, but imperative to study the version of the Partisan Bias (and Mean-Median Difference) that is used in software, websites analyzing maps, research, and expert testimony, which is why we define those metrics as they are being used there.

\subsection{Relationship between Mean-Median Difference and Partisan Bias}\label{sec:Relationships}

The Mean-Median Difference and Partisan Bias metrics cannot disagree in sign: if one is positive, the other is non-negative, and if one is negative the other is not positive.  This follows directly from the ways the MM and PB can be calculated on a seats-votes curve drawn using uniform partisan swing (as described in Section \ref{sec:SVcurve}), and the fact that the seats-votes curve is a non-decreasing curve going through points (0,0) and (1,1).  Note that whenever $MM = 0$, the median of the $V_i$s is equal to the mean.  This implies that the number of $V_i$ above the mean must be equal to the number of $V_i$s below; in other words, it implies that $PB = 0$.  

However, it is possible for $PB = 0$ without $MM = 0$ for a specific election.  For example, if an election had 4 districts with vote shares 0.55, 0.6, 0.7, and 0.9, $\overline{V} = 0.6875$.   Since the number of districts with vote shares larger than $\overline{V}$ is the same as the number of districts with vote shares less than $\overline{V}$, $PB = 0$.  However, the median vote share is 0.65, implying $MM = 0.65-0.6875 \not=0$.  

Nevertheless, as we shall see in Section \ref{sec:bounds}, the vote-share, seat-share pairs $(V, S)$ for which there exists constructed election data with vote share $V$, seat share $S$, and $PB = 0$ is the same as the vote-share seat-share pairs corresponding to constructed election data with $MM=0$.  When one considers the fact that seat shares $S = \frac{k}{n}$ can correspond to maps with an arbitrarily large number of districts (by scaling both the numerator and denominator by some large $M$), this fact is not difficult to believe (at least in the limit).  For example, suppose $V_1, V_2, \dots, V_n$ are district vote shares corresponding  to statewide vote share $V = \frac{1}{n} \sum_{i=1}^n V_i$, seat share $S$, and $PB = 0$.  We can then consider an election with $Mn+2$ districts, having $M$ districts of vote share $V_i$ for $i=1, 2, \dots, n$, and two districts of vote share $V$.  The mean of these districts' vote shares will still be $V$, and the seat share will be very close to $S$ (the larger $M$ is, the closer to $S$ the seat share will be).  And because the original districting map had an equal number of districts with vote share above $V$ as it had below, the median of this new election will be $V$, implying that $MM=0$ for this new election.  

The supremal difference between $PB$ and $MM$ is 0.5.  This can be argued algebraically, but it can also be seen by considering how the $MM$ and $PB$ correspond to points on the seats-votes curve.  For example, if $MM<0$, then the supremal value of $MM-PB$ occurs when $PB$ is near -0.5 and $MM$ is near 0.   An example of such a seats-votes curve can be seen in  Figure \ref{fig:MMPB_max_difference}.

\begin{figure}[h]
\centering
\includegraphics[width=3in]{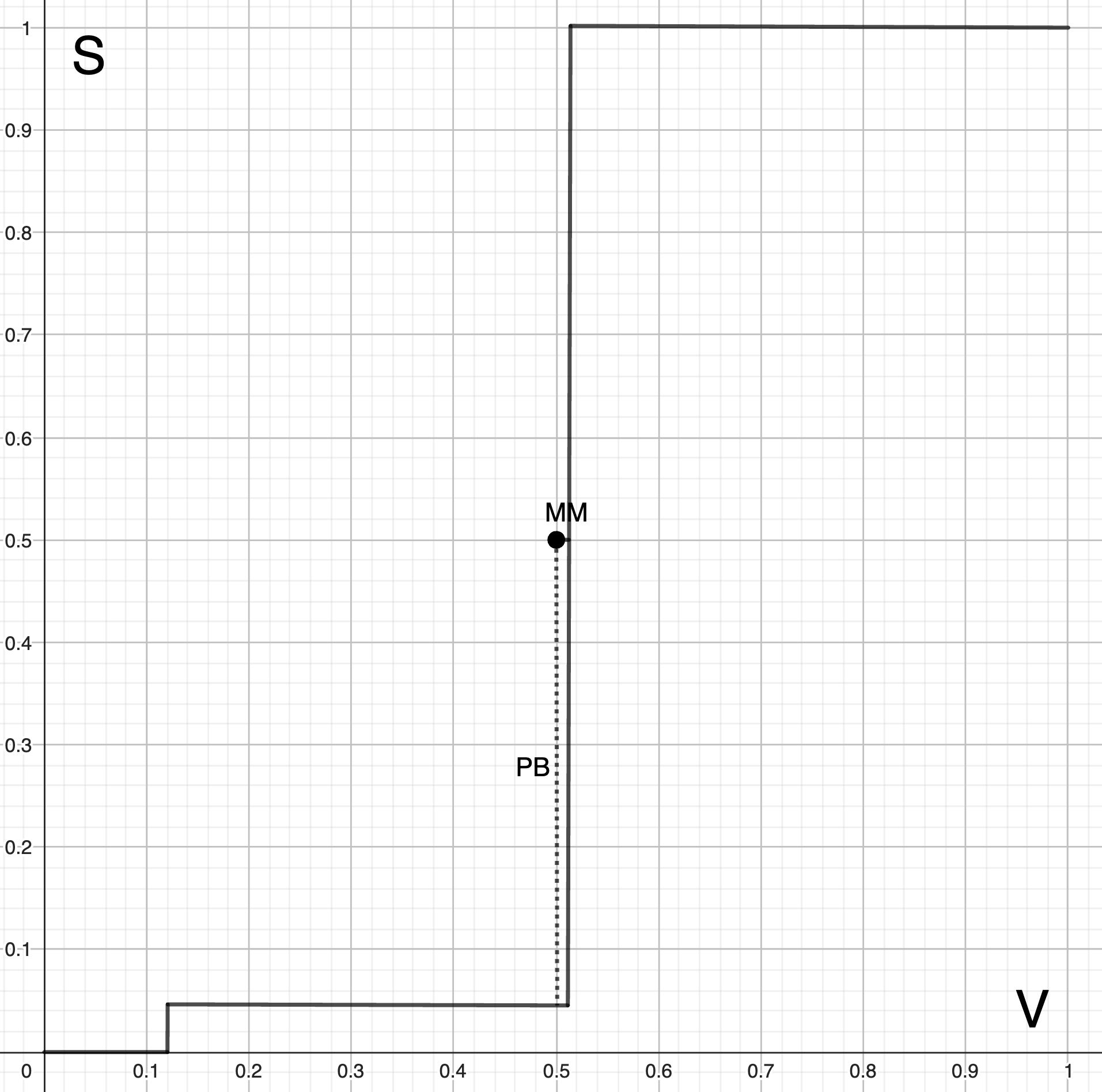}
\caption{A Seats-Votes curve whose value of $MM-PB$ is near the maximal value of 0.5.  This curve is the result of an election with 19 districts having vote share 0.49 and one district having vote share 0.89.}
\label{fig:MMPB_max_difference}
\end{figure}

\subsection{Unequal Turnout}\label{sec:UnequalTurnout}

Of course, no state has identical turnout in every district, so if the statewide vote share is $V$ and the district vote shares are $V_1, V_2, \dots, V_n$,  then in general, $V \not= \frac{1}{n}\sum_{i=1}^n V_i$.  Indeed, $V = \sum_{i=1}^n \alpha_i V_i$ where $\sum_{i=1}^n \alpha_i = 1$, and the non-negative weights $\alpha_i$ are given by turnout:
\begin{equation*}
\alpha_i = \frac{\text{turnout in district } i}{\text{turnout throughout the state}}
\end{equation*}

As we shall see in Section \ref{sec:bounds}, if we consider election data corresponding to uneven turnout between districts, we can get a wider array of $(V,S)$ pairs giving $MM=0$ and $PB=0$.  The impact of turnout differentials on measures of partisan fairness can have significant impacts on their interpretation and robustness, as discussed for the Efficiency Gap in \cite{2018arXiv180105301V}.

\section{Bounds}\label{sec:bounds}

Here, we give our mathematical results showing which values the Mean-Median Difference and Partisan Bias can take on for a fixed vote-share, seat-share pair $(V, S)$.   As we shall see, a fixed $(V, S)$ pair can correspond to an array of different data, which give vastly different values for a single metric (either the Mean-Median Difference or Partisan Bias).  Thus, our results will show that a map could have a fixed vote share, a large seat share, and the metric could indicate that the party with large seat share is gerrymandered \emph{against}.  And the same map could have the same fixed vote share, low seat share, and the metric could indicate that the map is gerrymandered in favor of the party with low seat share.  In other words, these results will indicate, at least on constructed data, that these metrics do not indicate when a party wins an extreme number of districts, as suggested by the use cases discussed in Section \ref{sec:MotivatingExamples}.

Of course, real maps and real data may be different from the space of all possible constructed data.  This is why we have the empirical arguments in forthcoming Section \ref{sec:bugs}.  But for now, we begin the setup of our theory argument.  The rigorous mathematical statements of our results are in Sections \ref{sec:boundsPB} and \ref{sec:boundsMM}.  We present a discussion of the implications of these results in Section \ref{sec:boundsDiscussion}.

We first note that, if an election has vote-share $V$ and seat-share $S$, and if we assume turnout in all districts is equal (as we do in Sections \ref{sec:boundsPB}, \ref{sec:boundsMM}, and \ref{sec:boundsDiscussion})  then we must have $0 \leq V \leq 1, 0 \leq S \leq 1$, $S \leq 2V$, and $S\geq2V-1$ (see, for example, \cite{TappPreprint}).  

Please note that Theorems \ref{thm:PB} and \ref{thm:MM} have no restrictions on the number of districts.  The corresponding Theorems for a fixed number of districts are Theorems \ref{thm:PBfixed_districts} and \ref{thm:MMfixed_districts}.

\subsection{Bounds: Partisan Bias}\label{sec:boundsPB}

\begin{Theorem}\label{thm:PB}
Suppose $(V, S)$ is a pair of rational numbers with $\frac{1}{4} \leq V \leq1$, $ \frac{1}{2} \leq S \leq 1$, $S \leq 2V$, and $S\geq2V-1$.  

If $S = 1$ and either $V = \frac{1}{2}$ or $V = 1$, then we must have $PB = 0$.

Otherwise, there exists constructed election data with vote share $V$, seat share $S$, turnout is equal in all districts, and Partisan Bias $PB$ for any $PB$ in the following ranges:
\begin{align*}
S-\frac{1}{2}  &\leq PB < \frac{1}{2}-S\left(\frac{1}{2V}-1\right) &  \text{ if }  \quad \frac{1}{4} \leq V < \frac{1}{2} \\
\frac{S-1}{2} &< PB < \frac{S}{2} & \text{ if } V = \frac{1}{2} \\
\frac{(1-S)\left(V-\frac{1}{2}\right)}{1-V} - \frac{1}{2}  &<  PB \leq S-\frac{1}{2} &  \text{ if } \quad \frac{1}{2} < V < 1\\
\end{align*}

\end{Theorem}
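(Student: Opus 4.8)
The plan is to prove the statement by direct construction. Fix an admissible pair $(V,S)$ and a target value $r$ strictly inside the relevant interval; I will exhibit a multiset of district vote shares $V_1\le\cdots\le V_n$ with equal turnout, mean $V$, exactly a proportion $S$ of the $V_i$ exceeding $\tfrac12$, and $\tfrac12\bigl(\#\{i:V_i>V\}-\#\{i:V_i<V\}\bigr)/n=r$. Two reductions make this manageable. First, since $V$, $S$, $r$ are rational and the theorem imposes no bound on $n$, it suffices to produce a finitely supported probability distribution on $[0,1]\cap\Q$ — finitely many ``levels'' $x_1<x_2<\cdots$ with nonnegative rational weights $w_j$ summing to $1$ — satisfying $\sum_j w_jx_j=V$ (mean), $\sum_{x_j>1/2}w_j=S$ (seat share), and $\sum_{x_j>V}w_j-\sum_{x_j<V}w_j=2r$ (partisan bias); clearing denominators turns the $w_j$ into integer district counts. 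Second, writing $A_{>},A_{=},A_{<}$ for the total weight strictly above, exactly at, and strictly below $V$, one always has $PB=A_{>}-\tfrac12+\tfrac12A_{=}$, so maximizing or minimizing $PB$ is the same as maximizing or minimizing $A_{>}+\tfrac12A_{=}$ under the mean and seat-share constraints. The three cases $V=\tfrac12$, $\tfrac14\le V<\tfrac12$, $\tfrac12<V\le1$ are exactly the three orderings of the thresholds $\tfrac12$ and $V$, which is why the interval changes form.

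The building blocks are a ``high'' block of districts with $A$-vote-share near $1$ (won, above the mean), a ``low'' block near $0$ (not won, below the mean), a ``cracked'' block placed just on one side of $\tfrac12$ (for $V>\tfrac12$: won but below the mean; for $V<\tfrac12$: not won but above the mean), and a ``neutral'' block placed exactly at $V$ — or, when $V=\tfrac12$, exactly at $\bar V=\tfrac12$ — which counts toward neither the above-mean nor the below-mean tally. For each target $r$ one selects three or four such blocks; the four requirements (weights sum to $1$; mean $=V$; seat share $=S$; $PB=r$) then form a small linear system with a zero- or one-dimensional solution set, which one solves for the weights in terms of the free level(s), reading off the attainable $r$ by imposing $w_j\ge0$ and the level constraints ($0\le x_j\le1$ and $x_j$ on the correct side of $\tfrac12$ and of $V$). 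For $\tfrac12<V\le1$, for instance, one family uses the high, cracked, and low blocks to push $PB$ downward (the won cracked districts drag it below the mean), and a complementary family makes every won district lie above the mean to push $PB$ up toward $S-\tfrac12$; the cases $\tfrac14\le V<\tfrac12$ and $V=\tfrac12$ are analogous, the latter being the delicate one because the neutral block sits exactly at $\bar V$ and whether those districts count as won — per whatever convention is used for ties in the seat count — is what permits $PB<S-\tfrac12$.

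The interval endpoints are the values of $PB=A_{>}-\tfrac12+\tfrac12A_{=}$ forced in the degenerate limits of these families. For $\tfrac12<V\le1$: sending the low block's vote shares up to $\tfrac12^{-}$, the cracked block's up to $V^{-}$, and the high block's up to $1$ while $A_{=}\to0$, the mean equation pins the high block's weight at $\tfrac{(1-S)(V-1/2)}{1-V}$, yielding $PB\to\tfrac{(1-S)(V-1/2)}{1-V}-\tfrac12$; pushing the other way, the mean and seat-share constraints force $A_{>}+\tfrac12A_{=}\le S$ with equality only in the limit, so $PB\to S-\tfrac12$. For $\tfrac14\le V<\tfrac12$: with the won mass pressed against $\tfrac12^{+}$, an extra not-won block against $V^{+}$, and the rest against $0$, the mean equation pins that middle block's weight at $1-\tfrac{S}{2V}$ and gives $PB\to\tfrac12-S\!\left(\tfrac1{2V}-1\right)$, while $A_{>}\ge S$ gives the other end $S-\tfrac12$. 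For $V=\tfrac12$ the ends $\tfrac{S-1}{2}$ and $\tfrac S2$ arise as the neutral-at-$\tfrac12$ block is, respectively, enlarged as far as the seat count permits or removed. Each such extreme forces a level onto a threshold it is supposed to avoid, or a block to carry fixed positive weight on a vanishing share of districts, so legitimate elections only approach these values — hence the open intervals and strict inequalities.

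The main obstacle is the bookkeeping. One must check that the finitely many sub-constructions needed within each $V$-regime — and, within a regime, for $S$ near a boundary of the admissible set ($2V-1$ or $2V$) versus $S$ near $1$ — together cover the whole open interval with no gaps and without ever overshooting the claimed endpoints. This reduces to solving and comparing several small linear systems, each routine but numerous enough that organizing the case split cleanly (keyed to the position of each level relative to $\tfrac12$ and $V$, and to the sign of $V-\tfrac12$) is the real work. Two recurring subtleties demand care throughout: districts whose vote share equals $\tfrac12$ or equals $V$ must be counted consistently with the convention used for the seat share, which is what makes the $V=\tfrac12$ case special; and the passage from real-valued levels back to rational vote shares — after any infinitesimal perturbation introduced to keep the relevant inequalities strict — must not disturb the three exact equalities, which is why the constructions carry a small rational parameter rather than being stated in the literal limit.
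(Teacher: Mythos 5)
Your plan is essentially the paper's own proof: the same case split on the position of $V$ relative to $\tfrac12$, the same building blocks (mass at $0$ or $1$, mass at $\tfrac12$ counted as winning or losing as needed, mass at $V$ perturbed just above or just below), and the same endpoint computations — your mean-pinning values $\frac{(1-S)(V-\frac12)}{1-V}$ and $1-\frac{S}{2V}$ are exactly the proportions $p$ the paper solves for in its Cases 3 and 2, and your identity $PB=A_{>}-\tfrac12+\tfrac12 A_{=}$ is a clean way to package its tallies. One concrete slip, in the case you yourself flag as delicate: for $V=\tfrac12$ the upper endpoint $\tfrac S2$ does \emph{not} arise by removing the neutral block — with no districts at the mean, $PB=A_{>}-\tfrac12\le S-\tfrac12<\tfrac S2$ for $S<1$ — but rather by enlarging it on the \emph{losing} side (the paper puts the winning districts at $\tfrac12+\epsilon$, nearly all losing districts exactly at $\tfrac12$ so they drop out of both tallies, and one losing district slightly below to fix the mean), while the lower endpoint $\tfrac{S-1}{2}$ comes from the mirror construction with the neutral mass on the winning side; your own identity makes this correction immediate, so the plan goes through once that sentence is repaired.
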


Note that if party $A$'s seat share is $S$, then  $1-S$ is party $B$'s seat share, and that party $A$'s PB value  is the negation of party $B$'s PB value.  Thus, Theorem \ref{thm:PB} gives all possible range of values for PB for any possible $(V, S)$.    These minimum and maximum values can be visualized in Figure \ref{fig:PB_min_max}.   Theorem \ref{thm:PB} is proved in Appendix \ref{sec:proofs}.

\begin{figure}[h]
\centering
\includegraphics[width = 3in]{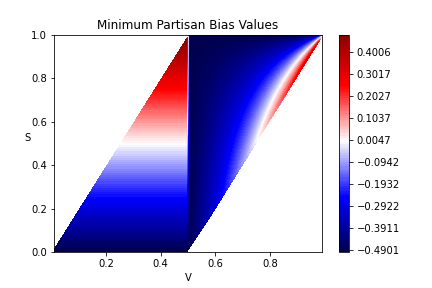}
\includegraphics[width = 3in]{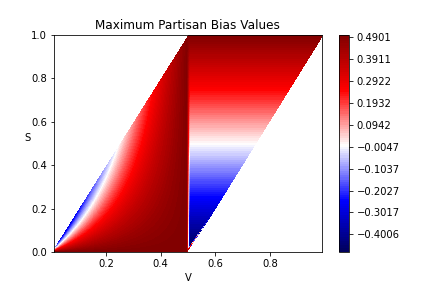}
\caption{Range of values for Partisan Bias (using constructed election data) for each vote-share, seat-share pair $(V, S)$, when turnout in each district is equal.  The hue at $(V,S)$ in the image on the left corresponds to the minimum possible value of PB.  The hue at $(V, S)$ in the image on the right corresponds to the maximum possible value of PB.  All values in between are also achievable on some constructed election data.}
\label{fig:PB_min_max}
\end{figure}

In Figure \ref{fig:PB_min_max}, we can visually see a discontinuity at $V = \frac{1}{2}$.  We can understand this intuitively, by using the following general argument: fix $V \not= \frac{1}{2}$ and $S \geq \frac{1}{2}$.  Then we can construct election data with the correct number of winning and losing districts such that the number of districts whose vote share is equal to $V$ is maximized.  If we barely raise the vote shares of nearly all of the districts whose vote share is $V$ (and lower a single seat share), this maximizes PB, and keeps the correct vote share and seat share.  If we barely lower the vote shares of nearly all of the districts whose vote share is $V$ (and raise a single seat share), this minimizes PB and keeps the correct vote share and seat share.  

However, this argument breaks down when $V = \frac{1}{2}$ because barely raising or lowering the vote share can change it from winning to losing or vice versa.  This results in  the discontinuity at that line.

Theorem \ref{thm:PB} gives bounds on the value of the Partisan Bias when the number of districts is unlimited.  We have the following bounds for a fixed number of districts:

\begin{Theorem}\label{thm:PBfixed_districts}
Suppose $(V, S)$ is a pair of rational numbers with $\frac{1}{4} \leq V \leq 1$, $ \frac{1}{2} \leq S \leq 1$, $S \leq 2V$, and $S\geq 2V-1$.  Suppose that $n$ is the number of districts; that is, $S$ can be written as a rational number with denominator $n$.  

If $S = 1$ and either $V = \frac{1}{2}$ or $V = 1$, then we must have $PB = 0$.

Otherwise, there exists constructed election data with vote share $V$, seat share $S$, $n$ districts, turnout is equal in all districts, and Partisan Bias $PB$ for any $PB$ in the following ranges:
\begin{align*}
S-\frac{1}{2}  &\leq PB \leq \frac{1}{2}\left(2S-1 +\frac{2\left\lceil n \left(1-\frac{S}{2V}\right)-1\right\rceil}{n}\right)&  \text{ if }  \quad \frac{1}{4} \leq V < \frac{1}{2} \\
\frac{1}{2}\left(S-1+\frac{1}{n}\right)&\leq PB \leq  \frac{1}{2} \left(S-\frac{1}{n}\right) & \text{ if } V = \frac{1}{2} \\
\frac{1}{2}\left(2S-1-\frac{2 \left\lceil n \left(1-\frac{1-S}{2(1-V)}\right)-1\right\rceil}{n}\right)  &\leq  PB \leq S-\frac{1}{2} &  \text{ if } \quad \frac{1}{2} < V < 1, S \not= 1\\
\frac{1}{n}-\frac{1}{2} &\leq PB \leq \frac{1}{2}-\frac{1}{n}&  \text{ if } \quad \frac{1}{2} < V < 1, S = 1\\
\end{align*}
\end{Theorem}

Again, by the symmetry of the parties in their corresponding PB values, Theorem \ref{thm:PBfixed_districts} gives all possible range of values for PB for any possible $(V, S)$.  Of course, if we take the limit as $n \to \infty$ in Theorem \ref{thm:PBfixed_districts}, we obtain the result in Theorem \ref{thm:PB}.

\subsection{Bounds: Mean-Median Difference}\label{sec:boundsMM}

We also have corresponding Theorems for the Mean-Median Difference:

\begin{Theorem}\label{thm:MM}
Suppose $(V, S)$ is a pair of rational numbers with $0 \leq V \leq 1$, $ \frac{1}{2} \leq S \leq 1$, $S \leq 2V$, and $S\geq2V-1$.  Then there exists constructed election data with vote share $V$, seat share $S$, turnout is equal in all districts, and Mean-Median Difference $MM$ for any $MM$ in the following ranges:
\begin{align*}
\frac{1}{4}-V &\leq MM < \frac{1}{4}  & \text{ if } S = \frac{1}{2}, \quad  \frac{1}{4} \leq V < \frac{1}{2} \\
-\frac{1}{4} &< MM \leq \frac{3}{4}-V  & \text{ if } S = \frac{1}{2}, \quad  \frac{1}{2} \leq V \leq \frac{3}{4} \\
\frac{1}{2}-V  &\leq MM <  \min\left\{1-V, V-S+\frac{1}{2}\right\} &  \text{ if } S > \frac{1}{2}, \quad \frac{1}{4} \leq V < \frac{3}{4} \\
\frac{2V+S-2}{2S-1}-V  & < MM \leq 1-V&  \text{ if } S > \frac{1}{2}, \quad \frac{3}{4} \leq V \leq 1
\end{align*}

\end{Theorem}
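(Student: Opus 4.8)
The plan is to reduce the whole statement to a question about the \emph{median} of the district vote shares. Since turnout is equal, $\overline V=\frac1n\sum_i V_i=V$, so $MM=\operatorname{median}\{V_i\}-V$; hence it suffices to determine, for each feasible $(V,S)$, the exact set of values the median can take over all constructed election data with statewide vote share $V$ and seat share $S$. Throughout I would exploit the freedom to take the number of districts $n$ as large as convenient: replacing a configuration by $M$ copies of each district multiplies $n$ and $k$ by $M$ without changing $V$, $S$, or the median, which both disposes of the rationality of $S=k/n$ (and, with the common turnout chosen divisible enough, of $V$) and lets the extremal configurations be reached only in a limit — this is the ultimate source of the strict inequalities. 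I would handle $S>\tfrac12$ and $S=\tfrac12$ separately, because when $S=\tfrac12$ the two central order statistics straddle the $\tfrac12$ win/loss threshold, whereas for $S>\tfrac12$ the median itself sits in a won district; these behave differently and produce the different pairs of bounds.

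For $S>\tfrac12$ take $n$ large with $k=Sn$, and write $\mu$ for the value of the central order statistic (the common value of the two central ones if $n$ must be even), so $\tfrac12<\mu\le 1$. The key step is: \emph{fix} $\mu$ and compute the range of the statewide sum $\sum_i V_i$ over all valid sorted configurations whose median is $\mu$ and which have exactly $k$ won districts. Because the losing districts are precisely the bottom $n-k$ order statistics (anything $\ge\mu>\tfrac12$ is a win), this is a small linear program with a bang--bang optimum: the minimum sum puts the below-median losers at $0$, the below-median winners just above $\tfrac12$, and all above-median districts at $\mu$; the maximum sum puts the below-median losers at $\tfrac12$, the below-median winners at $\mu$, and all above-median districts at $1$. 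Both bounds are affine and increasing in $\mu$, so constructed data with vote share $V$, seat share $S$, and median $\mu$ exists precisely when $nV$ lies between them; the set of attainable medians is therefore an interval, whose endpoints are obtained by setting $nV$ equal to the max-sum bound (smallest $\mu$) and to the min-sum bound (largest $\mu$), then intersecting with $(\tfrac12,1]$. Letting $n\to\infty$ with $k/n\to S$, the smallest median tends to $\frac{2V+S-2}{2S-1}$, clamped below by $\tfrac12$, and the largest tends to $2V-S+\tfrac12$, clamped above by $1$; subtracting $V$ yields exactly the two lower rows of the theorem, with the case split at $V=\tfrac34$ arising from when the clamp $\mu\ge\tfrac12$ is active and the $\min\{1-V,\,V-S+\tfrac12\}$ arising from the clamp $\mu\le 1$.

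For $S=\tfrac12$ take $n$ even and $k=n/2$; now the median is $\tfrac12(a+b)$ with $a=V_{n/2}\le\tfrac12$ a losing district and $b=V_{n/2+1}>\tfrac12$ a winning district. I would run the same scheme: for fixed $(a,b)$ identify the interval of values of $\sum_i V_i$ (again bang--bang in the remaining districts, now with $a$ as the ceiling on the bottom half and $b$ as the floor on the top half), impose $\sum_i V_i=nV$, and maximize/minimize $\tfrac12(a+b)$ over feasible $(a,b,n)$ as $n\to\infty$. Here the binding constraints are $a\le\tfrac12$, $b\le 1$, and the facts that the top half cannot average below $b$ nor the bottom half above $a$; tracking which of these binds in which range of $V$ produces the split at $V=\tfrac12$ and the two pairs of bounds $(\tfrac14-V,\ \tfrac14)$ for $\tfrac14\le V<\tfrac12$ and $(-\tfrac14,\ \tfrac34-V)$ for $\tfrac12\le V\le\tfrac34$.

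I expect the main obstacle to be the combinatorial bookkeeping rather than any single hard idea: correctly counting, once the median is fixed, how many below-median districts must be winners versus losers; verifying that every sum strictly between the min-sum and max-sum configurations is actually realized (an intermediate-value/interpolation argument inside the polytope of valid configurations); and keeping the divisibility of $n$ (and of the turnout) consistent, which forces parity in some cases. A secondary subtlety is pinning down which interval endpoints are only approached: the extremal configurations require district shares equal to $0$, $\tfrac12$, or $1$ (or winners at exactly $\tfrac12$), so under the convention that these are reached only in the limit every endpoint is excluded, matching the strict inequalities — and the degenerate boundary cases $S=2V-1$ and $V=\tfrac34$, where the claimed open interval collapses to a point, would need a separate remark.
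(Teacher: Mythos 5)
Your proposal is correct and follows essentially the same route as the paper: reduce $MM$ to the range of the median (since equal turnout gives mean $=V$), split into $S=\tfrac12$ versus $S>\tfrac12$, and identify the extremal configurations with district shares at $0$, $\tfrac12$, an intermediate value, and $1$, taking $n\to\infty$ limits to get the strict bounds — your bang–bang LP optima are exactly the paper's explicit constructions, and your clamped limits $\frac{2V+S-2}{2S-1}$ and $2V-S+\tfrac12$ reproduce its $V_{min}$ and $V_{max}$. The LP framing and the explicit interpolation/intermediate-value remark are a slightly more systematic packaging of the same argument, not a different proof.
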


Again, note that if party $A$'s seat share is $S$, then  $1-S$ is party $B$'s seat share, and that party $A$'s MM value is the negation of party $B$'s MM value  so that Theorem \ref{thm:MM} gives all possible range of values for MM for any possible $(V, S)$.   

These minimum and maximum values can be visualized in Figure \ref{fig:MM_min_max}.  Theorem \ref{thm:MM} is also proved in Appendix \ref{sec:proofs}.

\begin{figure}[h]
\centering
\includegraphics[width = 3in]{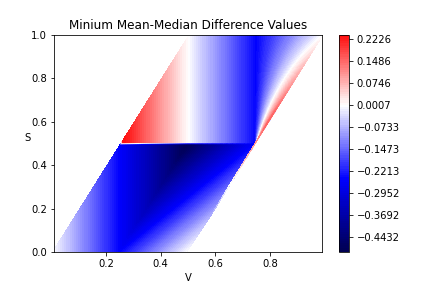}
\includegraphics[width = 3in]{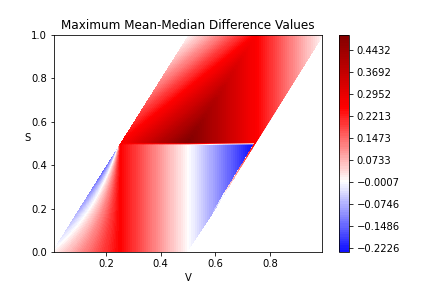}
\caption{Range of values for Mean-Median Difference (using constructed election data) for each vote-share, seat-share pair $(V, S)$, when turnout in each district is equal.  The hue at $(V,S)$ in the image on the left corresponds to the minimum possible value of MM.  The hue at $(V, S)$ in the image on the right corresponds to the maximum possible value of MM.  All values in between are also achievable on some constructed election data.}
\label{fig:MM_min_max}
\end{figure}

In Figure \ref{fig:MM_min_max}, we can visually see a discontinuity at $S = \frac{1}{2}$.  This discontinuity can be understood intuitively as follows:  fix $\frac{1}{2} \leq V  \leq \frac{3}{4}$.  When $S > \frac{1}{2}$ gets close enough to $\frac{1}{2}$,  the largest that the median could possibly be is 1 and smallest the median could possibly be is $\frac{1}{2}$.  Whereas when $S < \frac{1}{2}$ the largest the median could possibly be is $\frac{1}{2}$ and smallest it could possibly be is 0.   However, the mean is fixed at $V$.  This creates the discontinuity at $S = \frac{1}{2}$.

Theorem \ref{thm:MM} gives bounds on the value of the Mean-Median Difference when the number of districts is unlimited.  We have the following bounds for a fixed number of districts:

\begin{Theorem}\label{thm:MMfixed_districts}
Suppose $(V, S)$ is a pair of rational numbers with $\frac{1}{4} \leq V \leq 1$, $ \frac{1}{2} \leq S \leq 1$, $S \leq 2V$, and $S\geq2V-1$.  Suppose that $n \geq 3$ is the number of districts and that  $\ell$ is the number of districts lost by the party in consideration, so that $S = \frac{n-\ell}{n}$.    

If $S = \frac{1}{2}$, then there exists constructed election data with vote share $V$, seat share $S$, turnout is equal in all districts, and Mean-Median Difference $MM$ for any $MM$ in the following ranges:
\begin{align*}
\frac{1}{4}-V &\leq MM \leq \frac{1}{4}-\frac{1}{2n}  & \text{ if } S = \frac{1}{2}, \quad  \frac{1}{4} \leq V \leq \frac{n-1}{2n} \\
\frac{1}{2n}-\frac{1}{4}&\leq MM \leq \frac{1}{4}-\frac{1}{2n}  & \text{ if } S = \frac{1}{2}, \quad  \frac{n-1}{2n} < V < \frac{n+1}{2n} \\
\frac{1}{2n}-\frac{1}{4}&\leq MM \leq \frac{3}{4}-V  & \text{ if } S = \frac{1}{2}, \quad V \geq \frac{n+1}{2n}\\
\end{align*}

If $S > \frac{1}{2}$ and $n$ is even, then there exists constructed election data with vote share $V$, seat share $S$, turnout is equal in all districts, and Mean-Median Difference $MM$ for any $MM$ in the following ranges:
\begin{align*}
\frac{1}{2}-V  &\leq MM \leq  \min\left\{1-V,  \frac{2V+\frac{1}{2}-S+\frac{1}{n}}{1+\frac{2}{n}}-V \right\} &  \text{ if } S > \frac{1}{2}, \quad \frac{1}{4} \leq V \leq \frac{3n-2}{4n} \\
\frac{2V+S-2+\frac{2}{n}}{2S-1+\frac{2}{n}}-V  & \leq MM \leq \min\left\{1-V,  \frac{2V+\frac{1}{2}-S+\frac{1}{n}}{1+\frac{2}{n}} - V \right\} &  \text{ if } S > \frac{1}{2}, \quad \frac{3n-2}{4n} < V \leq 1
\end{align*}

And if $S > \frac{1}{2}$ and $n$ is odd, then there exists constructed election data with vote share $V$, seat share $S$, turnout is equal in all districts, and Mean-Median Difference $MM$ for any $MM$ in the following ranges:
\begin{align*}
\frac{1}{2}-V  &\leq MM \leq   \min\left\{1-V,  \frac{2V+\frac{1}{2}-S+\frac{1}{2n}}{1+\frac{1}{n}}-V \right\}  &  \text{ if } S > \frac{1}{2}, \quad \frac{1}{4} \leq V < \frac{3n-1}{4n} \\
\frac{2V+S-2+\frac{1}{n}}{2S-1+\frac{1}{n}}-V  & \leq MM \leq \min\left\{1-V,  \frac{2V+\frac{1}{2}-S+\frac{1}{2n}}{1+\frac{1}{n}}-V \right\}  &  \text{ if } S > \frac{1}{2}, \quad \frac{3n-1}{4n} < V \leq 1
\end{align*}
\end{Theorem}

Again, by the symmetry of the parties in their corresponding MM values, Theorem \ref{thm:MMfixed_districts} gives all possible range of values for MM for any possible $(V, S)$.

\subsection{Bounds: Discussion}\label{sec:boundsDiscussion}

From Figures \ref{fig:PB_min_max} and \ref{fig:MM_min_max}, we can see that the values that MM and PB can take respond differently to changing $V$ and $S$.  In this section, we illustrate how the values that Partisan Bias and Mean-Median Difference are \emph{able} to take on do not generally become more extreme as the number of seats won becomes more extreme (in Sections \ref{sec:PBability} and \ref{sec:MMability}).  In sections \ref{sec:discussionMM} and \ref{sec:discussionPB}, we also give some specific election examples (and their corresponding $V$, $S$, and metric values) that run counter to expectations of MM and PB tracking extreme seats outcomes, based on the use cases outlined in Section \ref{sec:MotivatingExamples}.  

\subsubsection{Partisan Bias: Decoupling of Extreme Values with Extreme Number of Districts Won}\label{sec:PBability}

For a fixed vote share $V$, the Partisan Bias does not in general allow for more extreme values when the seat share $S$ becomes small.  Consider a vote share $V> 0.5$ for party $A$, and the potential values that the Partisan Bias can take on with different seat shares $S$.  From Theorem \ref{thm:PB} (and Figure \ref{fig:PB_min_max}), we can see that as $S$ \emph{decreases}, the minimum values that the Partisan Bias function can take on \emph{increase}.  In other words, for a fixed $V > 0.5$, as the number of seats that party $A$ wins \emph{decreases} the Partisan Bias function is \emph{less} able to indicate that party $A$ is winning extremely few seats.  In particular, if $V > 0.75$, once the number of seats won decreases far enough, the Partisan Bias can only take on positive values.  This can be seen in the ``red crescent'' in Figure \ref{fig:PB_min_max}.  That is, once the number of seats won by party $A$ is \emph{low} enough, the Partisan Bias can only indicate a gerrymander \emph{in favor} of party $A$.  The Partisan Bias is \emph{not able} to indicate that party $A$ is winning extremely few seats\footnote{By symmetry, one can similarly show that for $V < 0.5$, the Partisan Bias is not able to indicate that party $A$ is winning an extreme number of seats.}.

Similarly, for a fixed seat share $S$, the Partisan Bias does not allow for more extreme values indicating a less favorable position for party $A$ as party $A$'s vote share $V$ increases, nor does it allow for more extreme values indicating a more favorable position for party $A$ as $V$ decreases.  For example, consider a fixed $S$ for party $A$.  From Theorem \ref{thm:PB} (and Figure \ref{fig:PB_min_max}), we can see that as $V$ increases beyond 0.5, the minimum values that the Partisan Bias can take on \emph{increase}.  That is, for a fixed seat share, if the vote share increases, the Partisan Bias is \emph{less able} to indicate a gerrymander \emph{against} party $A$. Indeed, if $S > 0.5$ and the vote share increases large enough, the Partisan Bias is only able to take on positive values (the ``red crescent'' in Figure \ref{fig:PB_min_max}), indicating a gerrymander \emph{favoring} party $A$. Similarly, for a fixed $S$ for party $A$, as $V$ decreases below 0.5, the maximum values that the Partisan Bias can take on  \emph{decrease}.  That is, for a fixed seat share, if the vote share decreases, the Partisan Bias is \emph{less able} to indicate a gerrymander \emph{favoring} party $A$.

\subsubsection{Partisan Bias: Elections with Values Counter to Public Expectations}\label{sec:discussionPB}
In Section \ref{sec:MotivatingExamples}, we mentioned many instances in which the Partisan Bias has been used (or its usage has been promoted) in an effort to avoid maps that allow for one party to win an extreme number of districts.  In this section, we give sample elections along with their Partisan Bias calculations.  We believe these give convincing evidence that the Partisan Bias does not act in the ways suggested by many of the testimonies, apps, reports, and policies mentioned in Section \ref{sec:MotivatingExamples}.

From Figure \ref{fig:PB_min_max}, we can see that the same coordinate position $(V, S)$ can have extremely different hues in the two images.  This corresponds to the fact that, for fixed $V$ and $S$ values, the Partisan Bias can vary wildly.  For example, consider the election data in Table \ref{tab:PBv60s90}

\begin{table}[h!]
    \centering
    \begin{tabular}{|c|c|c|c|c|c|c|c|c|c|c|}
        \hline
        District & 1 & 2 & 3 & 4 & 5 & 6 & 7 & 8 & 9 & 10\\ \hline\hline
        \multirow{4}{*}{Vote share} & \multicolumn{10}{c|}{Election 1} \\ \cline{2-11}
        & 37\% & 61\% & 61\% & 61\% & 61\% & 61\% & 61\% & 61\% & 61\% & 75\% \\ \cline{2-11}
        & \multicolumn{10}{c|}{Election 2} \\ \cline{2-11}
        & 49\% & 59\% & 59\% & 59\% & 59\% & 59\% & 59\% & 59\% & 59\% & 79\% \\ \hline
    \end{tabular}
    \caption{Two different elections with $S = 90\%$ and $V = 60\%$.  Election 1 has $\text{PB} = 0.4$, while Election 2 has $\text{PB} = -0.4$.}
    \label{tab:PBv60s90}
\end{table}

Although both Elections in Table \ref{tab:PBv60s90} have a vote share of 60\% and seat share of 90\%, the Partisan Bias metric considers Election 1 to be an extreme gerrymander in favor of the party in question, while Election 2 is considered to be an extreme gerrymander \emph{against} the party in question.  The different values here indicate that the Partisan Bias believes that 9 out of 10 seats is extremely large in the first election, but not nearly large enough in the second.  

Additionally, note in Figure \ref{fig:PB_min_max} that for $V$ just barely less than $\frac{1}{2}$, PB can achieve nearly the maximal value of $\frac{1}{2}$, regardless of $S$.  (And symmetrically, for $V$ just barely above $\frac{1}{2}$, PB can achieve nearly the minimal value of $-\frac{1}{2}$).  For example, consider the elections in Table \ref{tab:v48pb4}.

\begin{table}[h!]
    \centering
    \begin{tabular}{|c|c|c|c|c|c|c|c|c|c|c|}
        \hline
        District & 1 & 2 & 3 & 4 & 5 & 6 & 7 & 8 & 9 & 10\\ \hline\hline
        \multirow{6}{*}{Vote share} & \multicolumn{10}{c|}{Election 1} \\ \cline{2-11}
        & 35\% & 49\% & 49\% & 49\% & 49\% & 49\% & 49\% & 49\% & 49\% & 53\% \\ \cline{2-11}
        & \multicolumn{10}{c|}{Election 2} \\ \cline{2-11}
        & 19\% & 49\% & 49\% & 49\% & 49\% & 53\% & 53\% & 53\% & 53\% & 53\% \\ \cline{2-11}
        & \multicolumn{10}{c|}{Election 3} \\ \cline{2-11}
        & 11\% & 51\% & 51\% & 51\% & 51\% & 53\% & 53\% & 53\% & 53\% & 53\% \\ \hline
    \end{tabular}
    \caption{Three different elections with $V = 48\%$ and $\text{PB} = 0.4$.  Election 1 has $S = 10\%$, Election 2 has $S = 50\%$, and Election 3 has $S = 90\%$.}
    \label{tab:v48pb4}
\end{table}

In each of the elections in Table \ref{tab:v48pb4}, the vote share is 48\% and the Partisan Bias is 0.4, indicating that each is a very strong gerrymander in favor of the party in question, even though the seat shares vary wildly (from 10\% to 50\% to 90\%).  The Partisan Bias is indicating that, for each of the elections in Table \ref{tab:v48pb4} (which have the party in question winning 1, 5, and 9 seats out of 10 respectively), an extremely large number of districts is won.  Indeed, since $\text{PB} = 0.4$ for each of those elections, the Partisan Bias is indicating that 1, 5, and 9 seats won out of 10 are \emph{equally extremely large}.

The last observation we make in this section is that for a fixed $S$ in the feasible region with $\frac{1}{2} \leq S \leq 1$, $\frac{1}{2} \leq V \leq 1$, the maximum possible value for $PB$ doesn't depend on $V$ at all, as it is $S-\frac{1}{2}$.  For example, consider the election data in Table \ref{tab:SPBfixed}:
\begin{table}[h!]
    \centering
    \begin{tabular}{|c|c|c|c|c|c|c|c|c|c|c|}
        \hline
        District & 1 & 2 & 3 & 4 & 5 & 6 & 7 & 8 & 9 & 10\\ \hline\hline
        \multirow{4}{*}{Vote share} & \multicolumn{10}{c|}{Election 1} \\ \cline{2-11}
        & 49\% & 49\% & 49\% & 49\% & 51\% & 51\% & 51\% & 51\% & 51\% & 51\% \\ \cline{2-11}
        & \multicolumn{10}{c|}{Election 2} \\ \cline{2-11}
        & 49\% & 49\% & 49\% & 49\% & 95\% & 95\% & 95\% & 95\% & 95\% &95\% \\ \hline
    \end{tabular}
    \caption{Two different elections with $S = 60\%$ and $\text{PB} = 0.1$.  Election 1 has vote share 50.2\%, while Election 2 has vote share 76.6\%.}
    \label{tab:SPBfixed}
\end{table}

We see that, according to the Partisan Bias, these elections are both biased towards the party in question.  Moreover, since PB is equal for these two elections, they are \emph{equally} biased towards the party receiving 6 seats in both elections, even though in Election 1 the party receives less than proportionally many seats and in Election 2 the party receives more than proportionally many seats.

\subsubsection{Mean-Median Difference: Decoupling of Extreme Values with Extreme Number of Districts Won}\label{sec:MMability}

For a fixed vote share $V$ with $0.5 < V < 0.75$, the Mean-Median Difference does not in general allow for more extreme values when the seat share $S$ becomes small.  Fix a vote share $0.5 < V < 0.75$ for party $A$.  From Theorem \ref{thm:MM} (and Figure \ref{fig:MM_min_max}), we can see that, as party $A$'s number of seats won decreases, the minimum value that the Mean-Median Difference can take on does not change until $S$ crosses from just above 0.5 to just below.  Once the seat share $S$ crosses below 0.5, the minimum value that the Mean-Median Difference can take on \emph{increases}.  That is, for a fixed $V$ with $0.5 < V < 0.75$ for party $A$ and $S <0.5$, if party $A$ wins fewer seats, the Mean-Median Difference is \emph{less} able to detect that party $A$ is winning extremely few districts\footnote{By symmetry, if $0.25 < V < 0.5$, if party $A$ wins more seats, the Mean-Median Difference is less able to detect that party $A$ is winning an extreme number of districts.}.  

Additionally, for a fixed vote share $V$ with $0.5 < V < 0.75$, the Mean-Median Difference does not in general allow for more extreme values when the seat share $S$ becomes large.  If we fix $V$ with $0.5 < V < 0.75$ and now consider the maximum values that the Mean-Median difference can take on, from Theorem \ref{thm:MM} and Figure \ref{fig:MM_min_max} we can see that as party $A$ wins more seats, this maximum value increases to a point and then decreases.  In other words, as Party $A$ wins more seats, the Mean-Median Difference can detect that party $A$ is winning a more extreme number of seats . . . until party $A$ wins even more seats, at which point the Mean-Median Difference is \emph{less} able to detect that party $A$ is winning an extreme number of seats\footnote{By symmetry, if $0.25 < V < 0.5$, as Party $A$ wins fewer seats, the Mean-Median Difference can detect that party $A$ is winning extremely few seats . . . until party $A$ wins even fewer seats, at which point the Mean-Median Difference is \emph{less} able to detect that party $A$ is winning extremely few seats.}.  

Finally, for a fixed vote share $V$ with $V > 0.75$, the Mean-Median Difference does not in general allow for more extreme values when the seat share $S$ becomes small.  Indeed, suppose party $A$ has a fixed vote share $V > 0.75$.  Then from Theorem \ref{thm:MM} (and Figure \ref{fig:MM_min_max}), as party $A$'s number of seats won decreases, the minimum value that the Mean-Median Difference can take on \emph{increases}.  In other words, as party $A$ wins fewer seats, the Mean-Median Difference is less able to show that party $A$ is winning an extremely small number of districts.  In fact, once the seat share decreases to small enough values, the Mean-Median Difference is \emph{only} able to take on values indicating a gerrymander \emph{for} party $A$ (this can be seen in the ``red crescent'' in Figure \ref{fig:MM_min_max})\footnote{By symmetry, for a fixed vote share $V < 0.25$, once the seat share increases large enough, the Mean-Median Difference is only able to take on values indicating a gerrymander against party $A$.}.

 Similarly, for a fixed seat share $S$, the Mean-Median Difference does not allow for more extreme values indicating a less favorable position for party $A$ as party $A$'s vote share $V$ increases, nor does it allow for more extreme values indicating a more favorable position for party $A$ as $V$ decreases. For example, consider a fixed $S > 0.5$.  From Theorem \ref{thm:MM} (and Figure \ref{fig:MM_min_max}), we see that as party $A$'s vote share $V$ increases, the Mean-Median Difference's minimum value decreases, but only to a point.  Then the Mean-Median Difference's minimum value increases.  That is, for large enough vote share $V$, if the seat share stays the same but the vote share increases, the Mean-Median Difference is \emph{less} able to detect that party $A$ is winning extremely few districts.  If the vote share increases large enough, the Mean-Median Difference can only indicate a gerrymander \emph{favoring} party $A$ (as indicated by the red crescent in Figure \ref{fig:MM_min_max})\footnote{By symmetry, if $S<0.5$, for small enough vote share, if the seat share stays the same but the vote share decreases, the Mean-Median Difference is less able to detect that party $A$ is winning an extreme number of districts. 
 And if the vote share decreases to be sufficiently small, the Mean-Median Difference can only indicate a gerrymander \emph{against} party $A$ (as indicated by the blue crescent in Figure \ref{fig:MM_min_max}).}.If we again fix $S > 0.5$,  as $V$ decreases, the Mean-Median Difference's maximum value increases to a point, but then decreases.  That is, even though party $A$ is winning the same number of seats, as the vote share decreases, the Mean-Median Difference is eventually \emph{less able} to show that party $A$ is winning an extreme number of districts\footnote{By symmetry, for $S < 0.5$, as the vote share increases, the Mean-Median Difference is eventually less able to show that party $A$ is winning extremely few districts.}.

\subsubsection{Mean-Median Difference: Elections with Values Counter to Public Expectations}\label{sec:discussionMM}

In Section \ref{sec:MotivatingExamples}, we mentioned many instances in which the Mean-Median Difference has been used (or its usage has been promoted) in an effort to avoid maps that allow for one party to win an extreme number of districts.  In this section, we give sample elections along with their Mean-Median Difference calculations.  We believe these give convincing evidence that the Mean-Median Difference does not always act in the ways suggested by many of the testimonies, apps, reports, and policies mentioned in Section \ref{sec:MotivatingExamples}.

From Figure \ref{fig:MM_min_max}, similar to the observations we saw in Figure \ref{fig:PB_min_max} in Section \ref{sec:discussionPB}, we can see that the same coordinate position $(V, S)$ can have extremely different hues in the two images.  This corresponds to the fact that, for fixed $V$ and $S$ values, the Mean-Median Difference can vary wildly.  For example, consider the election data in Table \ref{tab:MMv70s90}

\begin{table}[h!]
    \centering
    \begin{tabular}{|c|c|c|c|c|c|c|c|c|c|c|}
        \hline
        District & 1 & 2 & 3 & 4 & 5 & 6 & 7 & 8 & 9 & 10\\ \hline\hline
        \multirow{4}{*}{Vote share} & \multicolumn{10}{c|}{Election 1} \\ \cline{2-11}
                & 1\% & 51\% & 51\% & 51\% & 91\% & 91\% & 91\% & 91\% & 91\% & 91\% \\ \cline{2-11}
        & \multicolumn{10}{c|}{Election 2} \\ \cline{2-11}
                & 49\% & 59\% & 59\% & 59\% & 59\% & 59\% & 59\% & 99\% & 99\% & 99\% \\ \hline
    \end{tabular}
    \caption{Two different elections with $S = 90\%$ and $V = 70\%$.  Election 1 has $\text{MM} = 0.21$, while Election 2 has $\text{MM} = -0.11$.}
    \label{tab:MMv70s90}
\end{table}

Although both Elections in Table \ref{tab:MMv70s90} have a vote share of 70\% and seat share of 90\%, the Mean-Median Difference considers Election 1 to be an extreme gerrymander in favor of the party in question, while Election 2 is considered to be an extreme gerrymander \emph{against} the party in question.  The different values here indicate that the Mean-Median Difference believes that 9 out of 10 seats is extremely large in the first election, but not nearly large enough in the second.  

Additionally, note in Figure \ref{fig:MM_min_max} that for $V>\frac{1}{2}$, MM has the same minimal value, regardless of $S$.  For example, consider the elections in Table \ref{tab:v60mmm9}.

\begin{table}[h!]
    \centering
    \begin{tabular}{|c|c|c|c|c|c|c|c|c|c|c|c|}
        \hline
        District & 1 & 2 & 3 & 4 & 5 & 6 & 7 & 8 & 9 & 10 & 11\\ \hline\hline
        \multirow{6}{*}{Vote share} & \multicolumn{11}{c|}{Election 1} \\ \cline{2-12}
        & 47\% & 47\% & 47\% & 47\% & 47\% & 51\% & 74.8\% & 74.8\% & 74.8\% &74.8\% & 74.8\% \\ \cline{2-12}
        & \multicolumn{11}{c|}{Election 2} \\ \cline{2-12}
        & 47\% & 51\% & 51\% & 51\% & 51\% & 51\% & 71.6\% & 71.6\% &71.6\% & 71.6\% & 71.6\%\\ \hline
    \end{tabular}
    \caption{Two different elections with $V = 60\%$ and $\text{MM} = -0.09$.  Election 1 has $S = \frac{6}{11}\approx 54.5\%$, and Election 2 has $S = \frac{10}{11}\approx 90.9\%$.}
    \label{tab:v60mmm9}
\end{table}

In each of the elections in Table \ref{tab:v60mmm9}, the vote share is 60\% and the Mean-Median Difference is -0.09, indicating that each is a gerrymander against the party in question, even though the seat shares vary wildly (from 54.5\% to 90.9\%).  The Mean-Median Difference is indicating that, for each of the elections in Table \ref{tab:v60mmm9} (which have the party in question winning 6 and 10 seats out of 11 respectively), an extremely small number of districts is won.  Indeed, since $\text{MM} = -0.09$ for each of those elections, the Partisan Bias is indicating that 6 and 10 seats won out of 11 are \emph{equally extremely small}.

The last observation we make in this section is that for a fixed $V$ in the feasible region with $\frac{1}{2} \leq S \leq 1$, $\frac{1}{2} \leq V \leq 1$,  the largest maximum values occur near the line $S = 2V-\frac{1}{2}$.  For example, consider the election data in Table \ref{tab:MMnearline}:
\begin{table}[h!]
    \centering
    \begin{tabular}{|c|c|c|c|c|c|c|c|c|c|c|}
        \hline
        District & 1 & 2 & 3 & 4 & 5 & 6 & 7 & 8 & 9 & 10\\ \hline\hline
        \multirow{4}{*}{Vote share} & \multicolumn{10}{c|}{Election 1} \\ \cline{2-11}
        & 1\% & 1\% & 1\% & 1\% & 82.67\% & 82.67\% & 82.67\% & 82.67\% & 82.67\% & 82.67\% \\ \cline{2-11}
        & \multicolumn{10}{c|}{Election 2} \\ \cline{2-11}
        & 1\% & 55.44\% &  55.44\% &  55.44\% &  55.44\% & 55.44\% &  55.44\% & 55.44\% &  55.44\% & 55.44\% \\ \hline
    \end{tabular}
    \caption{Two different elections with $V = 50\%$.  Election 1 has seat share 60\% and $\text{MM} \approx 0.33$, while Election 2 has seat share 90\% and $\text{MM} \approx0.05$.}
    \label{tab:MMnearline}
\end{table}

In both of these elections the vote share is 50\%.  In Election 1, the seat share is lower (6 seats out of 10) and the Mean-Median Difference is approximately 0.33, while in Election 2 the seat share is higher (9 seats out of 10) and the Mean-Median Difference is approximately 0.05.  Hence, according to the Mean-Median Difference, these elections are both biased towards the party in question.  However, since MM is larger  for Election 1 (when 6 seats are won), 6 out of 10 seats won is considered a more extreme gerrymander for the party in question than in Election 2 (when 9 out of 10 seats are won).

\subsection{The Special Case of Metric Value 0} \label{sec:bounds_metrics0}

From Theorems \ref{thm:PB} and \ref{thm:MM}, we deduce that when $S \geq \frac{1}{2}$, both the Mean-Median Difference and Partisan Bias cannot be zero for $V < \frac{1}{2}$, and that both the Mean-Median Difference and Partisan Bias can be zero for any $V$ with $\frac{1}{2} \leq V < \frac{3}{4}$.  We can deduce the remainder of Corollary \ref{cor:pb_equal_turnout} by setting the lower bound equal to 0 in the region $\frac{3}{4} \leq V \leq 1$.

\begin{Corollary}\label{cor:pb_equal_turnout}
Suppose $V$ is a rational number with $\frac{1}{2} \leq V  < \frac{3}{4}$ and  $S$ is any rational number with $\frac{1}{2} \leq S \leq 1$.  Alternatively, suppose that $V$ is a rational number with $\frac{3}{4} \leq V \leq 1$ and $S$ is any rational number with $ \frac{3V-2}{2V-1} < S \leq 1$.  Then one can construct election data with seat share $S$, vote share $V$, turnout in each district is equal, and $PB = 0$.  Additionally, one can construct election data with seat share $S$, vote share $V$, turnout in each district is equal, and $MM = 0$.
\end{Corollary}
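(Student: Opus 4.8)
The plan is to obtain Corollary~\ref{cor:pb_equal_turnout} as an immediate consequence of Theorems~\ref{thm:PB} and~\ref{thm:MM}. Those theorems guarantee that, for an admissible pair $(V,S)$ with $S\ge\tfrac12$, \emph{every} value in the displayed open interval is realized by some constructed election with equal turnout; hence to produce election data with $PB=0$ (respectively $MM=0$) it suffices to check that the number $0$ lies strictly between the stated lower and upper bounds. The entire argument therefore reduces to a short list of inequality verifications, one for each piece of the piecewise bounds, together with a direct treatment of a few degenerate boundary points.

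First I would handle the region $\tfrac12\le V<\tfrac34$ with $\tfrac12\le S\le1$ (these pairs are admissible, since $V\ge\tfrac12$ gives $S\le1\le 2V$ and $S\ge\tfrac12>2V-1$). For $PB$: if $V=\tfrac12$ the bounds are $\tfrac{S-1}{2}<PB<\tfrac{S}{2}$, and $\tfrac12\le S<1$ gives $\tfrac{S-1}{2}<0<\tfrac{S}{2}$; if $\tfrac12<V<\tfrac34$ the bounds are $\tfrac{(1-S)(V-1/2)}{1-V}-\tfrac12<PB<S-\tfrac12$, and the estimate $\tfrac{(1-S)(V-1/2)}{1-V}\le\tfrac{(1/2)(V-1/2)}{1-V}<\tfrac12$ (using $S\ge\tfrac12$ and $V<\tfrac34$) makes the lower bound negative while $S-\tfrac12\ge0$. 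For $MM$: if $S=\tfrac12$ the bounds are $-\tfrac14<MM<\tfrac34-V$, whose right side is positive because $V<\tfrac34$; if $S>\tfrac12$ the bounds are $\tfrac12-V<MM<\min\{1-V,\,V-S+\tfrac12\}$, where $\tfrac12-V\le0$ and the minimum on the right is positive. In each case the only delicate points are the closed/degenerate corners where the open interval of the theorem shrinks (e.g.\ $V=\tfrac12$ together with the convention for seats tied at $\tfrac12$, or $V=1$), which I would dispose of by writing down an explicit small election rather than by quoting a theorem.

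The substantive step is the region $\tfrac34\le V\le1$, which is exactly the ``set the lower bound equal to $0$'' computation foreshadowed before the statement. Here the upper bounds $S-\tfrac12$ (for $PB$) and $1-V$ (for $MM$) are automatically nonnegative, so the binding requirement is forcing the lower bound below $0$. Solving $\tfrac{(1-S)(V-1/2)}{1-V}-\tfrac12<0$ for $S$ --- clearing the positive denominator $1-V$ --- gives $S>\tfrac{3V-2}{2V-1}$; solving the $MM$ lower bound $\tfrac{2V+S-2}{2S-1}-V<0$ for $S$ --- clearing the positive denominator $2S-1$ and then dividing by the \emph{negative} quantity $1-2V$, which reverses the inequality --- gives the \emph{same} threshold $S>\tfrac{3V-2}{2V-1}$. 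Since $V\ge\tfrac34$ forces $\tfrac{3V-2}{2V-1}\ge\tfrac12$, this threshold in turn guarantees $S>\tfrac12$, so the upper bounds are strictly positive and $0$ genuinely lies in each open interval, yielding both $PB=0$ and $MM=0$. I expect the main obstacle to be purely clerical: keeping track of the inequality directions when clearing and dividing by denominators of varying sign, and verifying that the two a priori different lower-bound expressions, coming from the $PB$ theorem and the $MM$ theorem, really do collapse to the identical threshold $\tfrac{3V-2}{2V-1}$.
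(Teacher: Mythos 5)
Your proposal is correct and takes essentially the same route as the paper: Corollary \ref{cor:pb_equal_turnout} is deduced directly from Theorems \ref{thm:PB} and \ref{thm:MM} by checking that $0$ lies in the stated ranges, with the threshold $\frac{3V-2}{2V-1}$ obtained by setting the lower bounds equal to zero in the region $\frac{3}{4}\le V\le 1$ (and, as you verify, the PB and MM lower bounds collapse to the same threshold). Your explicit flagging of the degenerate corners where $0$ sits at an endpoint of the open interval (e.g.\ $V=\frac{1}{2}$, $S=\frac{1}{2}$, $V=1$), to be handled by small explicit elections, is if anything more careful than the paper, which passes over these boundary cases silently.
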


By symmetry, we have all $(V,S)$ pairs for which there exists constructed election data with vote share $V$, seat share $S$, and Partisan Bias 0.  This result shows that, when turnout in each district is equal, a very wide array of $(V,S)$ pairs can result in $MM=0$ and $PB=0$.  The region consisting of all $(V,S)$ for which there is constructed election data with $MM=0$ and $PB=0$ can be seen in Figure \ref{fig:SV_mm_pb}. 

\begin{figure}[h]
\centering
\includegraphics[width=2in]{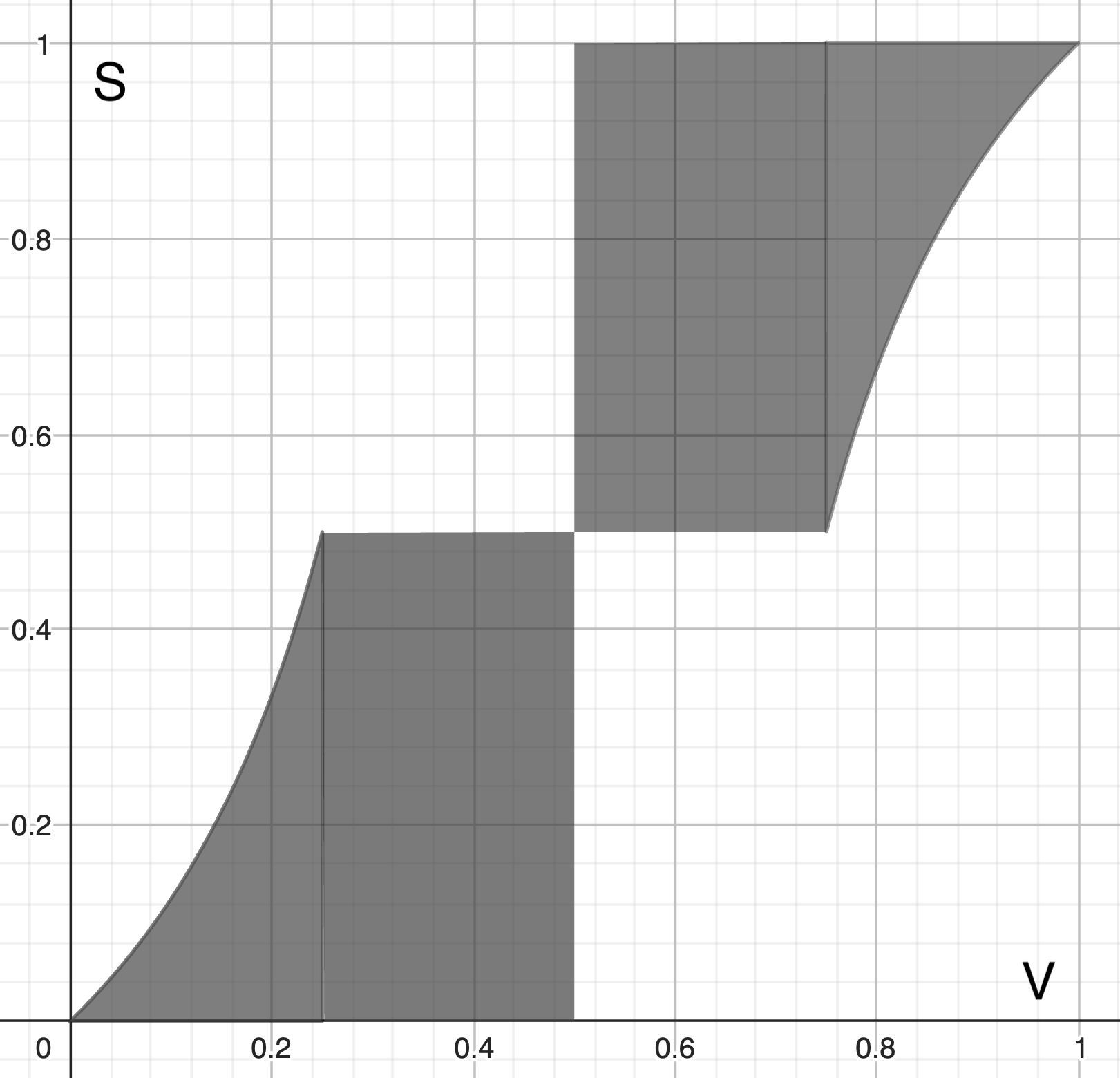}
\caption{Region of all $(V,S)$ for which there is constructed election data with turnout in all districts equal and $MM=0$ (which is the same as the region of all $(V,S)$ for which there is constructed election data with turnout in all districts equal and $PB=0$).}
\label{fig:SV_mm_pb}
\end{figure}

This is different from the set of $(V,S)$ for which there is constructed election data with turnout equal in all districts and $\text{Declination} = 0$ \cite{DeclinationAsMetric} (see Figure \ref{fig:SV_dec}), and also quite different from  the set of $(V,S)$ for which there is constructed election data with turnout equal in all districts and $EG = 0$ \cite{2018arXiv180105301V} (see Figure \ref{fig:SV_eg}).  From these Figures, we can deduce that the Partisan Bias and Mean-Median Difference are the least restrictive metrics in terms of which $(V,S)$ pairs are allowed to potentially have a metric value which indicates that no gerrymandering has occurred.

\begin{figure}[h]
\centering
\includegraphics[width=2in]{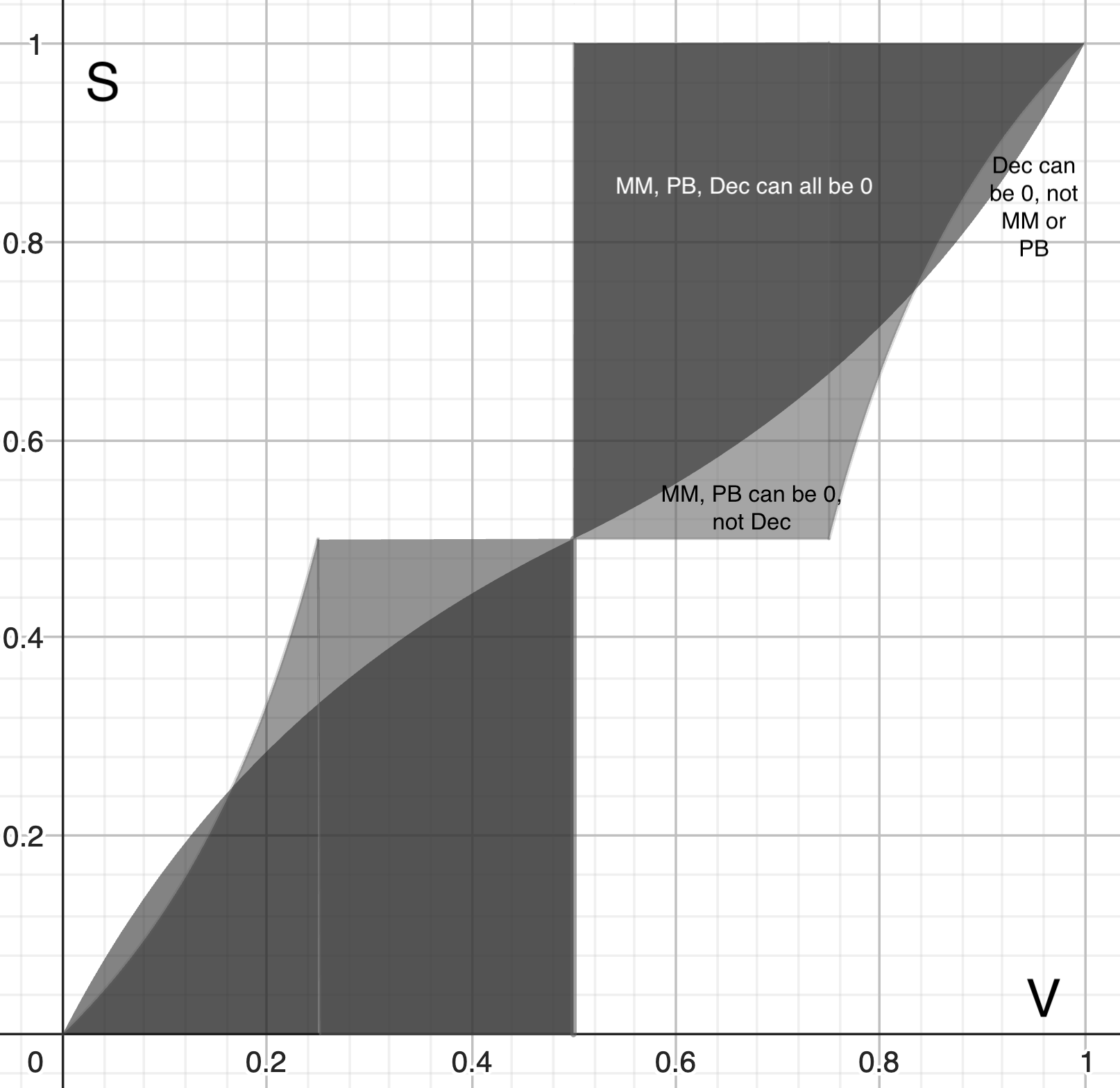}
\caption{Region of all $(V,S)$ for which there is constructed election data with turnout in all districts equal and $\text{Declination}=0$ (dark grey), overlayed with the region of all $(V,S)$ for which there is constructed election data with turnout in all districts equal and $PB/MM=0$ (light grey).}
\label{fig:SV_dec}
\end{figure}

\begin{figure}[h]
\centering
\includegraphics[width=2in]{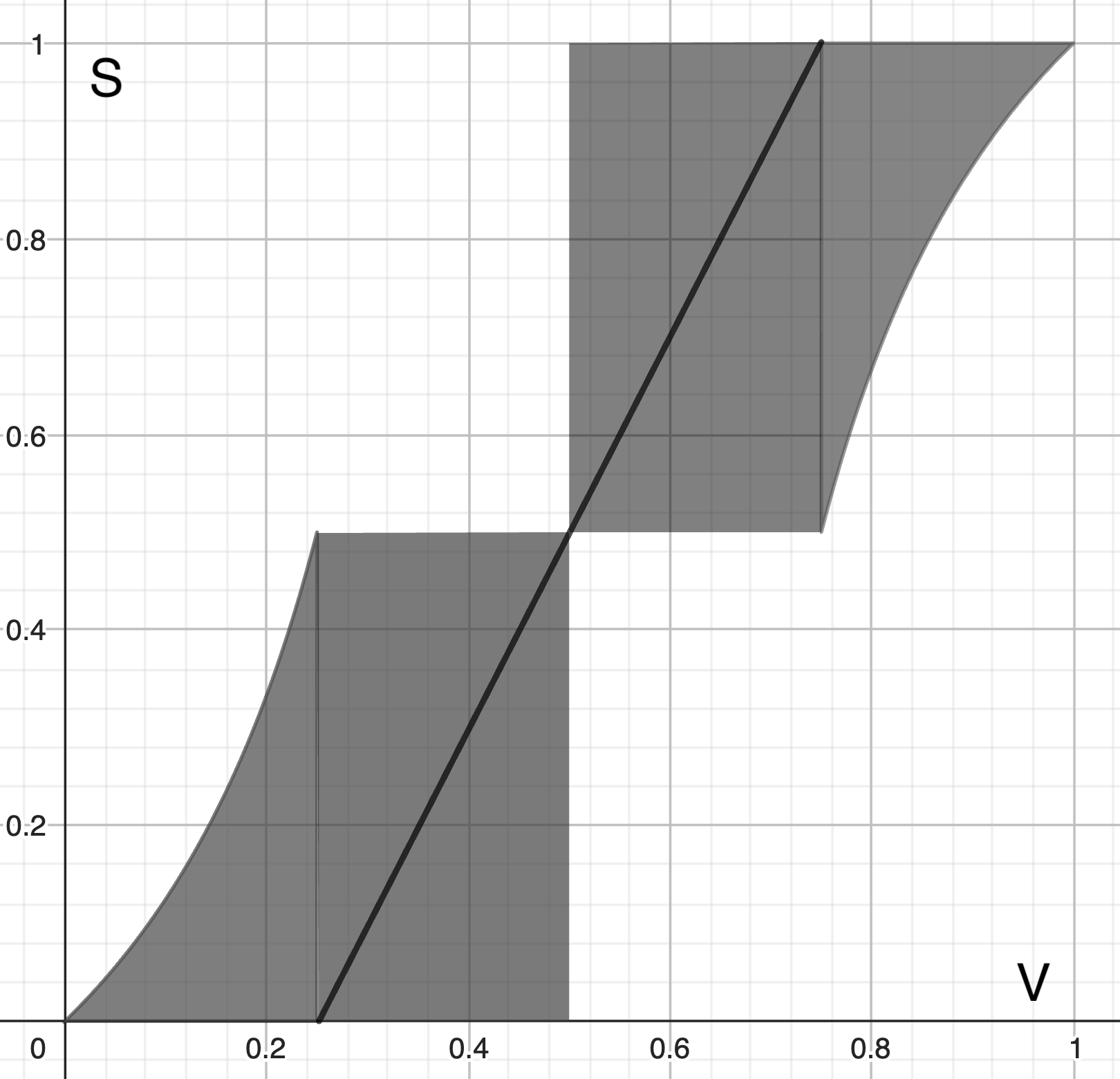}
\caption{Region of all $(V,S)$ for which there is constructed election data with turnout in all districts equal and $EG=0$ (dark grey), overlayed with the region of all $(V,S)$ for which there is election data with turnout in all districts equal and $PB/MM=0$ (light grey).}
\label{fig:SV_eg}
\end{figure}

Corollary \ref{cor:pb_equal_turnout} assumes that turnout in each district is equal, which is never the case in real elections.  This raises the question: if turnout is unequal, say the maximum turnout divided by minimum turnout is $C$, what $(V,S)$ pairs allow for Partisan Bias and Mean-Median Difference 0?  The following result answers this question.

\begin{Theorem}\label{thm:turnout_different}
Fix the maximum turnout to minimum turnout ratio at $C\geq 1$.  Suppose that  $S \geq \frac{1}{2}$ for rational seat share $S$.  Then for any rational vote share $V^*$ with 
\begin{equation*}
\frac{1}{2(S+C(1-S))} \leq V^* < \frac{1+C(3-2S)}{(C+1)(3-2S)}
\end{equation*}
there is constructed election data with vote share $V^*$, seat share $S$,  and $PB= 0$.  Additionally, for any such $V^*$  there is constructed election data with vote share $V^*$, seat share $S$,  with $MM=0$.
\end{Theorem}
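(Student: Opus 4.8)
The plan is to separate the roles of turnout and of the district vote shares. Observe that $S$, $PB$, and $MM$ all depend only on the multiset $\{V_1,\dots,V_n\}$ of district vote shares: $S$ counts the $V_i$ exceeding $\tfrac12$, while $PB$ and $MM$ (Definitions~\ref{def:PB} and~\ref{def:MM}) are built from the unweighted mean, the median, and the number of $V_i$ on each side of the unweighted mean. Turnout enters only through the statewide vote share $V^{*}=\sum_i\alpha_iV_i$, where $\alpha_i$ is district $i$'s fraction of total turnout. So to realize $(V^{*},S)$ with $PB=0$ (resp.\ $MM=0$) it suffices to fix a multiset $\{V_i\}$ with the correct winner count and with $PB=0$ (resp.\ $MM=0$), and then choose weights $\alpha_i\ge0$ with $\sum_i\alpha_i=1$ and $\max_i\alpha_i/\min_i\alpha_i\le C$ so that $\sum_i\alpha_iV_i=V^{*}$. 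For a fixed multiset the achievable weighted averages form a closed interval containing the unweighted mean, whose left (resp.\ right) endpoint is attained by loading weight in ratio $C$ onto the districts of lowest (resp.\ highest) vote share. Thus Theorem~\ref{thm:turnout_different} reduces to exhibiting, for the fixed $S\ge\tfrac12$, a connected family of admissible multisets so that, as the multiset and the turnout weights vary over the family, $V^{*}$ sweeps out $[\,\tfrac{1}{2(S+C(1-S))},\ \tfrac{1+C(3-2S)}{(C+1)(3-2S)})$.

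For $PB=0$ with $\tfrac12\le S<1$ I would use a three-block multiset: a $(1-S)$-fraction of districts at a common value $\ell\le\tfrac12$ (losers), an $(S-\tfrac12)$-fraction at a value $v>\tfrac12$, and a $\tfrac12$-fraction at a value $v'\le1$, chosen so that $v<\overline V<v'$ where $\overline V=\tfrac12v'+(S-\tfrac12)v+(1-S)\ell$; then exactly a $\tfrac12$-fraction of districts lie above $\overline V$ and exactly a $\tfrac12$-fraction below, so $PB=0$, while the winner count is $Sn$. To reach the lower end of the range, load weight in ratio $C$ onto the loser block and drive $\ell\to0$ while shrinking $v$ and $v'$ toward their smallest admissible values $\tfrac12$ and $\tfrac{3-2S}{2}$ (which forces $\overline V\to\tfrac12$); a direct computation gives $V^{*}\to\tfrac{1}{2(S+C(1-S))}$. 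To reach the upper end, instead load weight in ratio $C$ onto only the $v'$-block, set $v'=1$, $\ell=\tfrac12$, and let $v\to(\tfrac{2-S}{3-2S})^{-}$ (so $\overline V\to\tfrac{2-S}{3-2S}$); the computation gives $V^{*}\to\tfrac{1+C(3-2S)}{(C+1)(3-2S)}$. All parameters vary continuously and $V^{*}$ depends on them continuously, so the intermediate value theorem fills in the whole interval; passing to rational parameters and clearing denominators produces honest integer-turnout data with seat share exactly $S$ for any rational $V^{*}$ in the stated range.

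For $MM=0$ I would use a parallel three-block multiset in which the middle block all sits at a single value $\mu>\tfrac12$, chosen (via the mean equation, given $\ell$, $h$ and the block sizes) so that $\mu$ equals both the mean and the median, the middle block being sized and positioned so that it straddles the median position; a low block at $\ell\le\tfrac12$ and a high block at $h\in(\mu,1]$ supply the remaining freedom. Loading the $C$-fold turnout onto the low block (with $\ell\to0$, $\mu\to\tfrac12^{+}$) again gives $V^{*}\to\tfrac{1}{2(S+C(1-S))}$, and loading it onto the high block (with $\ell=\tfrac12$, $h=1$, the middle block as small as the median constraint allows, $\mu\to\tfrac{2-S}{3-2S}$) gives $V^{*}\to\tfrac{1+C(3-2S)}{(C+1)(3-2S)}$; continuity fills in between. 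The boundary cases $S=\tfrac12$ and $S=1$ collapse one of the blocks and are checked by hand.

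The main obstacle I anticipate is getting the extremal multisets exactly right. The two endpoints of the $V^{*}$-interval are \emph{not} two turnout patterns applied to one multiset: at the minimum one wants the losers at vote share $0$ but the high winners held at $\tfrac{3-2S}{2}$ and the middle winners at $\tfrac12$, whereas at the maximum one wants the losers at $\tfrac12$, the high winners at $1$, and the middle winners pushed up to $\tfrac{2-S}{3-2S}$ — and one must also decide, in each direction, exactly which block receives the $C$-fold turnout (for the maximum it must be the single highest block, not all winning districts). A naive choice, such as sending all loser vote shares to $0$ or all winner vote shares to $\tfrac12$, violates the $PB=0$ / $MM=0$ structural constraints or fails to hit the claimed bound. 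So the real work is the case analysis that selects these extremes and the verification that the resulting expressions simplify precisely to $\tfrac{1}{2(S+C(1-S))}$ and $\tfrac{1+C(3-2S)}{(C+1)(3-2S)}$; the remaining steps (continuity, rational choice of parameters, and the integer-turnout bookkeeping) are routine.
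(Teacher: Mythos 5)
Your proposal is essentially the paper's proof. The paper makes the same reduction (PB, MM and the winner count depend only on the multiset of district shares, turnout only moves $V^*$ within an interval around the unweighted mean), uses the same extremal configurations with the $C$-fold turnout on the same blocks — for the upper endpoint literally your configuration: a $(1-S)$-block at $\tfrac12$, an $(S-\tfrac12)$-block at $\tfrac{2-S}{3-2S}$, a $\tfrac12$-block at $1$ carrying the heavy turnout — and obtains the same two formulas, filling in the middle of the range via Corollary \ref{cor:pb_equal_turnout} rather than your IVT sweep; these are interchangeable. One small caveat: as written, your lower-end family only \emph{approaches} $\tfrac{1}{2(S+C(1-S))}$ (your constraint $v<\overline V$ degenerates in the limit $\ell\to 0$, $v\to\tfrac12$), so you get the open interval on the left, while the theorem asserts the endpoint with ``$\leq$''. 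The fix is the configuration the paper actually uses there: a $(1-S)$-block at $0$ (heavy turnout), a $(2S-1)$-block at $\tfrac12$, and a $(1-S)$-block at $1$; the unweighted mean is exactly $\tfrac12$, the middle block sits \emph{at} the mean so it counts toward neither side of PB and equals the median, giving $PB=MM=0$ exactly and $V^*=\tfrac{1}{2(S+C(1-S))}$ exactly. With that one extra configuration added, your argument matches the paper's in all essentials (and is in fact slightly more careful than the paper about the strict-inequality bookkeeping at the upper extreme).
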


Theorem \ref{thm:turnout_different} gives the lower bound of $\frac{1}{2}$  when $C= 1$ (when turnout in all districts is equal), which is what we already know to be true from Corollary \ref{cor:pb_equal_turnout}.  For the upper bound, when $C = 1$ we again get the result of Corollary \ref{cor:pb_equal_turnout}.  Specifically, setting $C=1$, Theorem \ref{thm:turnout_different} gives $V^* \leq \frac{2-S}{3-2S}$, which is the result of solving the equation $S =\frac{3V^*-2}{2V^*-1}$ (from Corollary \ref{cor:pb_equal_turnout}) for $V^*$.   When turnout is unequal, Theorem \ref{thm:turnout_different} expands the pairs $(V,S)$ for which there is election data with vote share $V$, seat share $S$, and Partisan Bias/Mean-Median Difference values of 0.  For example, Figure \ref{fig:SV_pb_mm_turnout_unequal} shows how this region expands for different values of $C$.  As $C \to \infty$, the region expands to fill the entire $[0,1] \times [0,1]$ square!  That is, for \emph{any} $(V, S)$ with $0<V<1$ and $0<S<1$, there exists constructed election data\footnote{with potentially extremely varied turnout between districts} with $MM=0$, and the same can be said for $PB$.

 The authors of \cite{2018arXiv180105301V} and \cite{DeclinationAsMetric} calculated similar regions for the Efficiency Gap and the Declination.  For both of those metrics, the set of $(V, S)$ pairs for which the Efficiency Gap (Declination) can be 0 does \emph{not} fill the entire $[0,1] \times [0,1]$ square, even when there are no restrictions on turnout.  Again, this indicates that the Partisan Bias and Mean-Median Difference are the least restrictive metricss in terms of which $(V,S)$ pairs are allowed to potentially have a metric value which indicates that no gerrymandering has occurred, including when there are no restrictions on turnout.

\begin{figure}[h]
\centering
\includegraphics[width=2in]{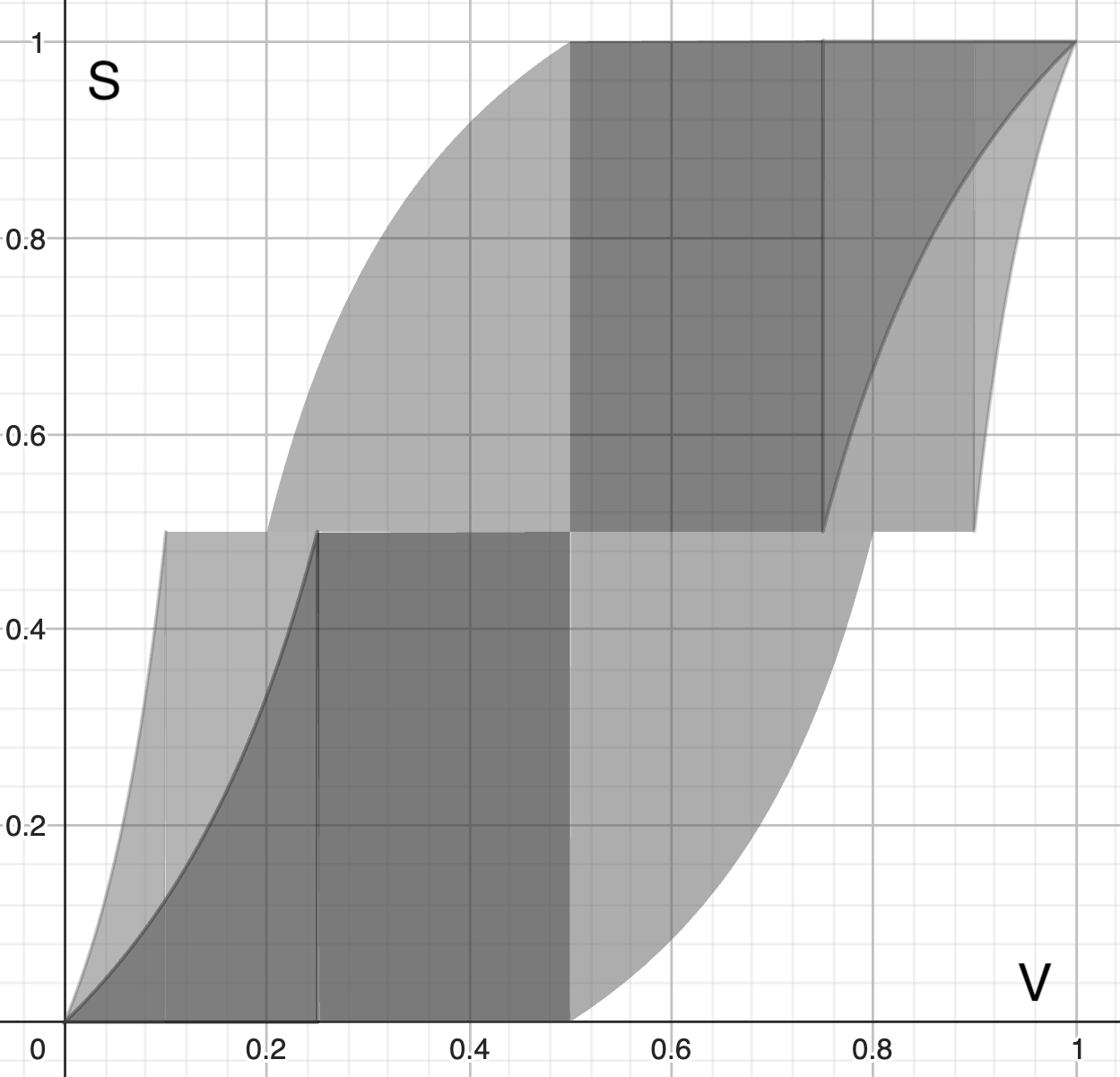} \hspace{0.5 cm}
\includegraphics[width=2in]{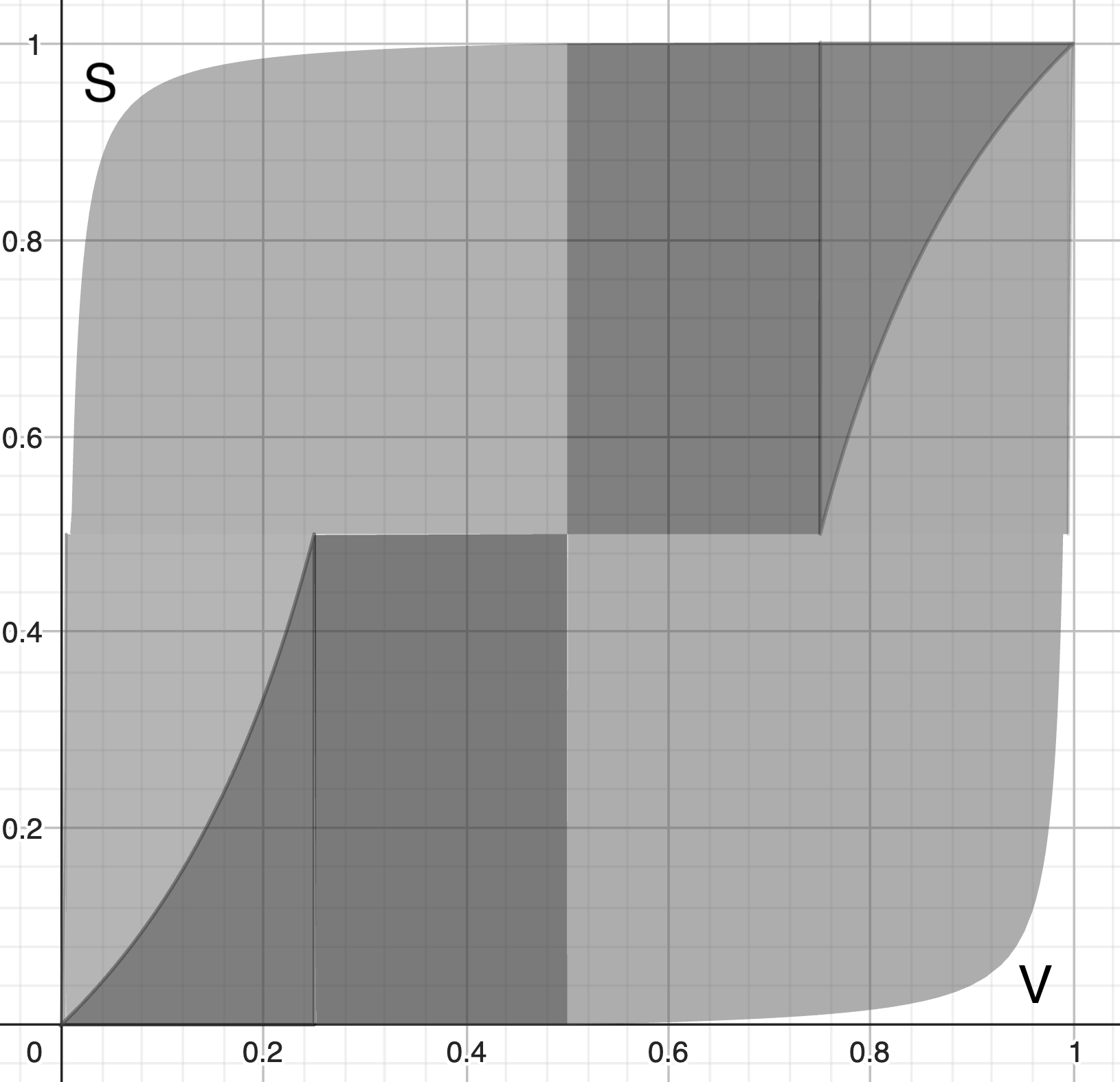}
\caption{Region of all $(V,S)$ for which there is constructed election data with $PB/MM=0$.  The dark grey portion corresponds to turnout ratio $C = 1$ and the light grey portion corresponds to turnout ratio $C = 4$ in the left image, and $C = 100$ in the right image. }
\label{fig:SV_pb_mm_turnout_unequal}
\end{figure}

We note that in \cite{2018arXiv180105301V}, the author evaluated election turnout ratios for all states with at least eight congressional districts, using data from 2014 and 2016, and showed that the ratio of maximum turnout to minimum turnout can be as high as 4.4 (in California).  The 2022 general congressional district elections in California had a turnout ratio of 3.29, suggesting that high turnout ratios remain something that is seen in practice, even in general elections.  

Theorem \ref{thm:turnout_different} is also proven in  Appendix \ref{sec:proofs}.

\section{Bugs}\label{sec:bugs}

In Section \ref{sec:bounds}, we stated Theorems (proved in Appendix \ref{sec:proofs}) that gave bounds on the values that the Mean-Median Difference and Partisan Bias could take on for each possible vote-share, seat-share pair $(V, S)$.  We showed how one could use those results to see that the MM and PB do not allow for more extreme metric values when a more extreme number of districts is won.  We also constructed election examples in which the metric values run counter to how these metrics are expected (and advertised) to act.  Of course, the space of constructible examples may be quite different from the space of implemented redistricting maps.  Here we give clear evidence that, on real maps with real data, the Mean-Median Difference and Partisan Bias cannot detect maps with an extreme number of districts won.

In order to test whether the Mean-Median Difference and the Partisan Bias have extreme values on maps with an extreme number of districts won, we need to produce maps with an extreme number of districts won.  We do this with a recently developed hill-climbing method called the ``Short Burst Method,'' developed by Cannon et al.\ \cite{cannonShortBursts}.  This hill-climbing process uses GerryChain, which is a Python library that implements a well-researched and widely accepted Markov Chain Monte Carlo (MCMC) process to create an ensemble of potential redistricting maps \cite{RecomMGGG, GerryDetails}.  In \cite{cannonShortBursts}, Cannon et al.\ showed that this hill-climbing method worked better than a biased random walk\footnote{A biased random walk is the most common method of using a Markov Chain to find local or global maxima; in this case, we are searching for redistricting maps with a near-maximum number of districts won by some party.} at finding maps with an extreme number of majority-minority districts.  Our application is a direct parallel; while Cannon et al.\ were looking for maps with an extreme number of districts which were majority-minority, here we are looking for maps with an extreme number of districts which are majority-Democrat or majority-Republican.

\subsection{Methods}\label{sec:bugs_methods}

As stated above, we use the Short Burst Method to construct maps with an extreme number of districts won by a single party.  The Short Burst Method takes a small number of steps in a Markov Chain (say 10 steps).  At the end of that ``short burst'', the map with the highest number of districts won (for the party in question) is then the seed for the next short burst of size 10.  (If there is more than one map with the same highest number of districts won among those 10, the last such produced map is the seed for the next short burst.)  This process is repeated thousands of times in an effort to find the maximum number of majority-minority districts achievable.  Note that the Short Burst method can be thought of as a non-deterministic version of beam search on the metagraph\footnote{The ``metagraph'' here is the graph whose nodes are maps.  Two maps are adjacent if one can be obtained by the other through a single step in the Markov Chain constructing the ensemble of maps.  In this case, that is through a single ReCom move in GerryChain.} of redistricting maps, using depth-first search instead of breadth-first, and in which a random selection of maps near to the current map are explored.  This process is repeated for each state/map under consideration, and each party.  We considered 6 states, 3 maps per state (State House, State Senate, Congressional districting maps), and 2 parties (Democratic and Republican).  This made for a total of 36 runs.  We first describe the states chosen and then describe the parameters for each Short Burst run.

Our choice of which states to evaluate was based on two main factors.  Firstly, we needed to use states for which there was reliable data on which we could run GerryChain's ReCom MCMC process.  The Metric Geometry Gerrymandering group \cite{MGGGstates} has an excellent GitHub page with this data for some states, and we chose our states from among those that they had available, which all used maps from the 2011 redistricting cycle.  Our studies are not intended to focus on any particular map, or on the specific election for that state (which is used as a proxy for party preference), but rather to focus on the metrics themselves.  Thus, we are unconcerned that these states' maps and partisan data are 6-10 years old.

Secondly, we wanted to choose states with different types of political geographies on which to test the Mean-Median Difference and Partisan Bias.  Some states are fairly homogeneous, with most parts of the state having the same partisan breakdown.  Other states have pockets that more heavily lean Republican and other pockets that more heavily lean Democratic; such states may have an overall clear partisan lean (Democratic or Republican) or they may be more ``purple.''   The states that we chose are Massachusetts, Michigan, Oklahoma, Oregon, Pennsylvania,  and Texas, which we believe represents a variety of political geographies\footnote{These are the same states chosen for a similar analysis, in \cite{GameabilityStudy}.}.  We include in Figure \ref{image:choropleths} a choropleth of each state (using the election data that we used for our analysis) for the reader's reference.  

\begin{figure}[h]
\includegraphics[width = 2in]{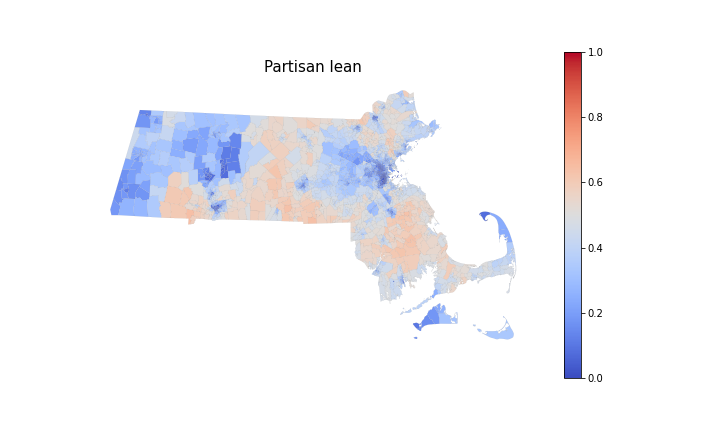}
\includegraphics[width = 2in]{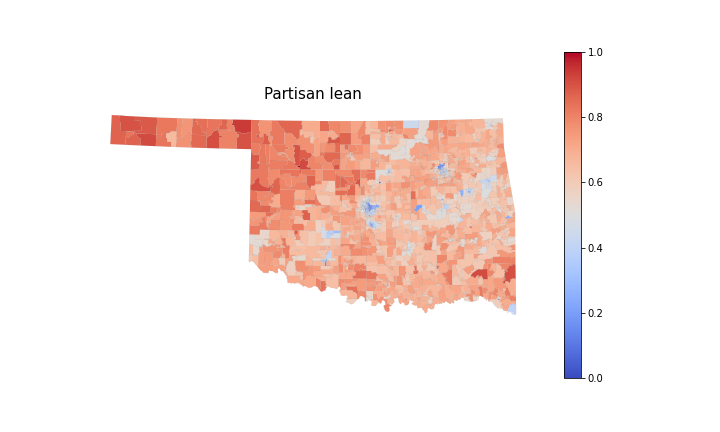}
\includegraphics[width = 2in]{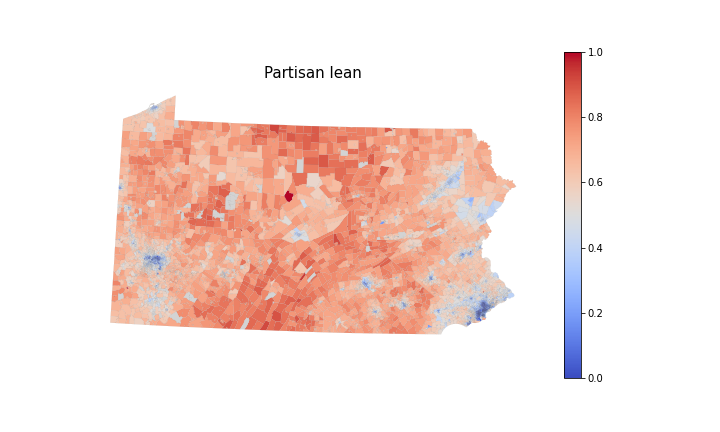}
\includegraphics[width = 2in]{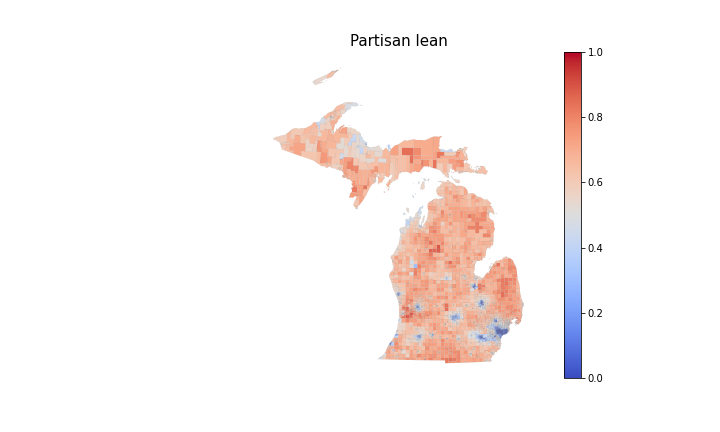}
\includegraphics[width = 2in]{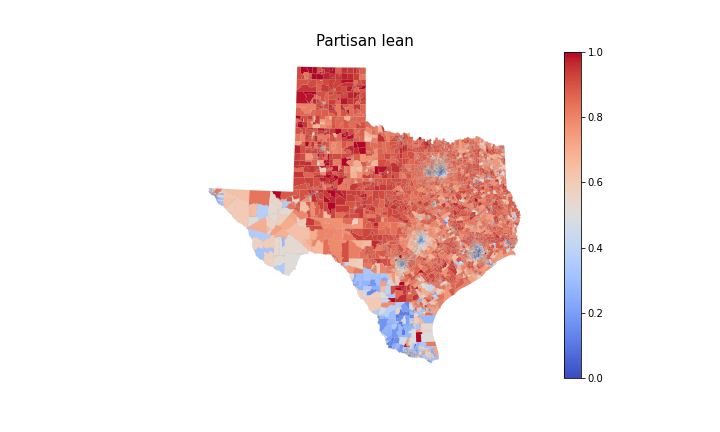}
\includegraphics[width = 2in]{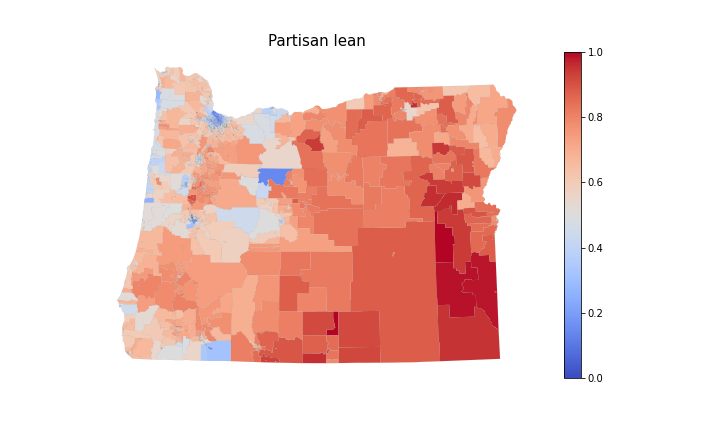}
\caption{Choropleths of the states in our analysis, using the election results from Table \ref{tab:burst_table}.  They are, in clockwise order from the top left, Massachusetts, Oklahoma, Pennsylvania, Oregon, Texas, and Michigan.}
\label{image:choropleths}
\end{figure}

\begin{table}
\begin{tabular}{|p{1.8cm}||p{1.9cm}|p{1.6cm}|p{1.9cm}|p{2cm}|p{1.9cm}|p{1.6cm}|}\hline
State & MA & OK & PA & MI & TX & OR \\ \hline
Election Used & 2018  \newline US Senate & 2018\newline Governor & 2016 \newline US Senate & 2016 \newline Presidential &  2014\newline US Senate & 2018\newline Governor \\ \hline
\end{tabular}
\caption{Elections chosen as a proxy for partisan support for each state.}
  \label{tab:burst_table}
\end{table}

For congressional maps, we allowed up to 5\% deviation from ideal population in each district.  For the lower and upper house maps, we allowed up to 11\% deviation from ideal population.   As noted in \cite{RecomMGGG}, ``Excessively tight requirements for population balance can spike the rejection rate of the Markov chain and impede its efficiency or even disconnect the search space entirely.'' 
 In \cite{RecomMGGG}, they use a deviation of 5\% for their congressional map of Arkansas, which is why we chose 5\% for our congresssional maps here.  State legislatures have significantly looser restrictions on population deviation; Brown v Thomson \cite{BrownvThomson} states that state legislative districts can deviate by up to 10\% in population, and many states do have population deviations of that magnitude.  We allowed a deviation of up to 11\% because this is sufficiently close to 10\% and is needed in order for the Markov Chain process to not become quickly stuck and be unable to produce additional sample maps.  

 As described in \cite{RecomMGGG}, the ReCom process in GerryChain naturally samples plans with more compact districts with higher probability.  Outside of this built-in preference for compact maps, we did not impose additional compactness constraints.

For each state, each party, and each districting map,  we ran 10 short burst samples, each of size 5,000, and each having burst length of size 10.  The burst length of size 10 was theorized to be the ideal burst length in \cite{cannonShortBursts}, which also found that the exact value of the burst length did not drastically affect the results.  The step size of 5,000 was chosen due to the fact that this length corresponded to a drastic slowdown in the increase of the metric value in the study from section 5.2.3 of \cite{cannonShortBursts}, and thus reaches close to the maximum possible number of districts won.    We ran these bursts of 5,000 steps 10 times as was done by Cannon et al.\ in their study showing the efficacy of the Short Burst method \cite{cannonShortBursts}, in order to validate the consistency of the results.  This entire process produced potentially 45,000 different maps\footnote{Since the last maximal map of each burst is the first map in the next burst, but the seed at the beginning of each sample is the same.}
for all of the state, party, and map combinations. These maps provide a set of examples of `locally optimized' plans that can serve as a proxy for maps that are expected to return a large number of seats for a particular party, using the actual political geography of each state. 

We note that the authors of \cite{GameabilityStudy} also used this same kind of Short Burst analysis to test the ``gameability'' of some metrics intended to detect gerrymandering (specifically, the GEO Metric, Mean-Median Difference, Declination, Efficiency Gap).  They define a metric as ``gameable'' if the metric value can stay within acceptable bounds\footnote{For a metric $m$, they considered ``acceptable bounds'' to be $0.16\inf(m) \leq m \leq 0.16 \sup(m)$, based on the bounds recommended by the creators of the Efficiency Gap\cite{PartisanGerrymanderingEfficiencyGap}, and the apples-to-apples conversion of those bounds to the other metrics.}, while allowing a map with the most extreme number of Democratic-won (or Republican-won) districts that can be achieved by that map.  The authors of \cite{GameabilityStudy} allowed us to use the same code to produce the Figures in Section \ref{sec:bugs_results}

\subsection{Results of Empirical Study}\label{sec:bugs_results}

As mentioned in Section \ref{sec:bugs_methods}, we have results for both the Democratic party and the Republican Party for three different maps on each of six states.  Note that, as explained in Section \ref{sec:bugs_methods}, the method we use to produce our maps is a hill-climbing method recently developed by Cannon et al.\ \cite{cannonShortBursts} which utilizes the GerryChain library. 
 Thus, the maps under consideration are not constructed in the same way as those constructed for a typical ensemble analysis (see Section \ref{sec:bugs_methods} for details) and are not intended to provide representative samples from a specified distribution. Instead, they represent a large set of examples that have been optimized for extreme seats outcomes. 
 
 Recall that, if a metric is to be able to detect maps with an extreme number of districts won, then it should have more extreme values on extreme maps.  In other words, if we see the metric values for maps with $k$ districts won by a particular party, that range of metric values should generally go up as $k$ goes up\footnote{Or the metric value should go down as $k$ goes down; the direction depends on whether $V$ and $S$ are measured for the Democratic party of Republican party.  For our calculations, we always let $V$ represent the Democratic party's vote share and $S$ the Democratic party's seat share.  Thus, for the Democratic party we would expect the metric value to go up as $k$ (the number of seats won by Democrats) goes up, while for the Republican party we would expect the metric value to go down as $k$ (the number of seats won by Republicans) goes down.}.

We see sample results in Figures \ref{fig:short_bursts_MAlowerD} and \ref{fig:short_bursts_MAlowerR} for Massachusetts' State House.  Just to compare the Mean-Median Difference and Partisan Bias to other gerrymandering metrics, we include the Efficiency Gap \cite{PartisanGerrymanderingEfficiencyGap} and the GEO Metric ratio\cite{geo_paper}\footnote{Since these metrics are used purely for comparison purposes, we do not define them here.  Their definitions are not our concern; our concern is simply that other metrics intended to detect gerrymandering do have extreme values on more extreme maps, as one would expect.  The Efficiency Gap is defined in \cite{PartisanGerrymanderingEfficiencyGap}, the GEO metric is defined in \cite{geo_paper}, and the GEO metric ratio is defined in \cite{GameabilityStudy} }.  

\begin{figure}[h]
\centering
\includegraphics[width=1.5in]{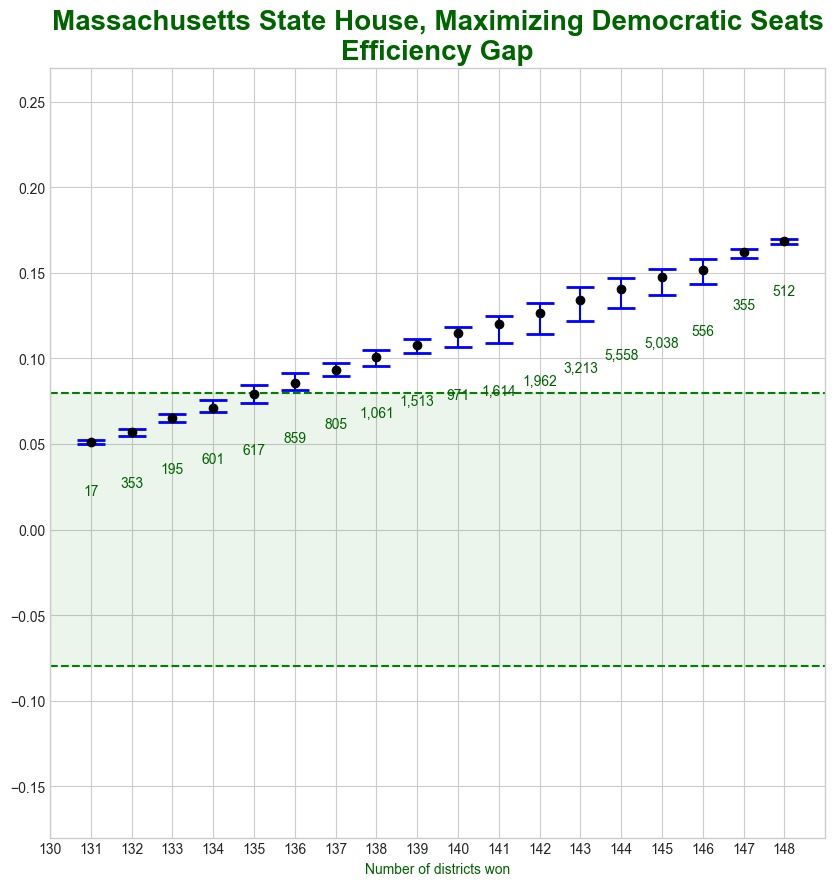}
\includegraphics[width=1.5in]{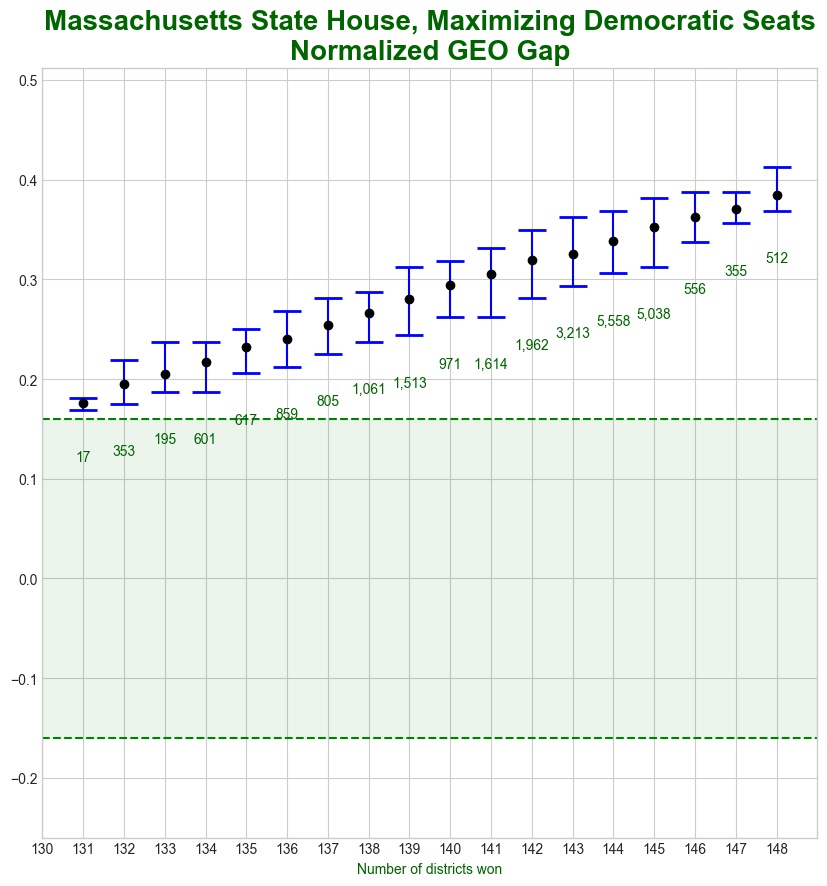}
\includegraphics[width=1.5in]{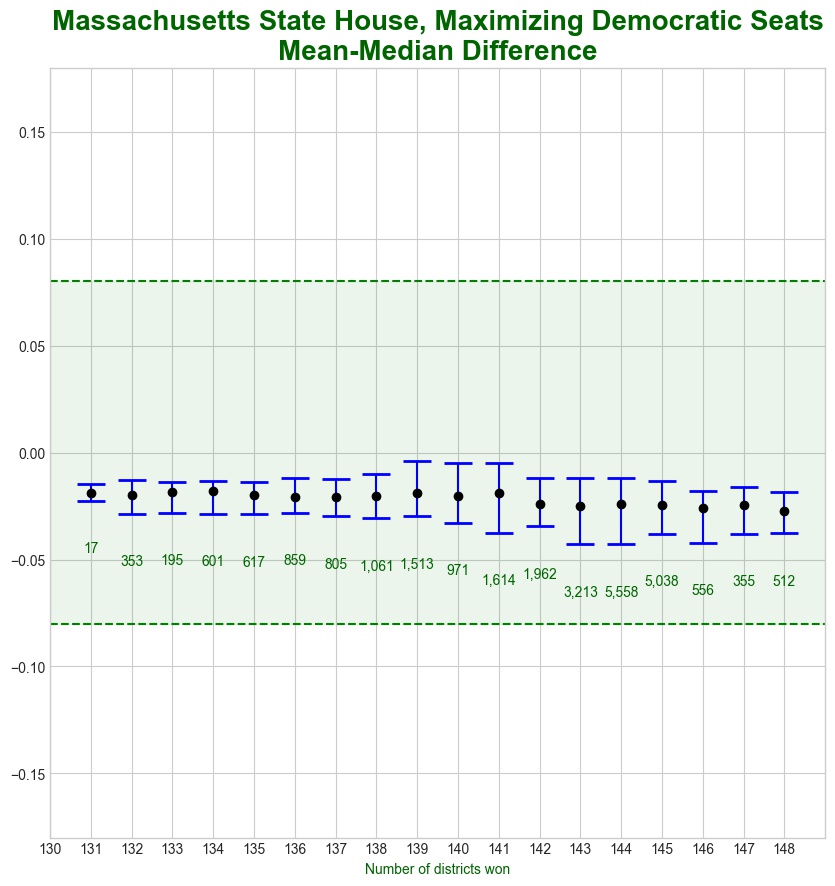}
\includegraphics[width=1.5in]{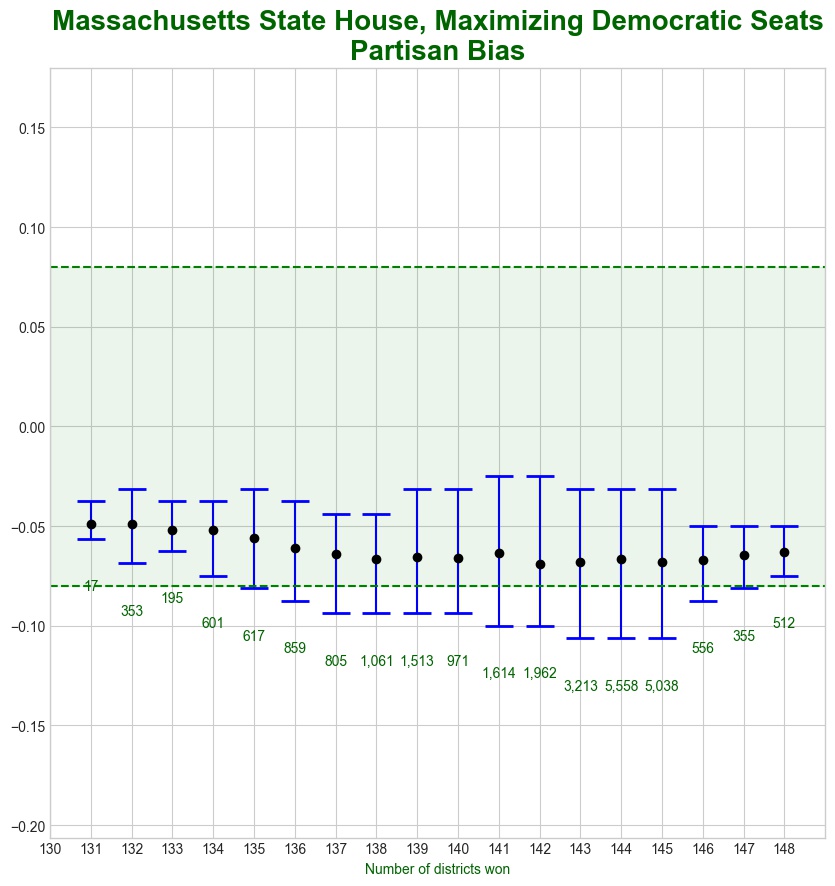}
\caption{Empirical results for Massachusetts's State House map and 2018 US Senate election data, searching for maps with as many Democratic-won districts as possible.  Horizontal axis is number of districts won, vertical axis is metric value ranges.  The green region is from $0.16\inf(m)$ to $0.16\sup(m)$ for each metric $m$.  The small number below each metric value range is the number of maps produced that had the corresponding number of districts won.  The dot within each vertical bar is the mean value of that metric on all produced maps with the corresponding number of districts won.}
\label{fig:short_bursts_MAlowerD}
\end{figure}

\begin{figure}[h]
\centering
\includegraphics[width=1.5in]{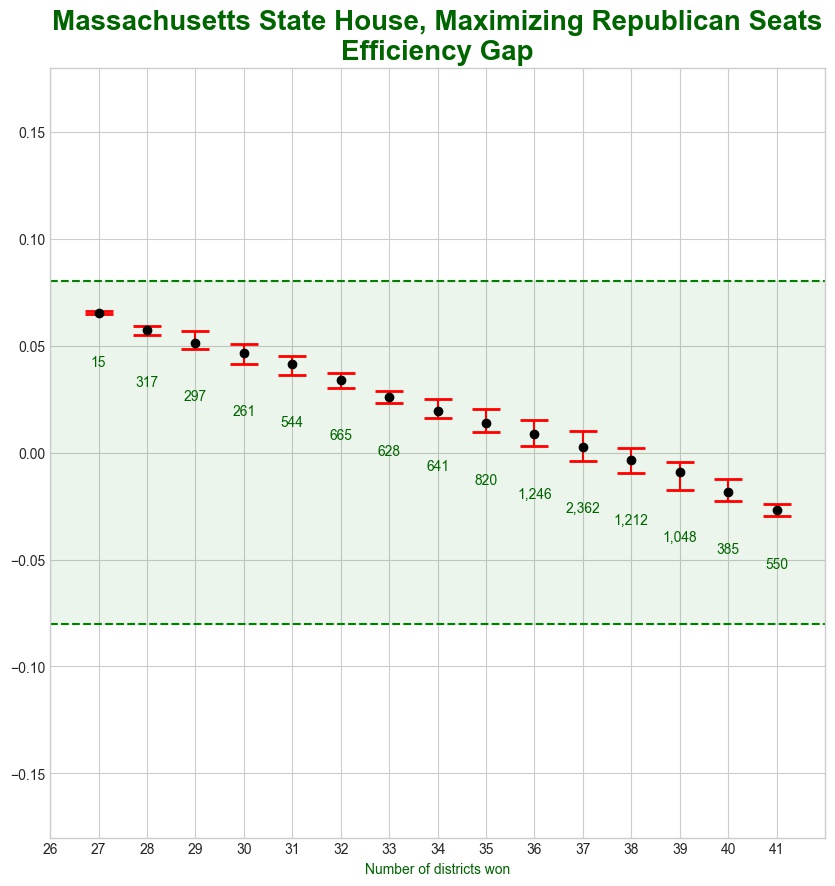}
\includegraphics[width=1.5in]{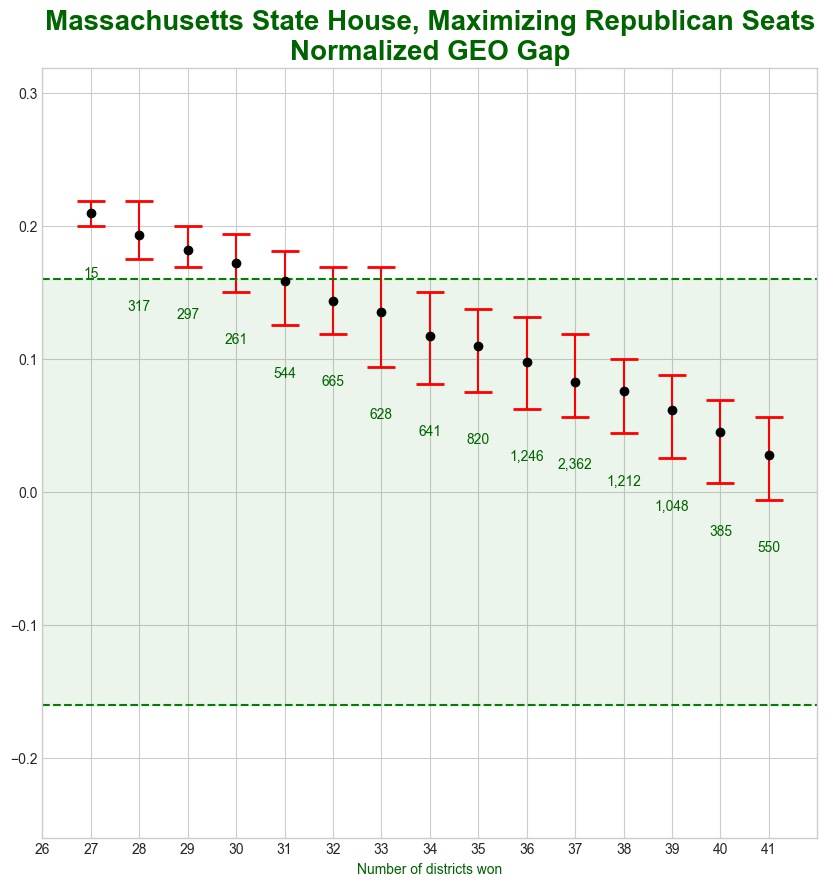}
\includegraphics[width=1.5in]{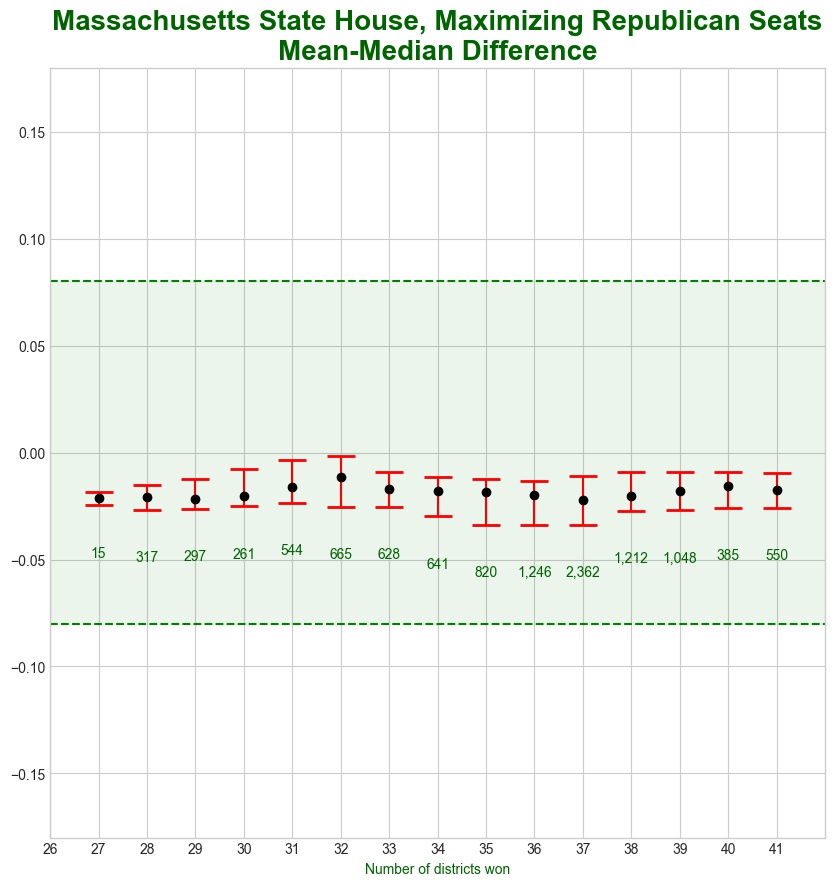}
\includegraphics[width=1.5in]{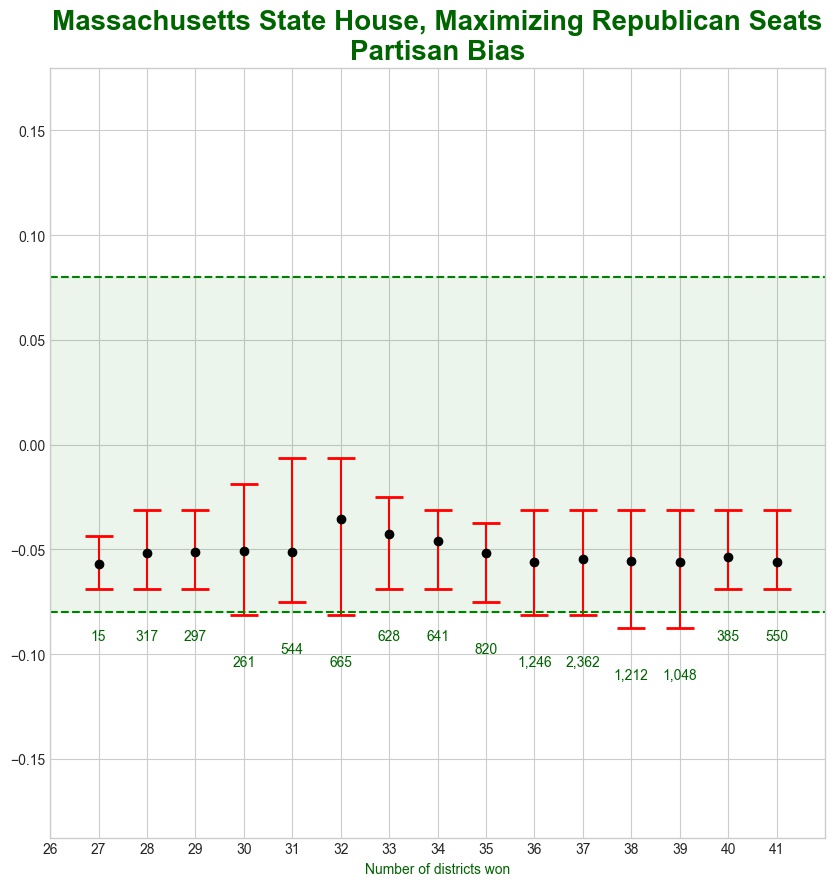}
\caption{Empirical results for Massachusetts's State House map and 2018 US Senate election data, searching for maps with as many Republican-won districts as possible. Declination not included because it is frequently undefined.   Horizontal axis is number of districts won, vertical axis is metric value ranges.  The green region is from $0.16\inf(m)$ to $0.16\sup(m)$ for each metric $m$.  The small number below each metric value range is the number of maps produced that had the corresponding number of districts won.  The dot within each vertical bar is the mean value of that metric on all produced maps with the corresponding number of districts won.}
\label{fig:short_bursts_MAlowerR}
\end{figure}

In Figures \ref{fig:short_bursts_MAlowerD} and \ref{fig:short_bursts_MAlowerR}, we can see that  the range of values for the Efficiency Gap and GEO Ratio is higher when the number of Democratic-won districts is higher.  And conversely, the range of values for the Efficiency Gap and GEO Ratio is lower when the number of Republic-won districts is higher.  To reiterate, this is what would be expected for a metric which detects an extreme number of districts won: the metric has a range of values that is more extreme on maps with a more extreme number of districts won.  In contrast, for the Mean-Median Difference and Partisan Bias, the range of values of those metrics is similar, regardless of the number of districts won by a particular party\footnote{These are the results found in \cite{GameabilityStudy}, except that  \cite{GameabilityStudy} did not consider  the Partisan Bias, which is included here.}.  Additionally, these ranges of values does not seem to have an upward trend on the maps with an extreme number of Democratic districts nor a downward trend on the maps with an extreme number of Republican districts.  

These results, that extreme values of the Mean-Median Difference and Partisan Bias do \emph{not} correlate with extreme number of districts won, are seen on every map we tested.  We will explore a few more examples; the reader interested in seeing the others can find them in Appendix \ref{appendix:sb_images}.

While the previous example (Massachusets State House, in Figures  \ref{fig:short_bursts_MAlowerD} and \ref{fig:short_bursts_MAlowerR}) corresponds to a very Democratic-leaning state and a map with many districts (160), the next example is for a very Republican-leaning state with few districts (5): Oklahoma's congressional map.  The results for that map are in Figures \ref{fig:short_bursts_OKcongD} and \ref{fig:short_bursts_OKcongR}.

\begin{figure}[h]
\centering
\includegraphics[width=1.5in]{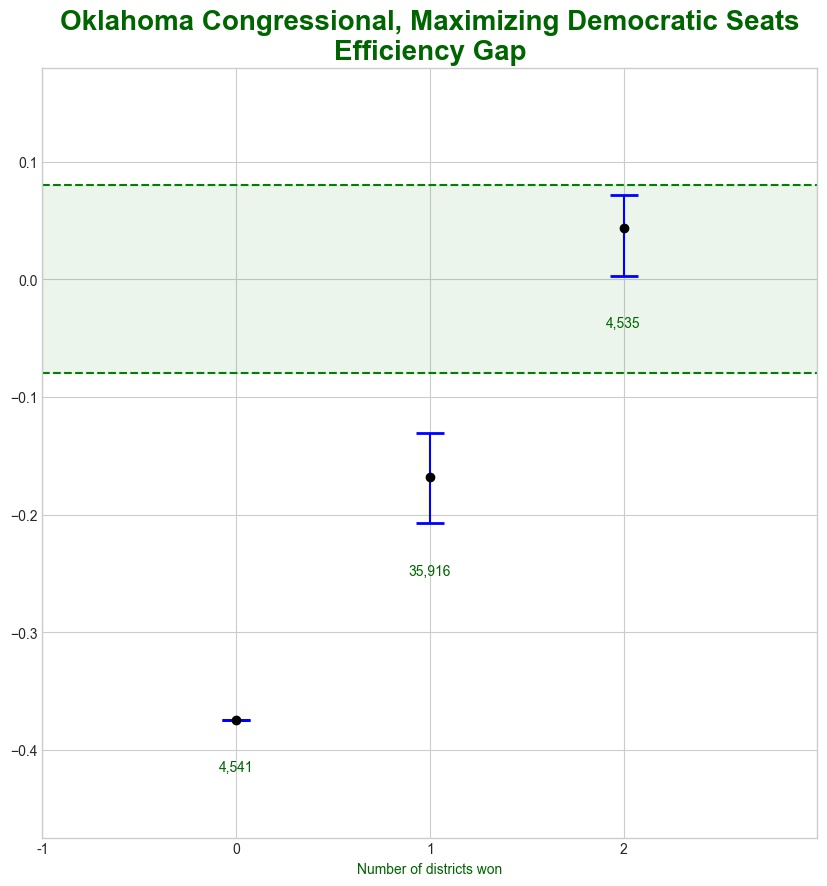}
\includegraphics[width=1.5in]{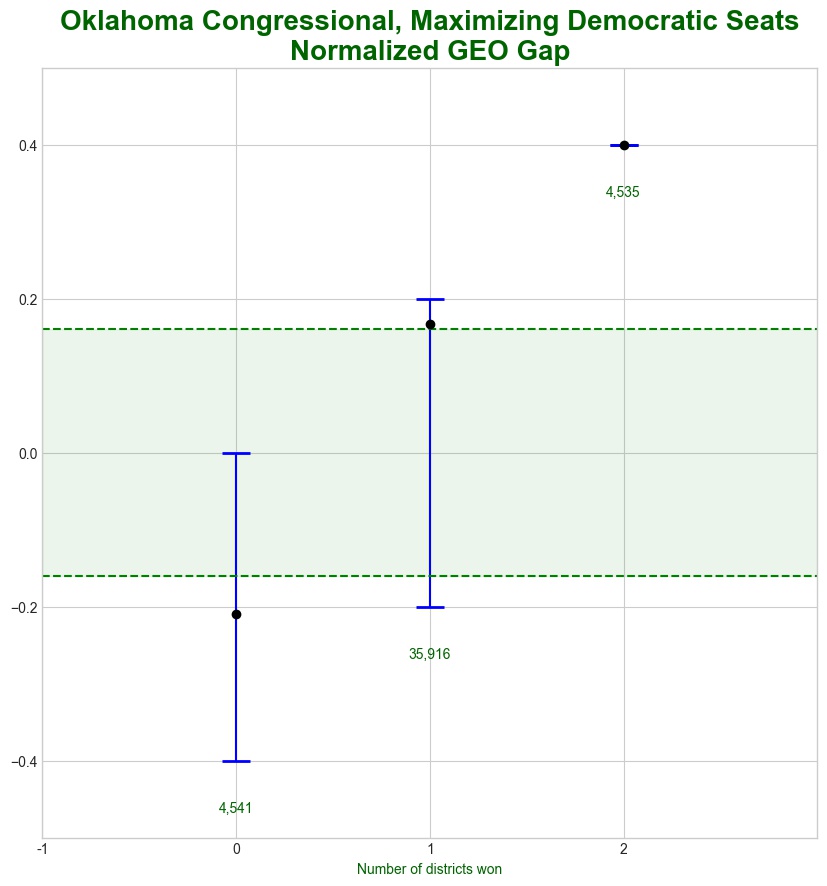}
\includegraphics[width=1.5in]{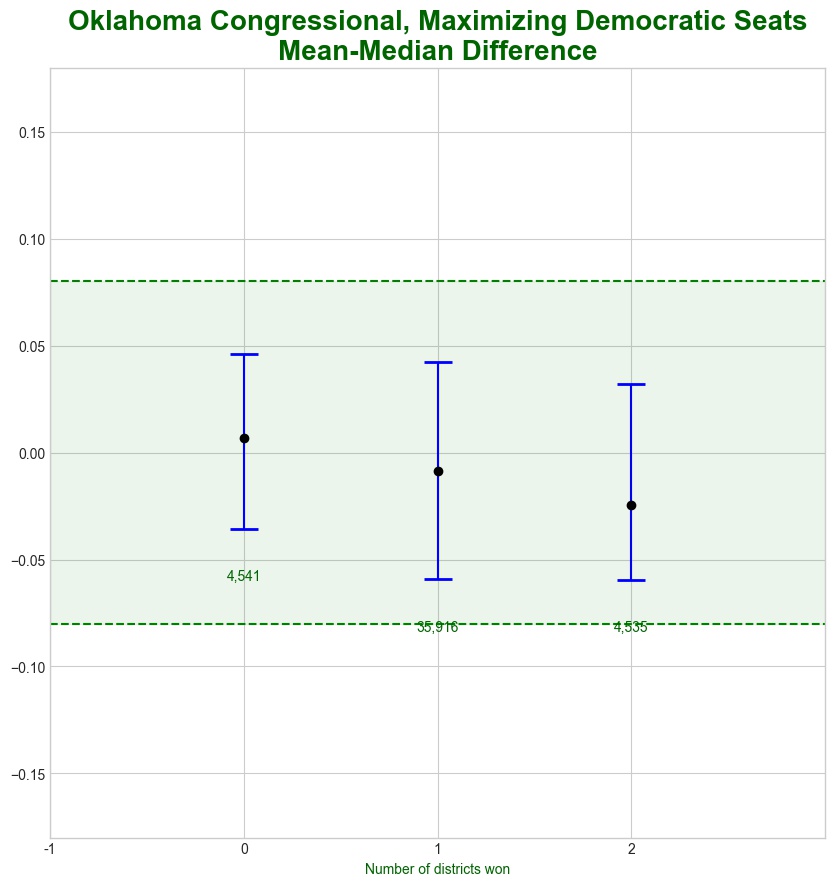}
\includegraphics[width=1.5in]{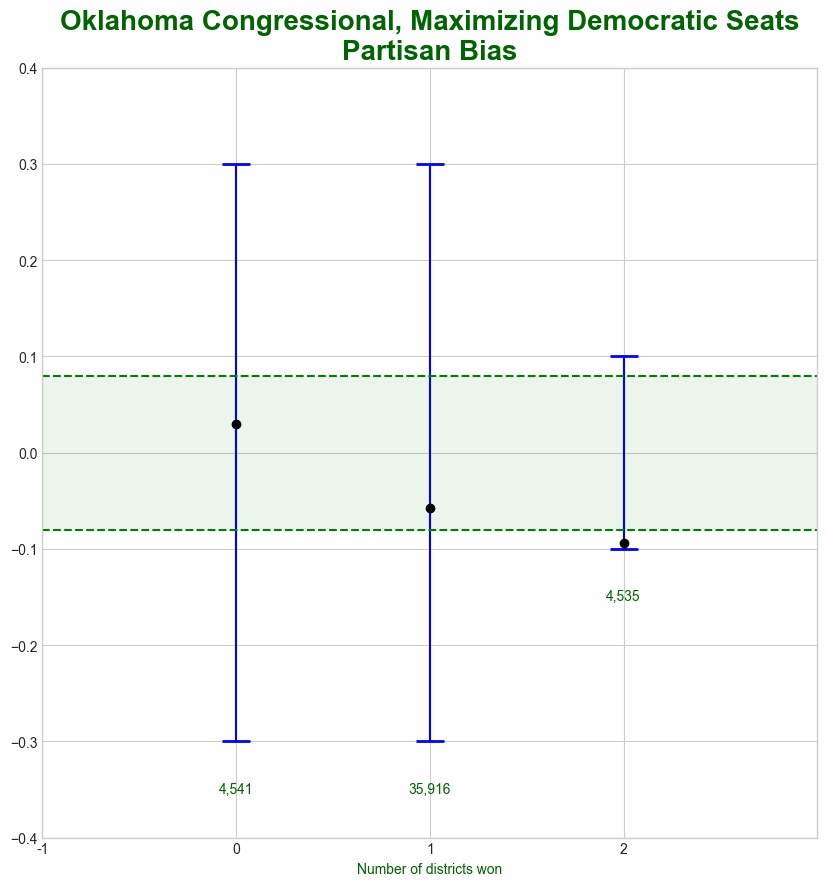}
\caption{Empirical results for Oklahoma's Congressional map and 2018 Gubenatorial election data, searching for maps with as many Democratic-won districts as possible.  Horizontal axis is number of districts won, vertical axis is metric value ranges.  The green region is from $0.16\inf(m)$ to $0.16\sup(m)$ for each metric $m$.  The small number below each metric value range is the number of maps produced that had the corresponding number of districts won.  The dot within each vertical bar is the mean value of that metric on all produced maps with the corresponding number of districts won.}
\label{fig:short_bursts_OKcongD}
\end{figure}

\begin{figure}[h]
\centering
\includegraphics[width=1.5in]{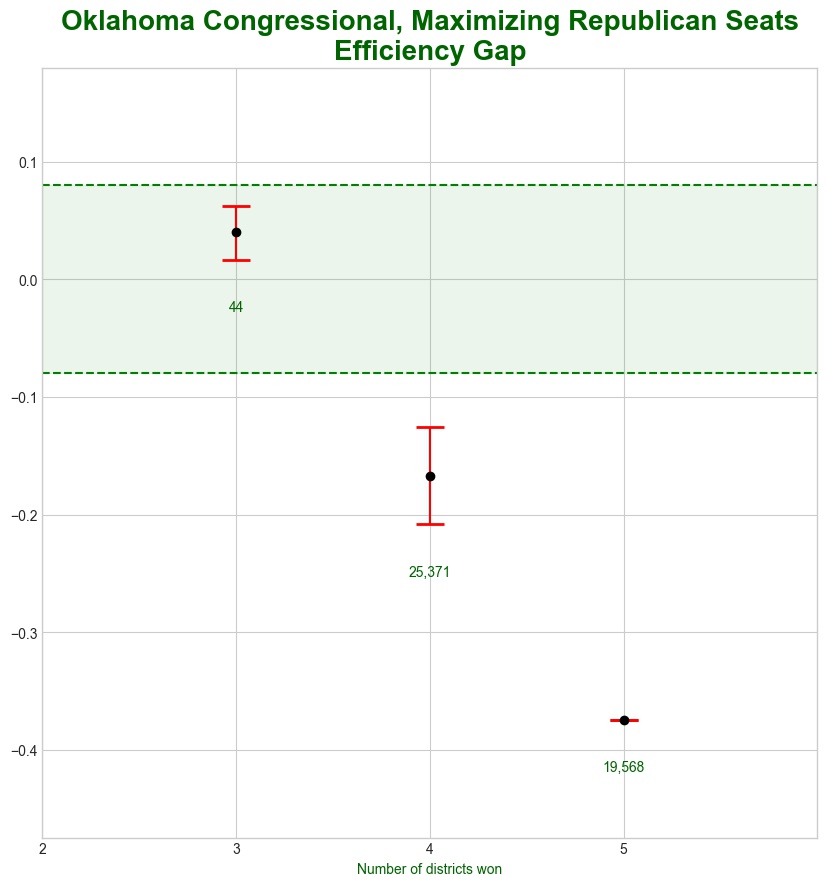}
\includegraphics[width=1.5in]{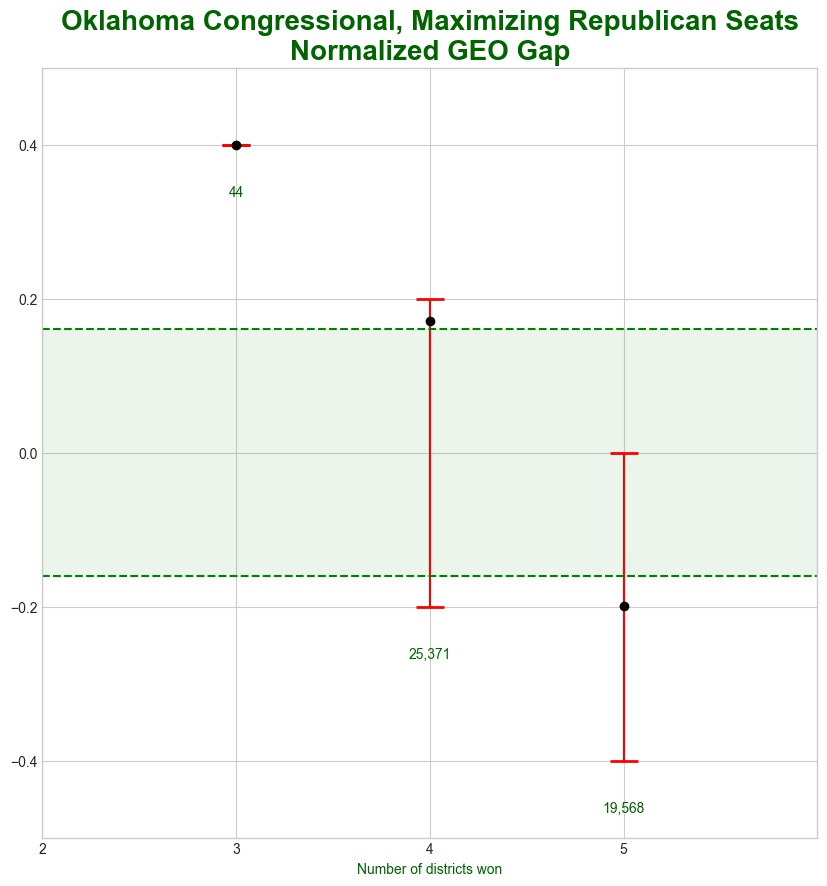}
\includegraphics[width=1.5in]{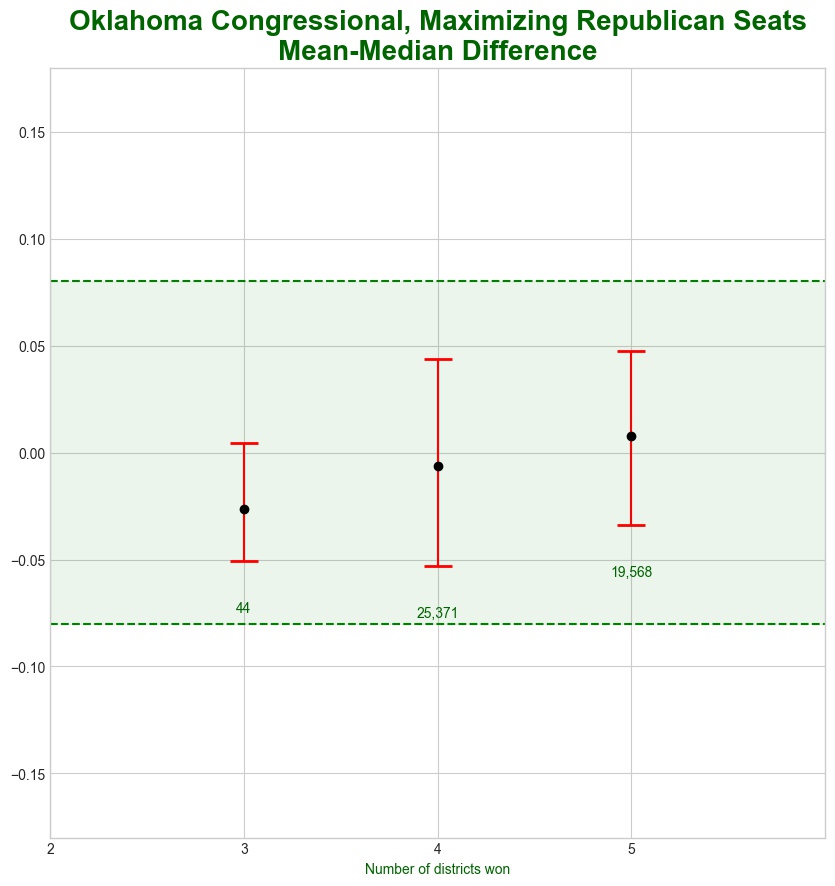}
\includegraphics[width=1.5in]{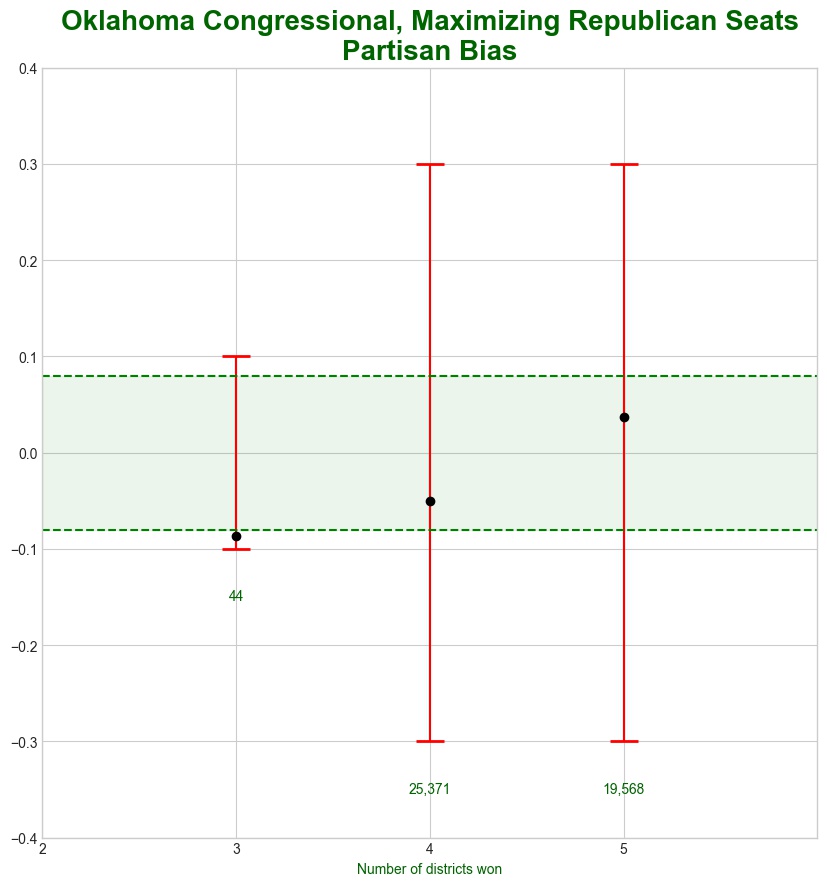}
\caption{Empirical results for Oklahoma's Congressional map and 2018 Gubenatorial election data, searching for maps with as many Republican-won districts as possible.  Horizontal axis is number of districts won, vertical axis is metric value ranges.  The green region is from $0.16\inf(m)$ to $0.16\sup(m)$ for each metric $m$.  The small number below each metric value range is the number of maps produced that had the corresponding number of districts won.  The dot within each vertical bar is the mean value of that metric on all produced maps with the corresponding number of districts won.}
\label{fig:short_bursts_OKcongR}
\end{figure}

Again, in Figures  \ref{fig:short_bursts_OKcongD} and \ref{fig:short_bursts_OKcongR}, we see that the range of values for the Mean-Median Difference and Partisan Bias are \emph{not} more extreme on maps with a more extreme number of districts won by the party in question.  In fact, while the mean values of the Efficiency Gap and GEO ratio increase as the number of districts won by Democrats increases (and the mean values decrease as the number of districts won by Republicans increases), as expected, those mean values do the opposite for the Mean-Median Difference and Partisan Bias.  That is, the ``average'' value of the Mean-Median Difference (Partisan Bias) on maps with an extreme number of Democratic districts suggests that they are \emph{less gerrymandered towards Democrats} than maps with a fewer number of Democratic districts won\footnote{And the ``average'' value of the Mean-Median Difference (Partisan Bias) on maps with an extreme number of Republican districts suggests that they are \emph{less gerrymandered towards Republicans} than maps with a fewer number of Republican districts won.}.  

The third and final example result that we give in this section is for a ``purple'' state (one not clearly Democratic or Republican leaning): Pennsylvania.  In Figures \ref{fig:short_bursts_PAcongD} and \ref{fig:short_bursts_PAcongR}, we see the results for Pennsylvania's congressional map.  Again, we see that the most extreme values of the Mean-Median Difference and Partisan Bias are achieved on many maps, not just those with the most extreme number of districts won by a particular party.  

\begin{figure}[h]
\centering
\includegraphics[width=1.5in]{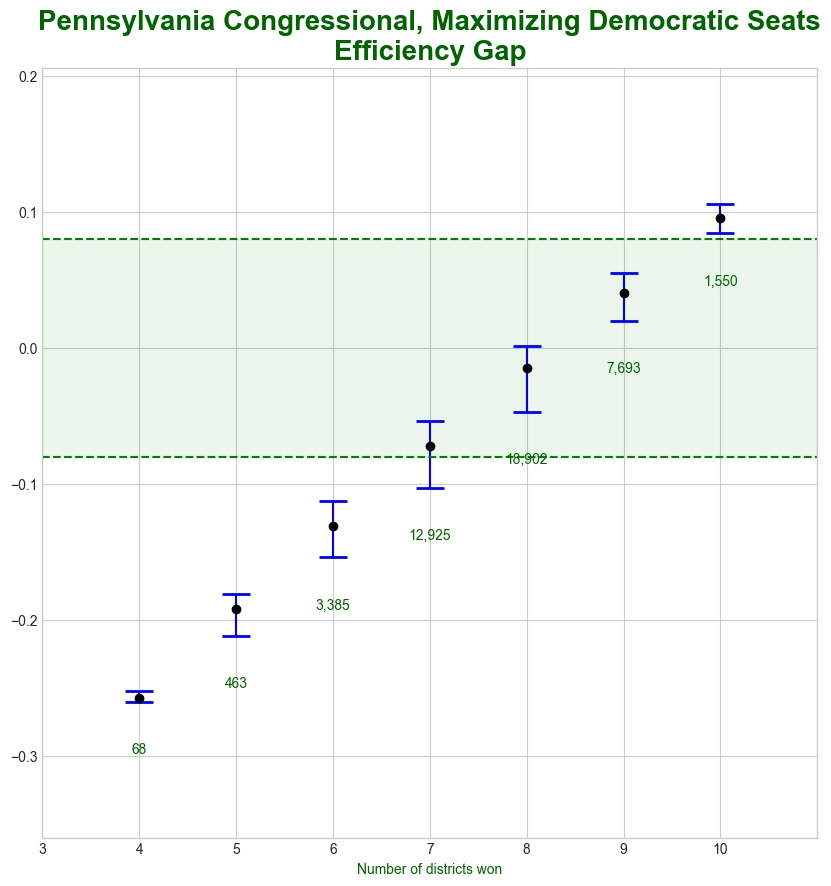}
\includegraphics[width=1.5in]{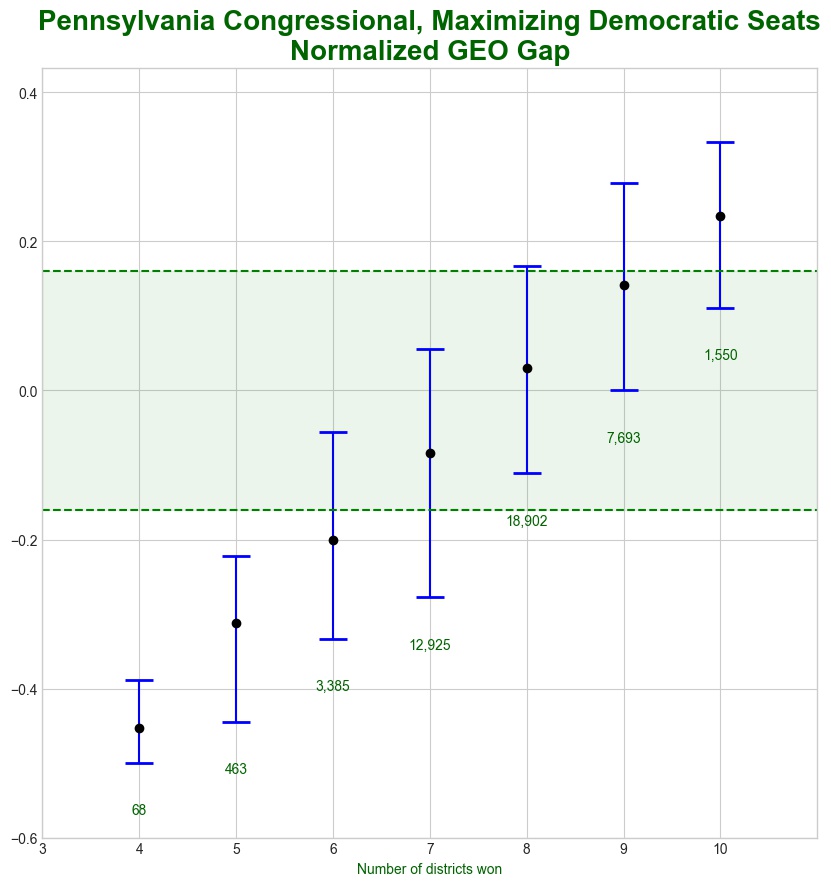}
\includegraphics[width=1.5in]{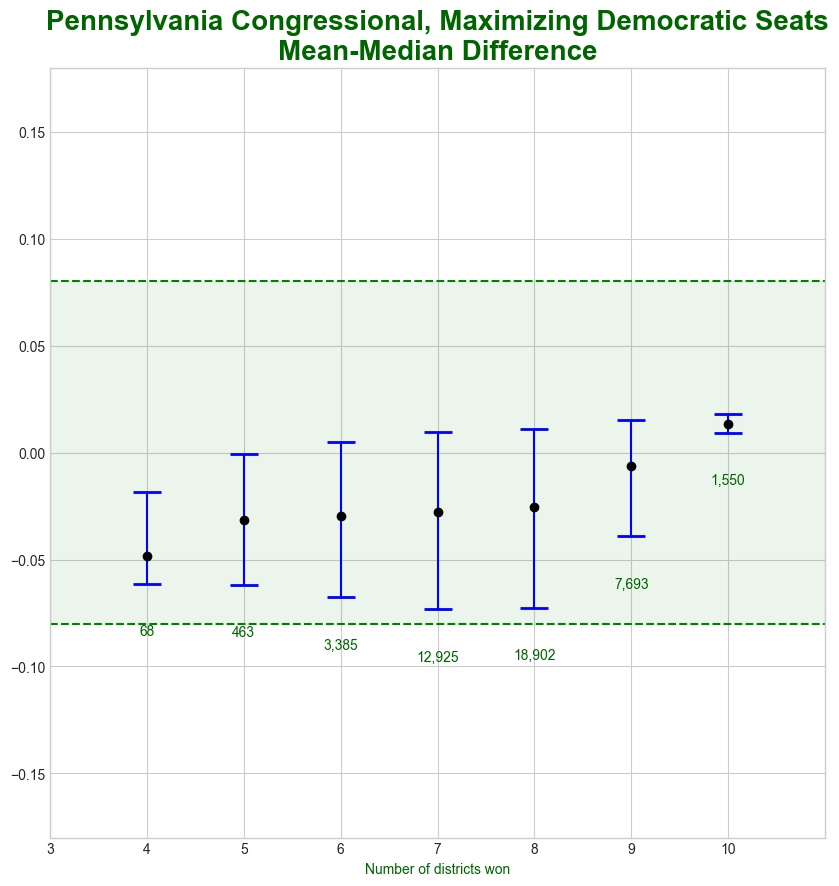}
\includegraphics[width=1.5in]{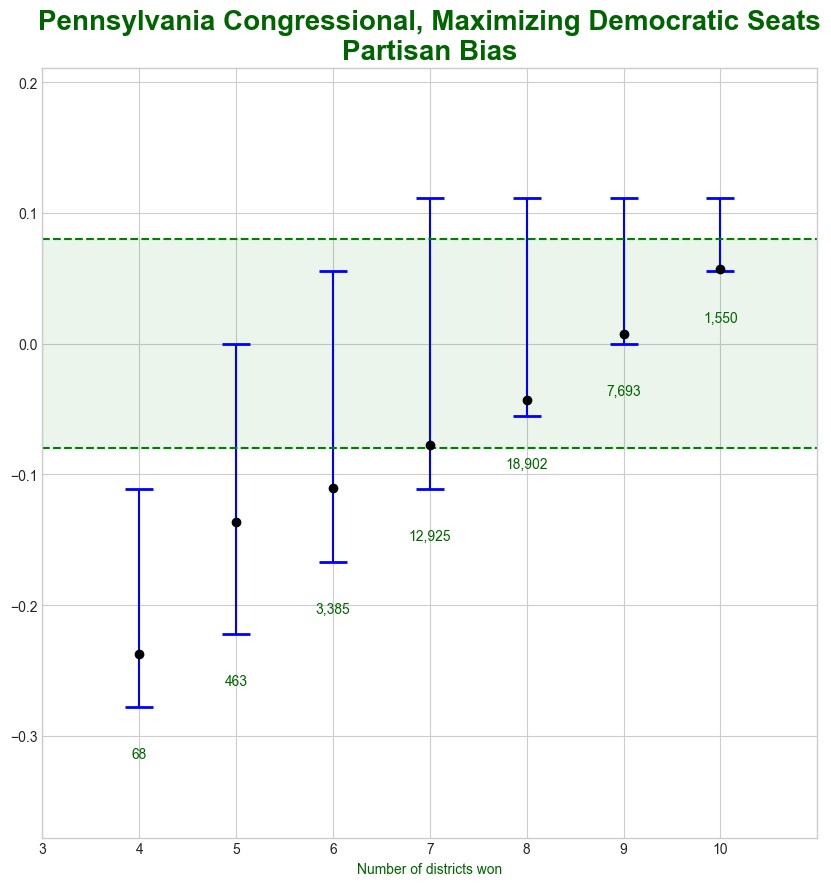}
\caption{Empirical results for Pennsylvania's Congressional map and 2016 US Senate election data, searching for maps with as many Democratic-won districts as possible.  Horizontal axis is number of districts won, vertical axis is metric value ranges.  The green region is from $0.16\inf(m)$ to $0.16\sup(m)$ for each metric $m$.  The small number below each metric value range is the number of maps produced that had the corresponding number of districts won.  The dot within each vertical bar is the mean value of that metric on all produced maps with the corresponding number of districts won.}
\label{fig:short_bursts_PAcongD}
\end{figure}

\begin{figure}[h]
\centering
\includegraphics[width=1.5in]{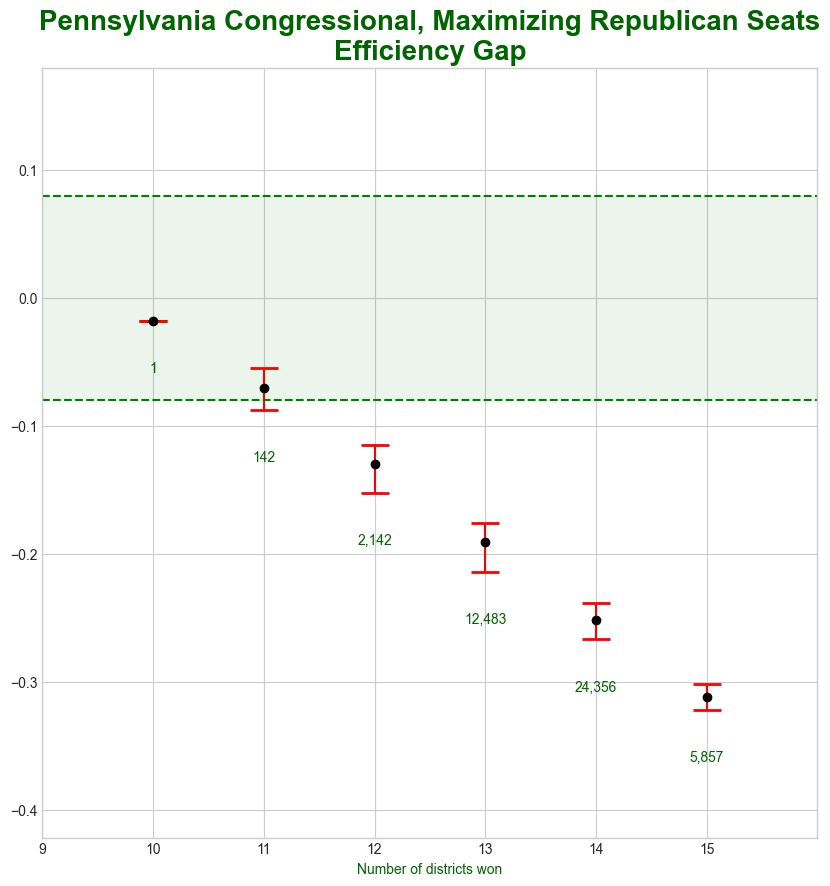}
\includegraphics[width=1.5in]{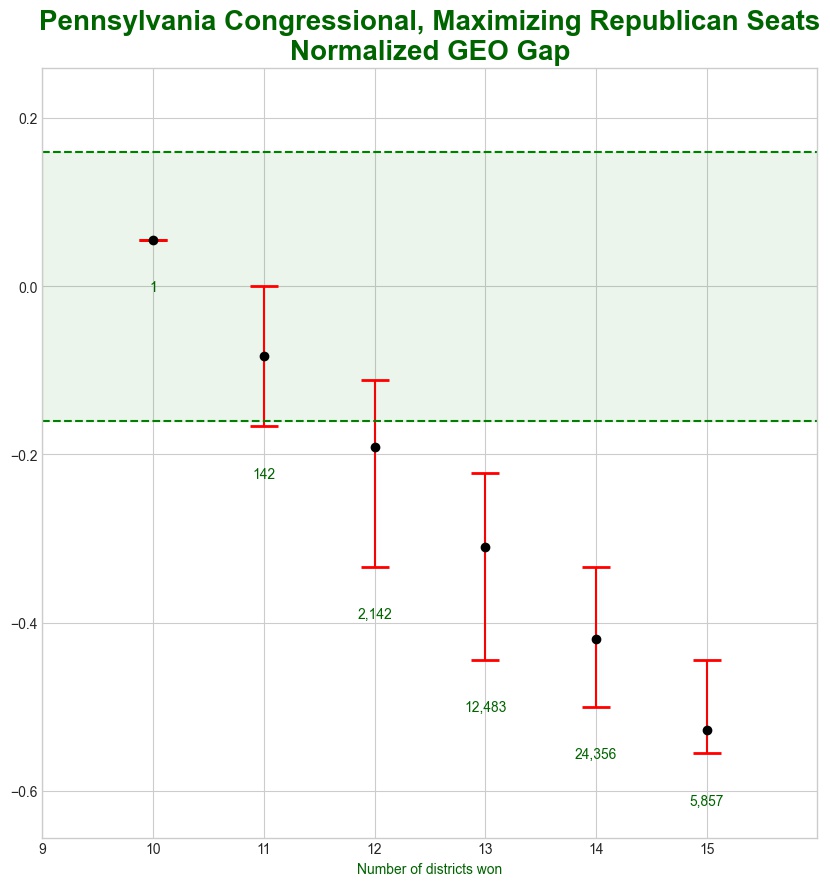}
\includegraphics[width=1.5in]{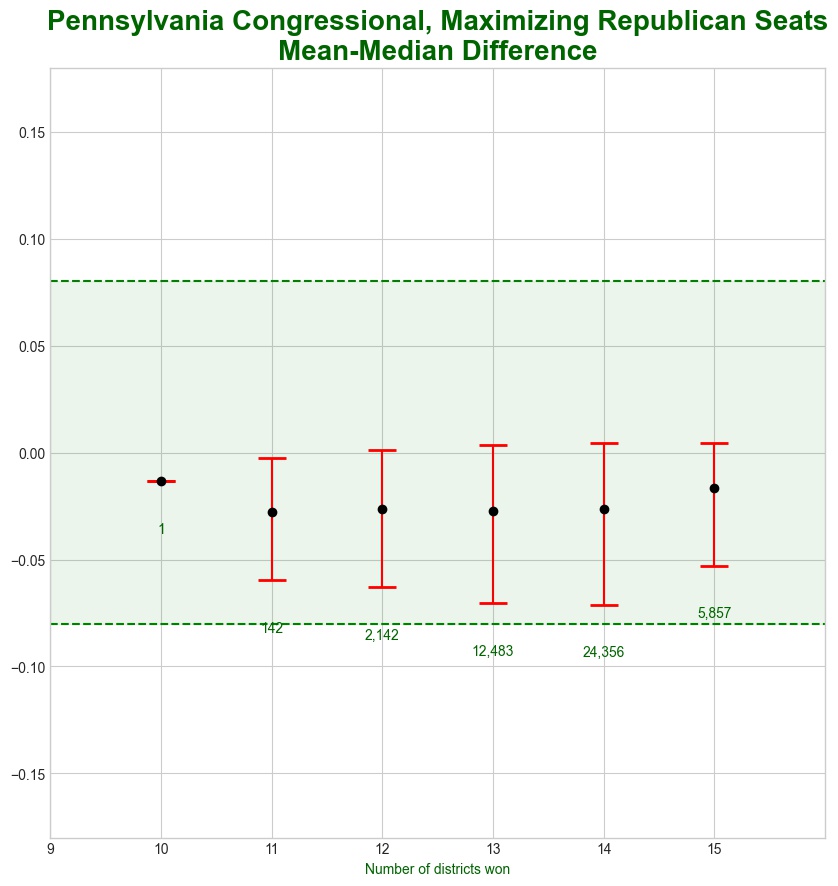}
\includegraphics[width=1.5in]{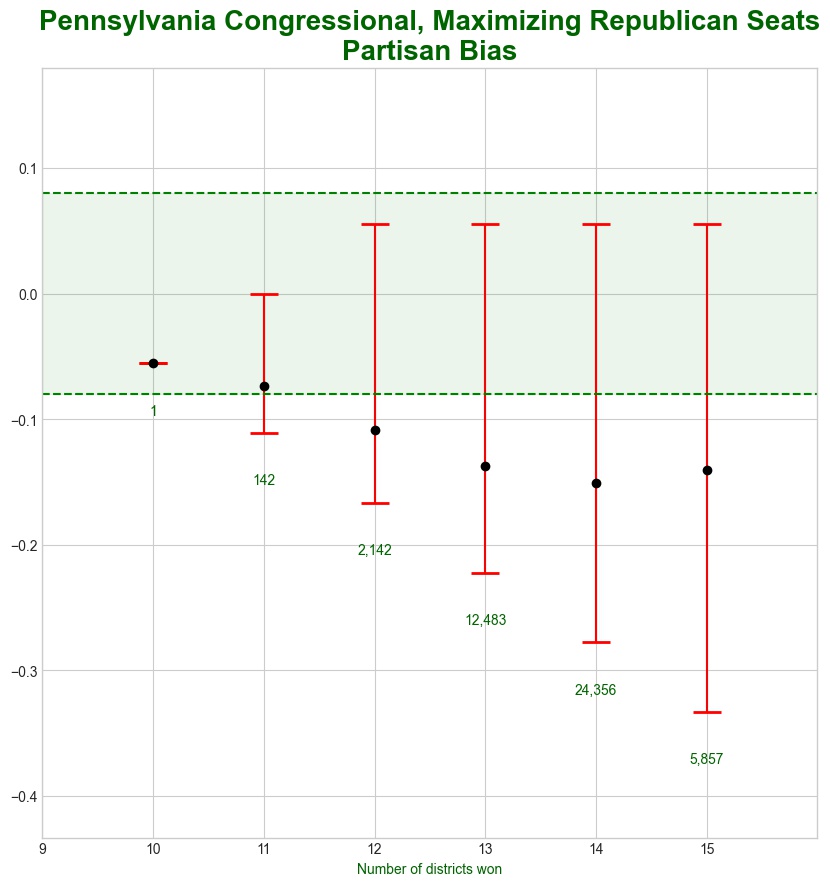}
\caption{Empirical results for Pennsylvania's Congressional map and 2016 US Senate election data, searching for maps with as many Republican-won districts as possible.  Horizontal axis is number of districts won, vertical axis is metric value ranges.  The green region is from $0.16\inf(m)$ to $0.16\sup(m)$ for each metric $m$.  The small number below each metric value range is the number of maps produced that had the corresponding number of districts won.  The dot within each vertical bar is the mean value of that metric on all produced maps with the corresponding number of districts won.}
\label{fig:short_bursts_PAcongR}
\end{figure}

Figures  \ref{fig:short_bursts_MAlowerD},  \ref{fig:short_bursts_MAlowerR}, \ref{fig:short_bursts_OKcongD}, \ref{fig:short_bursts_OKcongR}, \ref{fig:short_bursts_PAcongD}, \ref{fig:short_bursts_PAcongR}, and the other Figures in Appendix \ref{appendix:sb_images} give clear evidence that extreme values of the Mean-Median Difference and Partisan Bias do \emph{not} correspond to maps with an extreme number of districts won.  That is, while we saw in Section \ref{sec:bounds} that the values these two metrics \emph{are able to take on} do not become more extreme when a more extreme number of districts is won, here we see that, even on  real maps/data, the values that these two metrics \emph{do take on} cannot detect extreme number of districts won.

We note that these results (indicating that the symmetry metricss of Partisan Bias and Mean-Median Difference values are not useful in flagging extreme maps) are similar to those in \cite{ImplementingPartisanSymmetry}, although here we see those results on 18 maps, rather than the three maps studied in \cite{ImplementingPartisanSymmetry}.  Thus, while the results of \cite{ImplementingPartisanSymmetry} could potentially have been dismissed as ``flukes'' and maps specifically chosen because of some kind of unique property, our results robustly show that on \emph{every single map} of the 18 maps we considered, the Mean-Median Difference and Partisan Bias do not detect an extreme number of districts won.  We encourage the reader to view the results for the other maps, which can be found in Appendix \ref{appendix:sb_images}.

\section{Conclusion}

In this article, we considered the relationship between possible values of the ubiquitous Mean-Median Difference and Partisan Bias metrics and the number of seats won under a fixed districting plan as a proxy for detecting gerrymandering. 
The results are contained in two studies: a theoretical study (the ``bounds'' study) and an empirical study (the ``bugs'' study).  In our theoretical study, we saw how the values the MM and PB could take on were different for different vote-share, seat-share pairs.  We saw that the values these metrics could take on are not, in general, more extreme when a more extreme number of districts is won.  We also saw how these metrics can take on values that run counter to the public's expectations, and to the ways they are frequently used and described.
In our empirical study, we considered 18 different maps: 3 maps for each of 6 states with varying political geographies.  These 18 maps also had a wildly varying number of districts (ranging from 5 to 203).  In each of these maps, we saw that more extreme values of the Mean-Median Difference and Partisan Bias did not correspond to maps with more extreme number of districts won.

Given the evidence we present, we hope that experts and the general public take extreme care when attempting to use these metrics in the future.  We also hope that 
public-facing apps, experts, and academics will provide clear 
descriptions of these metrics and their limitations in order to more accurately describe what they can and cannot do. 
We are not the first to give this warning \cite{ImplementingPartisanSymmetry}, though  our paper provides significant additional evidence that this warning should be heeded. 
Extreme values of these metrics simply do not correspond to an extreme number of districts won, and thus these metrics should not be solely used for that purpose.  

\section{Acknowledgements}

The authors would like to thank Thomas Ratliff for sharing the code he wrote to make the beautiful images for the Empirical analysis results, most of which can be seen in Appendix \ref{appendix:sb_images}.  The authors would also like to thank Moon Duchin, whose comments, questions, and suggestions have greatly improved this article.  Finally, the authors would like to thank the reviewers for their careful reading and thoughtful comments.

This material is based upon work supported by the National Science Foundation under Grant No. DMS-1928930, while the authors were in residence at the Mathematical Sciences Research Institute in Berkeley, California, during the semester/year of Fall 2023.

\bibliographystyle{plain}
\bibliography{MMandPartisanBias}

\appendix

\section{Proofs}\label{sec:proofs}

For ease of exposition, we use the notation that a district has vote share $\frac{1}{2}^{+}$ if party $A$'s vote share is $\frac{1}{2}$ and party $A$ wins that district.  Similarly, a district has vote share $\frac{1}{2}^-$ if party $A$'s vote share is $\frac{1}{2}$ and party $A$ loses that district.

\begin{proof}[Proof of Theorem \ref{thm:PBfixed_districts}]
Suppose $(V, S)$ is a pair of rational numbers with $\frac{1}{4} \leq V \leq 1$, $ \frac{1}{2} \leq S  \leq 1 $, $S \leq 2V$, and $S\geq2V-1$. Let $n$ be the number of districts; that is, $S$ can be written as a rational number with denominator $n$.

\noindent\underline{Case 1:}  $V = \frac{1}{2}$

Suppose first that $S = 1$.  In this case, all districts must have vote share $\frac{1}{2}^+$, which means that the PB of this election is 0.

Otherwise, say  $S = \frac{n - \ell}{n}$ for $\ell \geq 1$.  In this case, choose $\epsilon >0$, $\epsilon \in \Q$ small enough so that  $\frac{1}{2}-(n-\ell)\epsilon >0$.  Set $n-\ell$ districts to have vote share $V+\epsilon = \frac{1}{2}+\epsilon$; these are winning districts.  Set one losing district to have vote share $V-(n-\ell)\epsilon = \frac{1}{2}-(n-\ell)\epsilon$, and the remaining losing districts to have vote share $V = \frac{1}{2}^-$.  This constructed election has vote share $V$, seat share $S$, a maximum number of districts with vote share above $V$, and a minimum number of districts with vote share below $V$.  The Partisan Bias is 
\begin{equation*}
\frac{1}{2} \left(\frac{n-\ell}{n}-\frac{1}{n}\right) = \frac{1}{2} \left(S-\frac{1}{n}\right)
\end{equation*}
as claimed in the Theorem statement.  

Now choose $\epsilon >0$, $\epsilon \in \Q$ small enough so that  $\frac{1}{2}+\ell\epsilon <1$.  Set $\ell$ districts to have vote share $V-\epsilon = \frac{1}{2}-\epsilon$; these are losing districts.  Set one winning district to have vote share $V+\ell\epsilon = \frac{1}{2}+\ell\epsilon$, and the remaining winning districts to have vote share $V = \frac{1}{2}^+$.  This constructed election has vote share $V$, seat share $S$,  a minimum number of districts with vote share above $V$, and a maximum number of districts with vote share below $V$.  The Partisan Bias is 
\begin{equation*}
\frac{1}{2} \left(\frac{1}{n}-\frac{\ell}{n}\right) = \frac{1}{2}\left(S-1+\frac{1}{n}\right)
\end{equation*}
as claimed in the Theorem statement.

\noindent\underline{Case 2:} $\frac{1}{4} \leq V < \frac{1}{2}$

Let $k = \lceil n \left(1- \frac{S}{2V}\right)-1\rceil$.  Note that $k$ is chosen as the largest integer such that

\begin{equation*}\label{eq:k_boundPB}
V \cdot \frac{k}{n} + S\cdot \frac{1}{2} < V
\end{equation*}
 
To maximize PB on an election with seat share $S$ and vote share $V$, set proportion $S$ districts to have vote share $\frac{1}{2} + \delta$, and set $k$  districts to have vote share $V+\epsilon$ for $\delta, \epsilon \in \Q$, $\delta, \epsilon >0$ small enough that  $(V+\epsilon) \cdot \frac{k}{n} + S\cdot \left(\frac{1}{2} +\delta\right)= V$ and $V + \epsilon \leq \frac{1}{2}$, $\frac{1}{2}+\delta \leq 1$.  (Such $\epsilon$ and $\delta$ are guaranteed by the choice of $k$, along with the fact that $V < \frac{1}{2} \leq S$.) The remaining districts will have vote share 0.  This maximal PB value is $\frac{1}{2}\left( S+ \frac{k}{n}-\left(1-S-\frac{k}{n}\right)\right) = \frac{1}{2}\left(2S-1+\frac{2k}{n}\right)$.

If we want to make PB as small as possible on an election with seat share $S$ and vote share  $V$, we recall that  $S \leq 2V$, so that the $p\in \Q$ satisfying 
\begin{equation*}
p(1-S) + S \cdot \frac{1}{2} = V
\end{equation*}
must have $p\geq0$.  Since $V < \frac{1}{2}$, we must have $p < V$.  Thus, the minimal PB achievable is $\frac{1}{2} \left(S - (1-S)\right) = S-\frac{1}{2}$.

Using these two constructions, we have for $(V, S)$ the following bounds:
\begin{align*}
S-\frac{1}{2}  &\leq PB \leq \frac{1}{2}\left(2S-1+\frac{2k}{n}\right) \\
S-\frac{1}{2} & \leq PB \leq  \frac{1}{2}\left(2S-1 +\frac{2\left\lceil n \left(1-\frac{S}{2V}\right)-1\right\rceil}{n}\right)
\end{align*}

\noindent\underline{Case 3:} $\frac{1}{2} < V < 1$, $S = 1$

In this case, we can choose $\epsilon > 0, \epsilon \in \Q$ small enough that $V + (n-1)\epsilon \leq 1$ and $V - (n-1)\epsilon \geq \frac{1}{2}$.  The largest PB is achieved with $n-1$ districts having vote share $V+\epsilon$ and one district having vote share $V-(n-1)\epsilon$, resulting in a PB value of $\frac{1}{2}\left(\frac{n-1}{n}-\frac{1}{n}\right) = \frac{n-2}{2n}$.  And the smallest PB is achieved with $n-1$ districts having vote share $V-\epsilon$ and one district having vote share $V+(n-1)\epsilon$, resulting in a PB value of $\frac{1}{2}\left(\frac{1}{n} - \frac{n-1}{n}\right) = \frac{2-n}{2n}$.

\noindent\underline{Case 4:}. $\frac{1}{2} < V < 1$, $S < 1$

Here, if we want to make PB as large as possible for an election winning seat share $S$ and vote share  $V$, we construct election data with proportion $S$ districts having vote share $V+\epsilon$ for $\epsilon \in \Q, \epsilon > 0$, and proportion $1-S$ districts having vote share $\frac{1}{2}-\delta$ for $\delta \in \Q, \delta \geq 0$ so that $V+\epsilon \leq 1$, $\frac{1}{2}-\delta \geq 0$, and 
\begin{equation*}
\left(1-S\right)\left(\frac{1}{2}-\delta\right) + S\left(V+\epsilon\right) = V.
\end{equation*}
Such $\epsilon$ and $\delta$ are guaranteed by the facts that $S < 1$ and $S \geq 2V-1$.  This results in a maximal PB of $\frac{1}{2}\left(S-(1-S)\right) = S-\frac{1}{2}$.

If we want to make PB as small as possible  for an election winning $S$ districts and vote share  $V$, we set $k = \lceil n\left(1-\frac{1-S}{2(1-V)}\right)-1 \rceil$.  Note that $k$ is chosen as the largest integer such that

\begin{equation*}
(1-S) \cdot \frac{1}{2} + \frac{k}{n}V + \left(S-\frac{k}{n}\right) >V
\end{equation*}

Choose $\epsilon \in \Q, \delta \in \Q$, and $\gamma \in \Q$ with $\epsilon > 0, \delta, \gamma \geq 0$ so that $(1-S) \cdot \left(\frac{1}{2}-\delta\right) + \frac{k}{n}(V-\epsilon) + \left(S-\frac{k}{n}\right)\left(1-\gamma\right) = V$, $\frac{1}{2} - \delta \geq 0$, $V-\epsilon \geq \frac{1}{2}$ and $1-\gamma > V$.  Such $\epsilon, \delta$, and $\gamma$ exist because of the choice of $k$ and the facts that $V > \frac{1}{2}$ and $S < 1$.  Thus, to minimize PB, we construct election data with a proportion $S- \frac{k}{n}$ of districts having vote share $1-\gamma$, a proportion of $1-S$ districts having vote share $\frac{1}{2}-\delta$, and $k$ districts having vote share $V - \epsilon$.  This results in a PB value of $\frac{1}{2}\left(S-\frac{k}{n} - \left(1-S+\frac{k}{n}\right)\right) = \frac{1}{2}\left(2S-1-\frac{2k}{n}\right)$.

Using these two constructions, we have for $(V, S)$ the following bounds:
\begin{align*}
\frac{1}{2}\left(2S-1-\frac{2k}{n}\right)&\leq PB\leq  S-\frac{1}{2} \\
\frac{1}{2}\left(2S-1-\frac{2 \left\lceil n \left(1-\frac{1-S}{2(1-V)}\right)-1\right\rceil}{n}\right)   &\leq  PB \leq S-\frac{1}{2}
\end{align*}

\noindent\underline{Case 5:} $V = 1$.  In this case, $S = 1$ and all districts have vote share 1.  Thus, the only value for PB is 1.  

\end{proof}

\begin{proof}[Proof of Theorem \ref{thm:PB}]
Take the limit as $n \to \infty$ in Theorem \ref{thm:PBfixed_districts}.

\end{proof}

\begin{proof}[Proof of Theorem \ref{thm:MMfixed_districts}]

Suppose $(V, S)$ is a pair of rational numbers with $\frac{1}{4} \leq V \leq 1$, $ \frac{1}{2} \leq S \leq 1$, $S \leq 2V$, and $S\geq2V-1$.  Suppose that $n \geq 3$ is the number of districts and that  $\ell$ is the number of losing districts, so that $S = \frac{n-\ell}{n}$.

\noindent\underline{Case 1:} $S = \frac{1}{2}$.

Since $S = \frac{1}{2}$, the smallest that the median value could be is $\frac{1}{4}$, which would occur when the middle vote share values are 0 (a losing district) and $\frac{1}{2}^+$ (a winning district).  For $V \geq \frac{1}{4} $, this median value is achievable by letting the other winning vote shares range from $\frac{1}{2}$ (when $V = \frac{1}{4}$)  to 1 (for $V$ approaching $\frac{1}{2}$).  This gives a minimum MM of $\frac{1}{4}-V$, which is achievable for $V \geq \frac{1}{4}$, so long as $V \leq \frac{0+\frac{1}{2}+\frac{n}{2}-1}{n} = \frac{n-1}{2n}$.

For $\frac{n-1}{2n} \leq V  \leq \frac{3}{4}$, let $p\in \Q$ be the value such that
\begin{equation*}
\frac{1}{n} \left(\frac{n}{2}p + \frac{1}{2} + \left(\frac{n}{2}-1\right)\cdot 1 \right) = V
\end{equation*}

Note that this implies
\begin{equation*}
p = 2\left(V + \frac{1}{2n} - \frac{1}{2}\right)
\end{equation*}

Then for $\frac{n-1}{2n} \leq V  \leq \frac{3}{4}$ the smallest that the median value can be for a constructed election with seat share $S= \frac{1}{2}$ and vote share $V$ occurs when there are $\frac{n}{2}-1$ districts with vote share 1,  1 district with vote share $\frac{1}{2}^+$, and $\frac{n}{2}$ districts with vote share $p$.  This results in a MM value of
\begin{equation*}
\frac{p+\frac{1}{2}}{2} - V= \frac{2\left(V + \frac{1}{2n} - \frac{1}{2}\right)+\frac{1}{2}}{2} -V = \frac{1}{2n}-\frac{1}{4}.
\end{equation*}
 Using symmetry, we get
\begin{align*}
\frac{1}{4}-V & \leq MM \leq \frac{1}{4}-\frac{1}{2n}  & \text{ if } S = \frac{1}{2}, \quad  \frac{1}{4} \leq V \leq \frac{n-1}{2n} \\
\frac{1}{2n}-\frac{1}{4}&\leq MM \leq \frac{1}{4}-\frac{1}{2n}  & \text{ if } S = \frac{1}{2}, \quad  \frac{n-1}{2n} < V < \frac{n+1}{2n} \\
\frac{1}{2n}-\frac{1}{4}&\leq MM \leq \frac{3}{4}-V  & \text{ if } S = \frac{1}{2}, \quad V \geq \frac{n+1}{2n}\\
\end{align*}

\noindent\underline{Case 2:} $S> \frac{1}{2}$, $\frac{1}{4} < V \leq 1$, lower bounds for MM

Firstly note that, since we assume that $S > \frac{1}{2}$, the median vote share must be at least as large as $\frac{1}{2}$.  For $V \leq \frac{1}{2}$, we can achieve a median vote share of $\frac{1}{2}$ by constructing $S$ of the districts to have vote share $\frac{1}{2}^+$  and choosing the losing districts to have the vote share $p\in \Q$ satisfying $p(1-S)+\frac{1}{2} \cdot S = V$.  This $p$ will have $p \leq V \leq \frac{1}{2}$, since $V \leq \frac{1}{2}$.  This would give vote share $V$, seat share $S$, and the smallest possible median of $\frac{1}{2}$.  

For $\frac{1}{2} < V $, first suppose that $n$ is even (so that $n \geq 4$).  Then we can choose $\ell$ districts to have vote share $\frac{1}{2}^-$ and  $\frac{n}{2}+1-\ell$ districts to have vote share $\frac{1}{2}^+$, giving the smallest possible median value of $\frac{1}{2}$.  The remaining districts have vote share $p \in \Q$ which results in a statewide vote share of $V$.  However this is only possible if the equation
\begin{equation*}
\frac{\left(\frac{n}{2}+1\right) \cdot \frac{1}{2} + p \left(\frac{n}{2}-1\right)}{n} = V
\end{equation*}
has a solution with $p \leq 1$.  Thus, we have shown that if $V \leq \frac{3n-2}{4n}$, we can use this construction to obtain election data with MM value of $\frac{1}{2}-V$.  

If $V > \frac{3n-2}{4n}$, then the smallest possible median value $p$ would be achieved when $\ell$ districts have vote share $\frac{1}{2}^-$, $\frac{n}{2}-1$ districts have vote share 1, and the remaining districts have vote share $p \in \Q$ where $p$ satisfies the equation
\begin{equation*}
\frac{1}{n}\left(\frac{1}{2} \cdot \ell + \left(\frac{n}{2}-\ell+1\right)p + \left(\frac{n}{2}-1\right) \cdot 1 \right) = V.
\end{equation*}
This results in
\begin{equation*}
p = \frac{2V+S-2+\frac{2}{n}}{2S-1+\frac{2}{n}}
\end{equation*}
and an MM value of $p - V$.

Now suppose that $n$ is odd (so that $n \geq 3$).    Then we can choose $\ell$ districts to have vote share $\frac{1}{2}^-$ and  $\frac{n-1}{2}+1-\ell$ districts to have vote share $\frac{1}{2}^+$, giving the smallest possible median value of $\frac{1}{2}$.  The remaining districts have vote share $p \in \Q$ which results in a statewide vote share of $V$.  However this is only possible if the equation
\begin{equation*}
\frac{\left(\frac{n-1}{2}+1\right) \cdot \frac{1}{2} + p \cdot \frac{n-1}{2}}{n} = V
\end{equation*}
has a solution with $p \leq 1$.  Thus, we have shown that if $V \leq \frac{3n-1}{4n}$, we can use this construction to obtain election data with MM value of $\frac{1}{2}-V$.  

If $V > \frac{3n-1}{4n}$, then the smallest possible median value $p$ would be achieved when $\ell$ districts have vote share $\frac{1}{2}^-$, $\frac{n-1}{2}$ districts have vote share 1, and the remaining districts have vote share $p \in \Q$ where $p$ satisfies the equation
\begin{equation*}
\frac{1}{n}\left(\frac{1}{2} \cdot \ell + \left(\frac{n-1}{2}-\ell+1\right)p + \frac{n-1}{2} \cdot 1 \right) = V.
\end{equation*}
This results in
\begin{equation*}
p = \frac{2V+S-2+\frac{1}{n}}{2S-1+\frac{1}{n}}
\end{equation*}
and a mean-median value of $p - V$.

\noindent\underline{Case 3:} $S> \frac{1}{2}$, $\frac{1}{4} < V \leq 1$, upper bounds for MM

Suppose first that $n$ is even.  The maximum possible median is achieved by setting $\ell$ districts to have vote share 0, $\frac{n}{2}-\ell-1$ to have vote share $\frac{1}{2}^+$,  and the remaining $\frac{n}{2}+1$ districts to have vote share $p \in \Q$ where $p$ satisfies the equation

\begin{equation*}
\frac{1}{n}\left(\left(\frac{n}{2}+1\right)p+\left(\frac{n}{2}-\ell-1\right) \cdot \frac{1}{2}\right) = V.
\end{equation*}

This results in 
\begin{equation*}
p = \frac{2V+\frac{1}{2}-S+\frac{1}{n}}{1+\frac{2}{n}}
\end{equation*}
and a mean-median value of $p-V$.  However, this is only achievable if $p \leq 1$; otherwise the maximum possible median value is 1, resulting in a mean-median value of $1-V$.

Now suppose that $n$ is odd.  The maximum possible median is achieved by setting $\ell$ districts to have vote share 0, $\frac{n-1}{2}-\ell$ to have vote share $\frac{1}{2}^+$,  and the remaining $\frac{n-1}{2}+1$ districts to have vote share $p \in \Q$ where $p$ satisfies the equation

\begin{equation*}
\frac{1}{n}\left(\left(\frac{n-1}{2}+1\right)p+\left(\frac{n-1}{2}-\ell \right) \cdot \frac{1}{2}\right) = V.
\end{equation*}

This results in 
\begin{equation*}
p = \frac{2V+\frac{1}{2}-S+\frac{1}{n}}{1+\frac{2}{n}}
\end{equation*}
and a mean-median value of $p-V$.  However, this is only achievable if $p \leq 1$; otherwise the maximum possible median value is 1, resulting in a mean-median value of $1-V$.

\end{proof}

\begin{proof}[Proof of Theorem \ref{thm:MM}]
Take the limit as $n \to \infty$ in Theorem \ref{thm:MMfixed_districts}.
\end{proof}

\begin{proof}[Proof of Theorem \ref{thm:turnout_different}]
For ease of exposition, we state the following in terms of the Partisan Bias, but note that the same examples will work for the Mean-Median Difference as well.

Suppose $S = \frac{n-\ell}{n} \geq \frac{1}{2}$ and vote shares $V_1, V_2, \dots, V_n$ give a Partisan Bias value of 0.  In order to achieve the smallest possible vote share $V^*$ with turnout ratio $C$ corresponding to seat share $S$, we need to find $V_i$s which allow for the lowest possible vote shares in losing districts, and we need to weight those the most heavily.  This corresponds to:
\begin{align*}
V_1 = V_2 = \cdots = V_\ell &=0 \\
V_\ell = V_{\ell+1} = \cdots = V_{n-\ell-1} &= \frac{1}{2} \\
V_{n-\ell} = V_{n-\ell+1} = \cdots = V_n &= 1
\end{align*}
(Here districts with vote shares $\frac{1}{2}$ are winning).

The districts numbered 1 through $\ell$ will have $C$ times as much weight as the others, which implies they are each weighted $\frac{C}{n-\ell+C\ell}$, while districts numbered $\ell+1$ through $n$ are weighted $\frac{1}{n-\ell+C\ell}$.  Now we can calculate the vote share $V^*$:
\begin{align*}
V^* &= 0 \cdot \ell \cdot \frac{C}{n-\ell+C\ell} + (n-2\ell)\cdot \frac{1}{2} \cdot \frac{1}{n-\ell+C\ell}+\ell \cdot 1 \cdot \frac{1}{n-\ell+C\ell} \\
&= \frac{n}{2(n-\ell+C\ell)}
\end{align*}
Now we use the fact that $S = \frac{n-\ell}{n}$ to get
\begin{align*}
V^* &= \frac{n}{2(nS+C(n-nS))} \\
&= \frac{1}{2(S+C(1-S))} 
\end{align*}
By adjusting the weight $C$, we can get any $V^*$ in between $\frac{1}{2(S+C(1-S))} $ and $\frac{1}{2}$.

Now we prove the upper bound by similar means.  Suppose $S = \frac{n-\ell}{n} \geq \frac{1}{2}$ and vote shares $V_1, V_2, \dots, V_n$ give a Partisan Bias value of 0.  In order to achieve the largest possible vote share $V^*$ with turnout ratio $C$ corresponding to seat share $S$, we need to find $V_i$s which allow for the largest number of vote shares at value 1, along with the largest possible vote shares in losing districts.  Using the results from Corollary \ref{cor:pb_equal_turnout}, we know this corresponds to:
\begin{align*}
V_1 = V_2 = \cdots = V_\ell &=\frac{1}{2} \\
V_{\ell+1} = V_{\ell+2} = \cdots = V_{n/2} &=V \\
V_{n/2+1} = V_{n-\ell+1} = \cdots = V_n &= 1
\end{align*}
 where $V = \frac{2-S}{3-2S}$, which we get by solving the equation $S = \frac{3V-2}{2V-1}$ for $V$.  (Here districts with vote shares $\frac{1}{2}$ are losing).  

The districts numbered $n/2+1$ through $n$ will have $C$ times as much weight as the others, which implies they are each weighted $\frac{2C}{n(C+1)}$, while districts numbered $1$ through $n/2$ are weighted $\frac{2}{n(C+1)}$.  Now we can calculate the vote share $V^*$:
\begin{align*}
V^* &= \frac{1}{2} \cdot \ell \cdot \frac{2}{n(C+1)} +\left(\frac{n}{2}-\ell\right) \cdot \frac{2-S}{3-2S} \cdot \frac{2}{n(C+1)} + \frac{n}{2} \cdot \frac{2C}{n(C+1)} \\
\end{align*}
Again we use the fact that $S = \frac{n-\ell}{n} = 1-\frac{\ell}{n}$ to get
\begin{align*}
V^*&= \frac{1-S}{C+1} + \frac{2-S}{3-2S} \cdot \frac{1}{C+1} - \frac{2(1-S)(2-S)}{(C+1)(3-2S)} + \frac{C}{C+1} \\
&=\frac{1+C(3-2S)}{(C+1)(3-2S)}
\end{align*}
By adjusting the weight $C$, we can get any $V^*$ in between $\frac{2-S}{3-2S} $ and $\frac{1+C(3-2S)}{(C+1)(3-2S)}$.

From Corollary \ref{cor:pb_equal_turnout}, we already know that we can get any $V^*$ between $\frac{1}{2}$ and $\frac{2-S}{3-2S} $, so we are now done.

\end{proof}

\section{Images from Empirical Analysis}\label{appendix:sb_images}

\begin{figure}[h]
\centering
\includegraphics[width=1.5in]{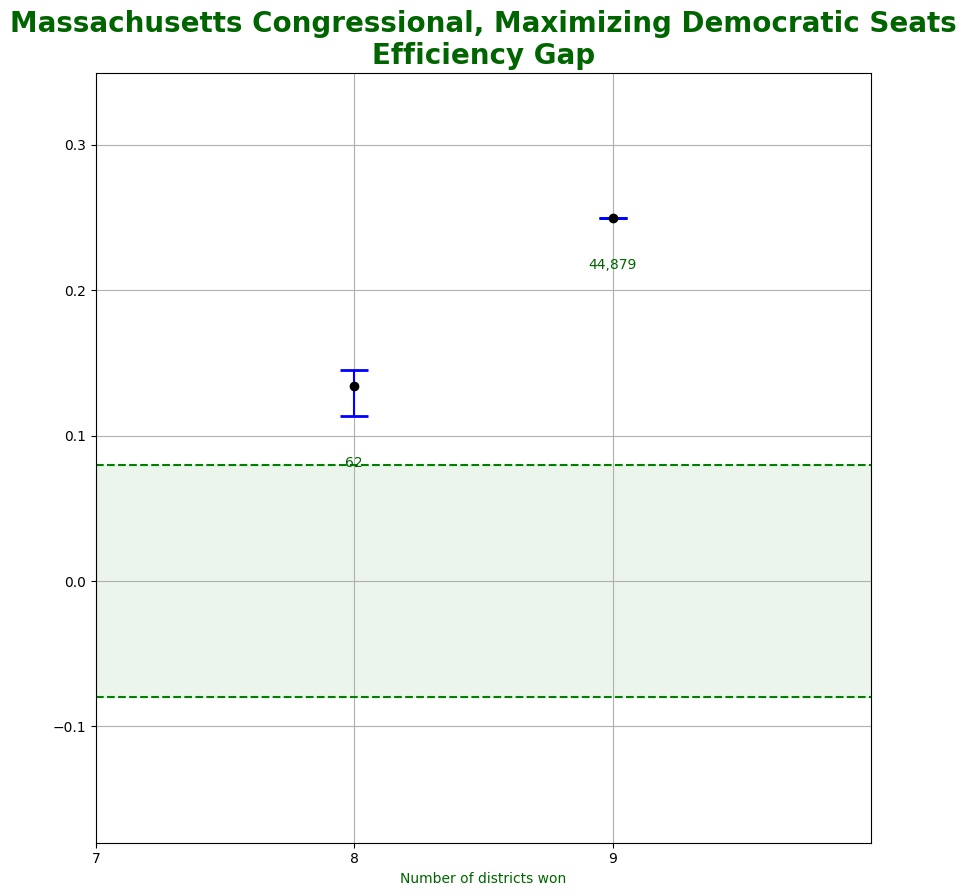}
\includegraphics[width=1.5in]{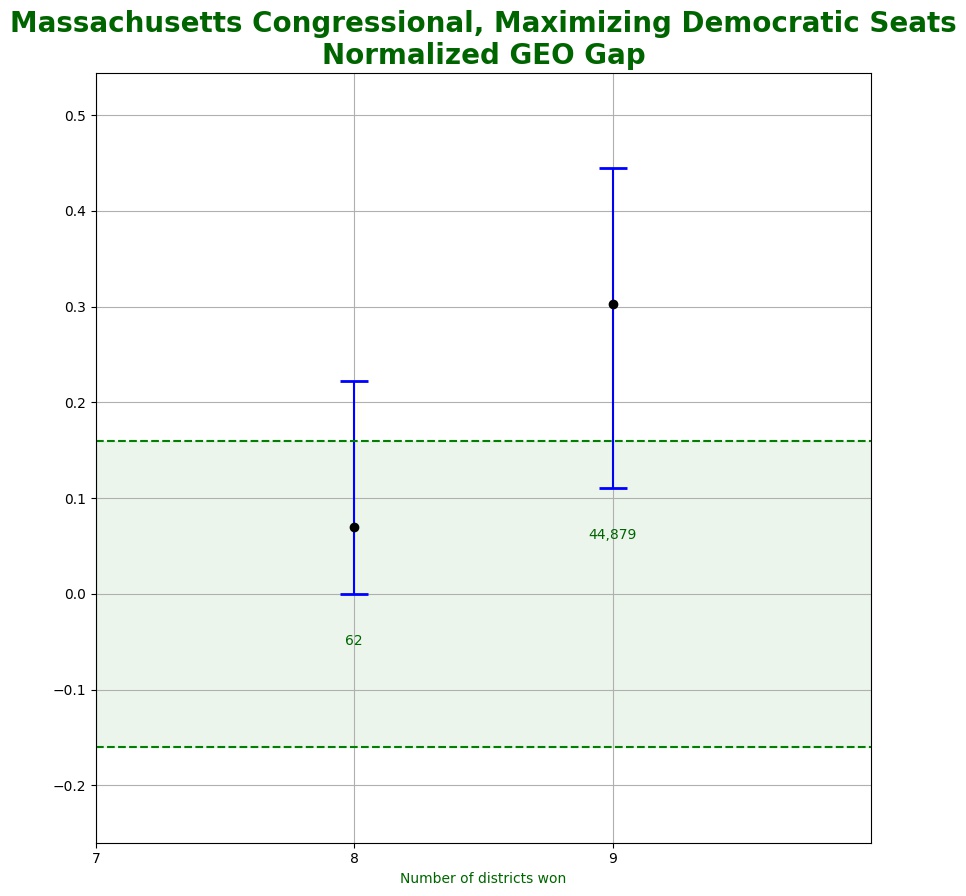}
\includegraphics[width=1.5in]{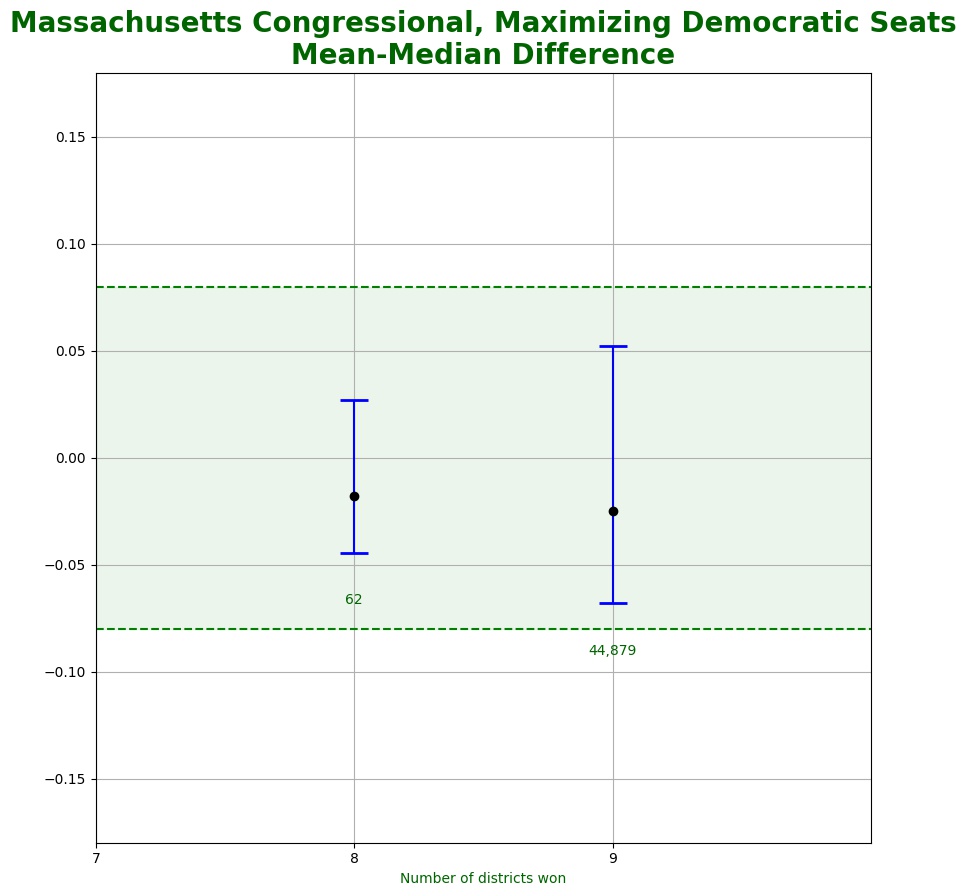}
\includegraphics[width=1.5in]{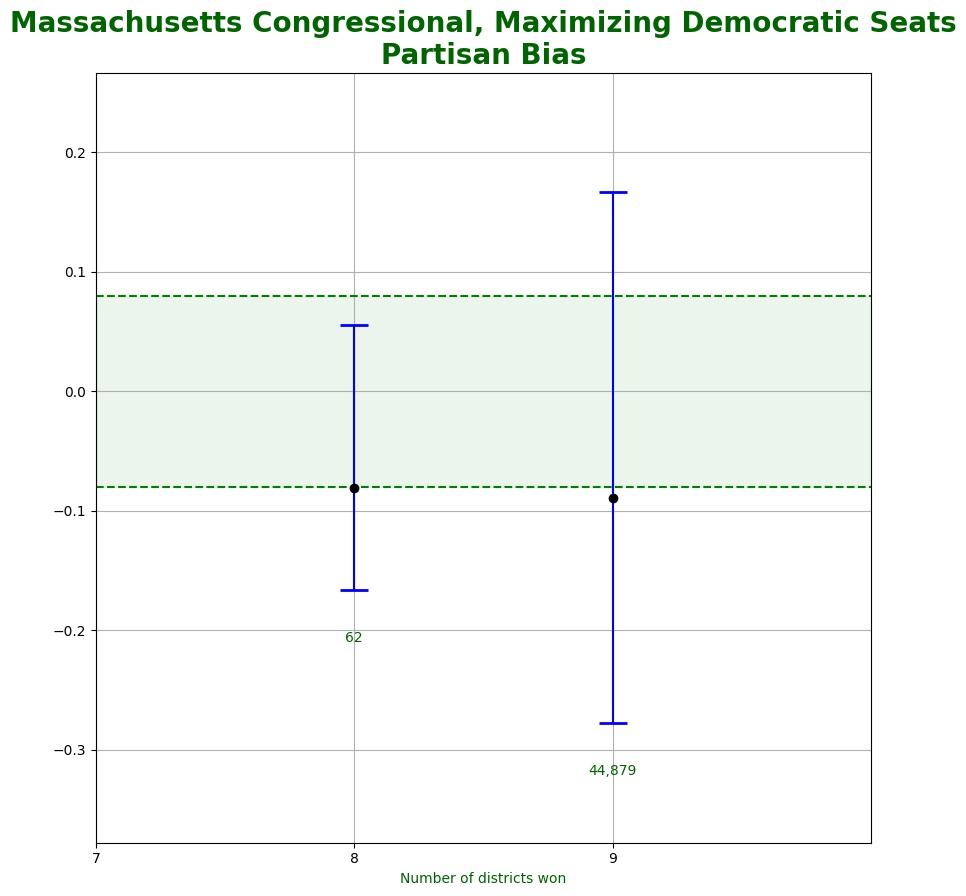}
\caption{Empirical results for Massachusetts's Congressional map and 2018 US Senate election data, searching for maps with as many Democratic-won districts as possible.  Horizontal axis is number of districts won, vertical axis is metric value ranges.  The green region is from $0.16\inf(m)$ to $0.16\sup(m)$ for each metric $m$.  The small number below each metric value range is the number of maps produced that had the corresponding number of districts won.  The dot within each vertical bar is the mean value of that metric on all produced maps with the corresponding number of districts won. }
\label{fig:short_bursts_MAcongD}
\end{figure}

\begin{figure}[h]
\centering
\includegraphics[width=1.5in]{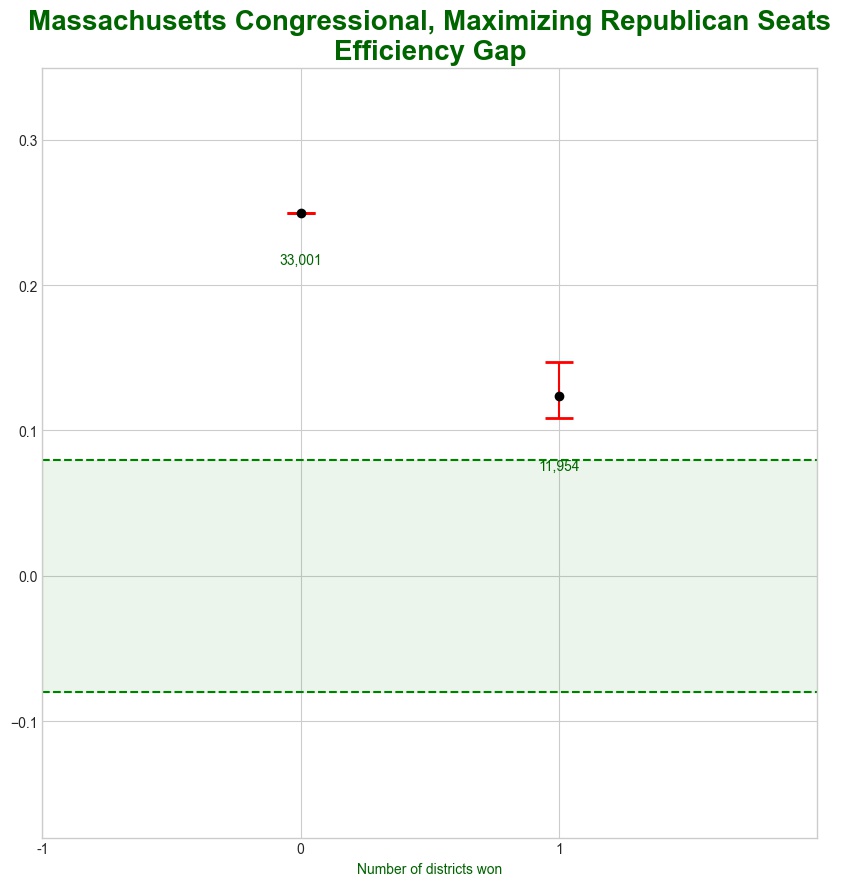}
\includegraphics[width=1.5in]{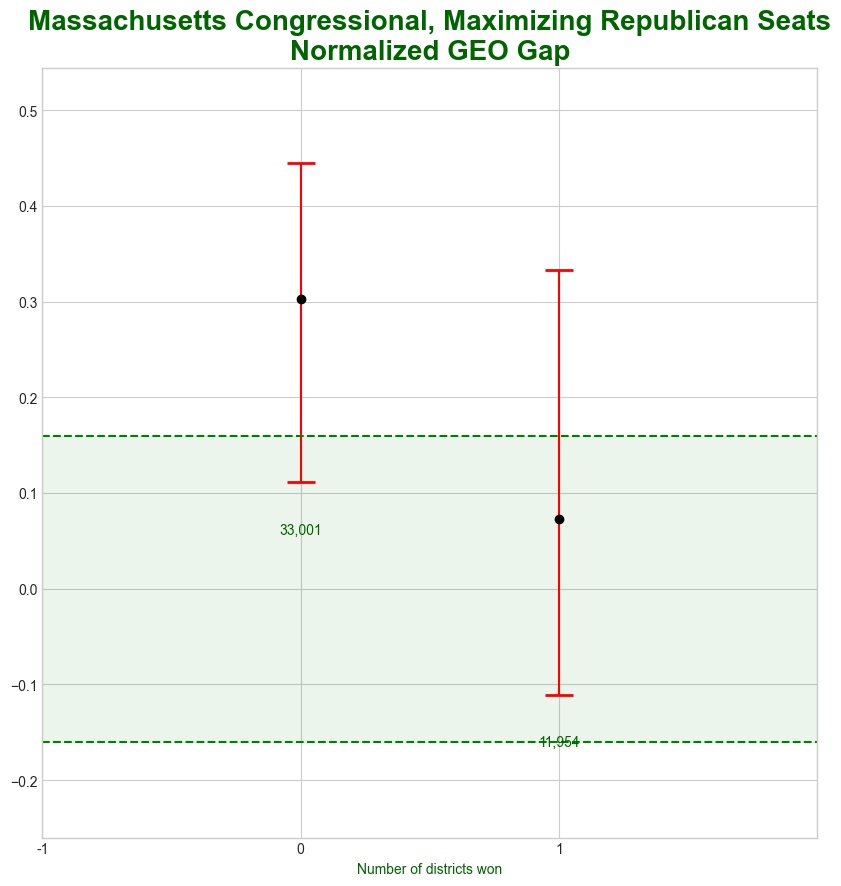}
\includegraphics[width=1.5in]{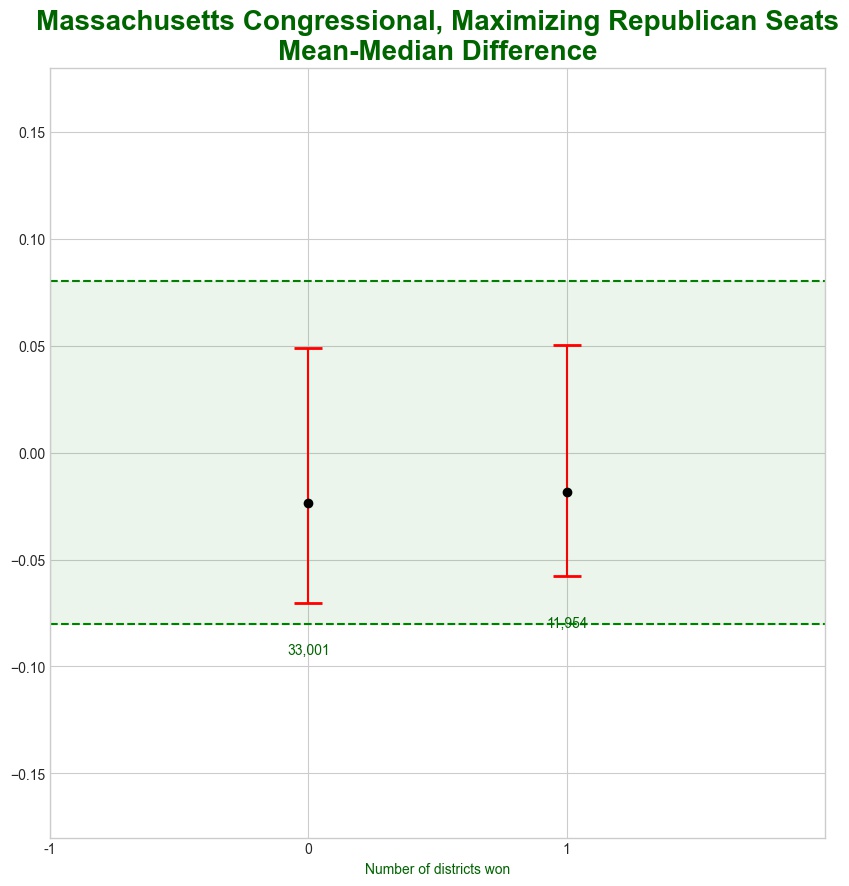}
\includegraphics[width=1.5in]{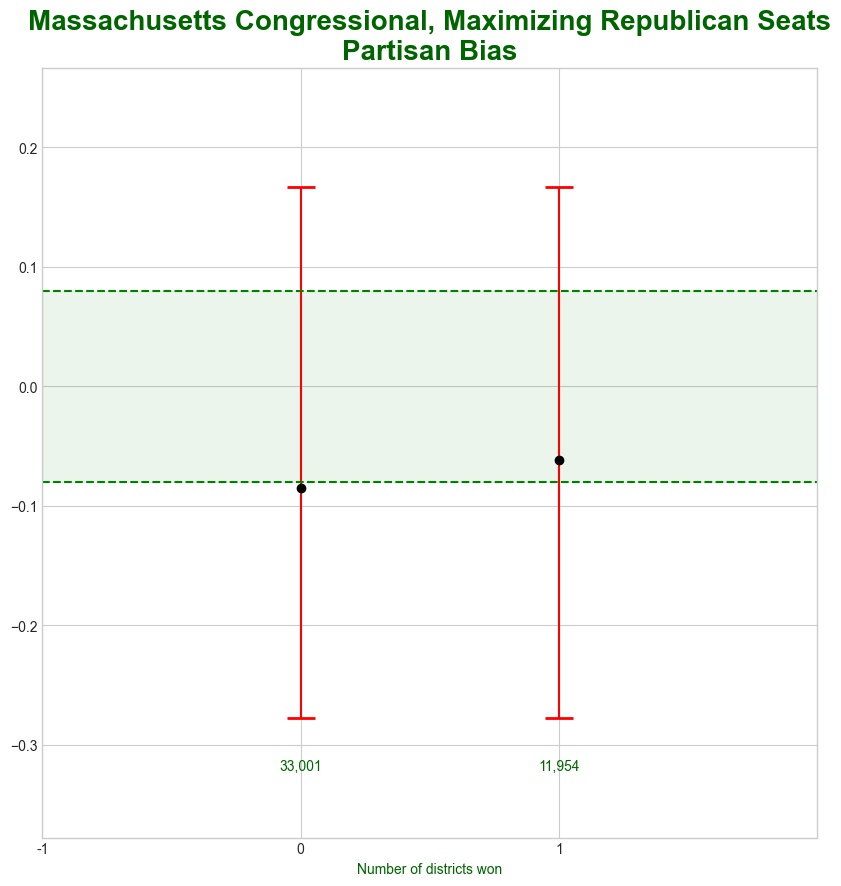}
\caption{Empirical results for Massachusetts's Congressional map and 2018 US Senate election data, searching for maps with as many Republican-won districts as possible.  Horizontal axis is number of districts won, vertical axis is metric value ranges.  The green region is from $0.16\inf(m)$ to $0.16\sup(m)$ for each metric $m$.  The small number below each metric value range is the number of maps produced that had the corresponding number of districts won.  The dot within each vertical bar is the mean value of that metric on all produced maps with the corresponding number of districts won.}
\label{fig:short_bursts_MAcongR}
\end{figure}

\begin{figure}[h]
\centering
\includegraphics[width=1.5in]{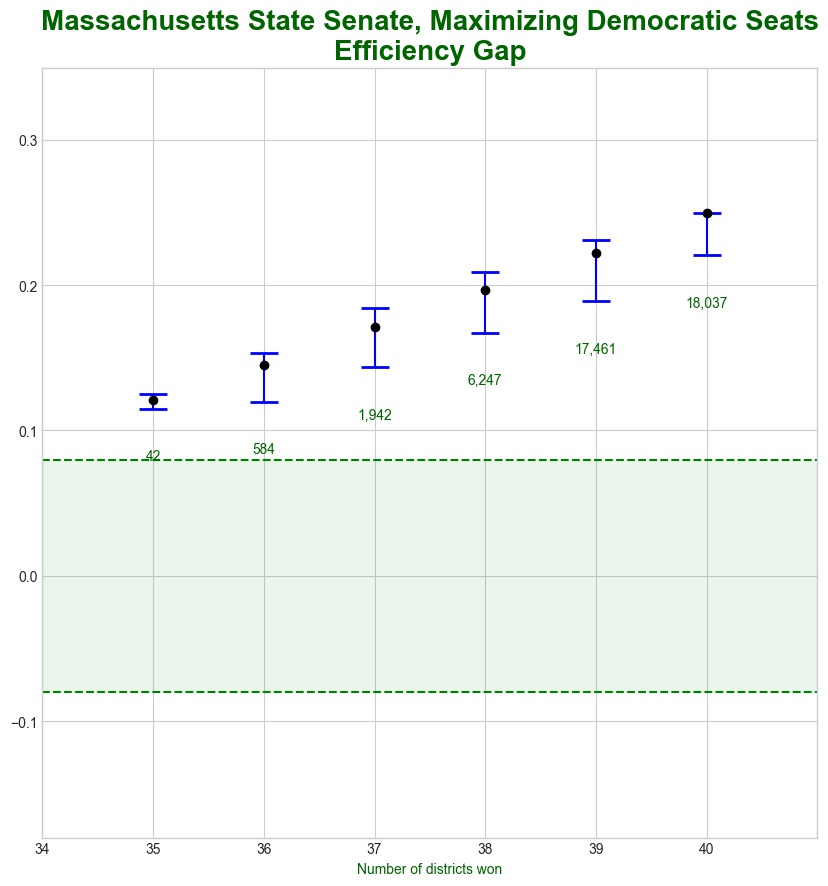}
\includegraphics[width=1.5in]{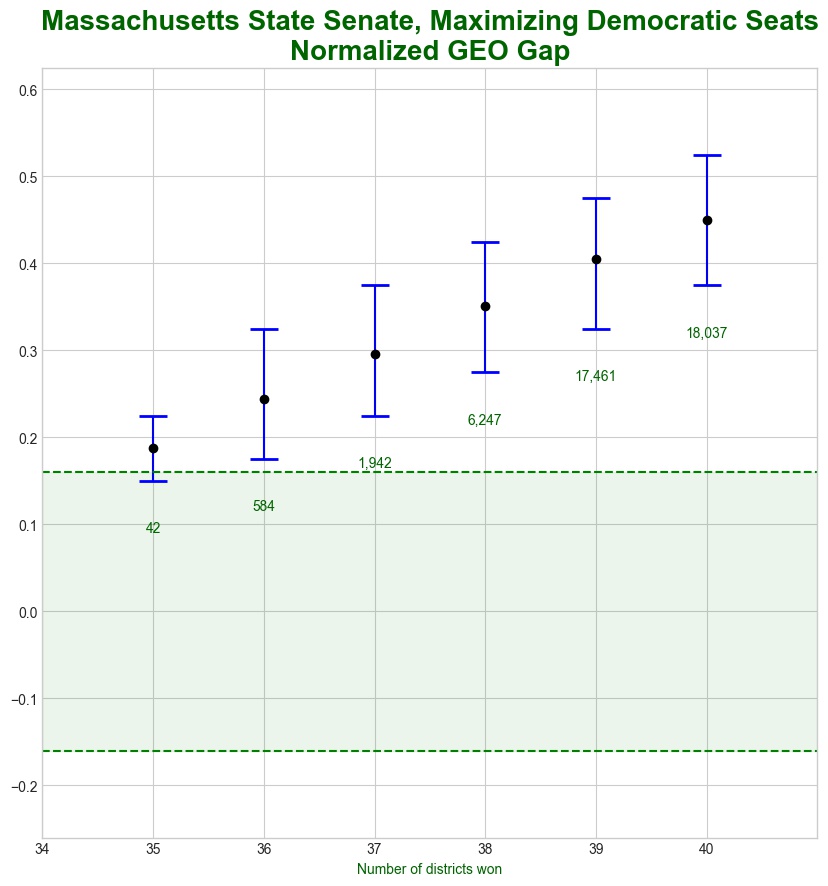}
\includegraphics[width=1.5in]{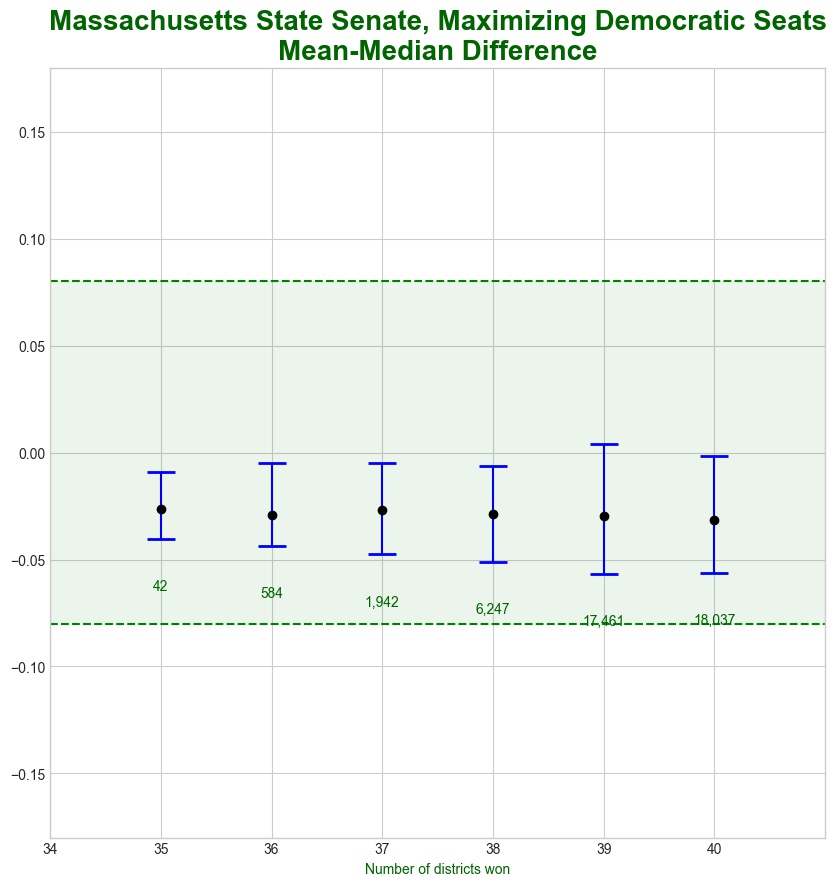}
\includegraphics[width=1.5in]{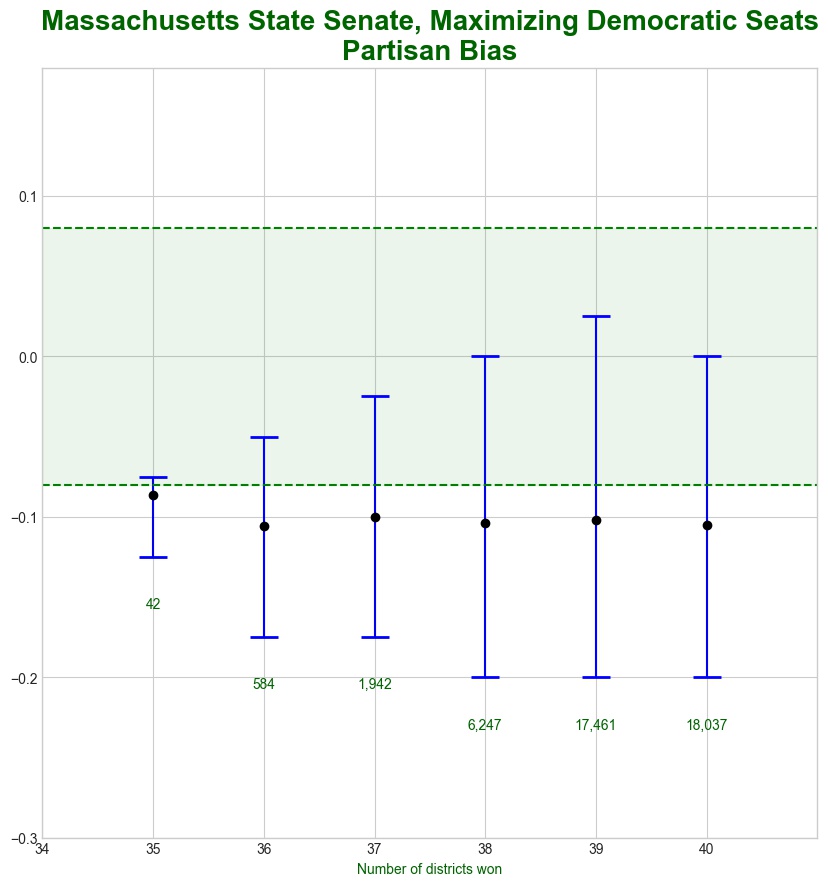}
\caption{Empirical results for Massachusetts's State Senate map and 2018 US Senate election data, searching for maps with as many Democratic-won districts as possible.  Horizontal axis is number of districts won, vertical axis is metric value ranges.  The green region is from $0.16\inf(m)$ to $0.16\sup(m)$ for each metric $m$.  The small number below each metric value range is the number of maps produced that had the corresponding number of districts won.  The dot within each vertical bar is the mean value of that metric on all produced maps with the corresponding number of districts won.}
\label{fig:short_bursts_MAupperD}
\end{figure}

\begin{figure}[h]
\centering
\includegraphics[width=1.5in]{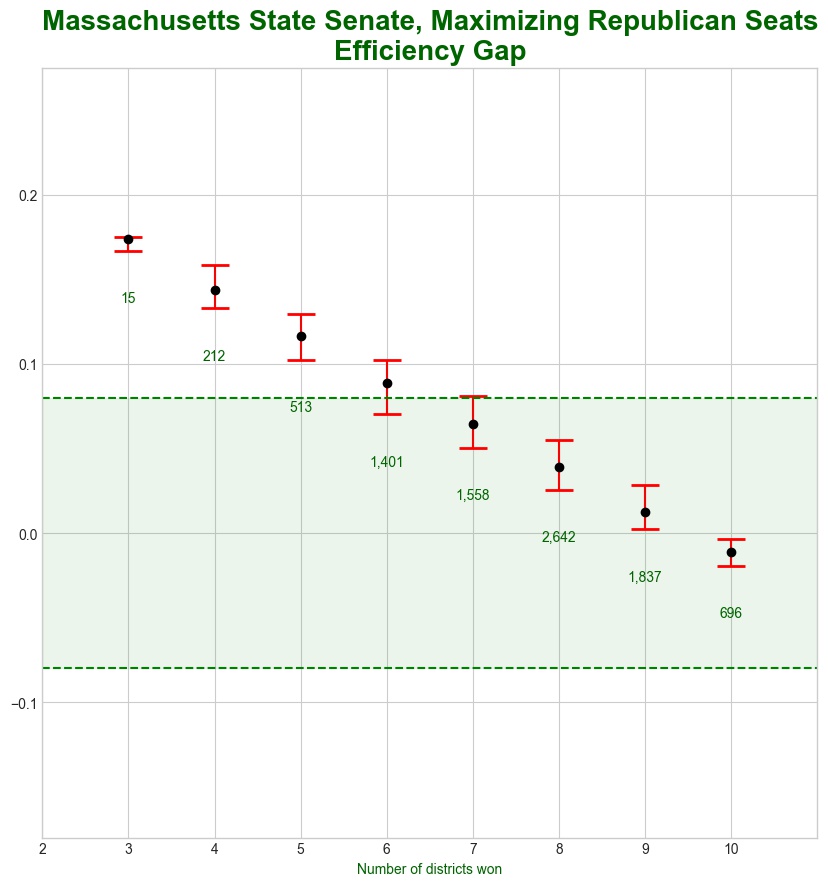}
\includegraphics[width=1.5in]{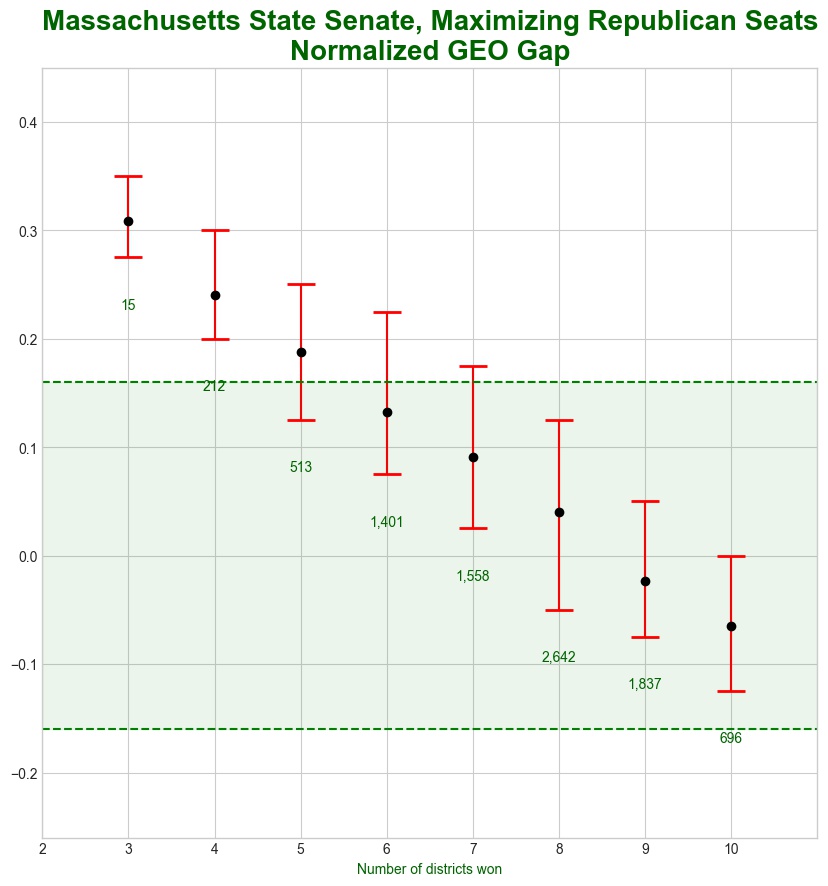}
\includegraphics[width=1.5in]{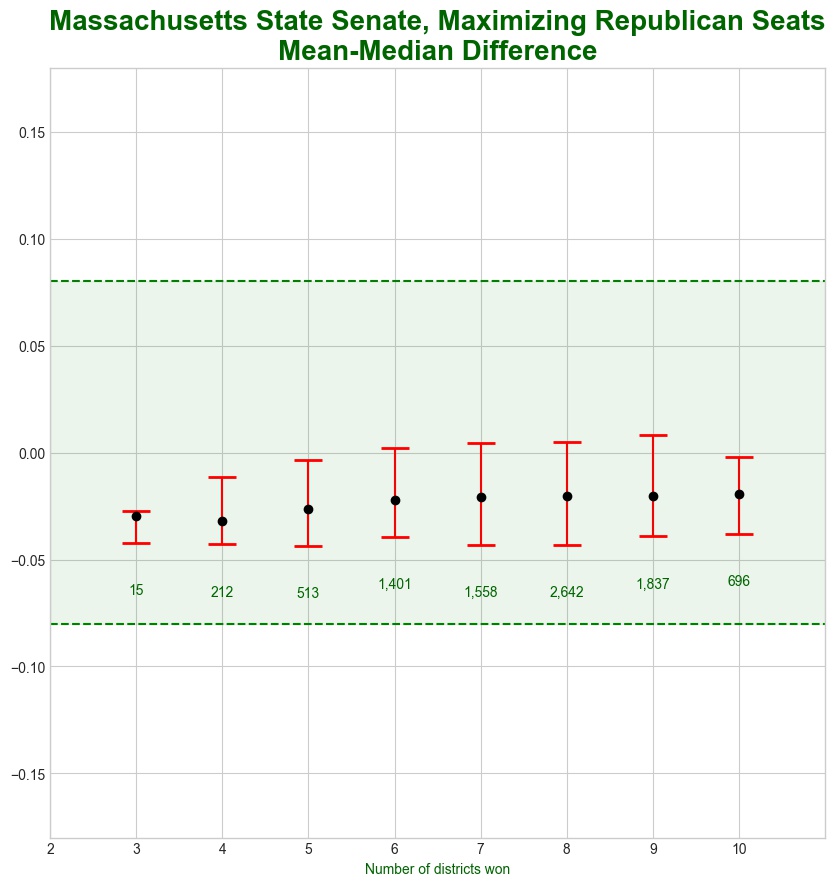}
\includegraphics[width=1.5in]{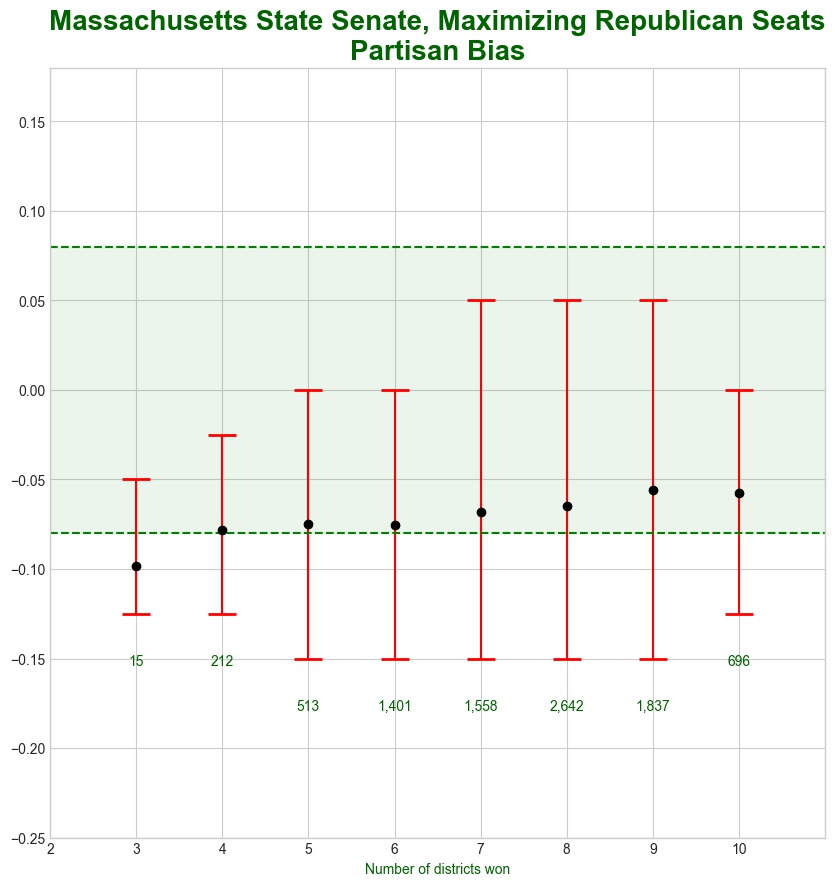}
\caption{Empirical results for Massachusetts's State Senate map and 2018 US Senate election data, searching for maps with as many Republican-won districts as possible.  Horizontal axis is number of districts won, vertical axis is metric value ranges.  The green region is from $0.16\inf(m)$ to $0.16\sup(m)$ for each metric $m$.  The small number below each metric value range is the number of maps produced that had the corresponding number of districts won.  The dot within each vertical bar is the mean value of that metric on all produced maps with the corresponding number of districts won.}
\label{fig:short_bursts_MAupperR}
\end{figure}

\begin{figure}[h]
\centering
\includegraphics[width=1.5in]{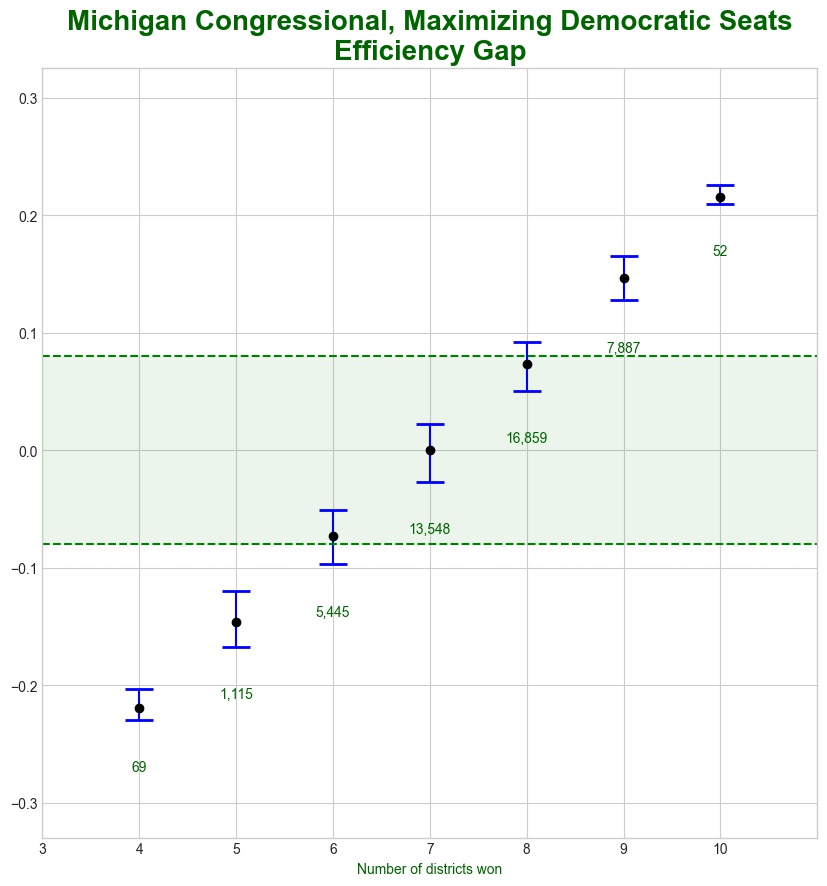}
\includegraphics[width=1.5in]{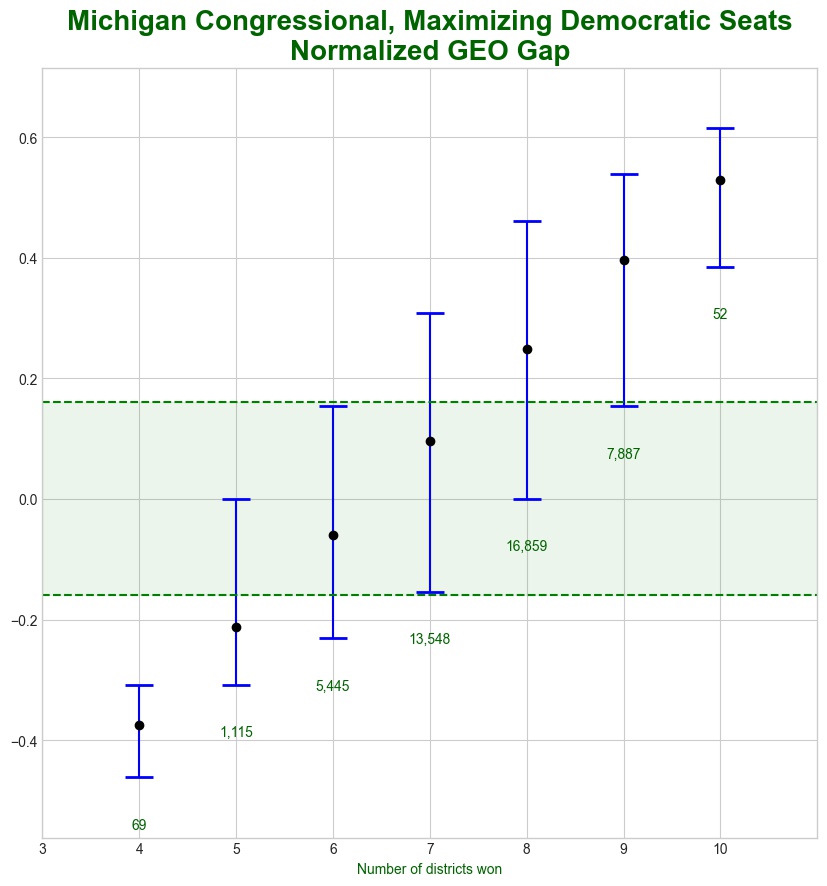}
\includegraphics[width=1.5in]{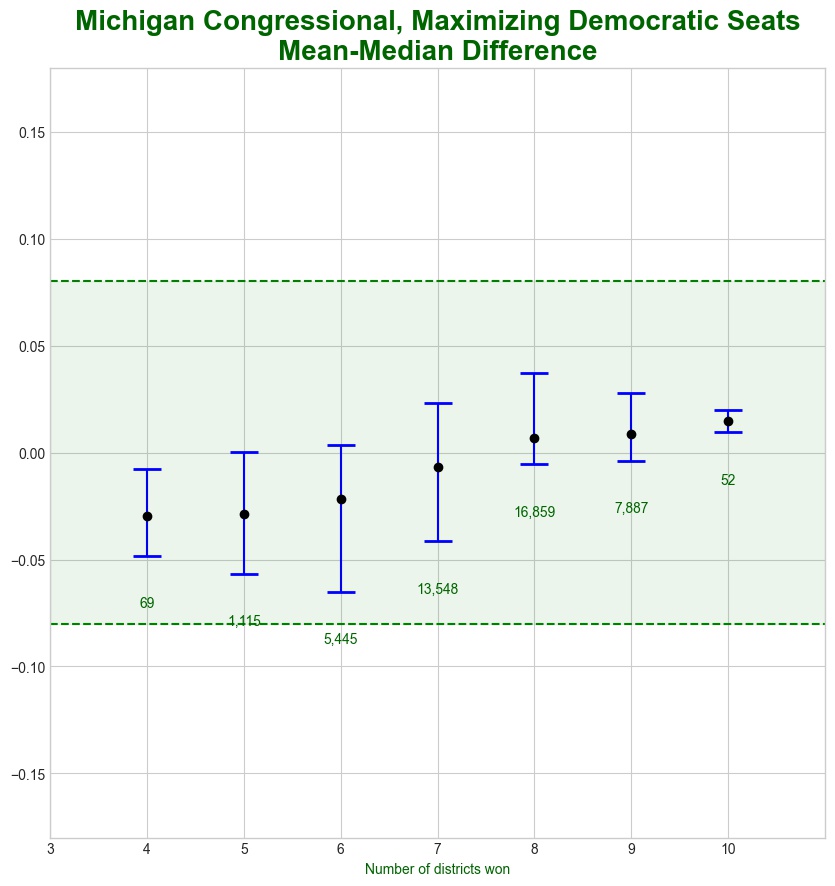}
\includegraphics[width=1.5in]{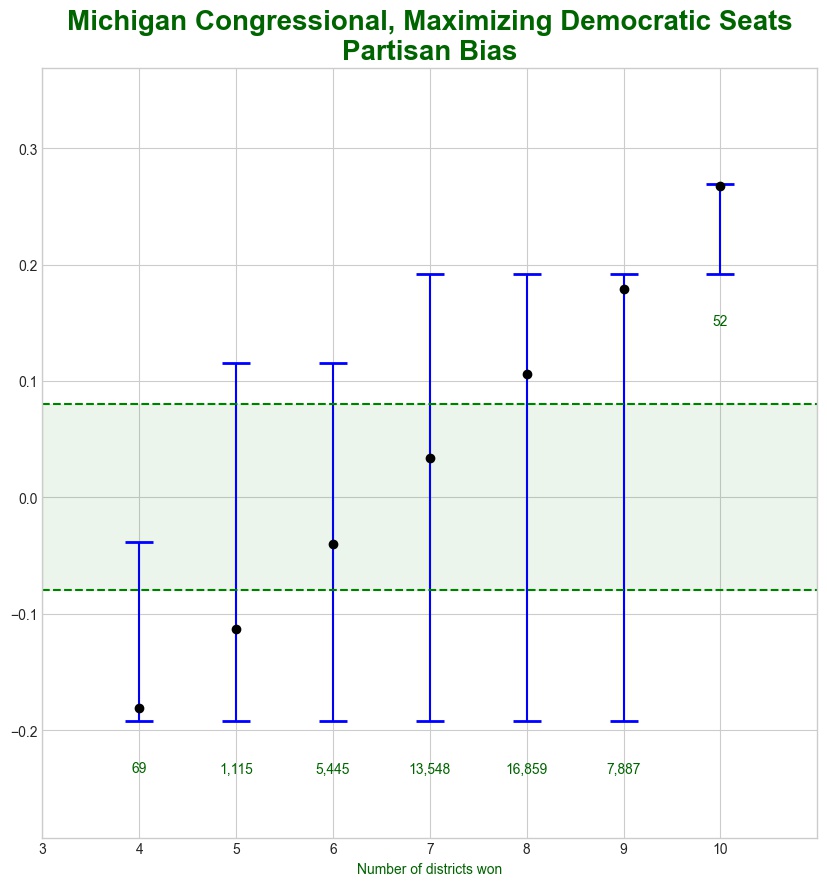}
\caption{Empirical results for Michigan's Congressional map and 2016 Presidential election data, searching for maps with as many Democratic-won districts as possible.  Horizontal axis is number of districts won, vertical axis is metric value ranges.  The green region is from $0.16\inf(m)$ to $0.16\sup(m)$ for each metric $m$.  The small number below each metric value range is the number of maps produced that had the corresponding number of districts won.  The dot within each vertical bar is the mean value of that metric on all produced maps with the corresponding number of districts won.}
\label{fig:short_bursts_MIcongD}
\end{figure}

\begin{figure}[h]
\centering
\includegraphics[width=1.5in]{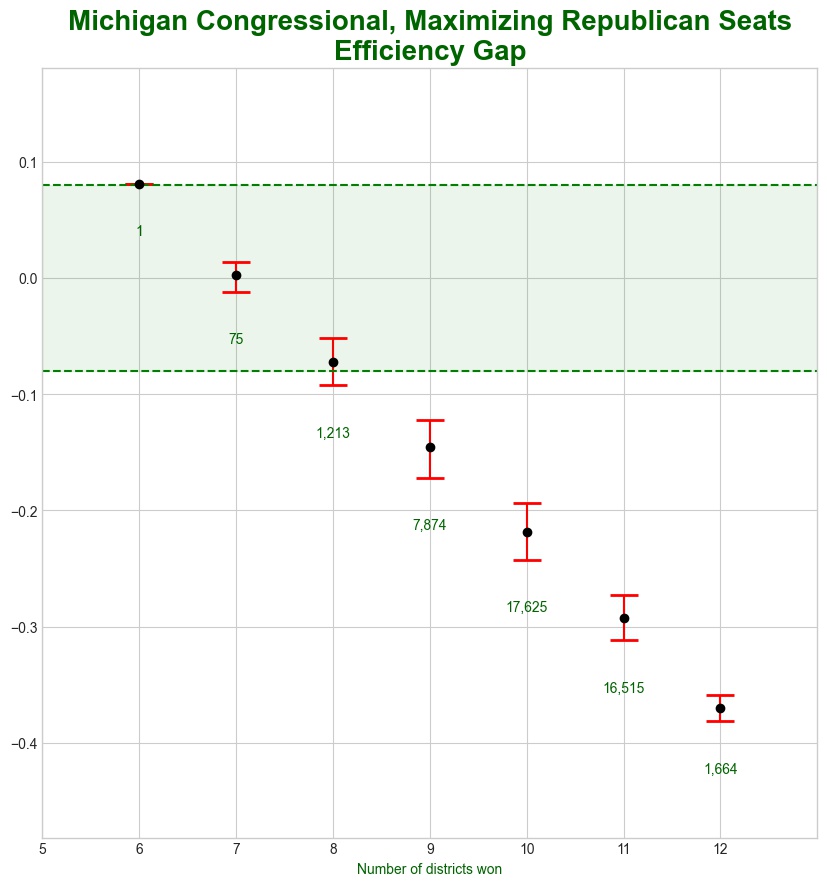}
\includegraphics[width=1.5in]{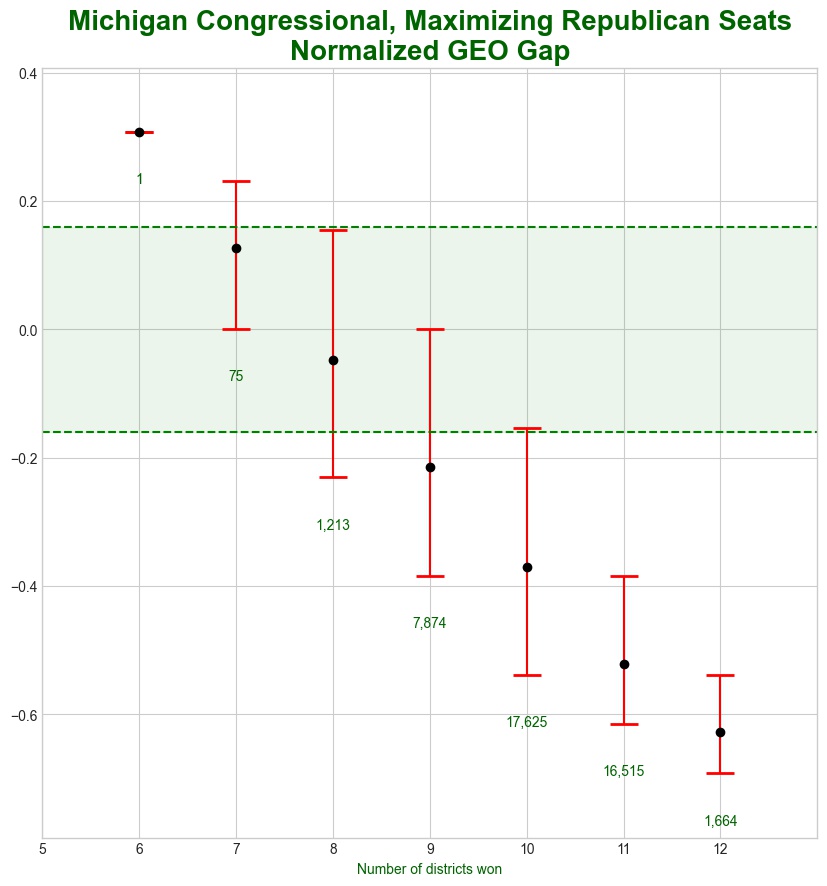}
\includegraphics[width=1.5in]{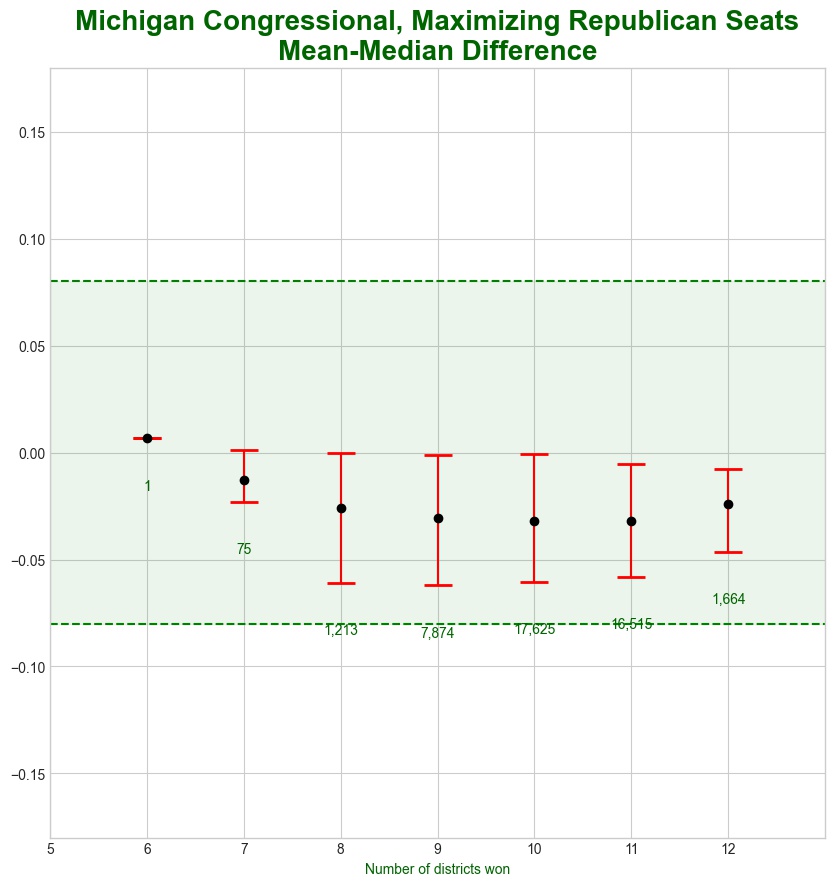}
\includegraphics[width=1.5in]{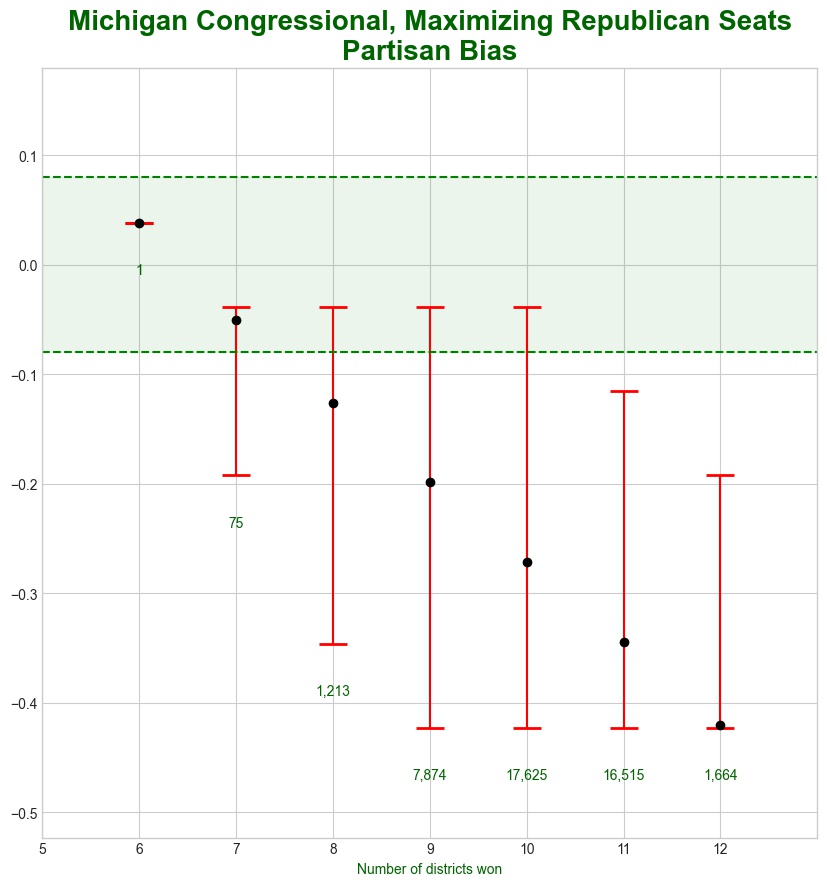}
\caption{Empirical results for Michigan's Congressional map and 2016 Presidential election data, searching for maps with as many Republican-won districts as possible.  Horizontal axis is number of districts won, vertical axis is metric value ranges.  The green region is from $0.16\inf(m)$ to $0.16\sup(m)$ for each metric $m$.  The small number below each metric value range is the number of maps produced that had the corresponding number of districts won.  The dot within each vertical bar is the mean value of that metric on all produced maps with the corresponding number of districts won.}
\label{fig:short_bursts_MIcongR}
\end{figure}

\begin{figure}[h]
\centering
\includegraphics[width=1.5in]{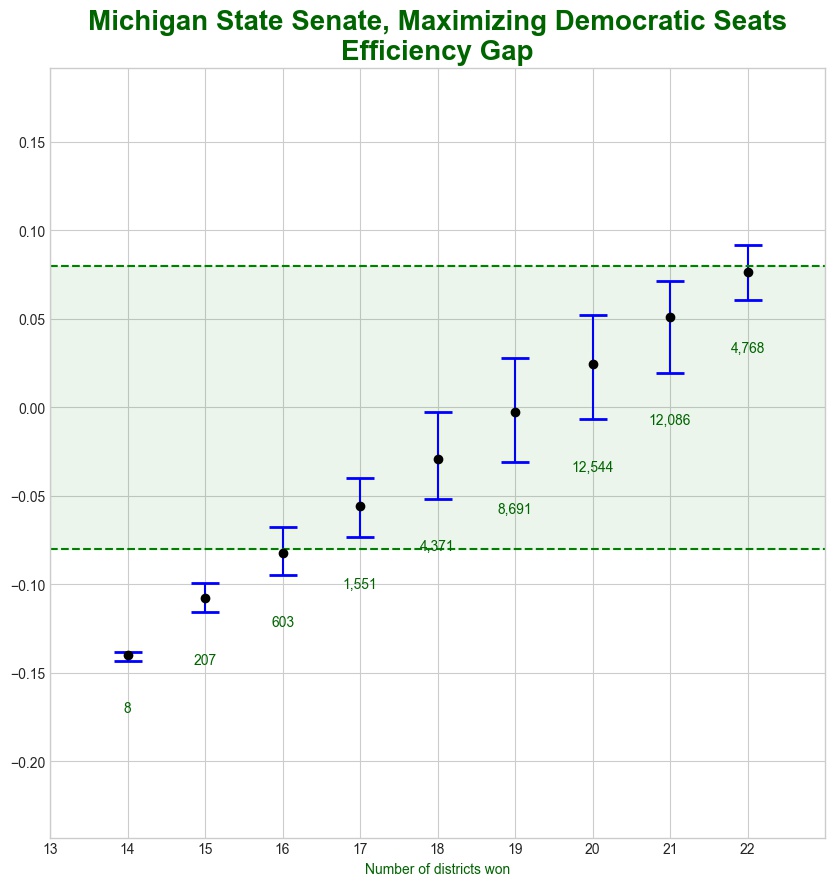}
\includegraphics[width=1.5in]{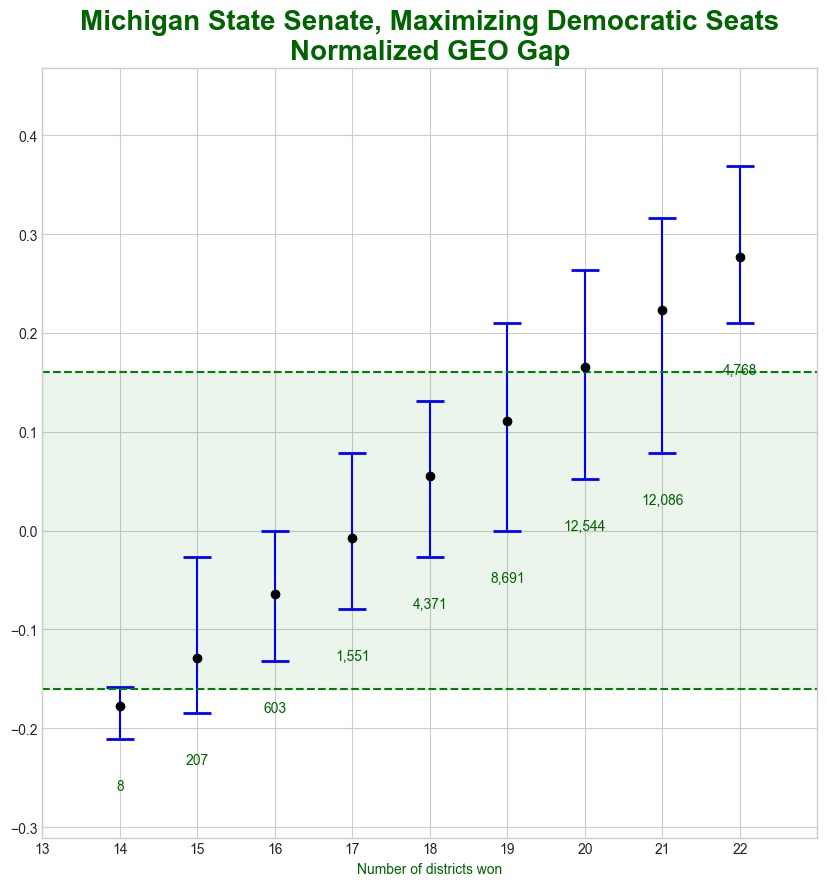}
\includegraphics[width=1.5in]{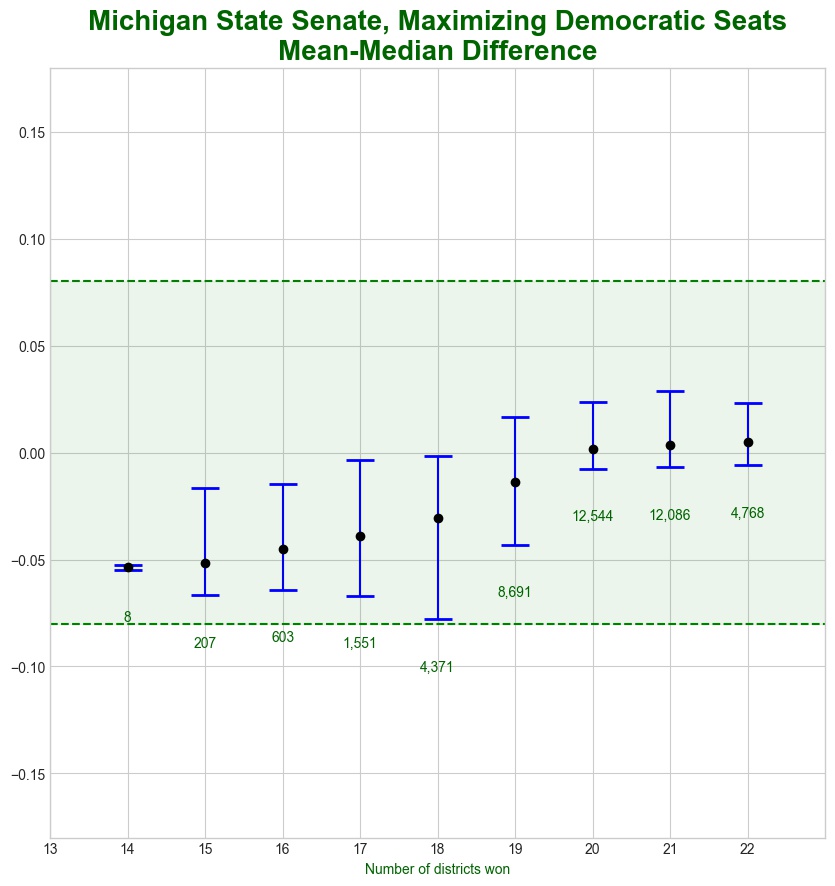}
\includegraphics[width=1.5in]{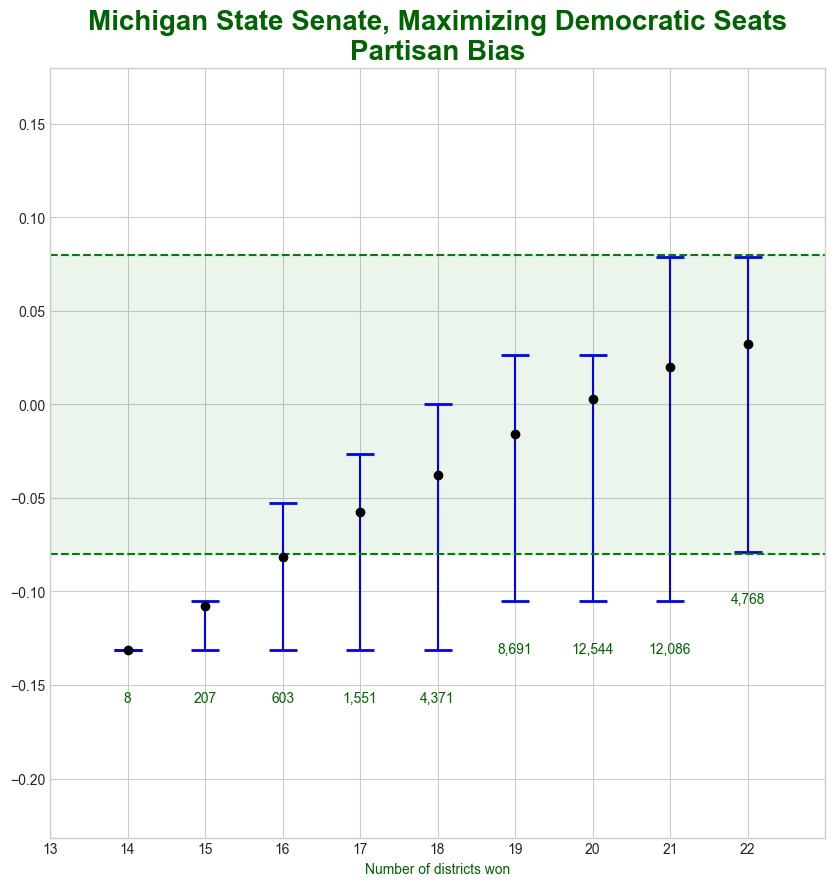}
\caption{Empirical results for Michigan's State Senate map and 2016 Presidential election data, searching for maps with as many Democratic-won districts as possible.  Horizontal axis is number of districts won, vertical axis is metric value ranges.  The green region is from $0.16\inf(m)$ to $0.16\sup(m)$ for each metric $m$.  The small number below each metric value range is the number of maps produced that had the corresponding number of districts won.  The dot within each vertical bar is the mean value of that metric on all produced maps with the corresponding number of districts won.}
\label{fig:short_bursts_MIupperD}
\end{figure}

\begin{figure}[h]
\centering
\includegraphics[width=1.5in]{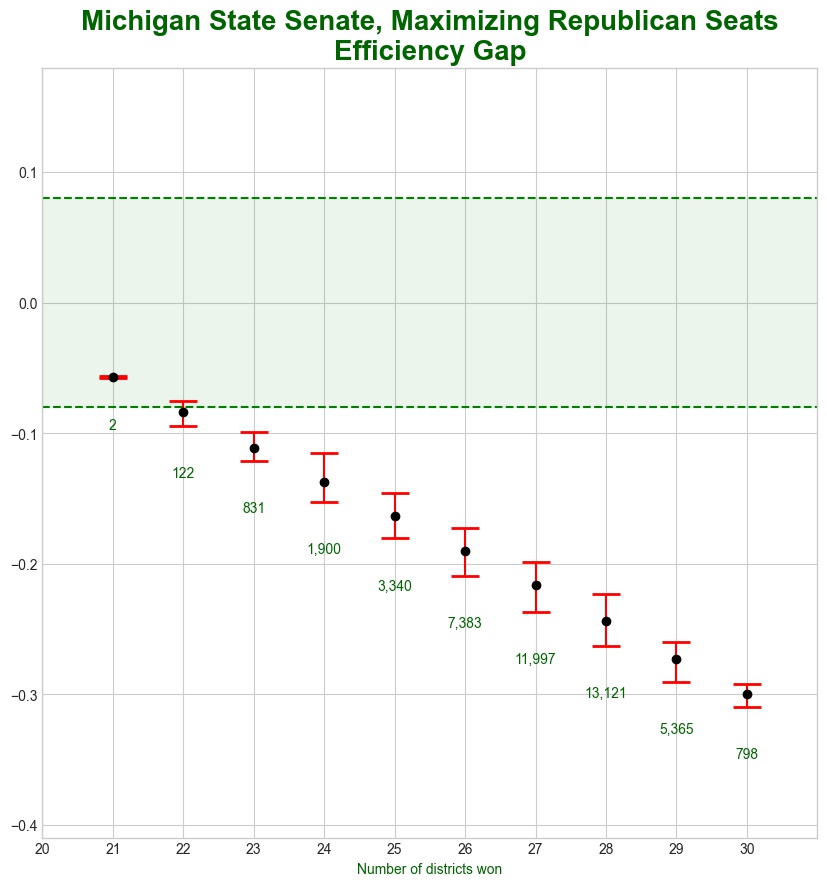}
\includegraphics[width=1.5in]{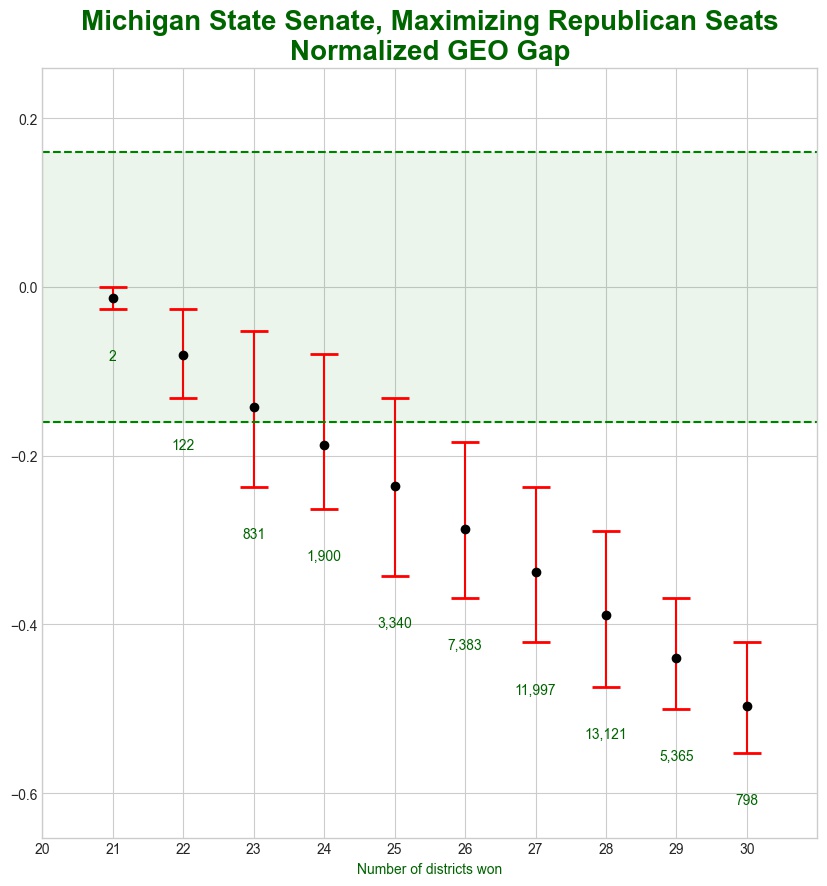}
\includegraphics[width=1.5in]{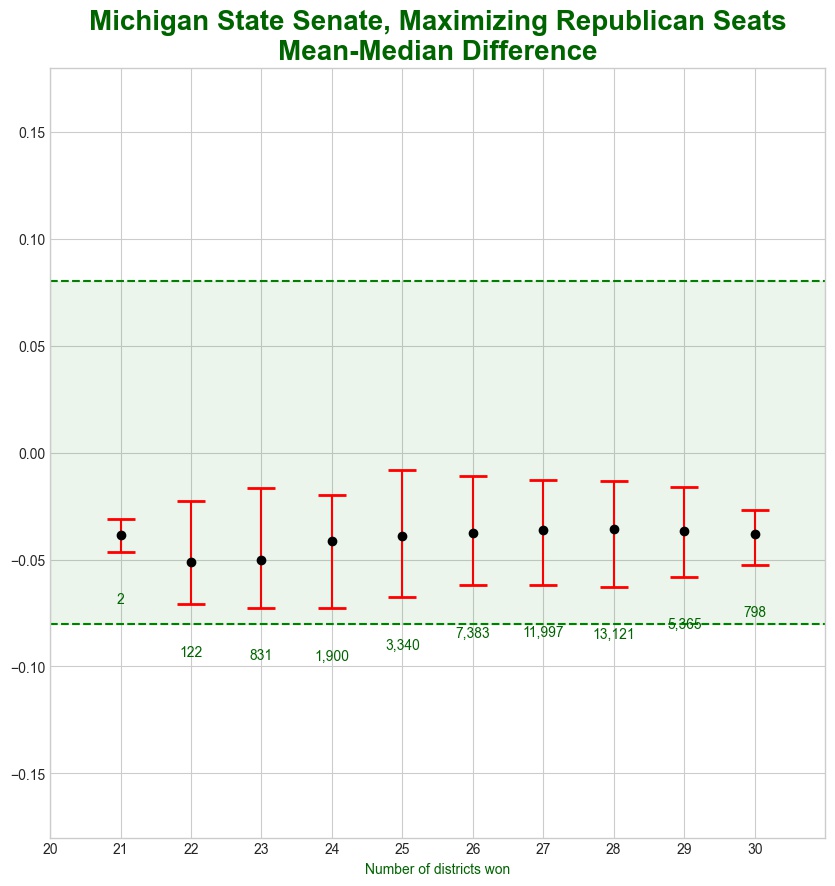}
\includegraphics[width=1.5in]{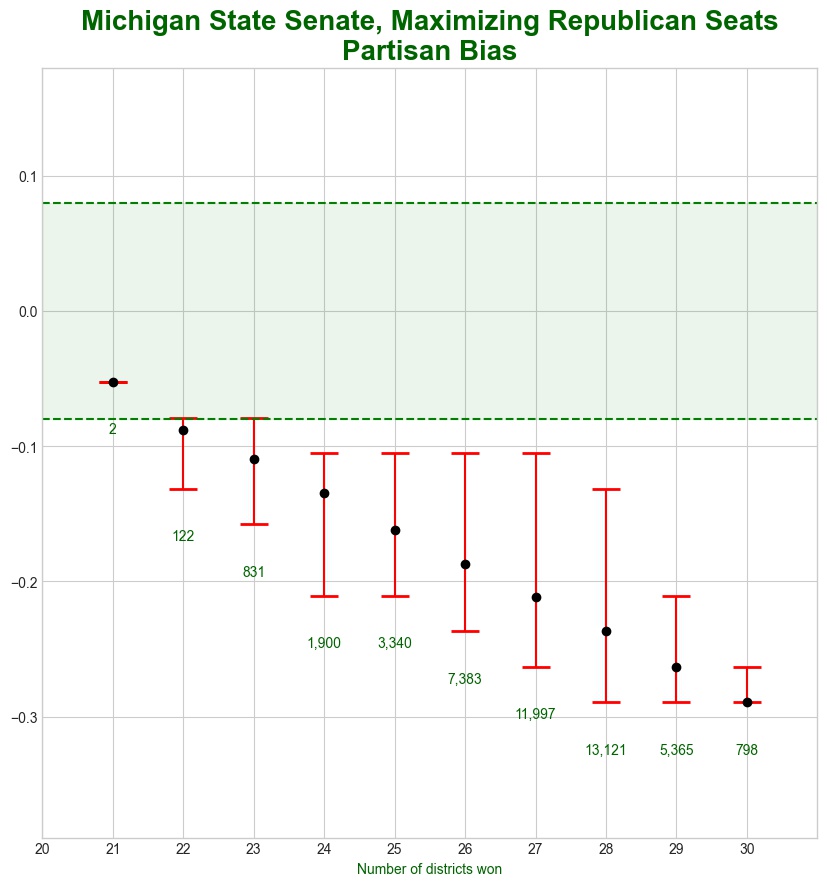}
\caption{Empirical results for Michigan's State Senate map and 2016 Presidential election data, searching for maps with as many Republican-won districts as possible.  Horizontal axis is number of districts won, vertical axis is metric value ranges.  The green region is from $0.16\inf(m)$ to $0.16\sup(m)$ for each metric $m$.  The small number below each metric value range is the number of maps produced that had the corresponding number of districts won.  The dot within each vertical bar is the mean value of that metric on all produced maps with the corresponding number of districts won.}
\label{fig:short_bursts_MIupperR}
\end{figure}

\begin{figure}[h]
\centering
\includegraphics[width=1.5in]{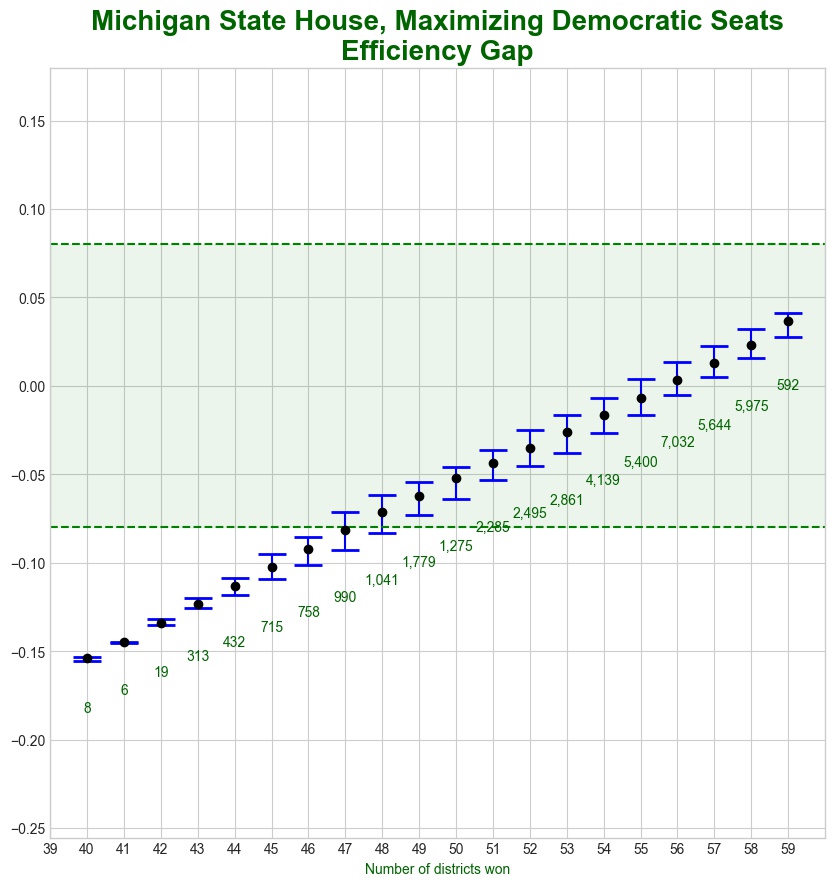}
\includegraphics[width=1.5in]{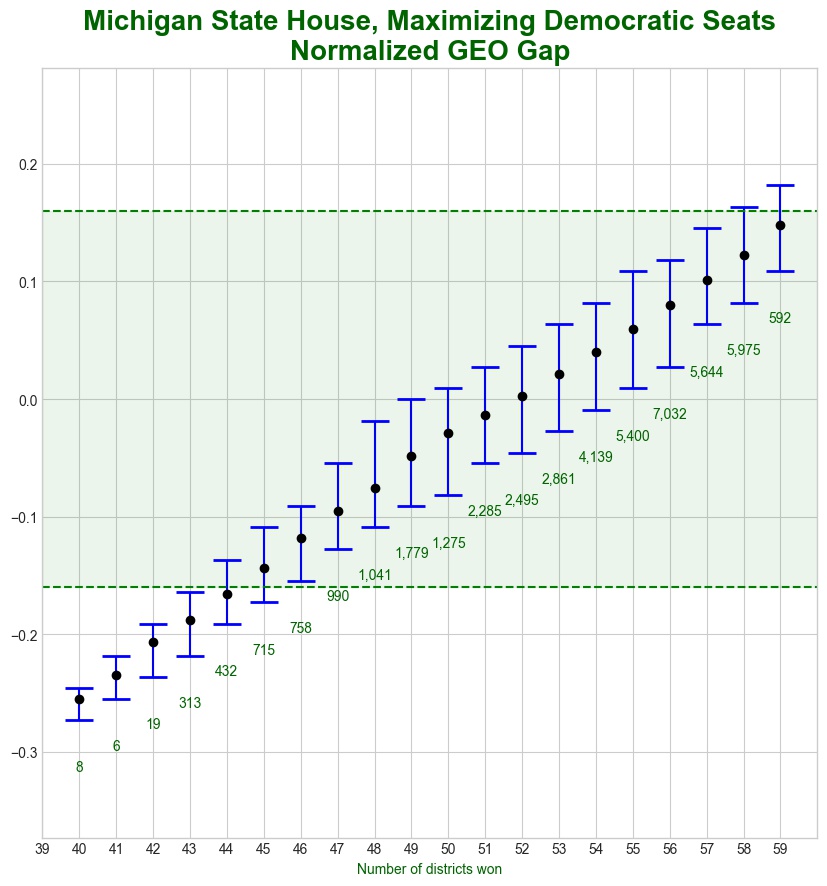}
\includegraphics[width=1.5in]{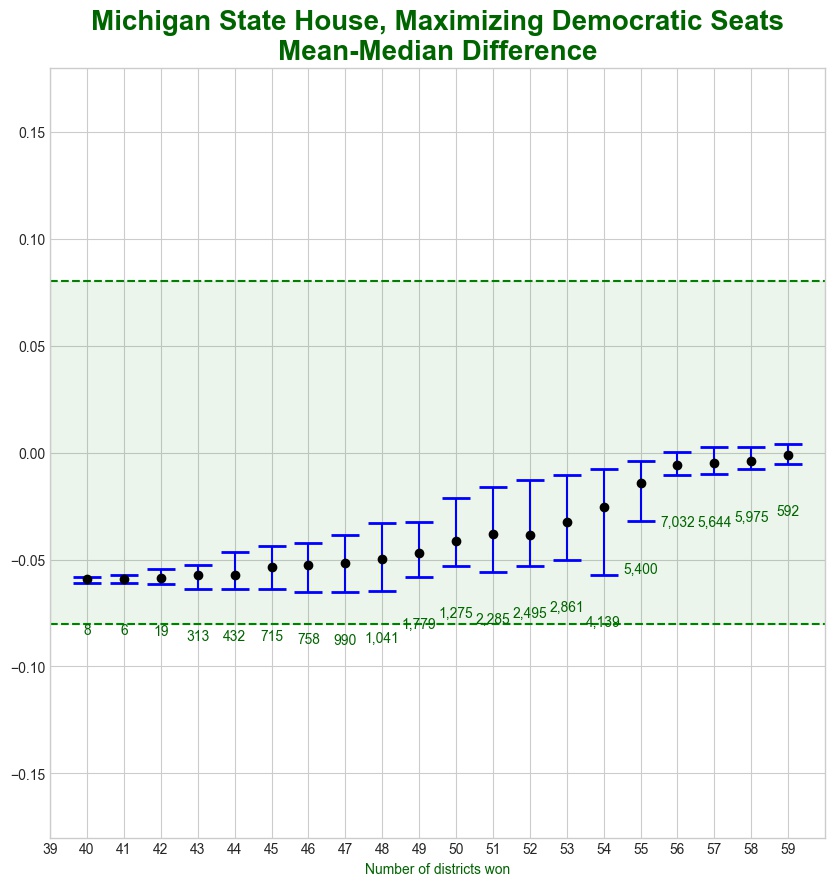}
\includegraphics[width=1.5in]{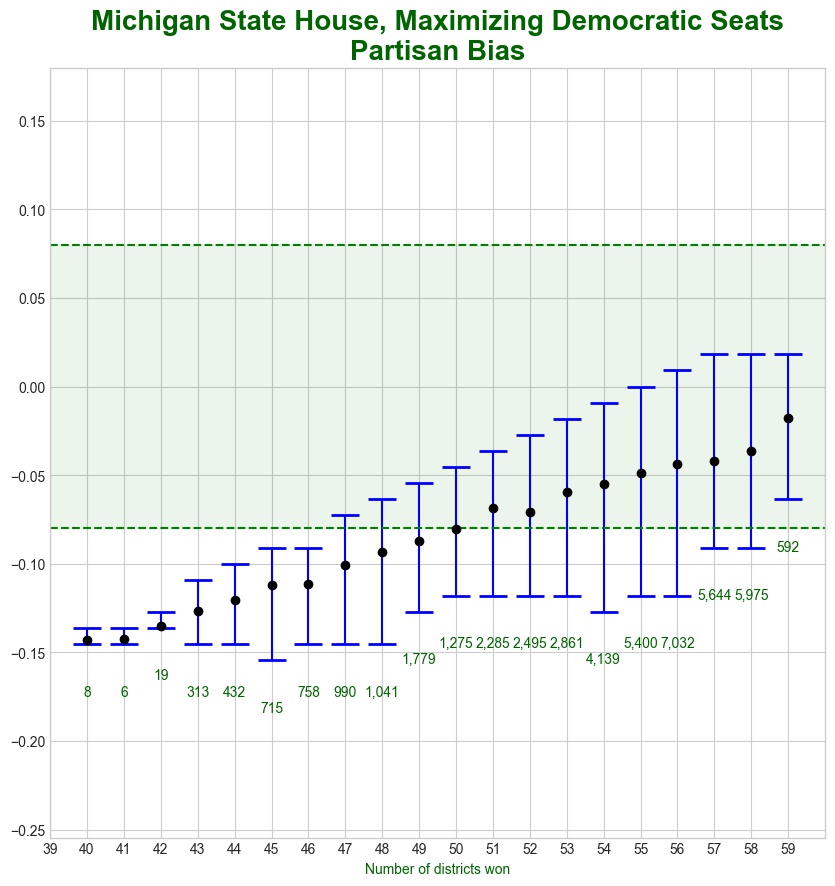}
\caption{Empirical results for Michigan's State House map and 2016 Presidential election data, searching for maps with as many Democratic-won districts as possible.  Horizontal axis is number of districts won, vertical axis is metric value ranges.  The green region is from $0.16\inf(m)$ to $0.16\sup(m)$ for each metric $m$.  The small number below each metric value range is the number of maps produced that had the corresponding number of districts won.  The dot within each vertical bar is the mean value of that metric on all produced maps with the corresponding number of districts won.}
\label{fig:short_bursts_MIlowerD}
\end{figure}

\begin{figure}[h]
\centering
\includegraphics[width=1.5in]{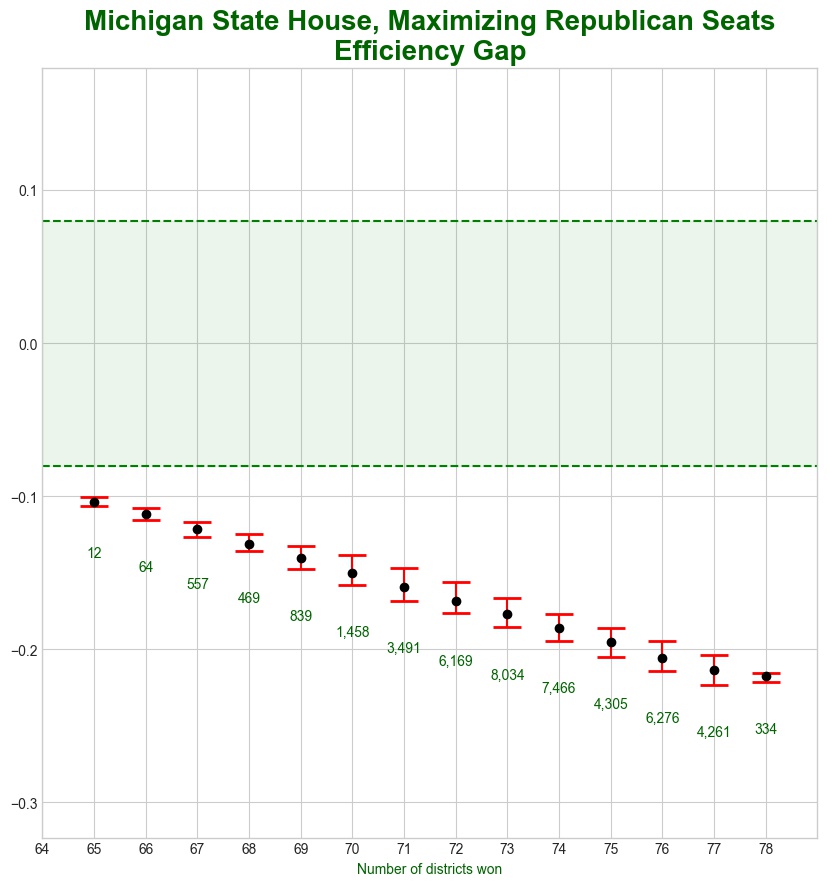}
\includegraphics[width=1.5in]{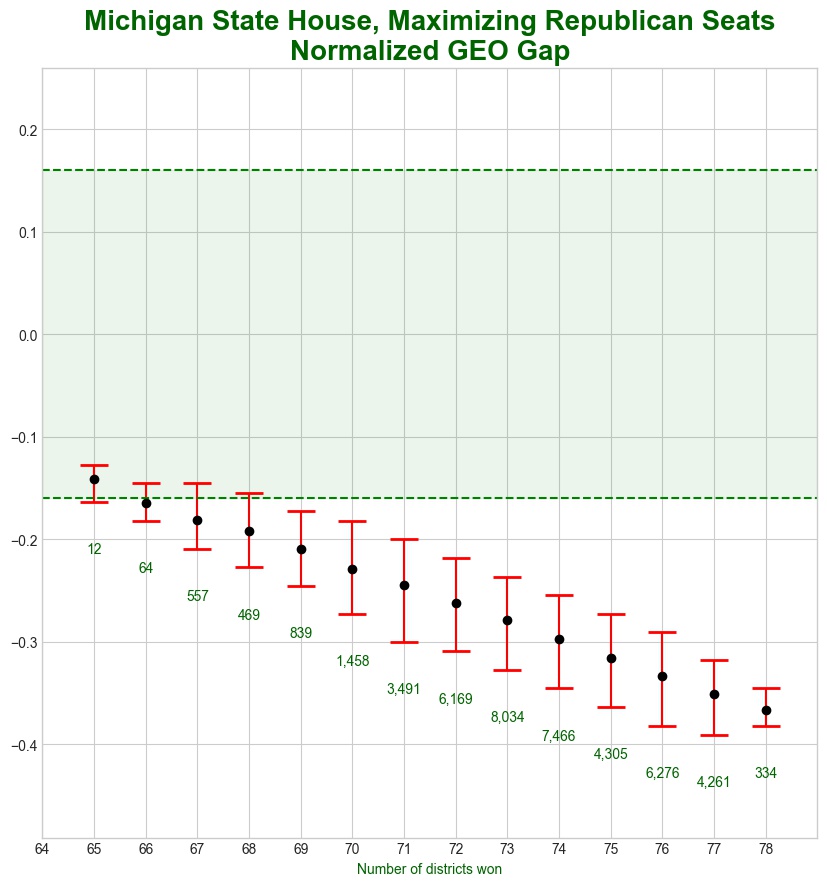}
\includegraphics[width=1.5in]{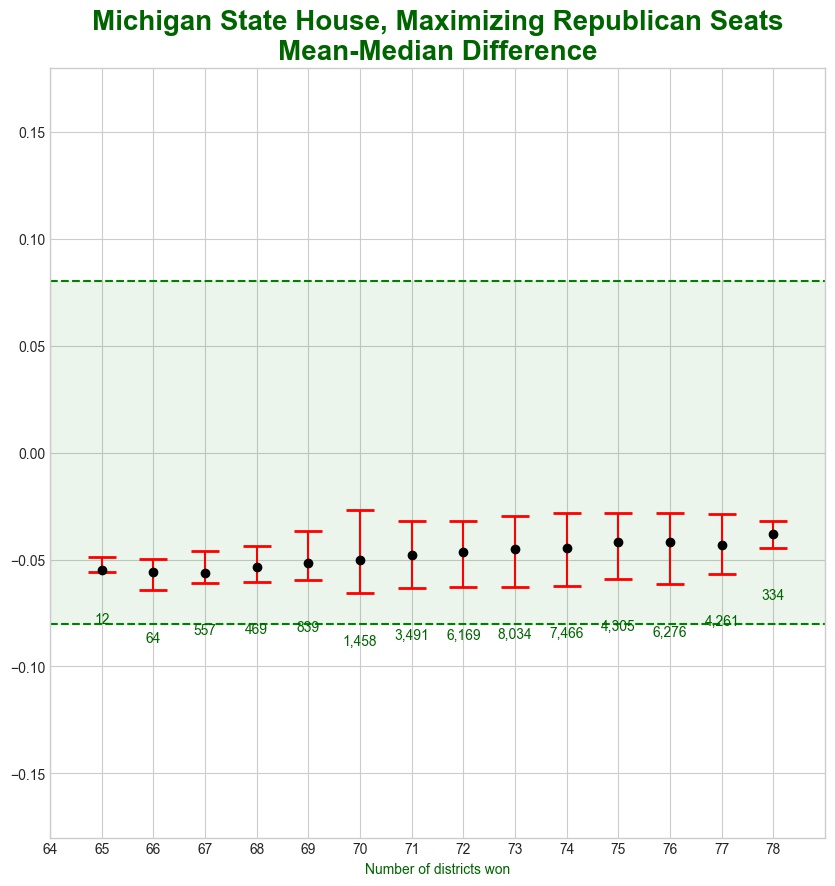}
\includegraphics[width=1.5in]{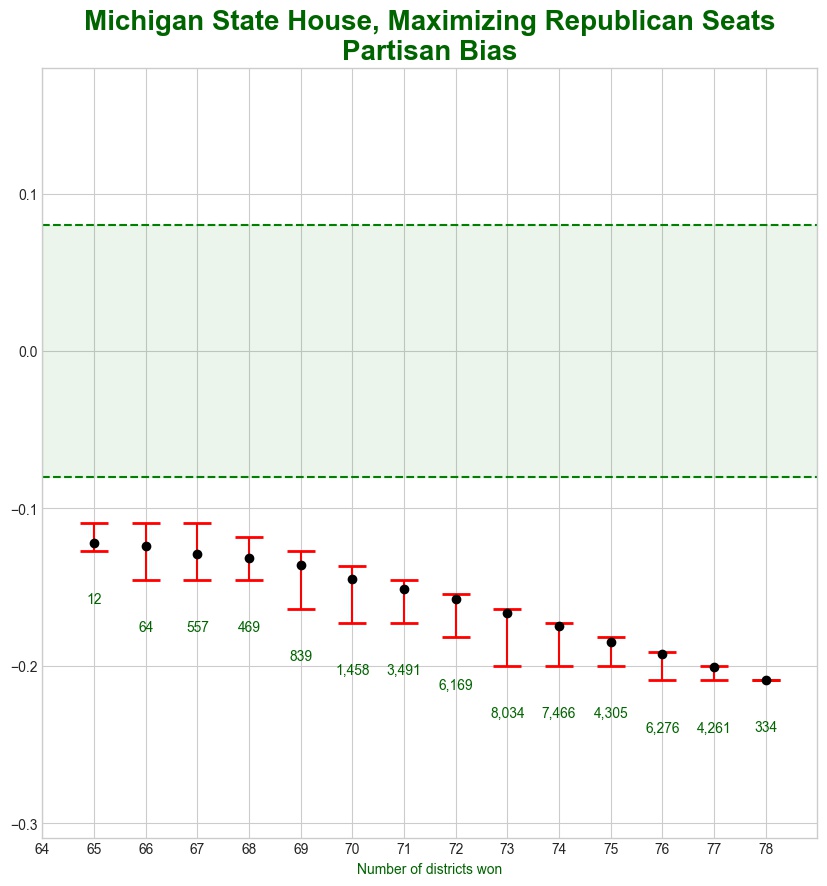}
\caption{Empirical results for Michigan's State House map and 2016 Presidential election data, searching for maps with as many Republican-won districts as possible.  Horizontal axis is number of districts won, vertical axis is metric value ranges.  The green region is from $0.16\inf(m)$ to $0.16\sup(m)$ for each metric $m$.  The small number below each metric value range is the number of maps produced that had the corresponding number of districts won.  The dot within each vertical bar is the mean value of that metric on all produced maps with the corresponding number of districts won.}
\label{fig:short_bursts_MIlowerR}
\end{figure}

\begin{figure}[h]
\centering
\includegraphics[width=1.5in]{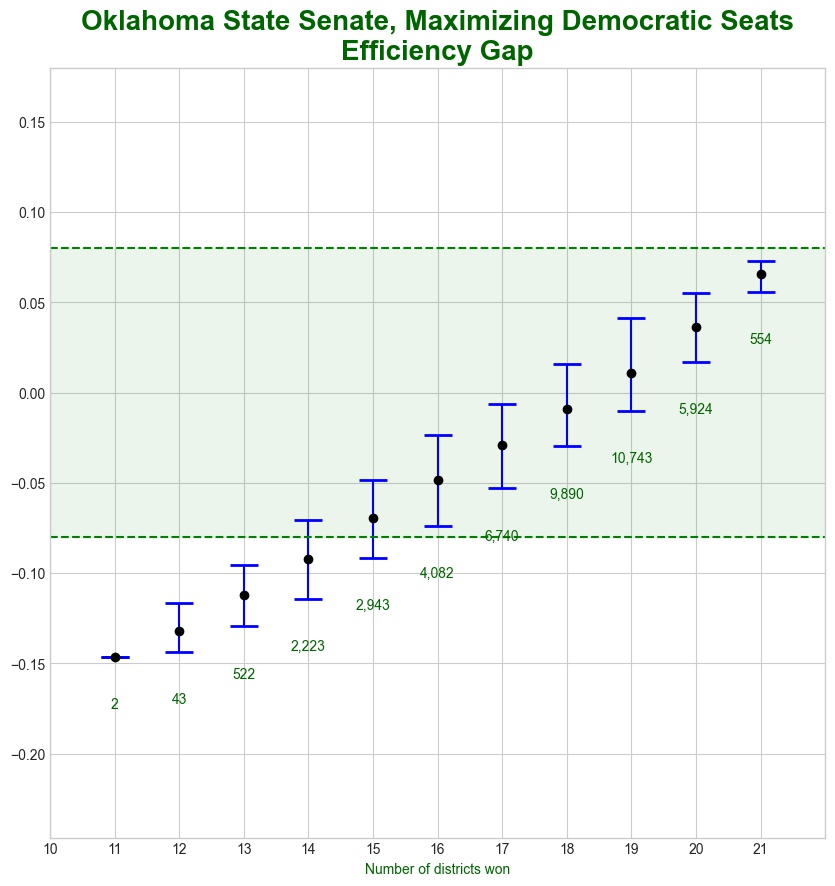}
\includegraphics[width=1.5in]{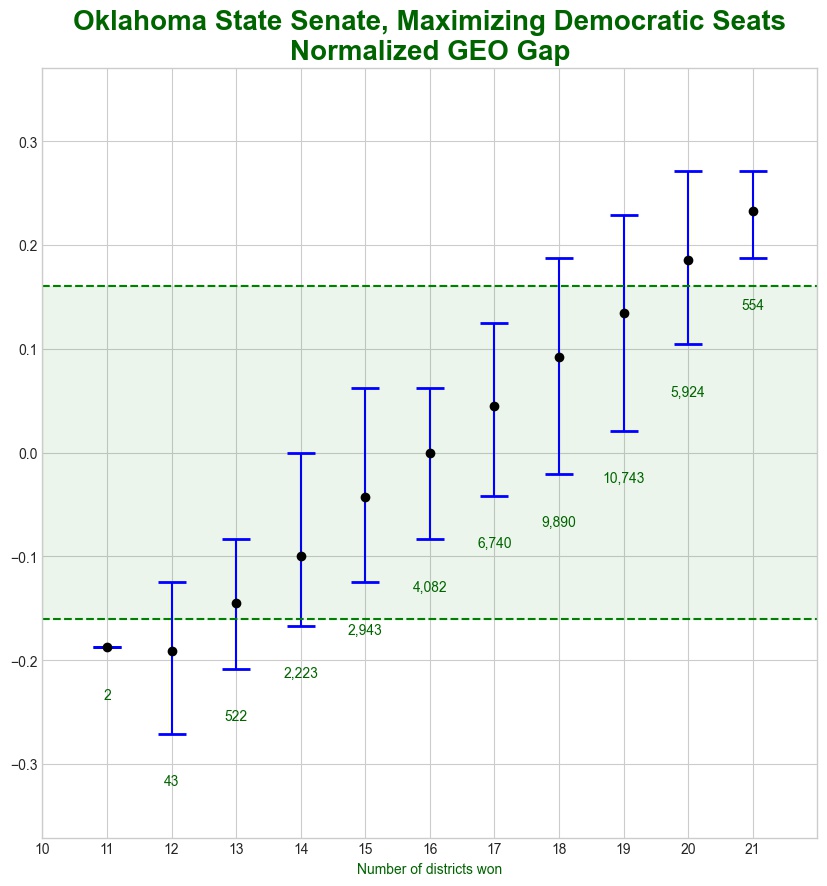}
\includegraphics[width=1.5in]{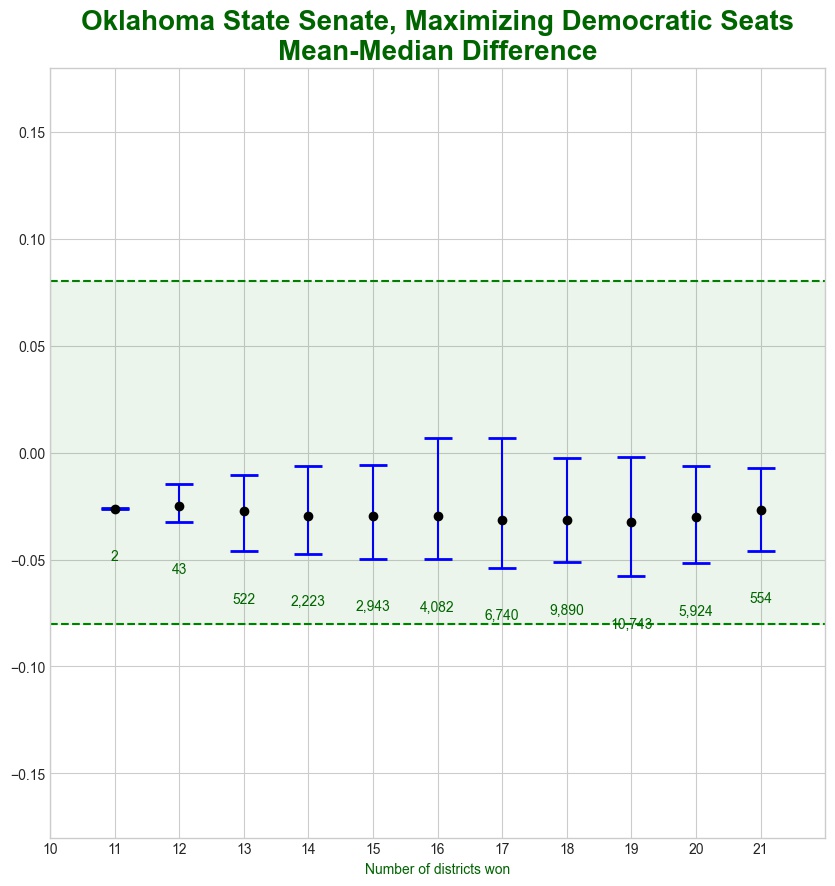}
\includegraphics[width=1.5in]{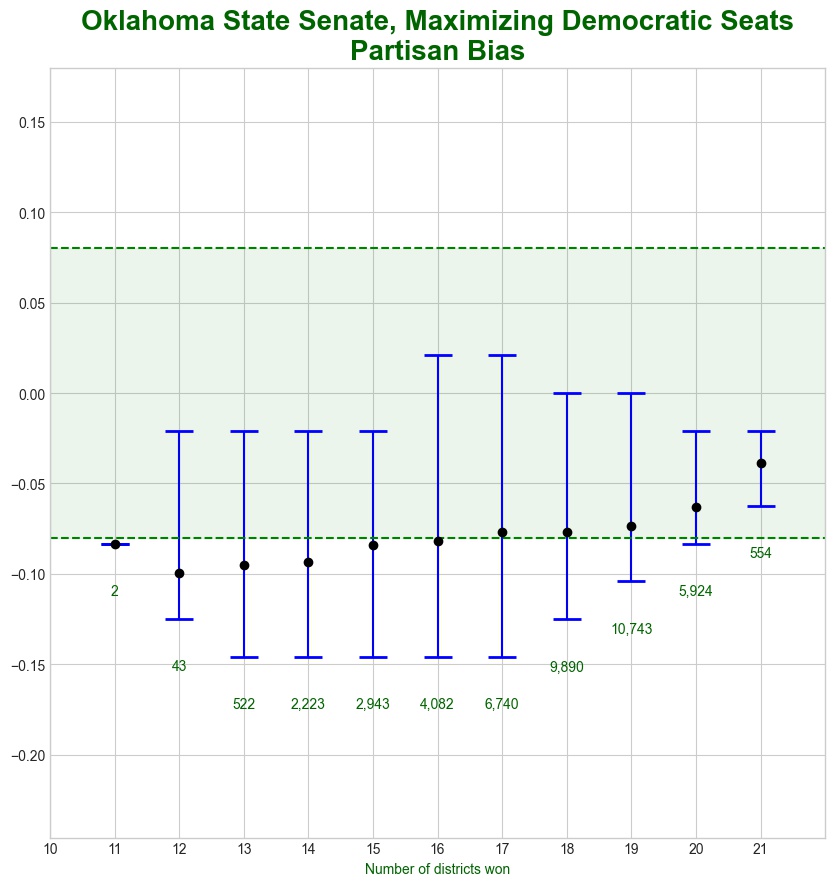}
\caption{Empirical results for Oklahoma's State Senate map and 2018 Gubenatorial election data, searching for maps with as many Democratic-won districts as possible.  Horizontal axis is number of districts won, vertical axis is metric value ranges.  The green region is from $0.16\inf(m)$ to $0.16\sup(m)$ for each metric $m$.  The small number below each metric value range is the number of maps produced that had the corresponding number of districts won.  The dot within each vertical bar is the mean value of that metric on all produced maps with the corresponding number of districts won.}
\label{fig:short_bursts_OKupperD}
\end{figure}

\begin{figure}[h]
\centering
\includegraphics[width=1.5in]{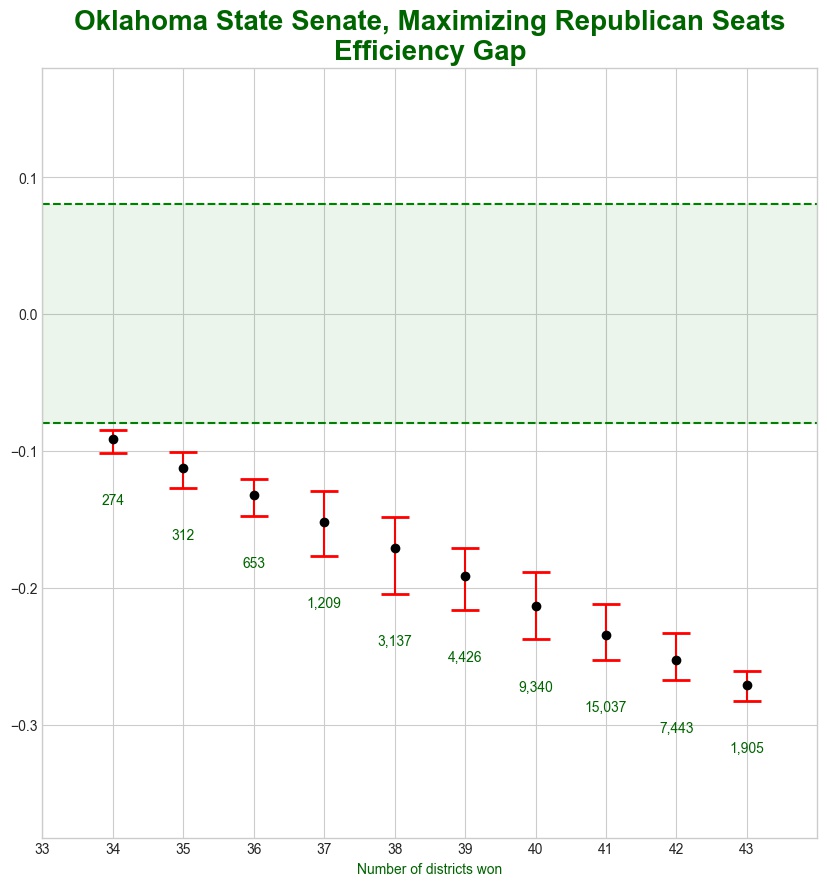}
\includegraphics[width=1.5in]{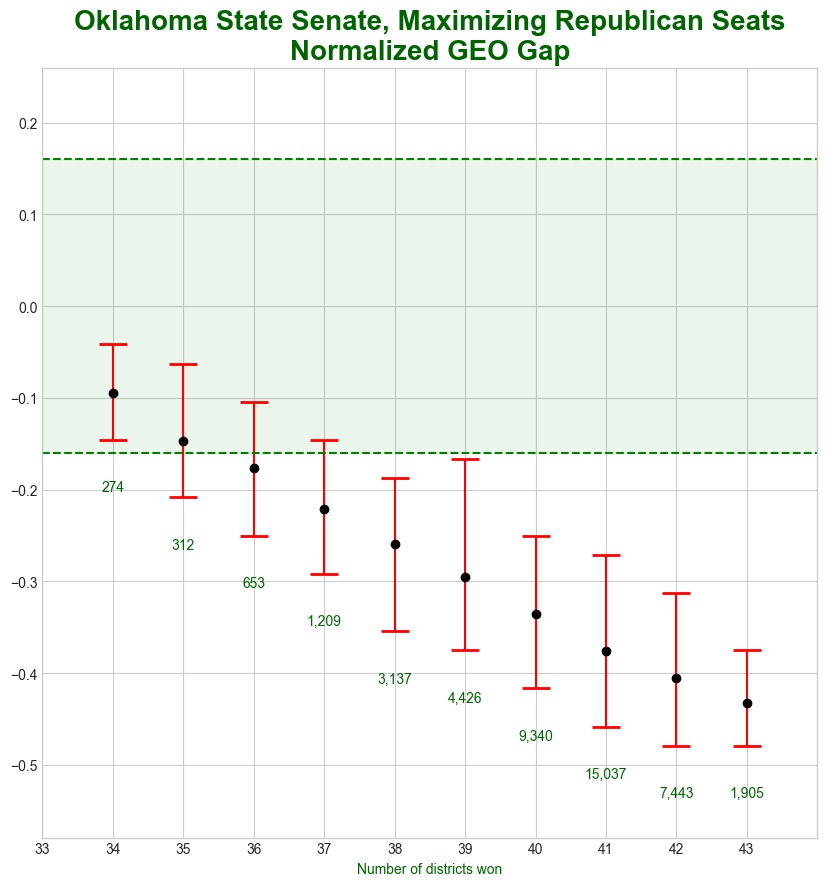}
\includegraphics[width=1.5in]{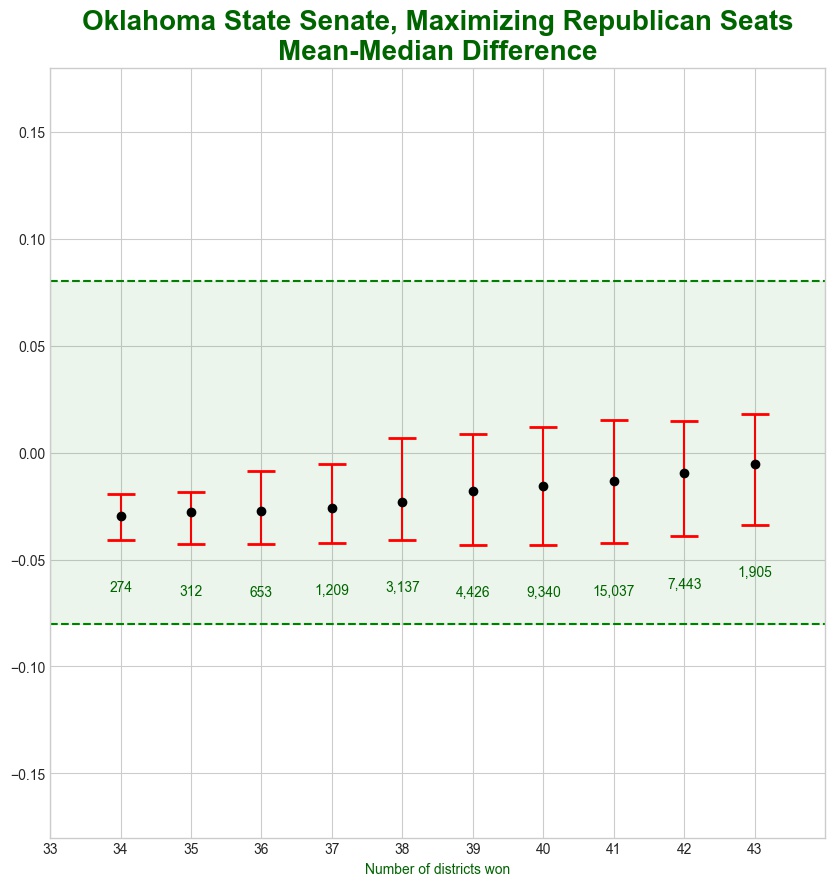}
\includegraphics[width=1.5in]{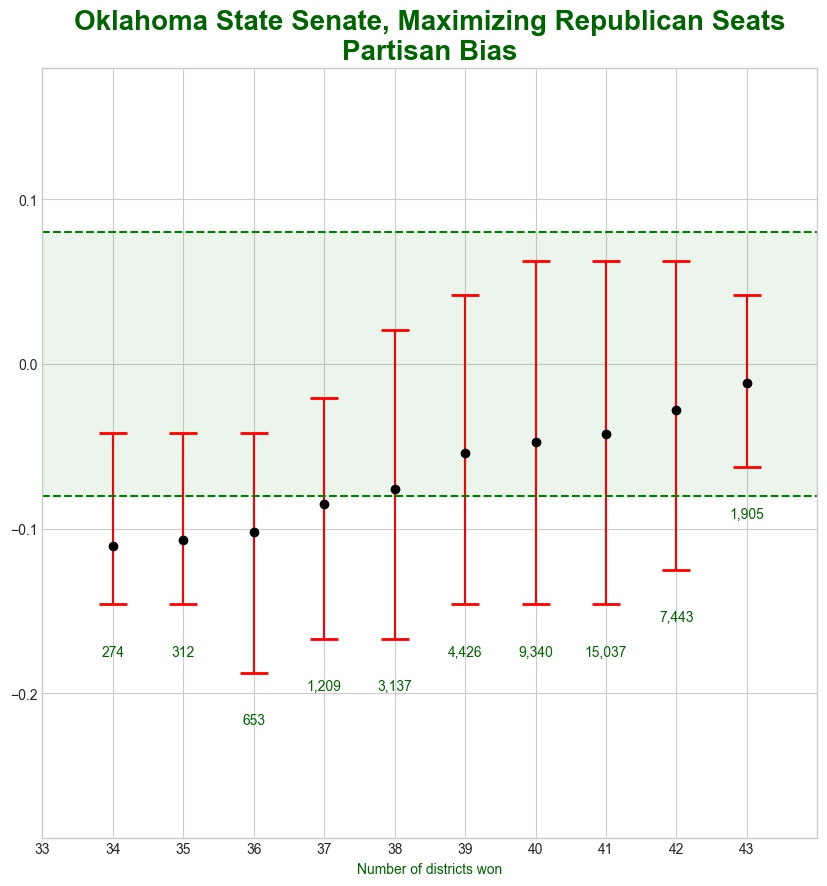}
\caption{Empirical results for Oklahoma's State Senate map and 2018 Gubenatorial election data, searching for maps with as many Republican-won districts as possible.  Horizontal axis is number of districts won, vertical axis is metric value ranges.  The green region is from $0.16\inf(m)$ to $0.16\sup(m)$ for each metric $m$.  The small number below each metric value range is the number of maps produced that had the corresponding number of districts won.  The dot within each vertical bar is the mean value of that metric on all produced maps with the corresponding number of districts won.}
\label{fig:short_bursts_OKupperR}
\end{figure}

\begin{figure}[h]
\centering
\includegraphics[width=1.5in]{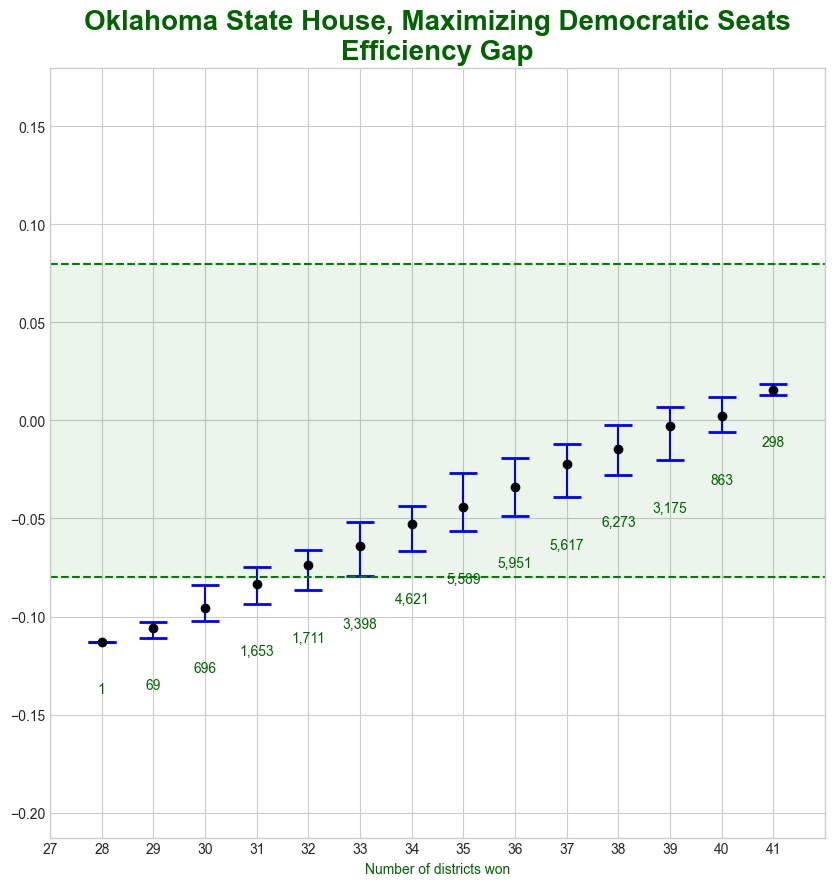}
\includegraphics[width=1.5in]{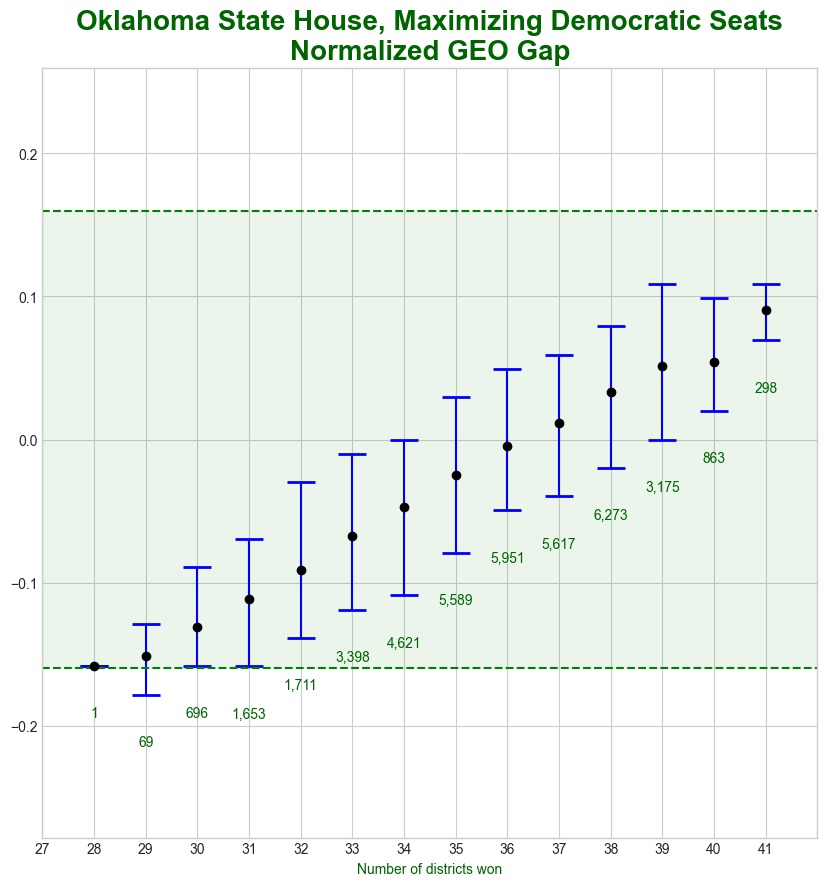}
\includegraphics[width=1.5in]{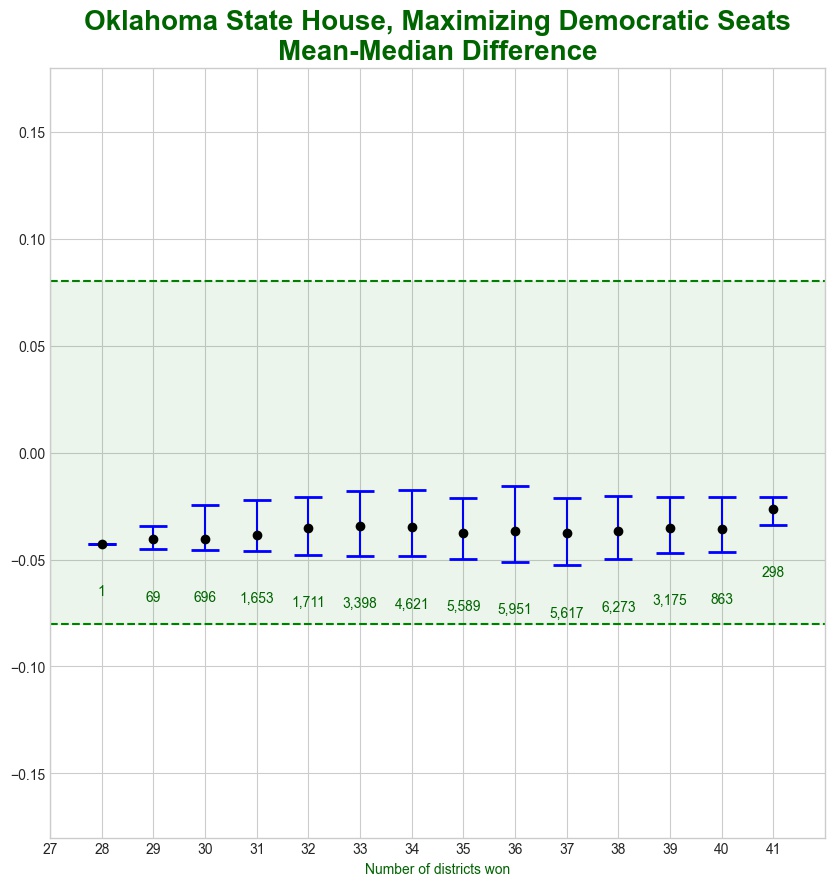}
\includegraphics[width=1.5in]{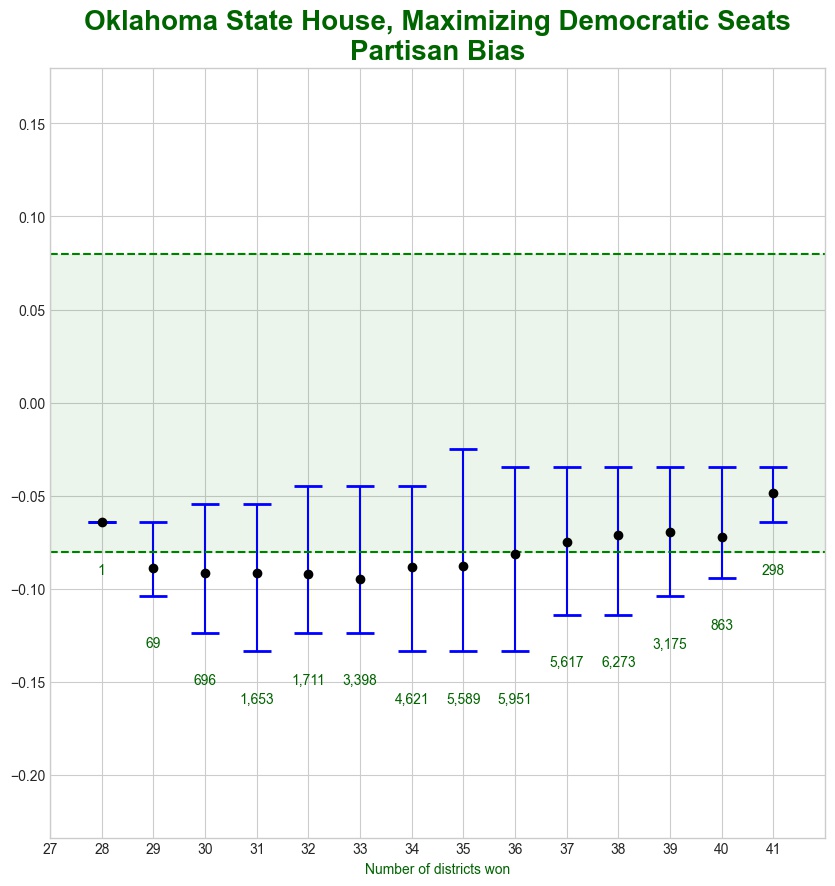}
\caption{Empirical results for Oklahoma's State House map and 2018 Gubenatorial election data, searching for maps with as many Democratic-won districts as possible.  Horizontal axis is number of districts won, vertical axis is metric value ranges.  The green region is from $0.16\inf(m)$ to $0.16\sup(m)$ for each metric $m$.  The small number below each metric value range is the number of maps produced that had the corresponding number of districts won.  The dot within each vertical bar is the mean value of that metric on all produced maps with the corresponding number of districts won.}
\label{fig:short_bursts_OKlowerD}
\end{figure}

\begin{figure}[h]
\centering
\includegraphics[width=1.5in]{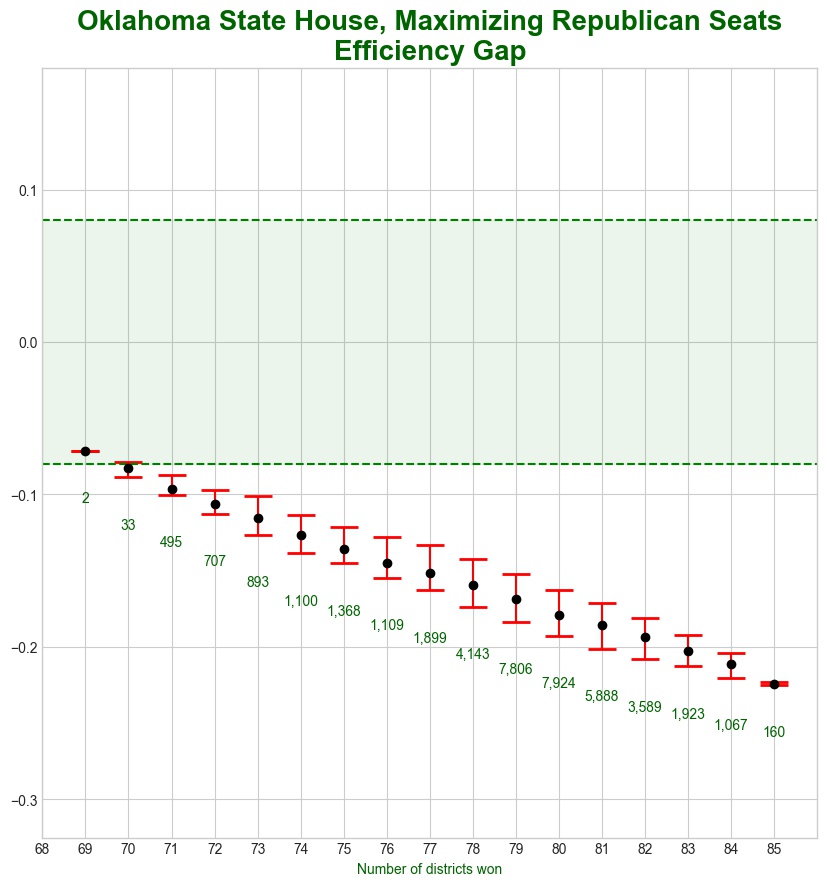}
\includegraphics[width=1.5in]{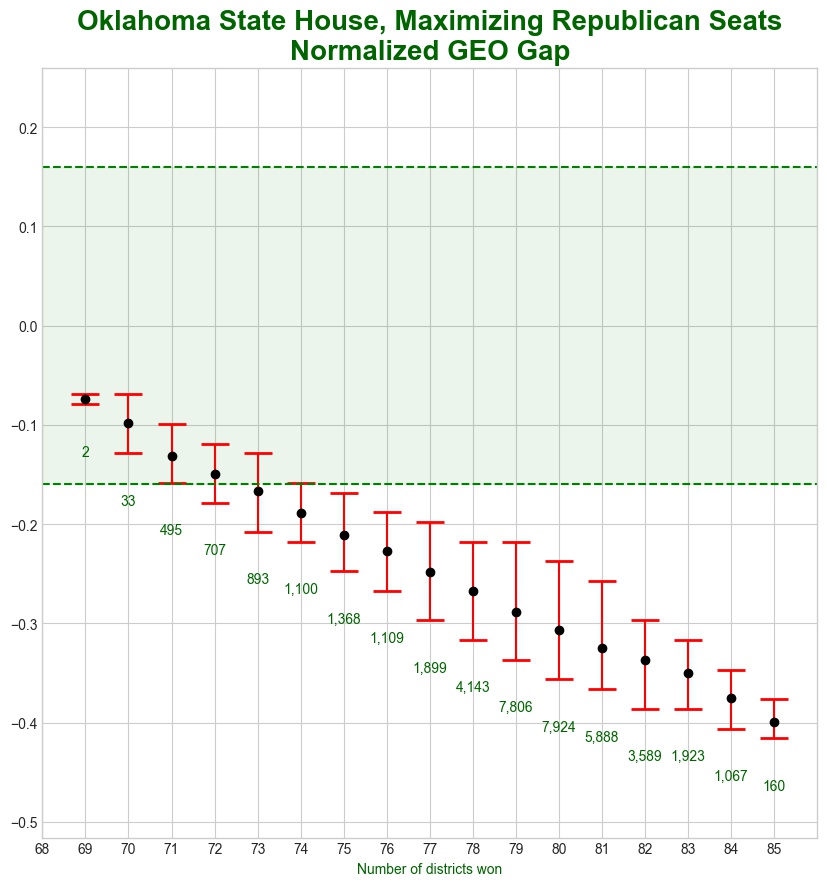}
\includegraphics[width=1.5in]{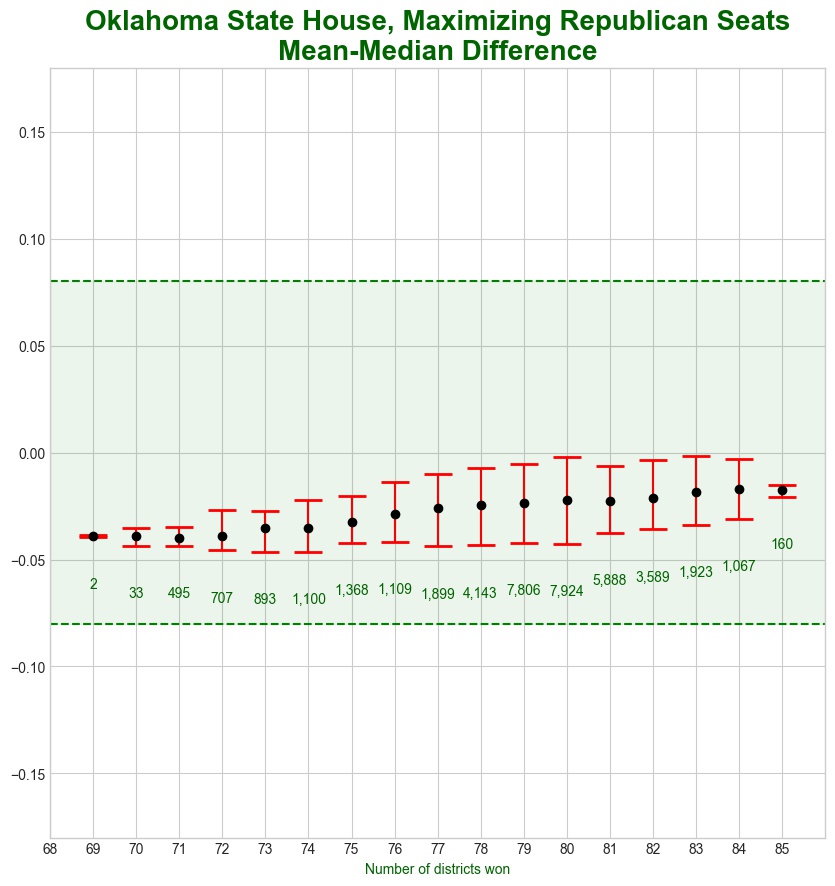}
\includegraphics[width=1.5in]{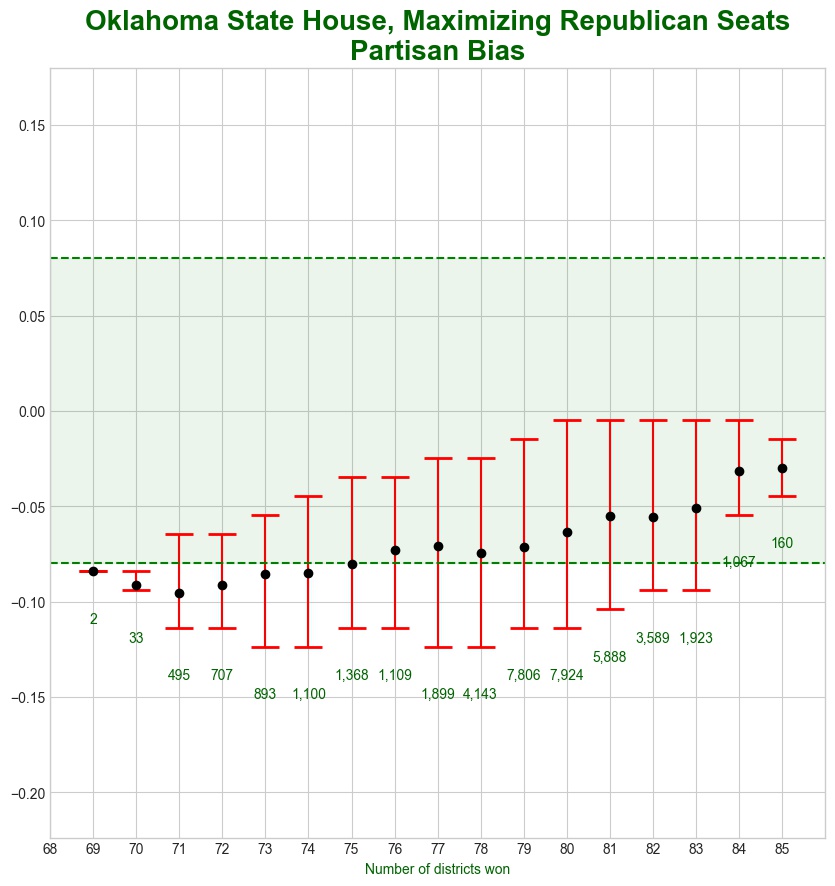}
\caption{Empirical results for Oklahoma's State House map and 2018 Gubenatorial election data, searching for maps with as many Republican-won districts as possible.  Horizontal axis is number of districts won, vertical axis is metric value ranges.  The green region is from $0.16\inf(m)$ to $0.16\sup(m)$ for each metric $m$.  The small number below each metric value range is the number of maps produced that had the corresponding number of districts won.  The dot within each vertical bar is the mean value of that metric on all produced maps with the corresponding number of districts won.}
\label{fig:short_bursts_OKlowerR}
\end{figure}

\begin{figure}[h]
\centering
\includegraphics[width=1.5in]{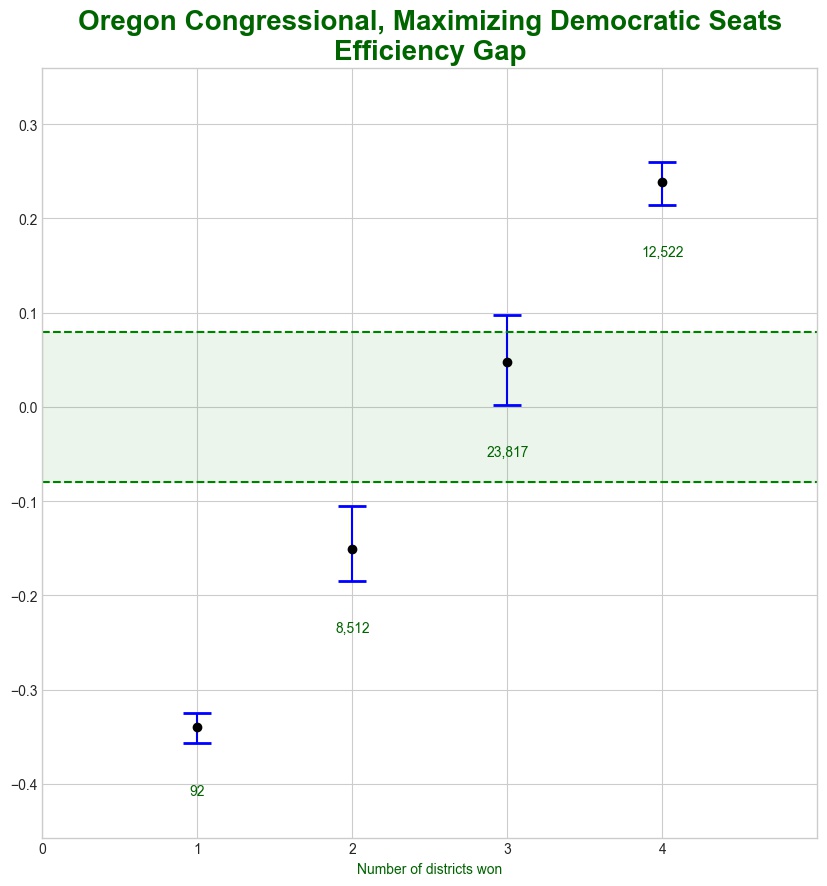}
\includegraphics[width=1.5in]{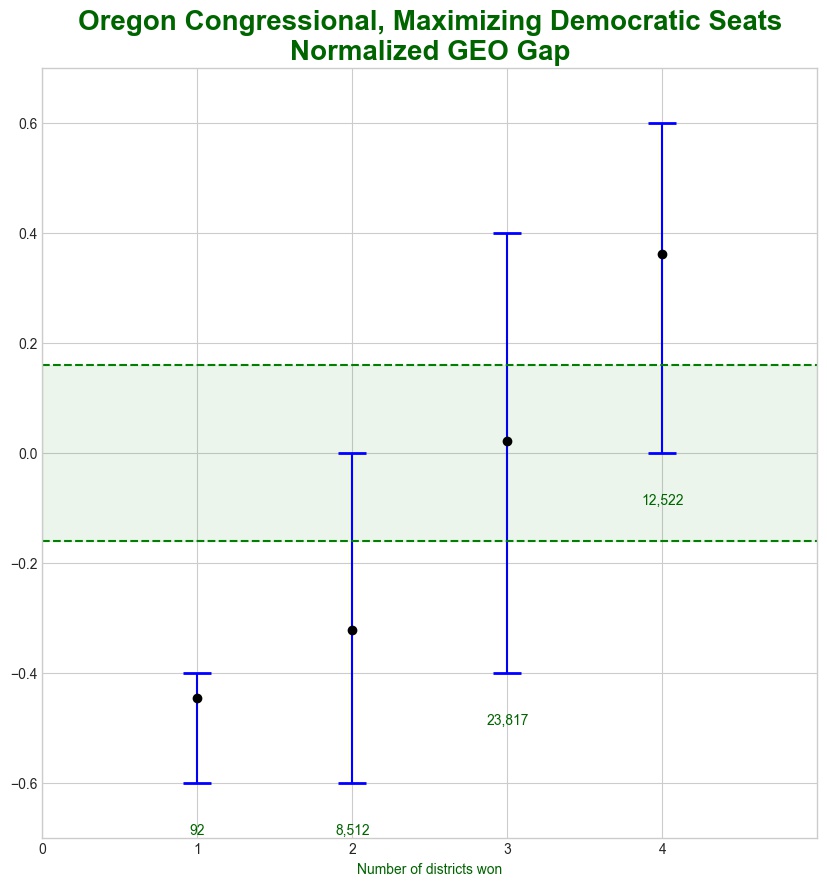}
\includegraphics[width=1.5in]{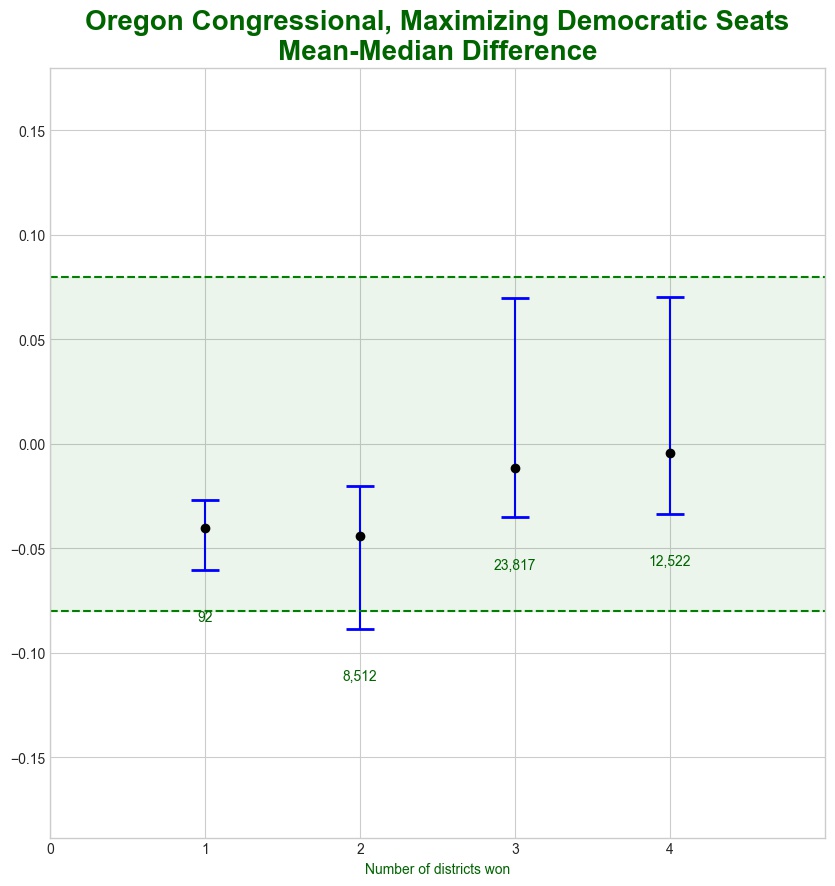}
\includegraphics[width=1.5in]{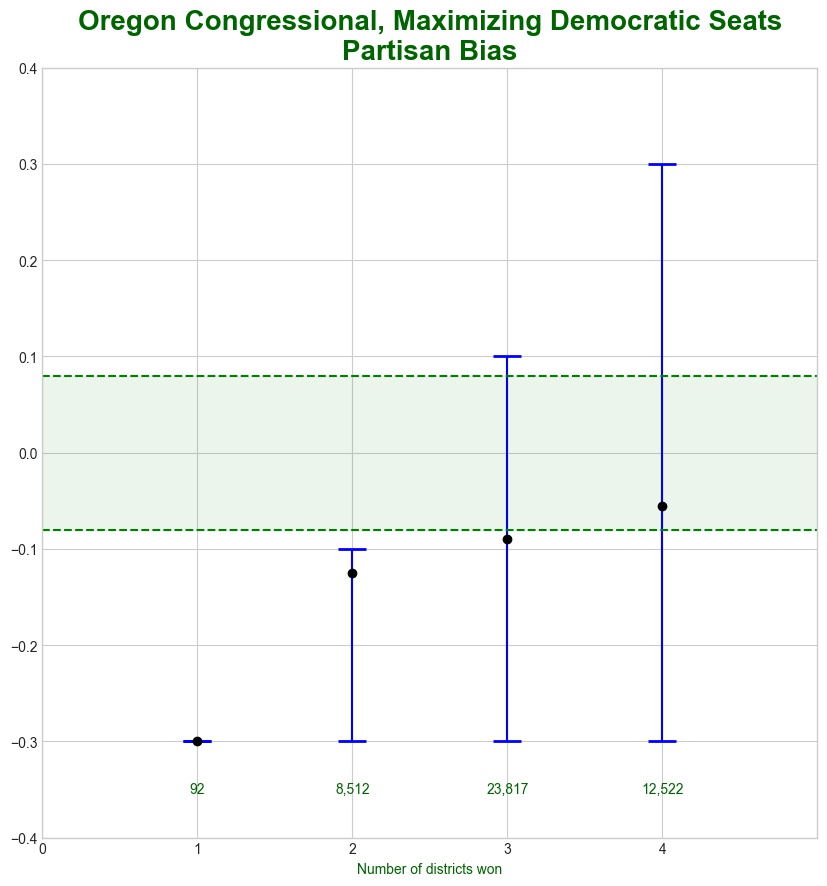}
\caption{Empirical results for Oregon's Congressional map and 2018 Gubenatorial election data, searching for maps with as many Democratic-won districts as possible.  Horizontal axis is number of districts won, vertical axis is metric value ranges.  The green region is from $0.16\inf(m)$ to $0.16\sup(m)$ for each metric $m$.  The small number below each metric value range is the number of maps produced that had the corresponding number of districts won.  The dot within each vertical bar is the mean value of that metric on all produced maps with the corresponding number of districts won.}
\label{fig:short_bursts_ORcongD}
\end{figure}

\begin{figure}[h]
\centering
\includegraphics[width=1.5in]{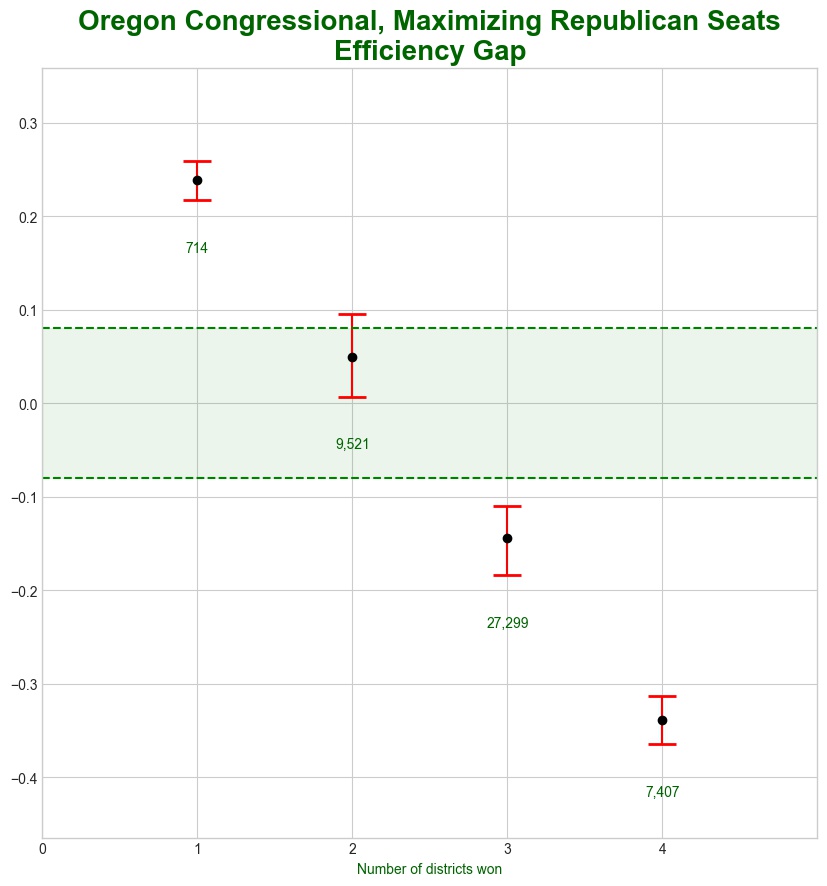}
\includegraphics[width=1.5in]{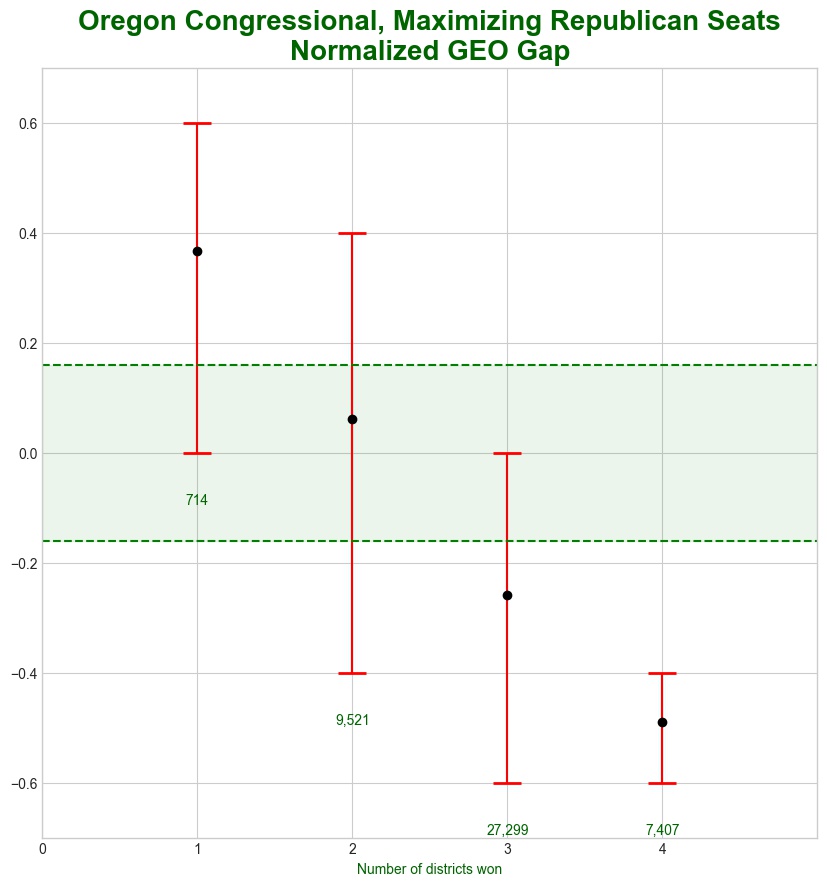}
\includegraphics[width=1.5in]{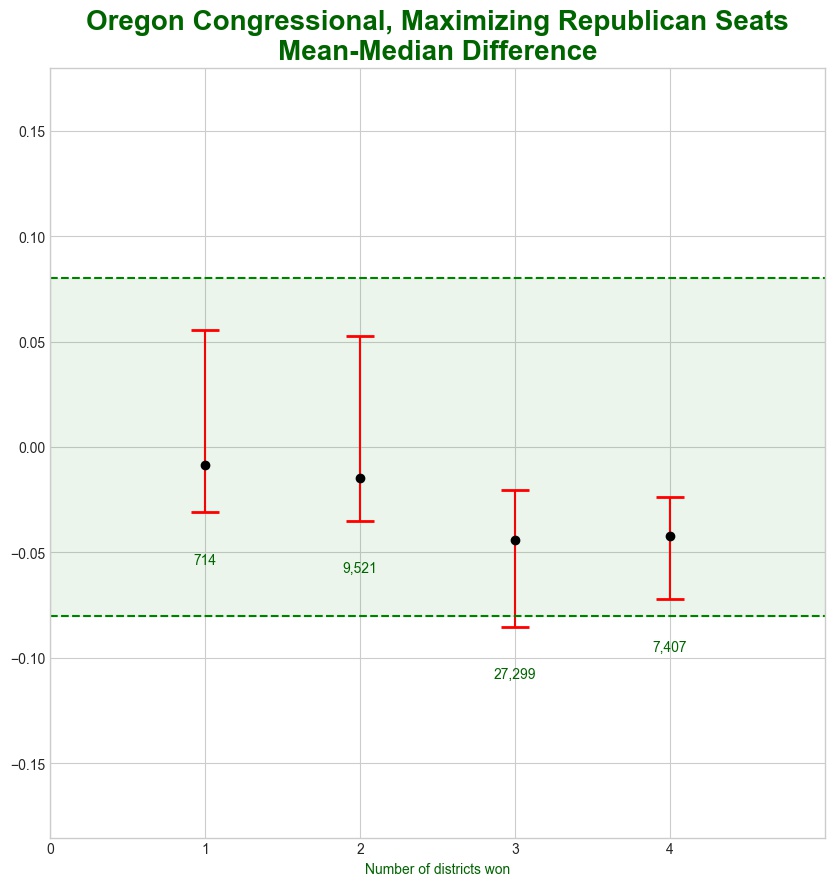}
\includegraphics[width=1.5in]{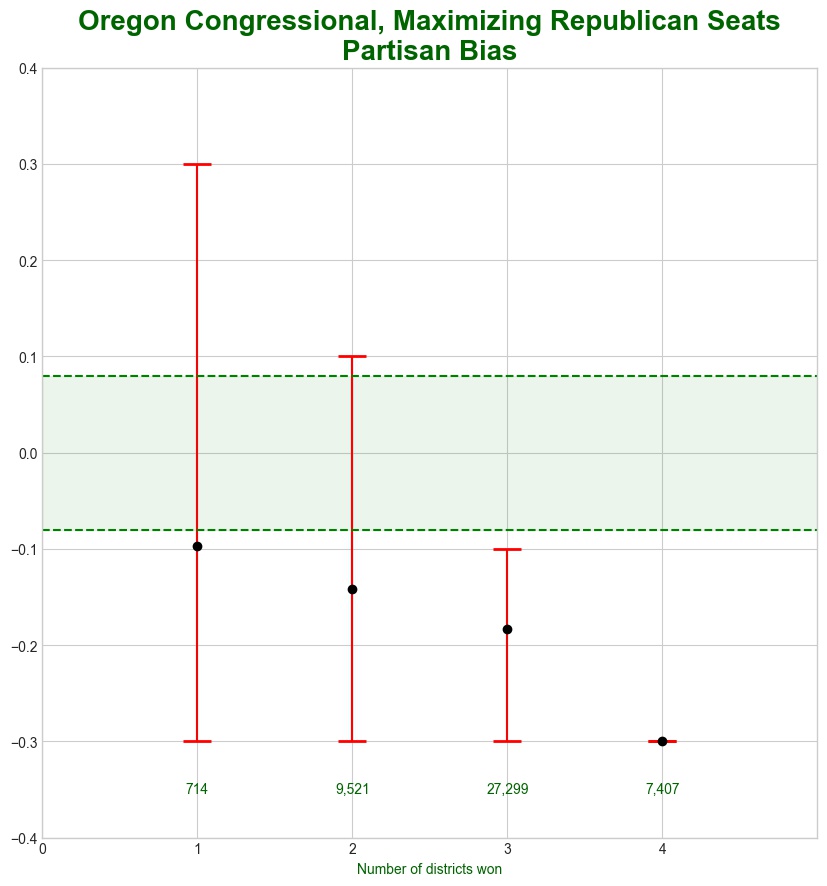}
\caption{Empirical results for Oregon's Congressional map and 2018 Gubenatorial election data, searching for maps with as many Republican-won districts as possible.  Horizontal axis is number of districts won, vertical axis is metric value ranges.  The green region is from $0.16\inf(m)$ to $0.16\sup(m)$ for each metric $m$.  The small number below each metric value range is the number of maps produced that had the corresponding number of districts won.  The dot within each vertical bar is the mean value of that metric on all produced maps with the corresponding number of districts won.}
\label{fig:short_bursts_ORcongR}
\end{figure}

\begin{figure}[h]
\centering
\includegraphics[width=1.5in]{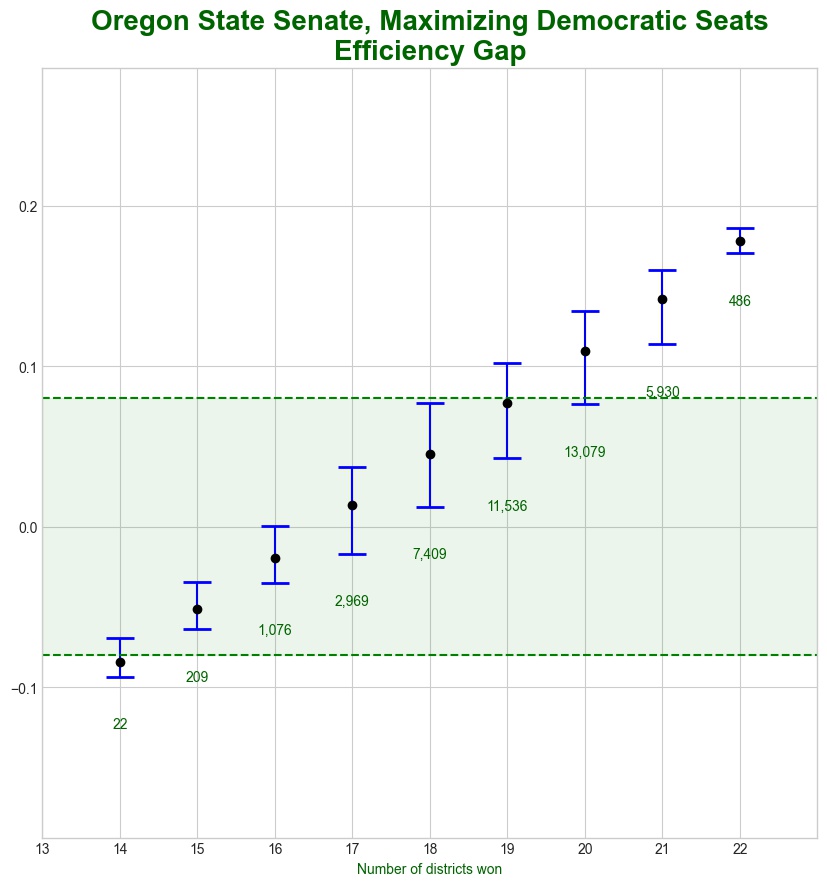}
\includegraphics[width=1.5in]{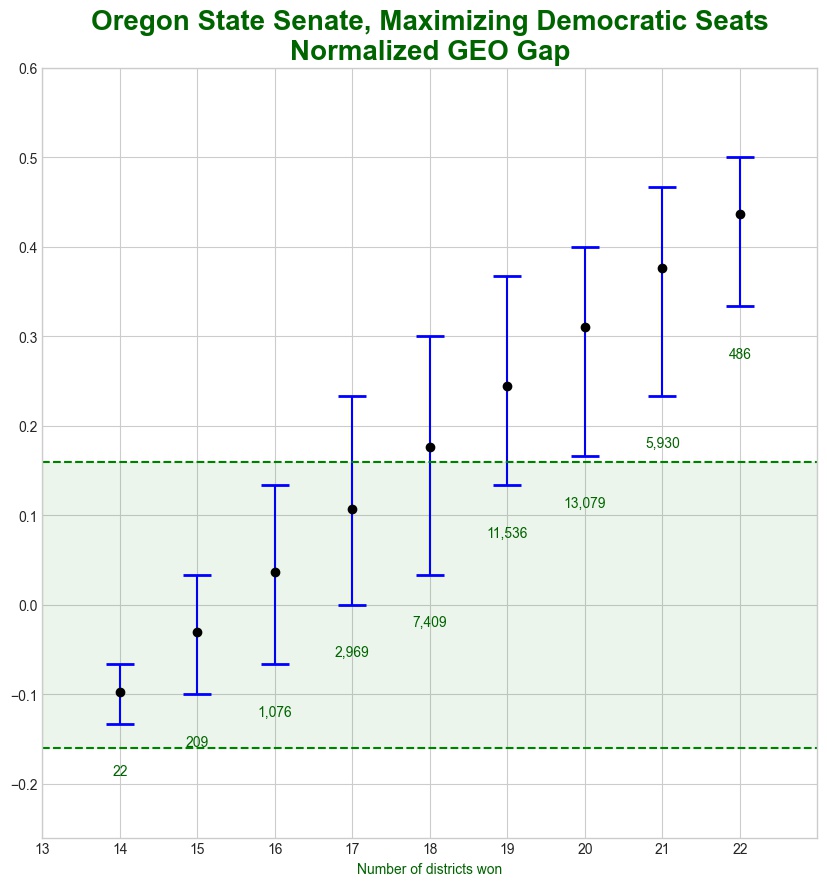}
\includegraphics[width=1.5in]{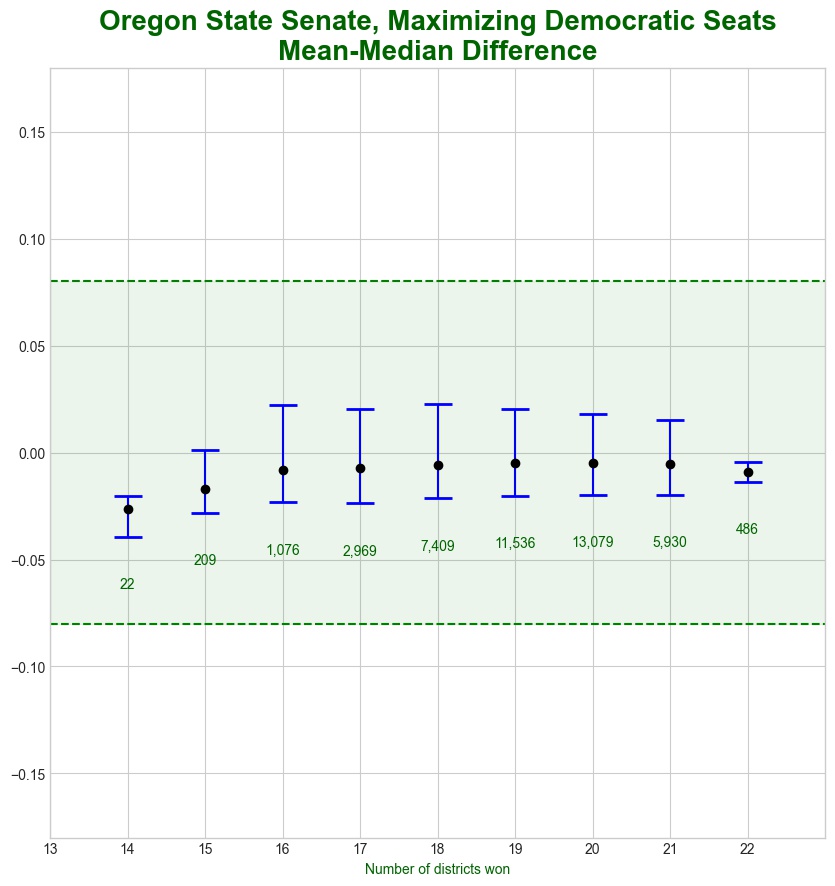}
\includegraphics[width=1.5in]{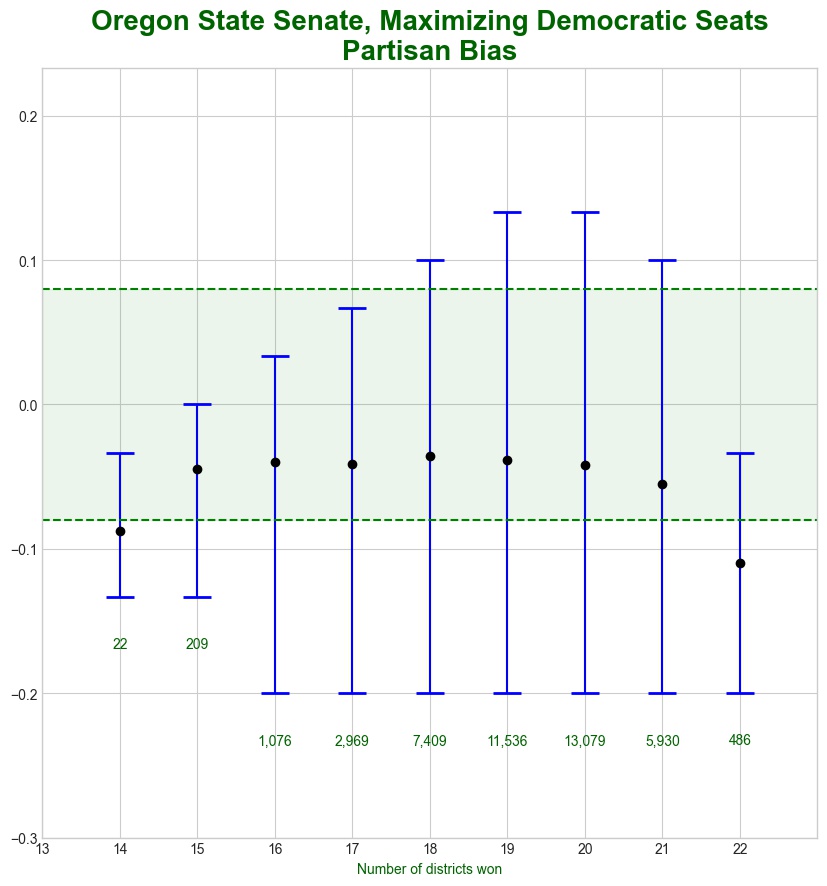}
\caption{Empirical results for Oregon's State Senate map and 2018 Gubenatorial election data, searching for maps with as many Democratic-won districts as possible.  Horizontal axis is number of districts won, vertical axis is metric value ranges.  The green region is from $0.16\inf(m)$ to $0.16\sup(m)$ for each metric $m$.  The small number below each metric value range is the number of maps produced that had the corresponding number of districts won.  The dot within each vertical bar is the mean value of that metric on all produced maps with the corresponding number of districts won.}
\label{fig:short_bursts_ORupperD}
\end{figure}

\begin{figure}[h]
\centering
\includegraphics[width=1.5in]{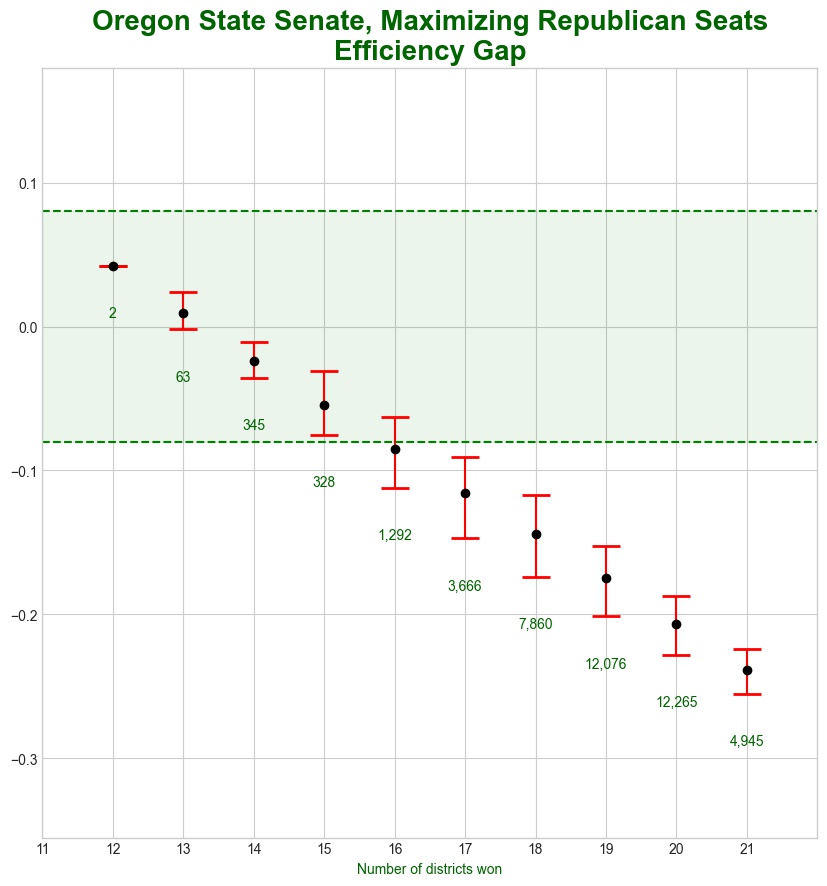}
\includegraphics[width=1.5in]{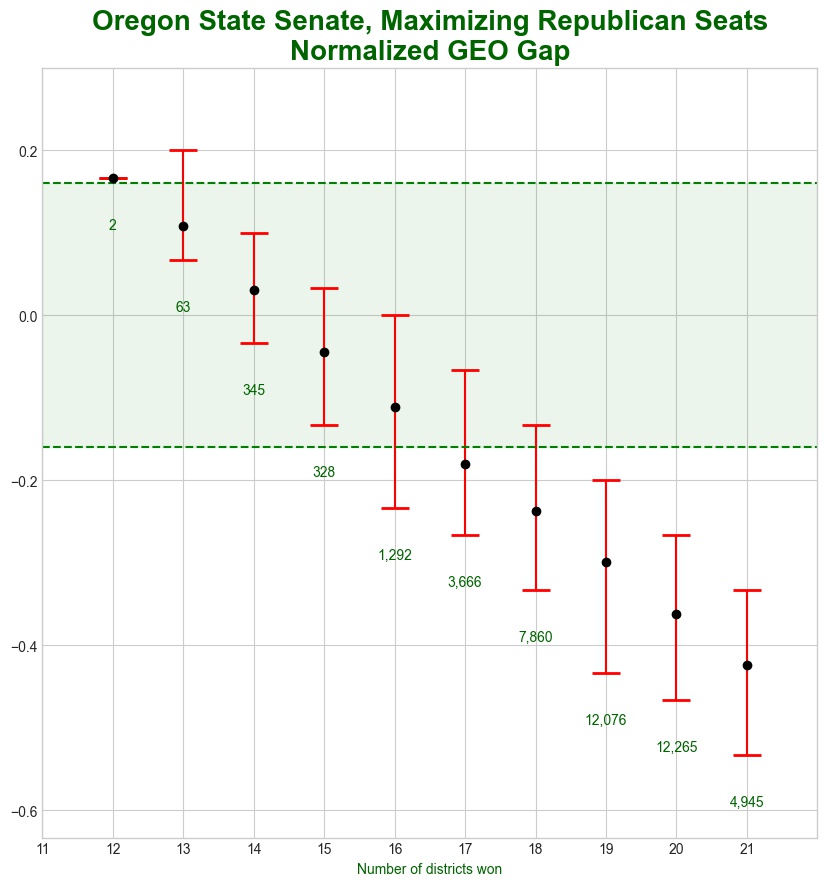}
\includegraphics[width=1.5in]{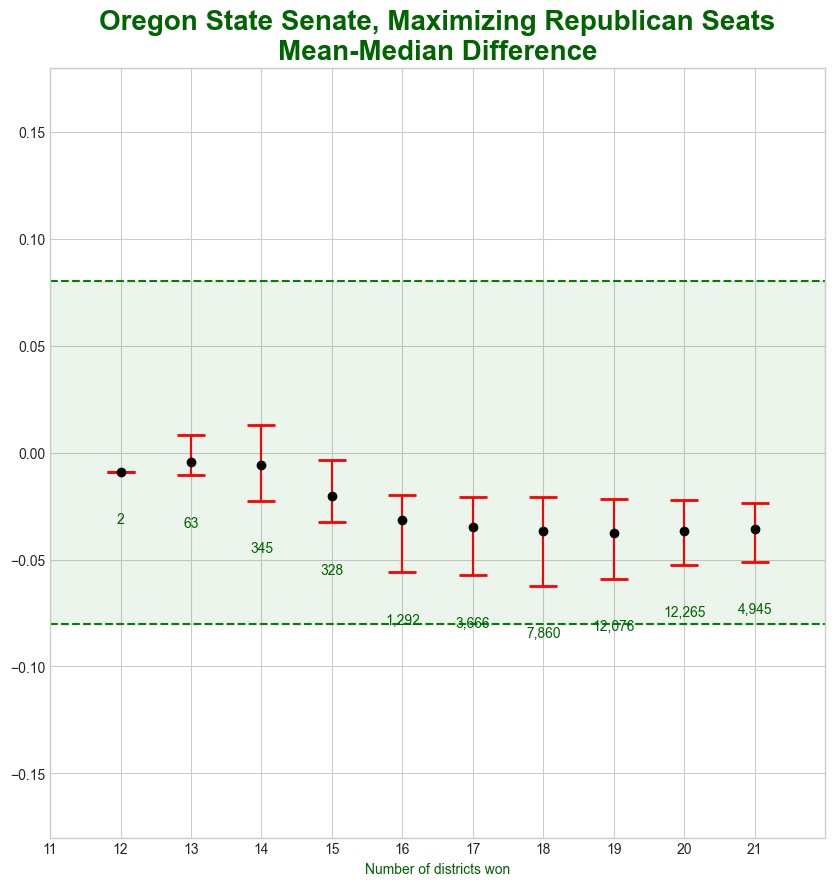}
\includegraphics[width=1.5in]{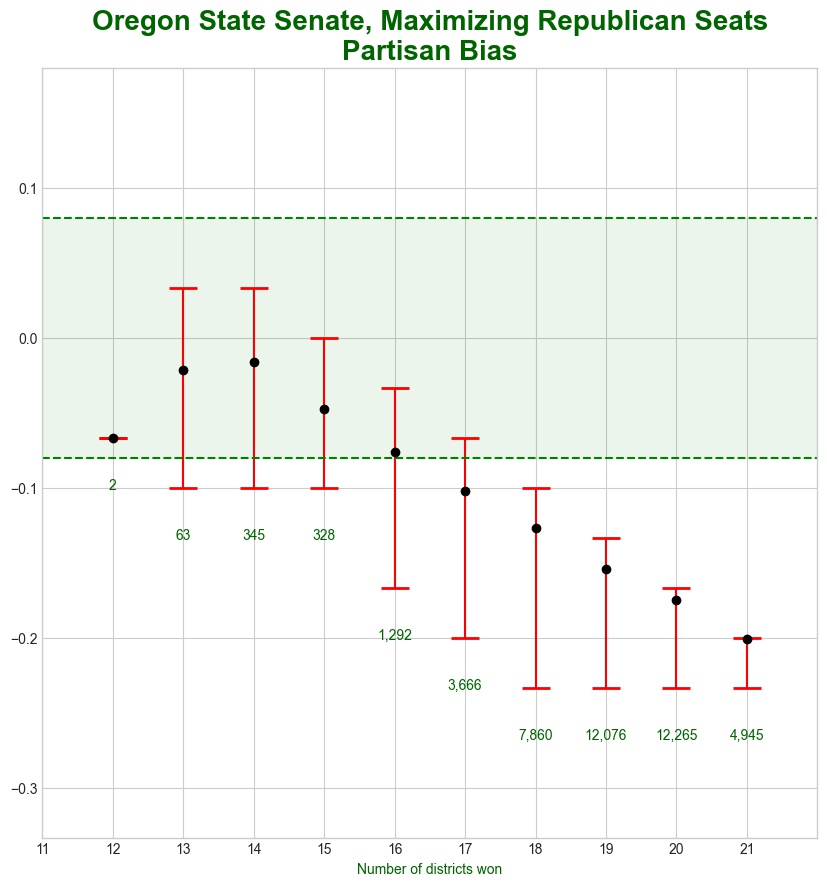}
\caption{Empirical results for Oregon's State Senate map and 2018 Gubenatorial election data, searching for maps with as many Republican-won districts as possible.  Horizontal axis is number of districts won, vertical axis is metric value ranges.  The green region is from $0.16\inf(m)$ to $0.16\sup(m)$ for each metric $m$.  The small number below each metric value range is the number of maps produced that had the corresponding number of districts won.  The dot within each vertical bar is the mean value of that metric on all produced maps with the corresponding number of districts won.}
\label{fig:short_bursts_ORupperR}
\end{figure}

\begin{figure}[h]
\centering
\includegraphics[width=1.5in]{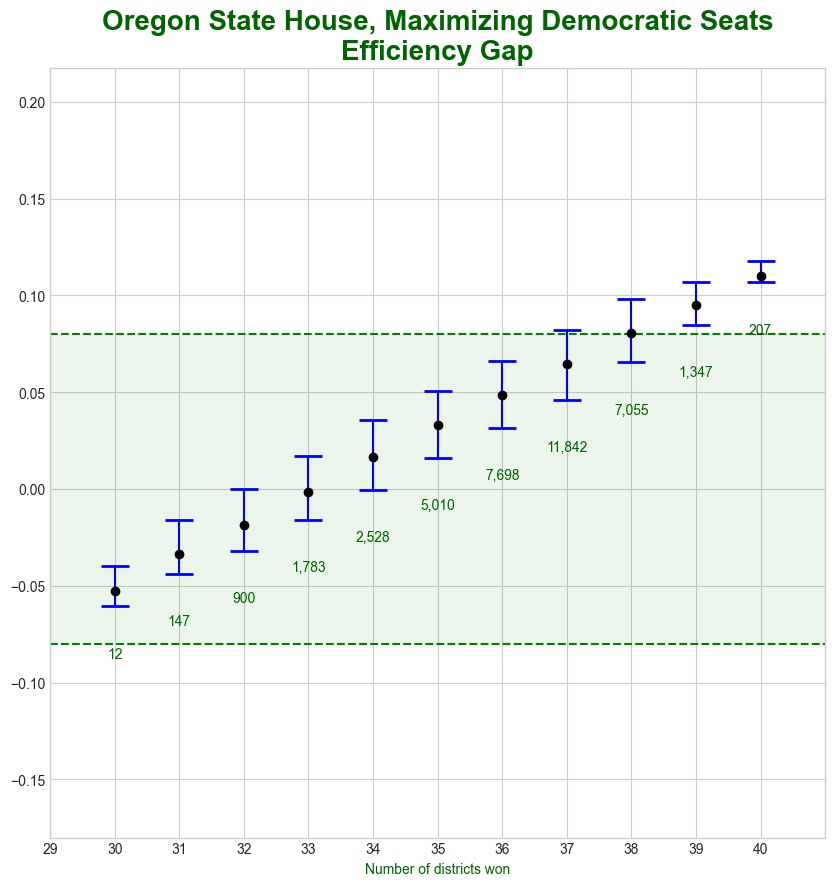}
\includegraphics[width=1.5in]{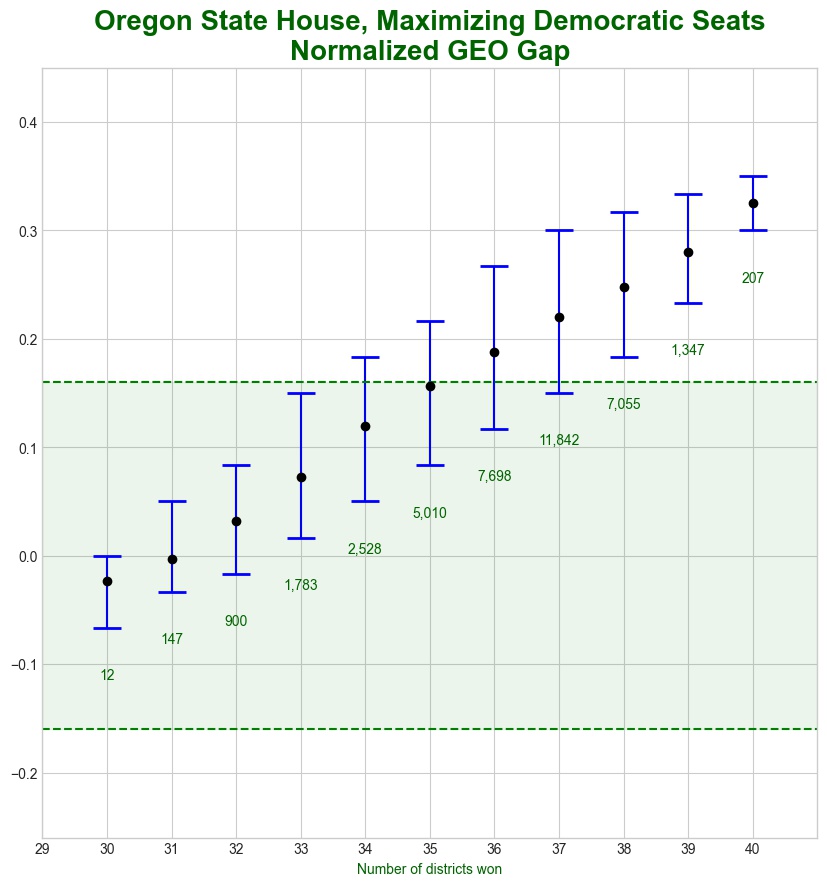}
\includegraphics[width=1.5in]{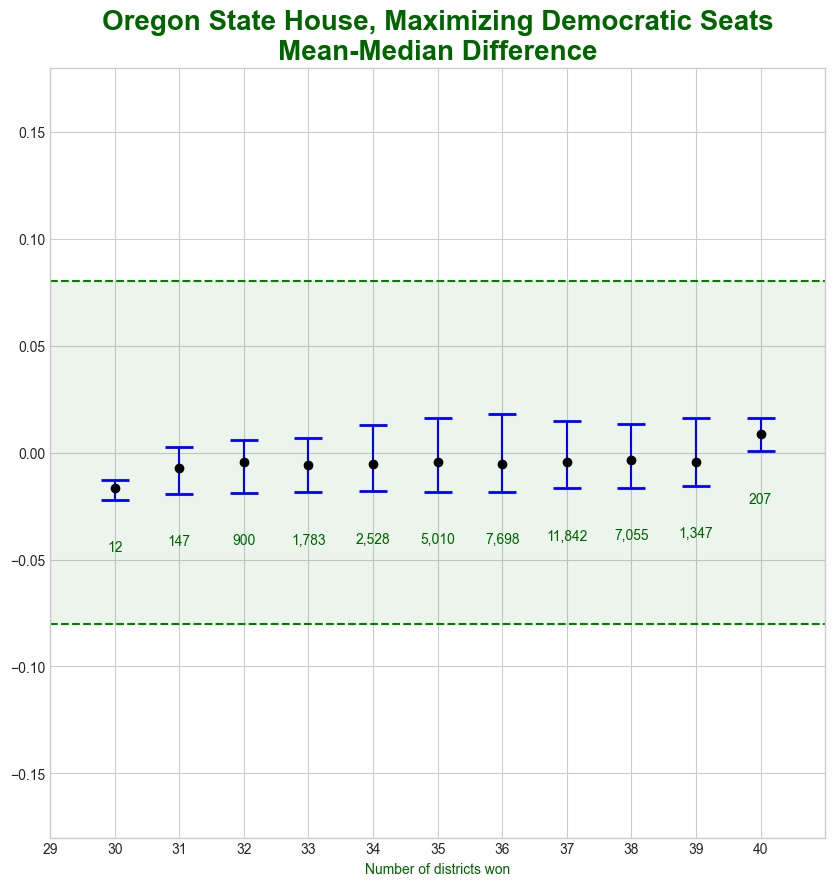}
\includegraphics[width=1.5in]{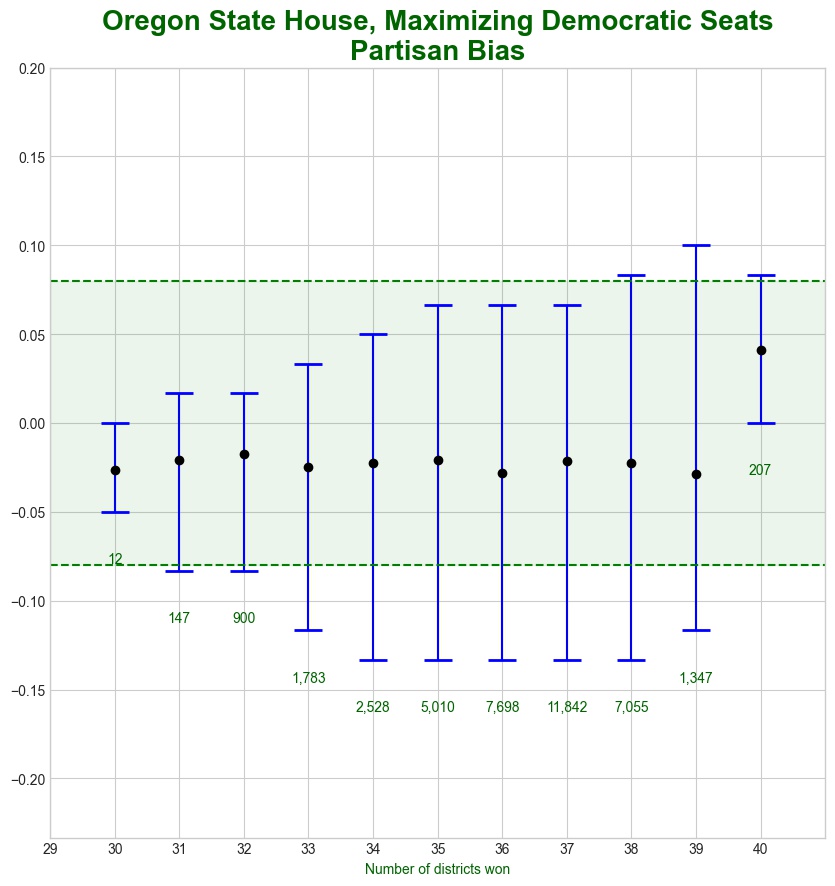}
\caption{Empirical results for Oregon's State House map and 2018 Gubenatorial election data, searching for maps with as many Democratic-won districts as possible.  Horizontal axis is number of districts won, vertical axis is metric value ranges.  The green region is from $0.16\inf(m)$ to $0.16\sup(m)$ for each metric $m$.  The small number below each metric value range is the number of maps produced that had the corresponding number of districts won.  The dot within each vertical bar is the mean value of that metric on all produced maps with the corresponding number of districts won.}
\label{fig:short_bursts_ORlowerD}
\end{figure}

\begin{figure}[h]
\centering
\includegraphics[width=1.5in]{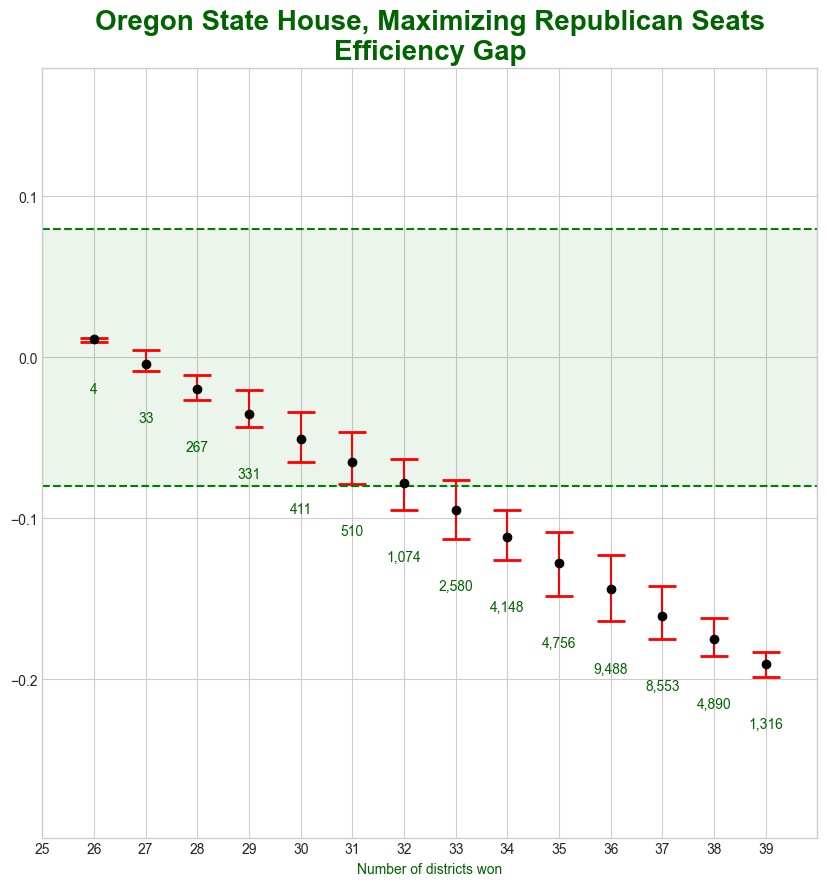}
\includegraphics[width=1.5in]{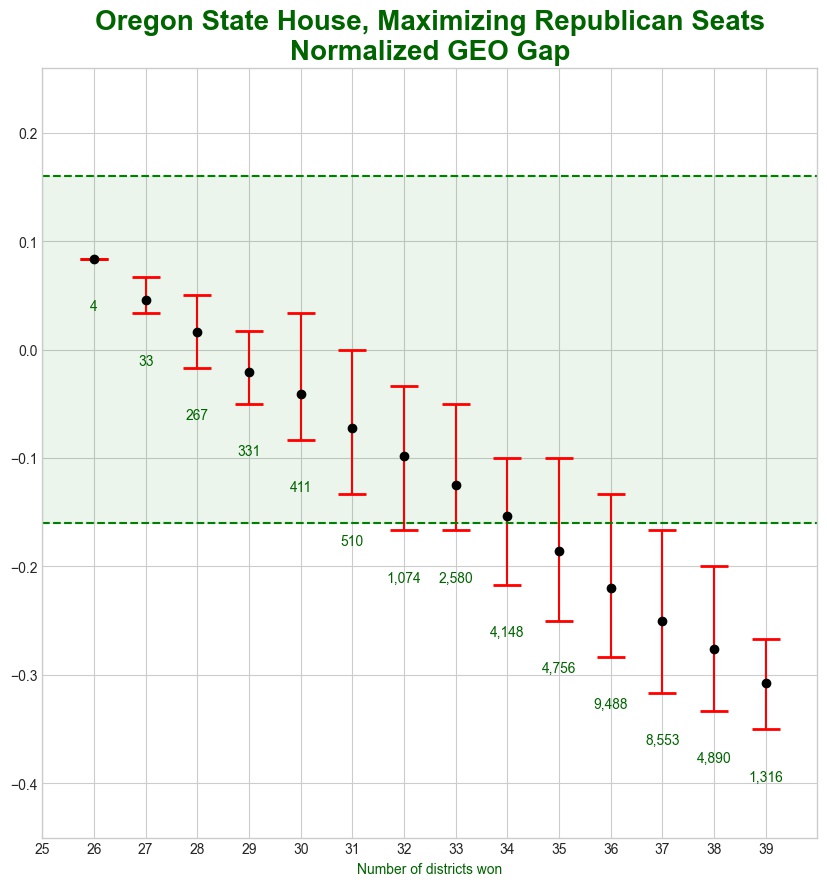}
\includegraphics[width=1.5in]{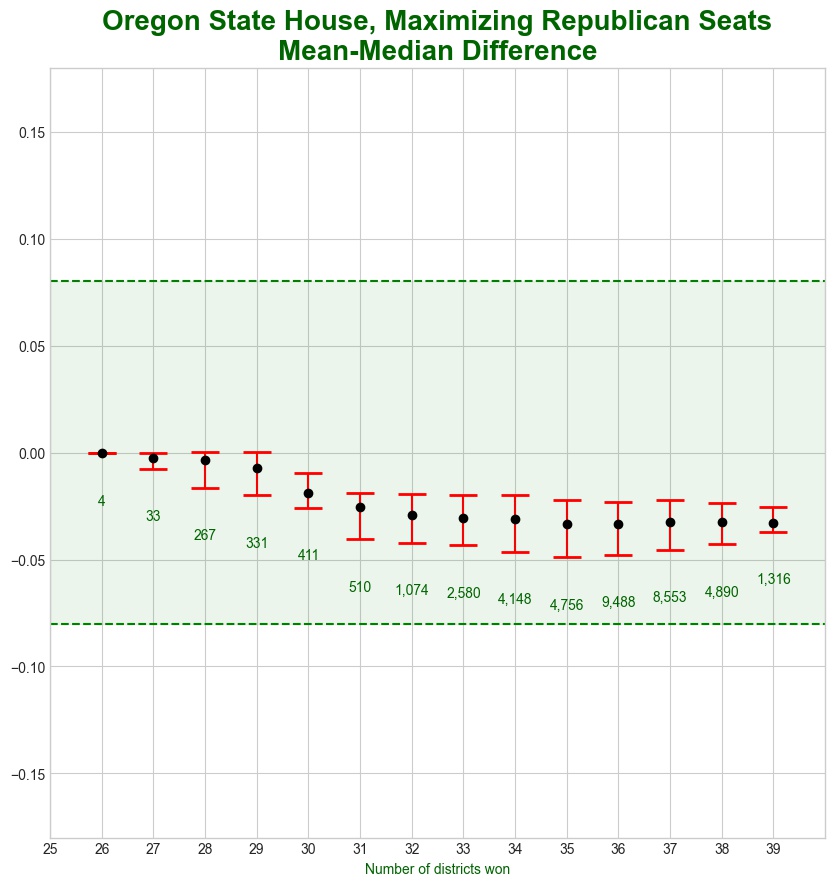}
\includegraphics[width=1.5in]{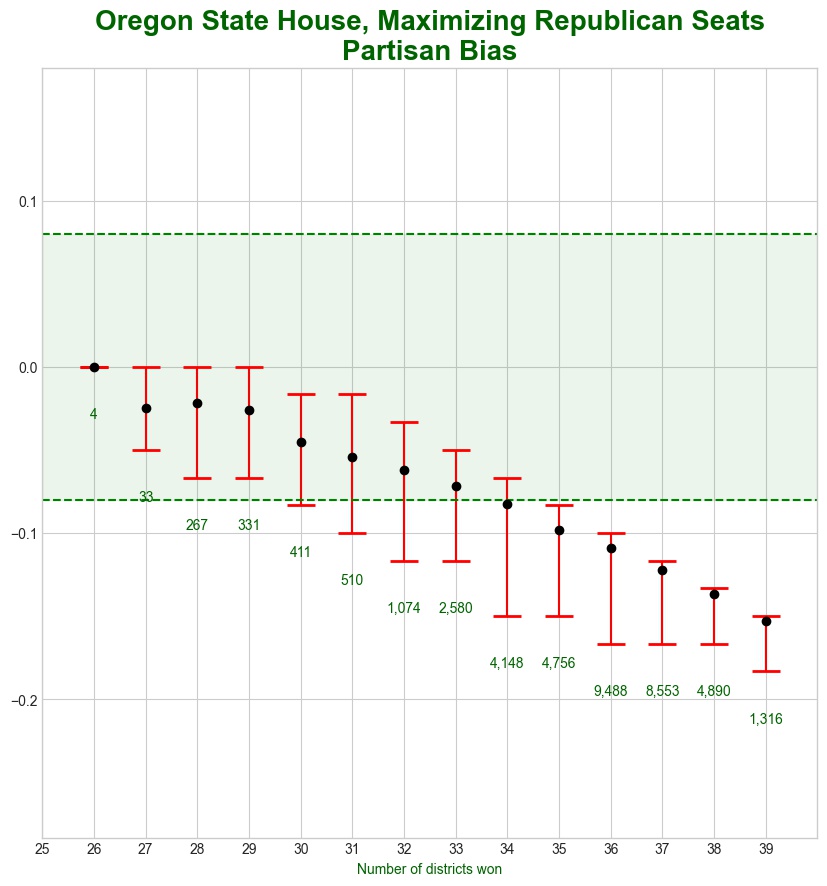}
\caption{Empirical results for Oregon's State House map and 2018 Gubenatorial election data, searching for maps with as many Republican-won districts as possible.  Horizontal axis is number of districts won, vertical axis is metric value ranges.  The green region is from $0.16\inf(m)$ to $0.16\sup(m)$ for each metric $m$.  The small number below each metric value range is the number of maps produced that had the corresponding number of districts won.  The dot within each vertical bar is the mean value of that metric on all produced maps with the corresponding number of districts won.}
\label{fig:short_bursts_ORlowerR}
\end{figure}

\begin{figure}[h]
\centering
\includegraphics[width=1.5in]{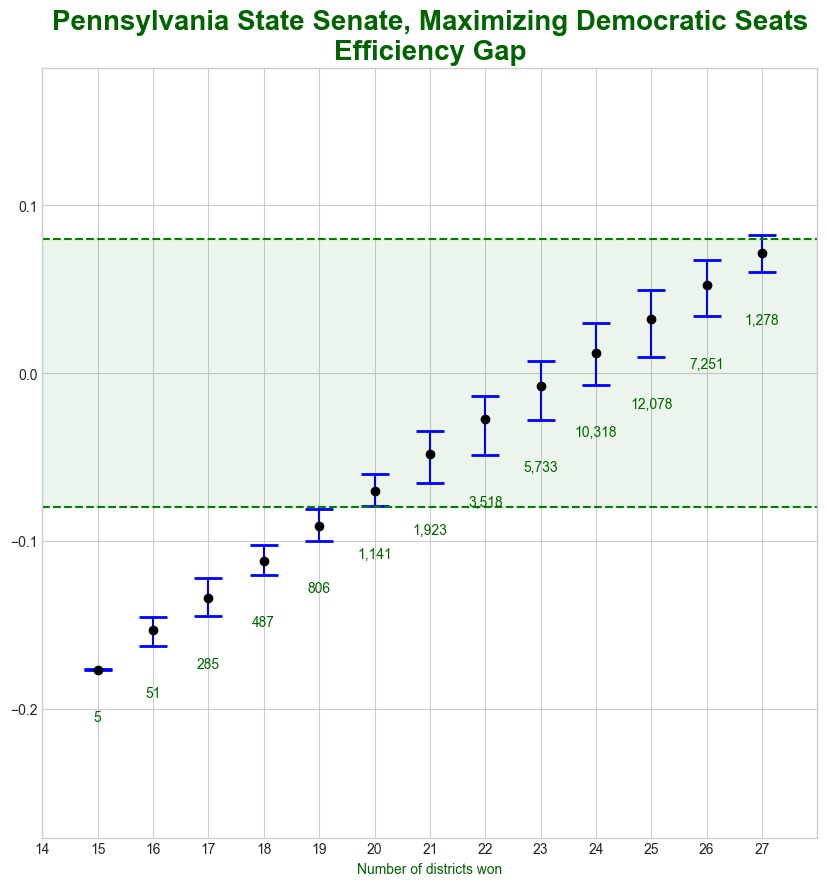}
\includegraphics[width=1.5in]{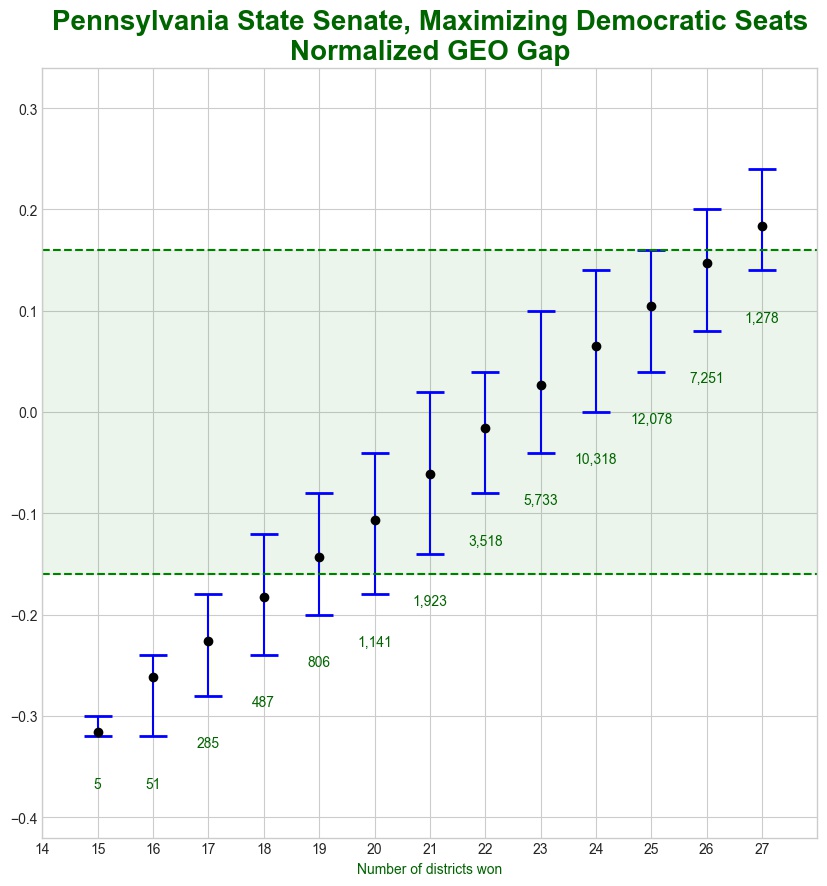}
\includegraphics[width=1.5in]{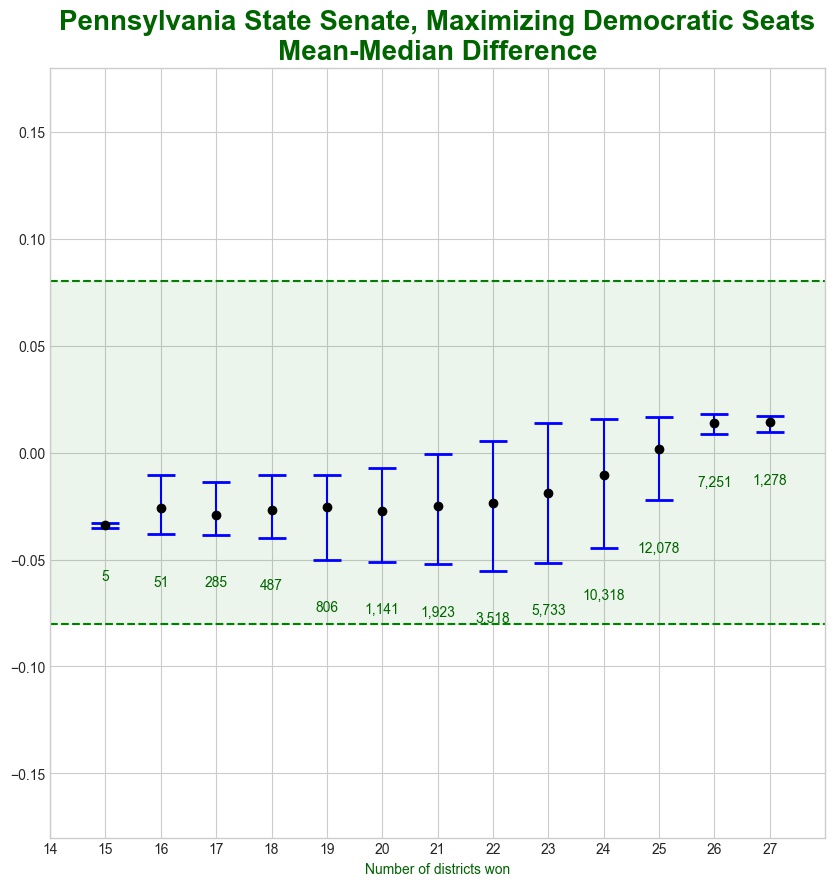}
\includegraphics[width=1.5in]{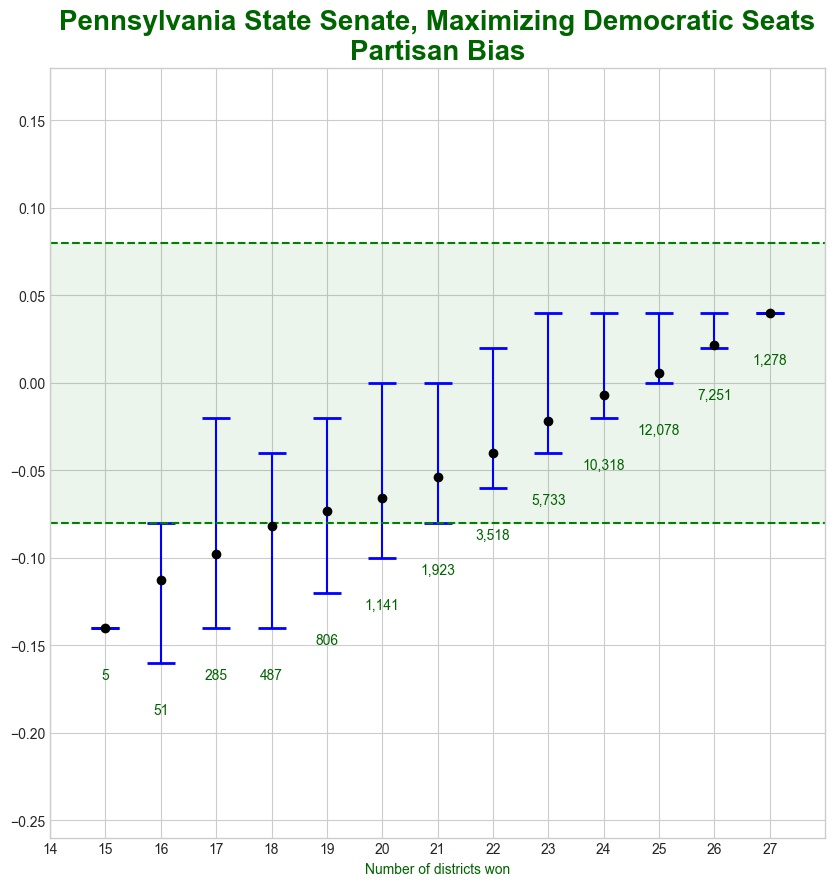}
\caption{Empirical results for Pennsylvania's State Senate map and 2016 US Senate election data, searching for maps with as many Democratic-won districts as possible.  Horizontal axis is number of districts won, vertical axis is metric value ranges.  The green region is from $0.16\inf(m)$ to $0.16\sup(m)$ for each metric $m$.  The small number below each metric value range is the number of maps produced that had the corresponding number of districts won.  The dot within each vertical bar is the mean value of that metric on all produced maps with the corresponding number of districts won.}
\label{fig:short_bursts_PAupperD}
\end{figure}

\begin{figure}[h]
\centering
\includegraphics[width=1.5in]{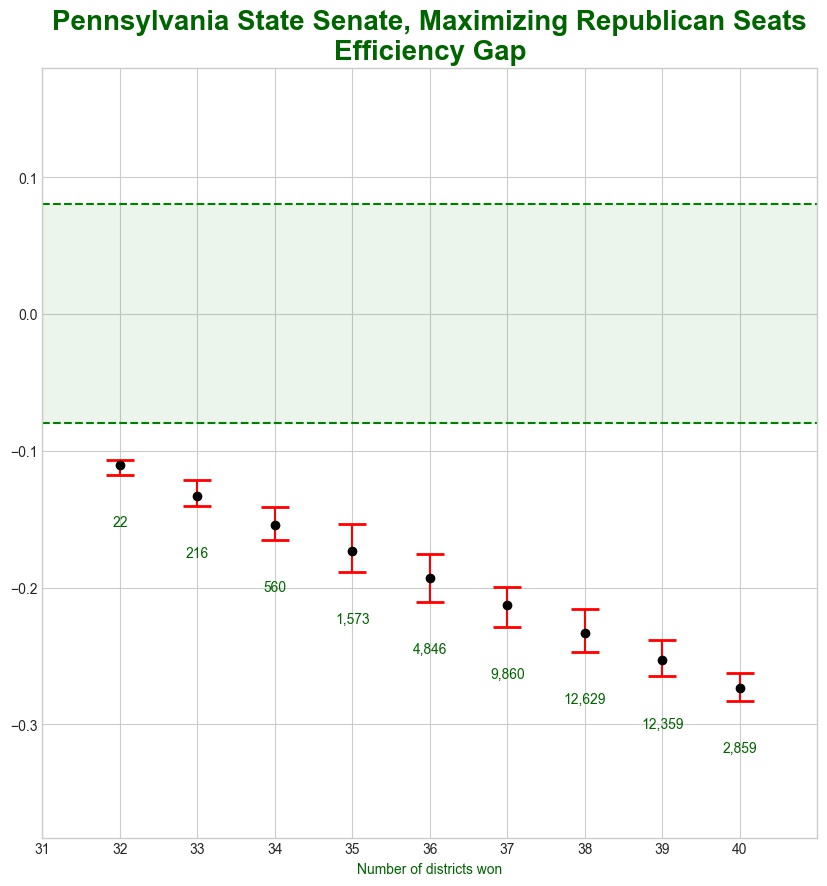}
\includegraphics[width=1.5in]{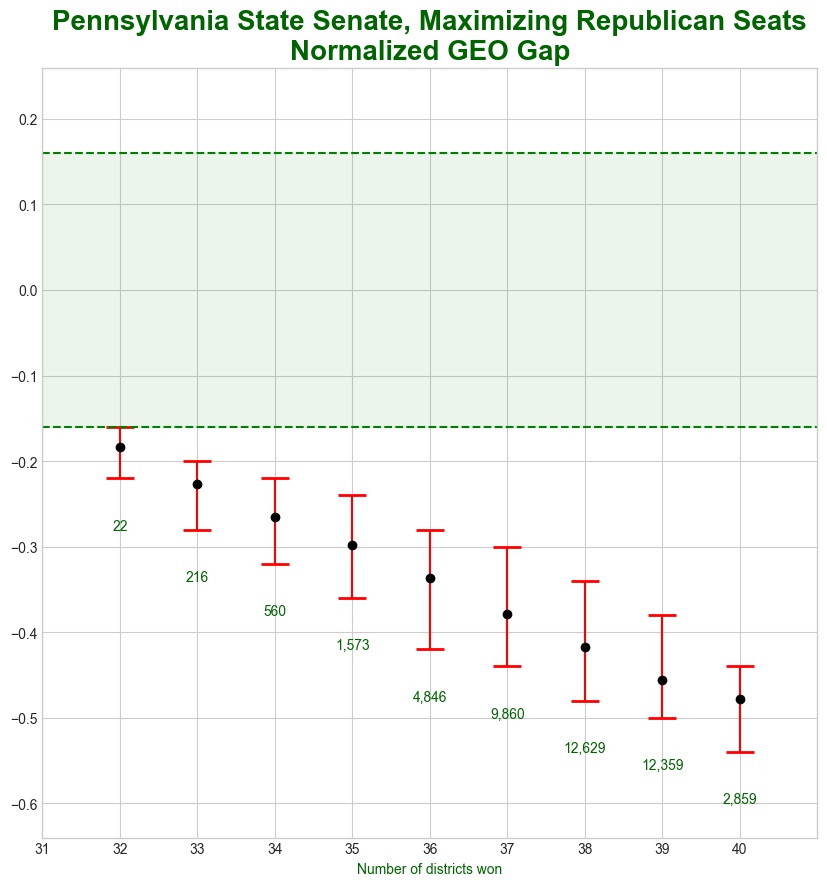}
\includegraphics[width=1.5in]{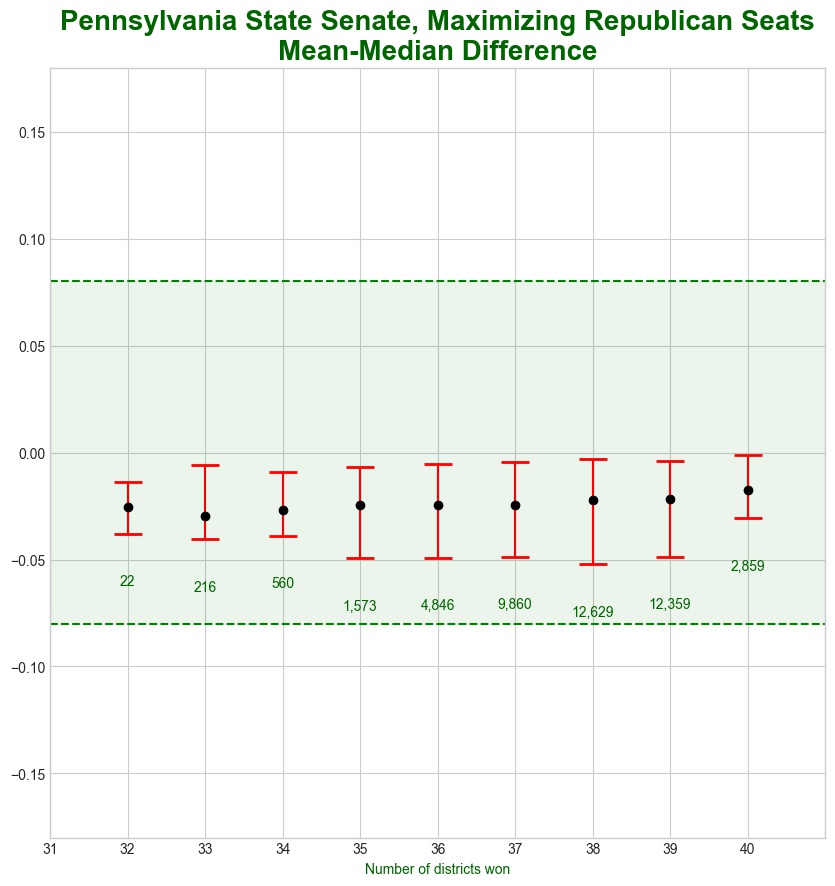}
\includegraphics[width=1.5in]{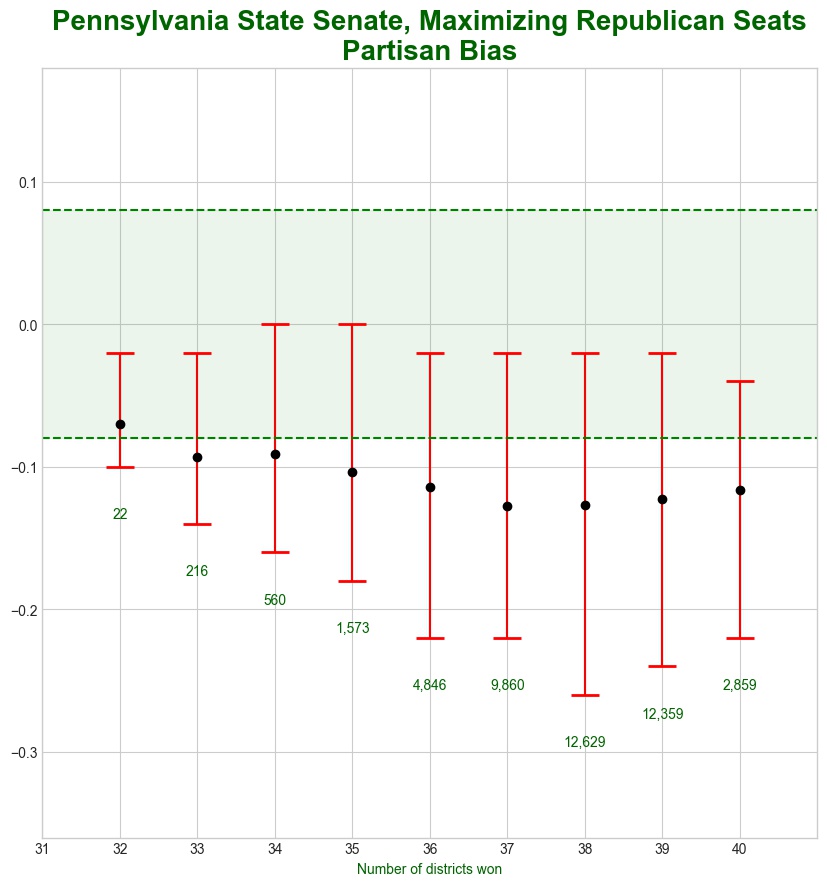}
\caption{Empirical results for Pennsylvania's State Senate map and 2016 US Senate election data, searching for maps with as many Republican-won districts as possible.  Horizontal axis is number of districts won, vertical axis is metric value ranges.  The green region is from $0.16\inf(m)$ to $0.16\sup(m)$ for each metric $m$.  The small number below each metric value range is the number of maps produced that had the corresponding number of districts won.  The dot within each vertical bar is the mean value of that metric on all produced maps with the corresponding number of districts won.}
\label{fig:short_bursts_PAupperR}
\end{figure}

\begin{figure}[h]
\centering
\includegraphics[width=1.5in]{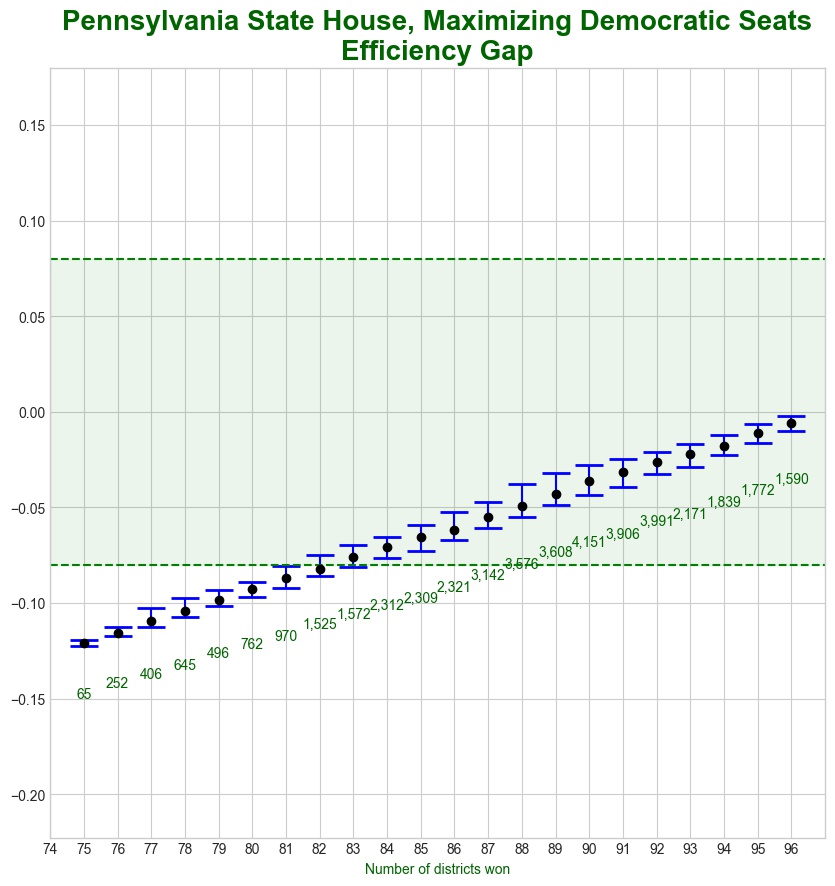}
\includegraphics[width=1.5in]{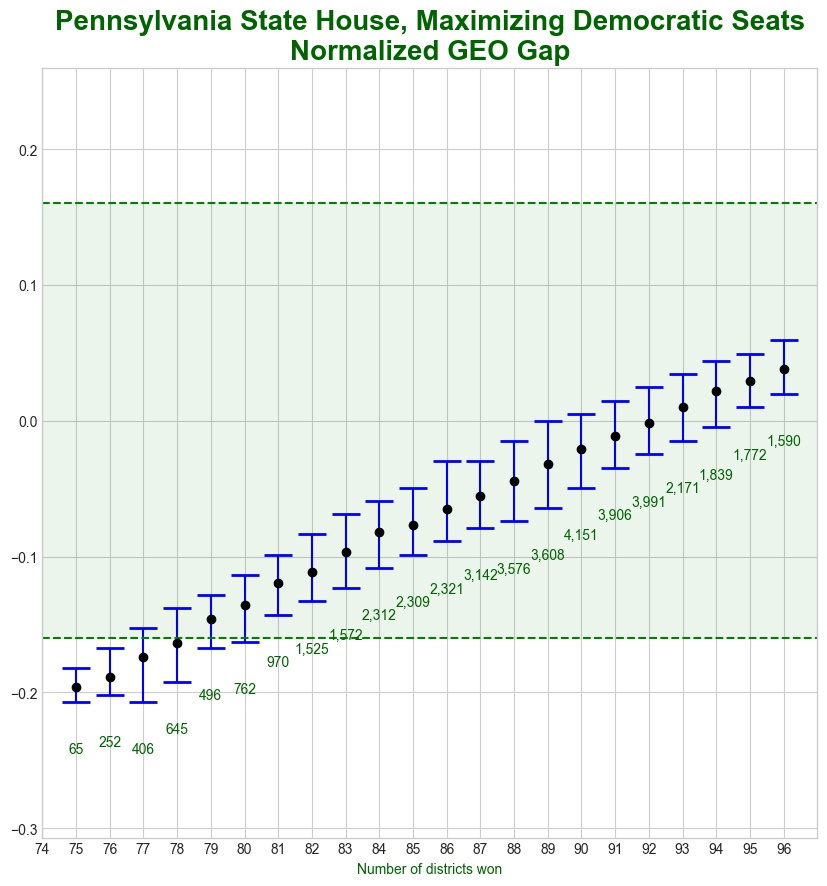}
\includegraphics[width=1.5in]{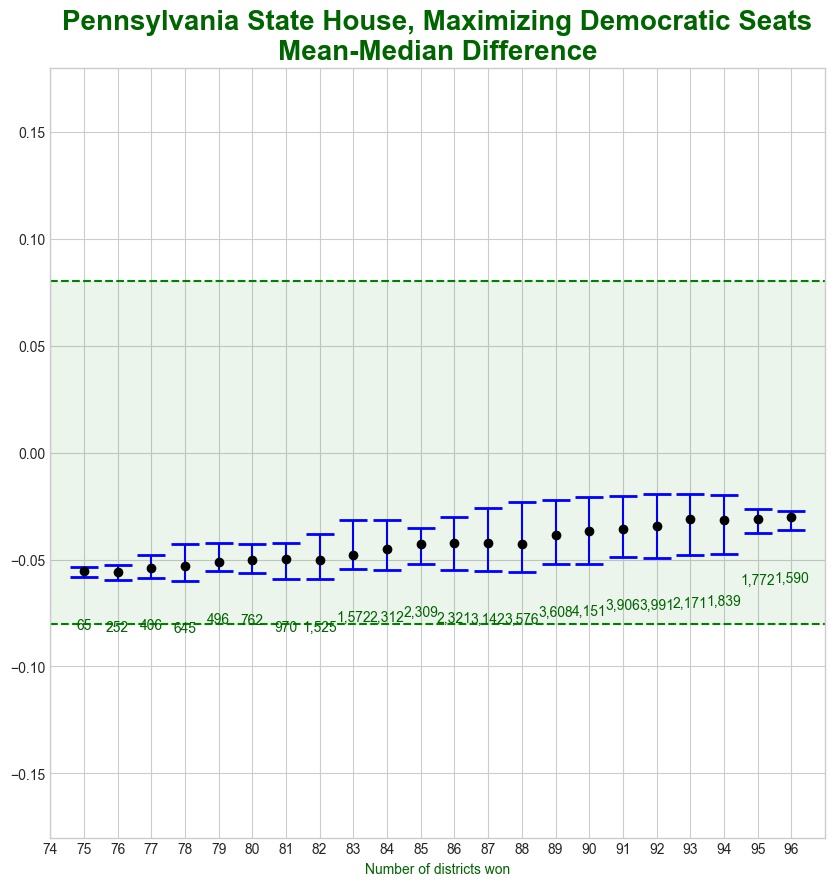}
\includegraphics[width=1.5in]{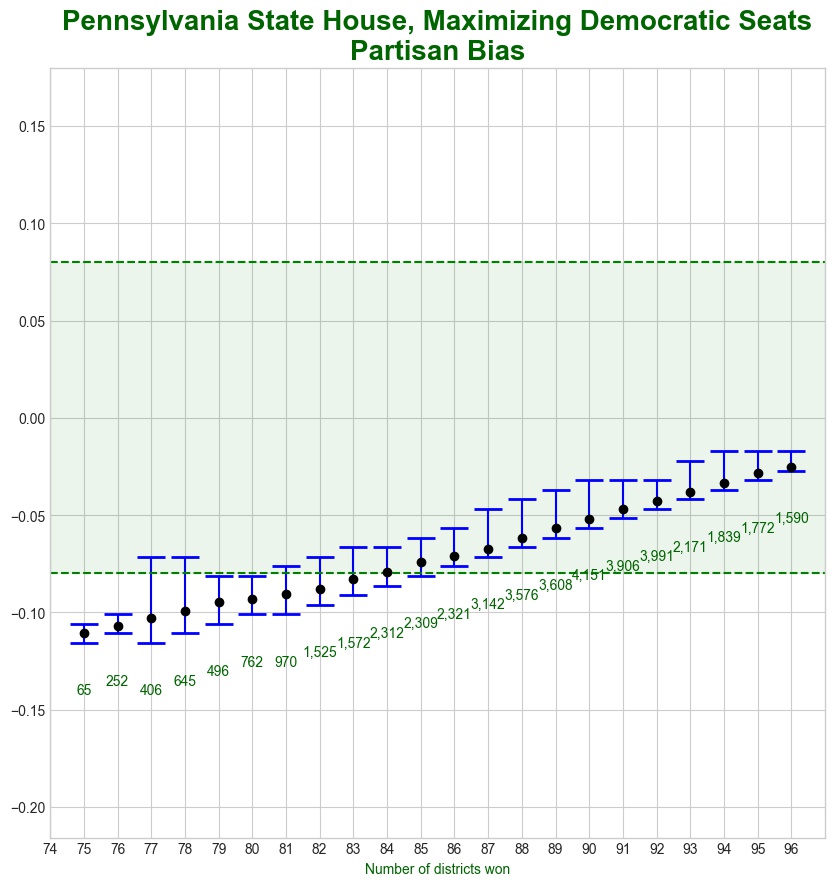}
\caption{Empirical results for Pennsylvania's State House map and 2016 US Senate election data, searching for maps with as many Democratic-won districts as possible.  Horizontal axis is number of districts won, vertical axis is metric value ranges.  The green region is from $0.16\inf(m)$ to $0.16\sup(m)$ for each metric $m$.  The small number below each metric value range is the number of maps produced that had the corresponding number of districts won.  The dot within each vertical bar is the mean value of that metric on all produced maps with the corresponding number of districts won.}
\label{fig:short_bursts_PAlowerD}
\end{figure}

\begin{figure}[h]
\centering
\includegraphics[width=1.5in]{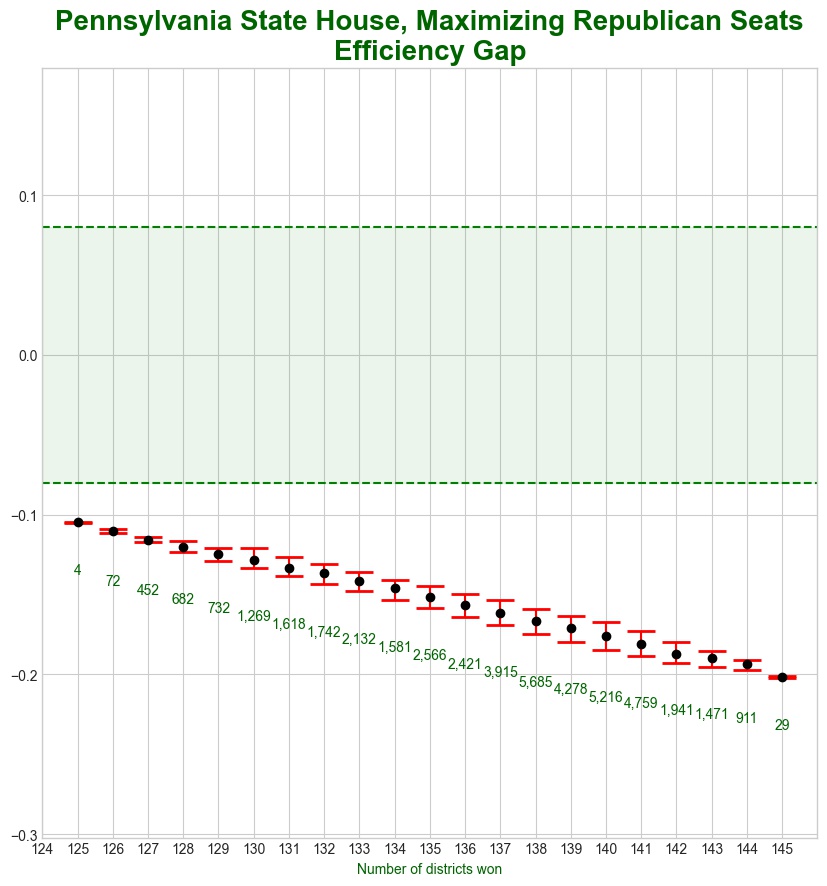}
\includegraphics[width=1.5in]{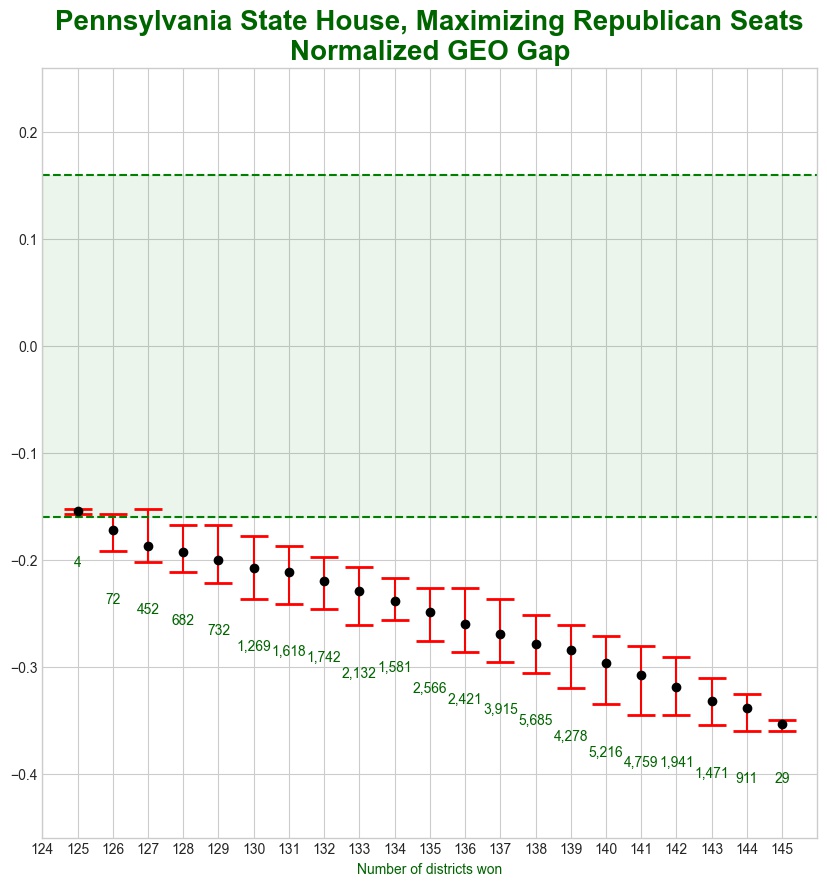}
\includegraphics[width=1.5in]{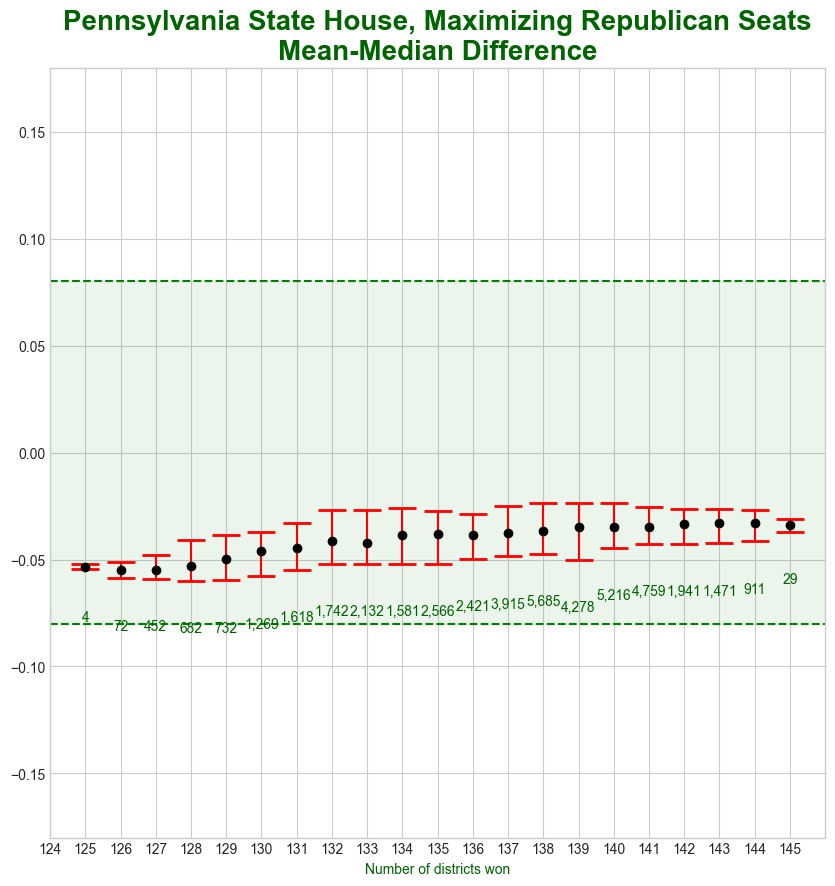}
\includegraphics[width=1.5in]{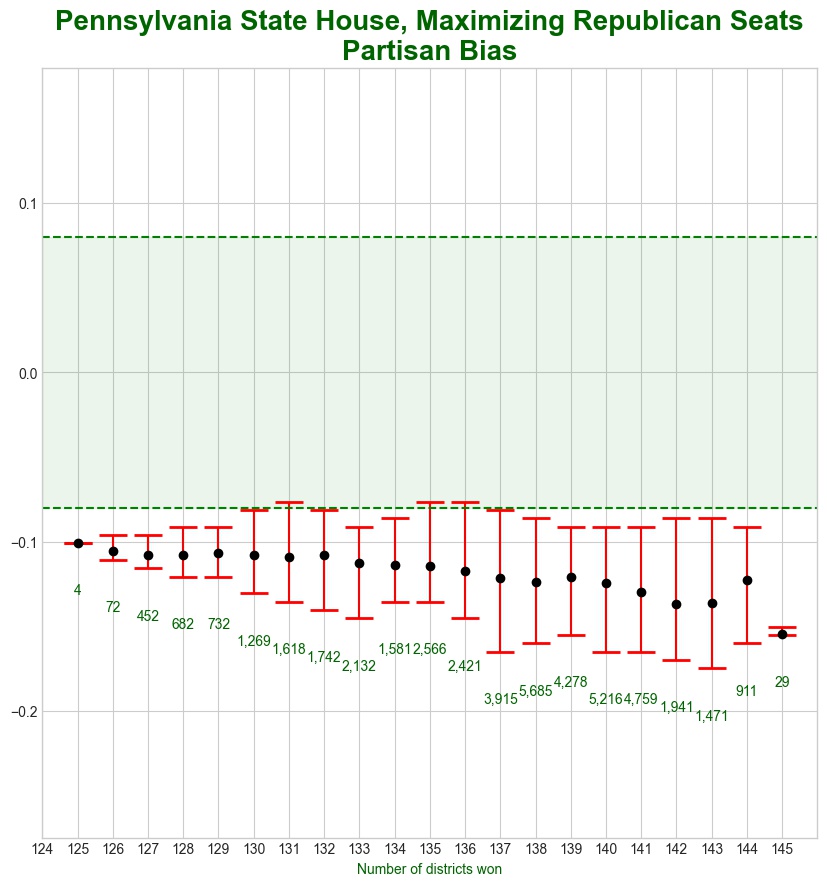}
\caption{Empirical results for Pennsylvania's State House map and 2016 US Senate election data, searching for maps with as many Republican-won districts as possible.  Horizontal axis is number of districts won, vertical axis is metric value ranges.  The green region is from $0.16\inf(m)$ to $0.16\sup(m)$ for each metric $m$.  The small number below each metric value range is the number of maps produced that had the corresponding number of districts won.  The dot within each vertical bar is the mean value of that metric on all produced maps with the corresponding number of districts won.}
\label{fig:short_bursts_PAlowerR}
\end{figure}

\begin{figure}[h]
\centering
\includegraphics[width=1.5in]{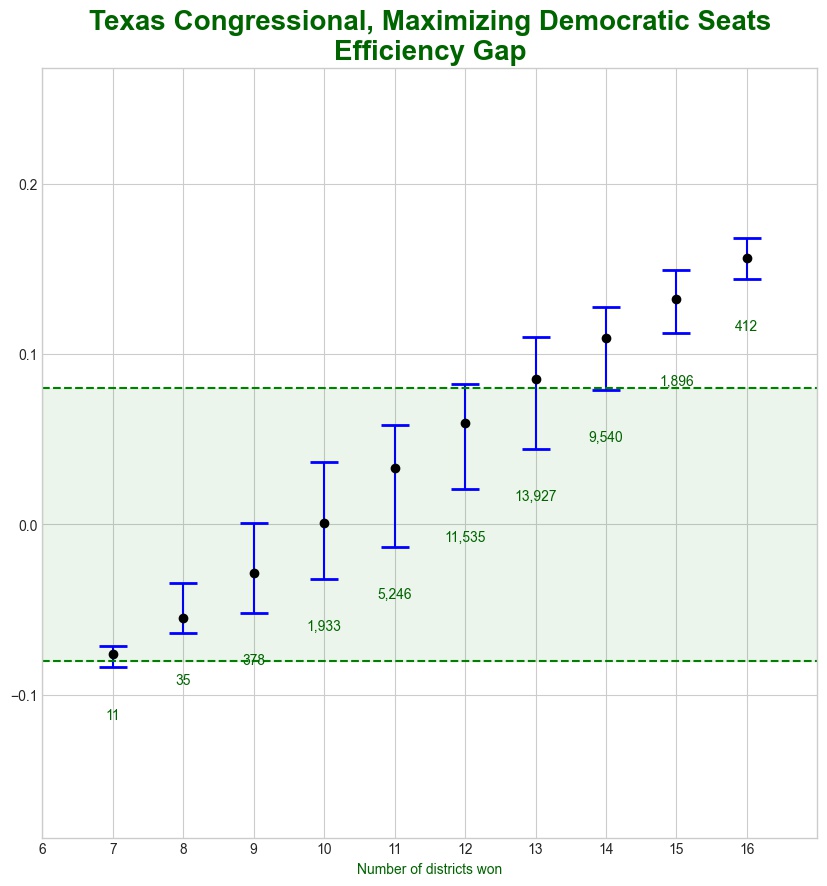}
\includegraphics[width=1.5in]{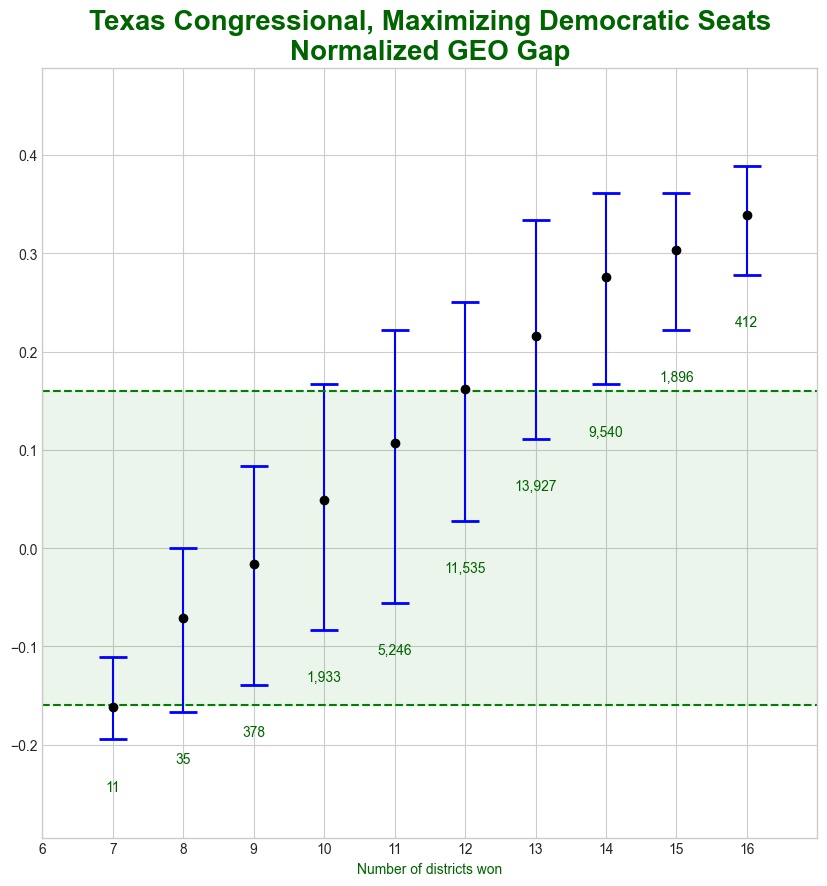}
\includegraphics[width=1.5in]{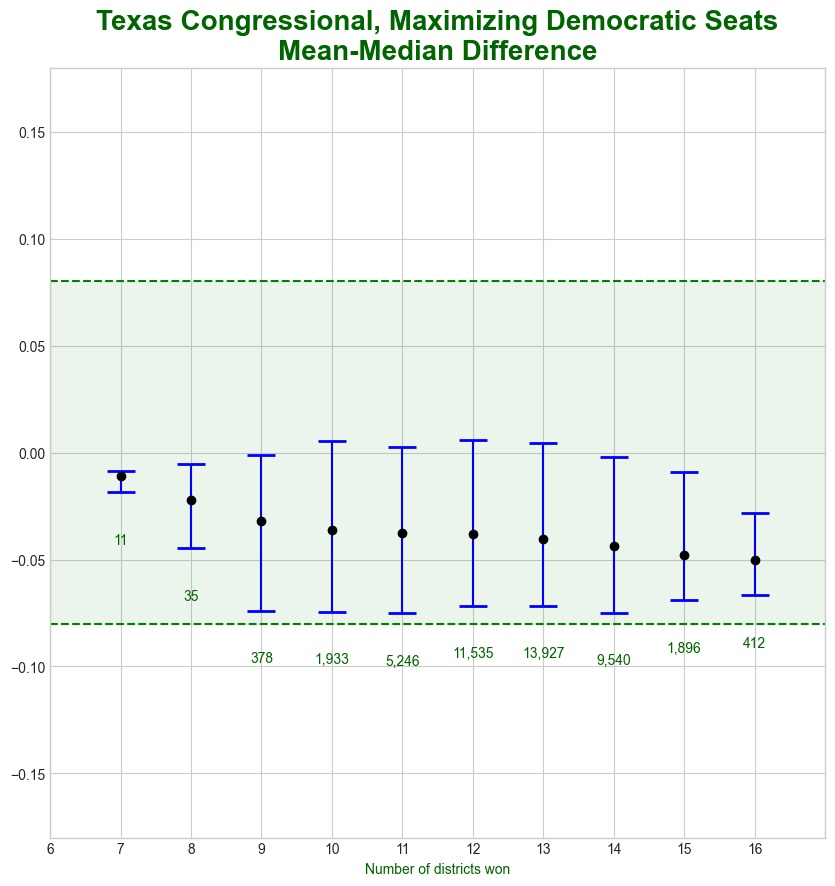}
\includegraphics[width=1.5in]{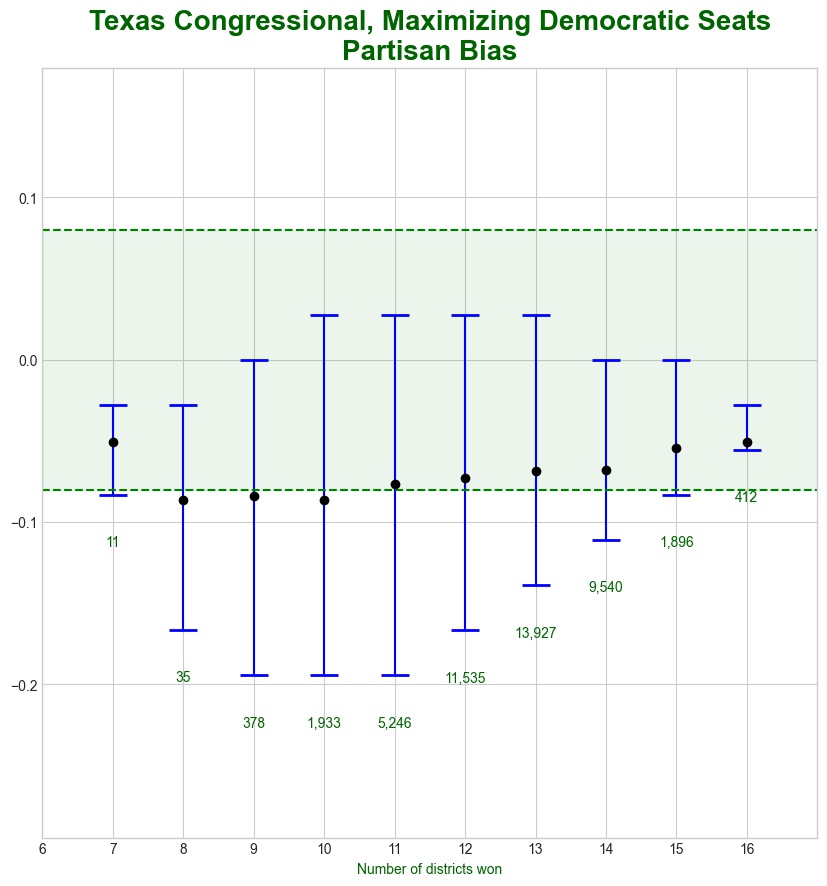}
\caption{Empirical results for Texas's Congressional map and 2014 US Senate election data, searching for maps with as many Democratic-won districts as possible.  Horizontal axis is number of districts won, vertical axis is metric value ranges.  The green region is from $0.16\inf(m)$ to $0.16\sup(m)$ for each metric $m$.  The small number below each metric value range is the number of maps produced that had the corresponding number of districts won.  The dot within each vertical bar is the mean value of that metric on all produced maps with the corresponding number of districts won.}
\label{fig:short_bursts_TXcongD}
\end{figure}

\begin{figure}[h]
\centering
\includegraphics[width=1.5in]{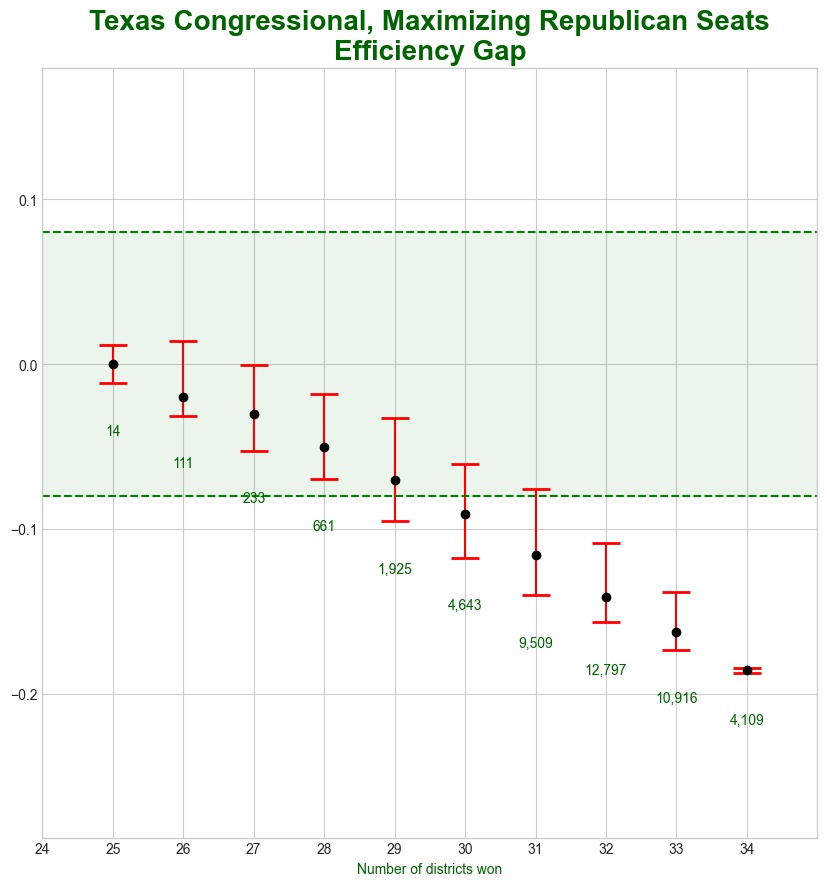}
\includegraphics[width=1.5in]{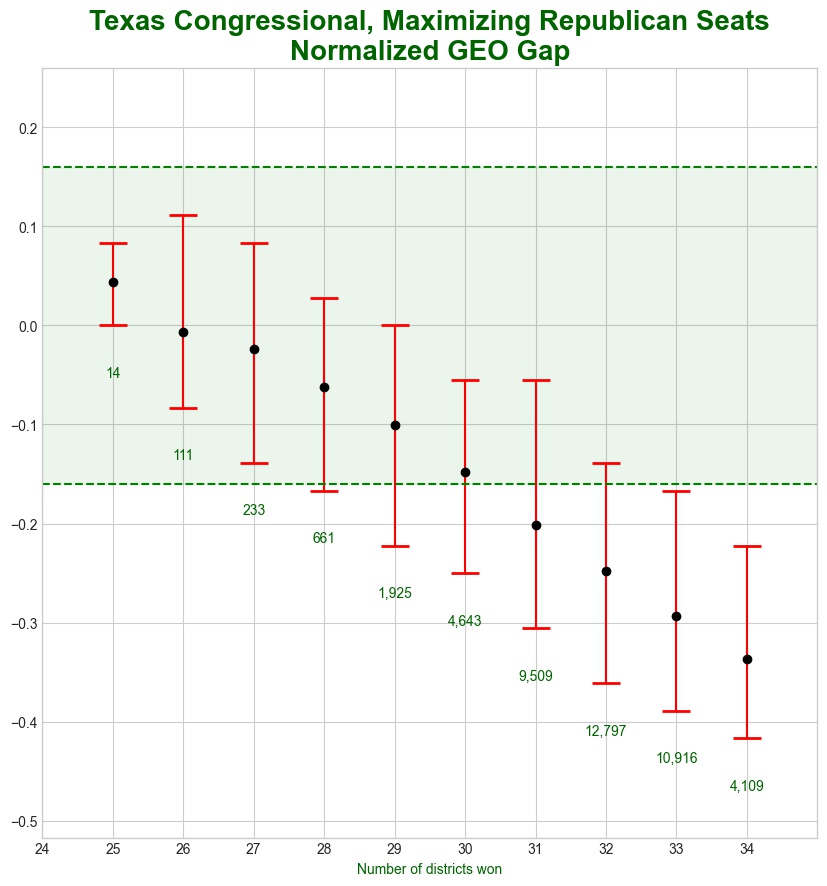}
\includegraphics[width=1.5in]{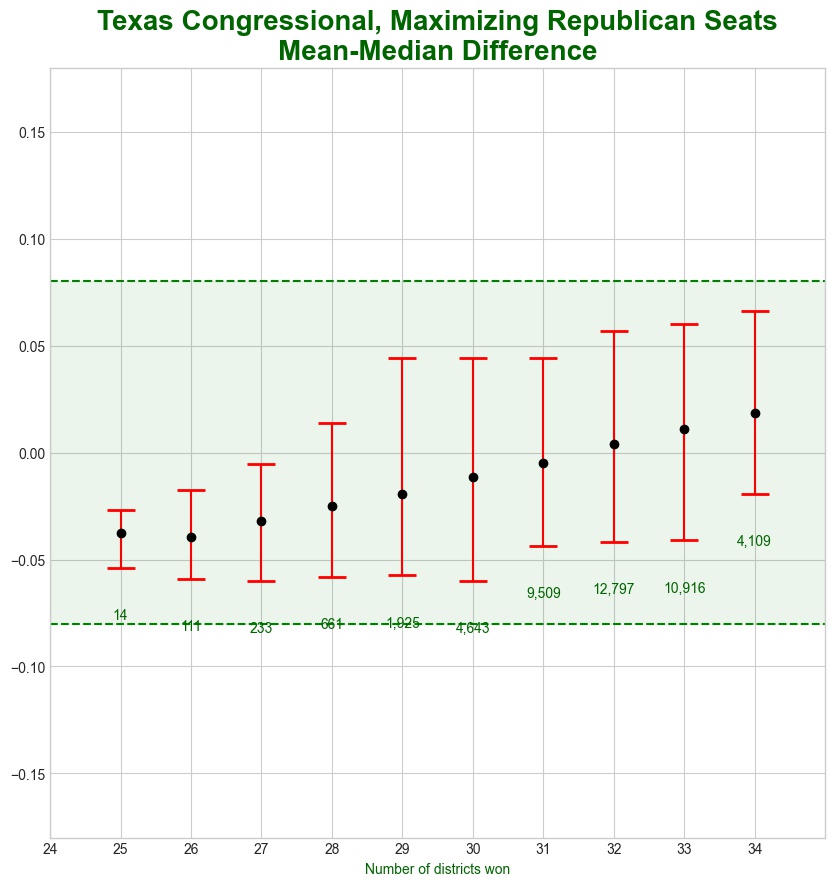}
\includegraphics[width=1.5in]{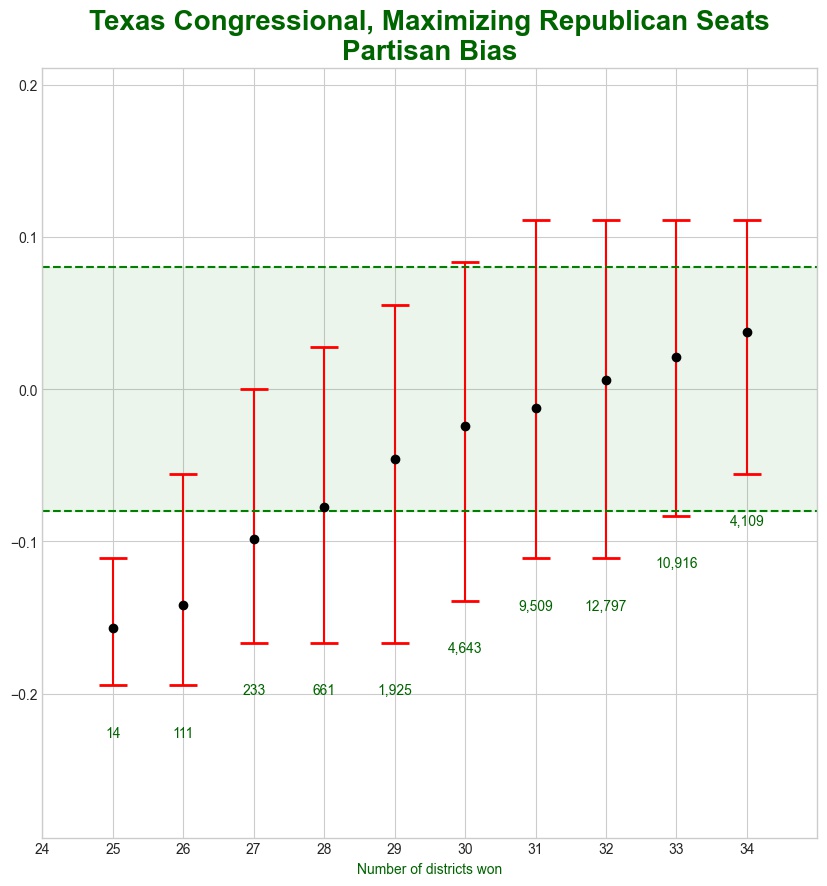}
\caption{Empirical results for Texas's Congressional map and 2014 US Senate election data, searching for maps with as many Republican-won districts as possible.  Horizontal axis is number of districts won, vertical axis is metric value ranges.  The green region is from $0.16\inf(m)$ to $0.16\sup(m)$ for each metric $m$.  The small number below each metric value range is the number of maps produced that had the corresponding number of districts won.  The dot within each vertical bar is the mean value of that metric on all produced maps with the corresponding number of districts won.}
\label{fig:short_bursts_TXcongR}
\end{figure}

\begin{figure}[h]
\centering
\includegraphics[width=1.5in]{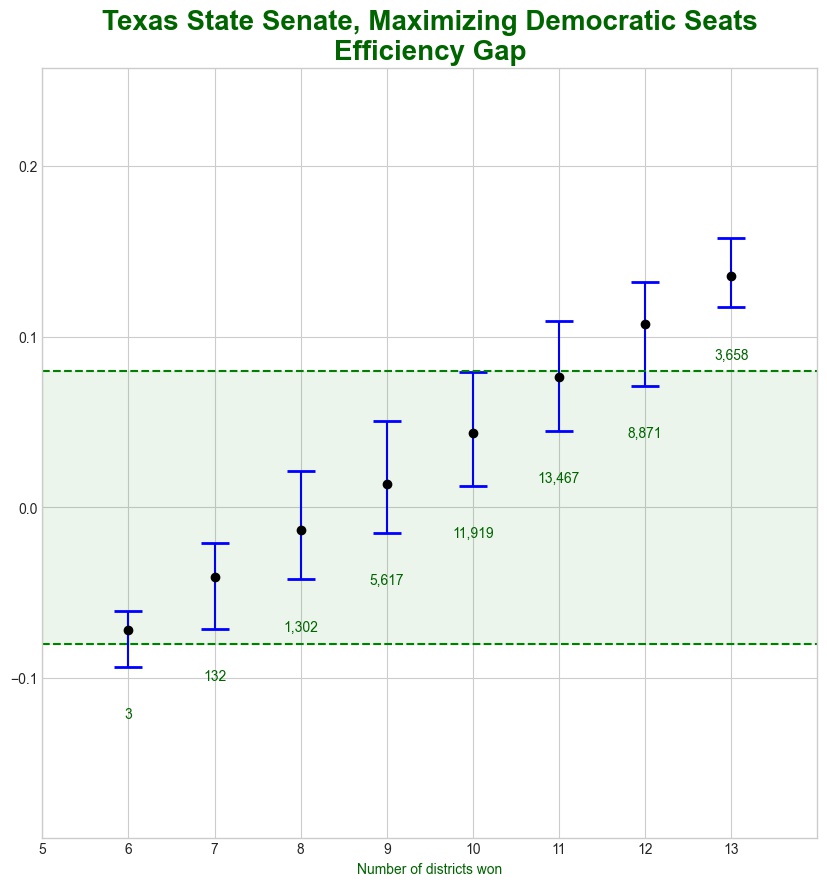}
\includegraphics[width=1.5in]{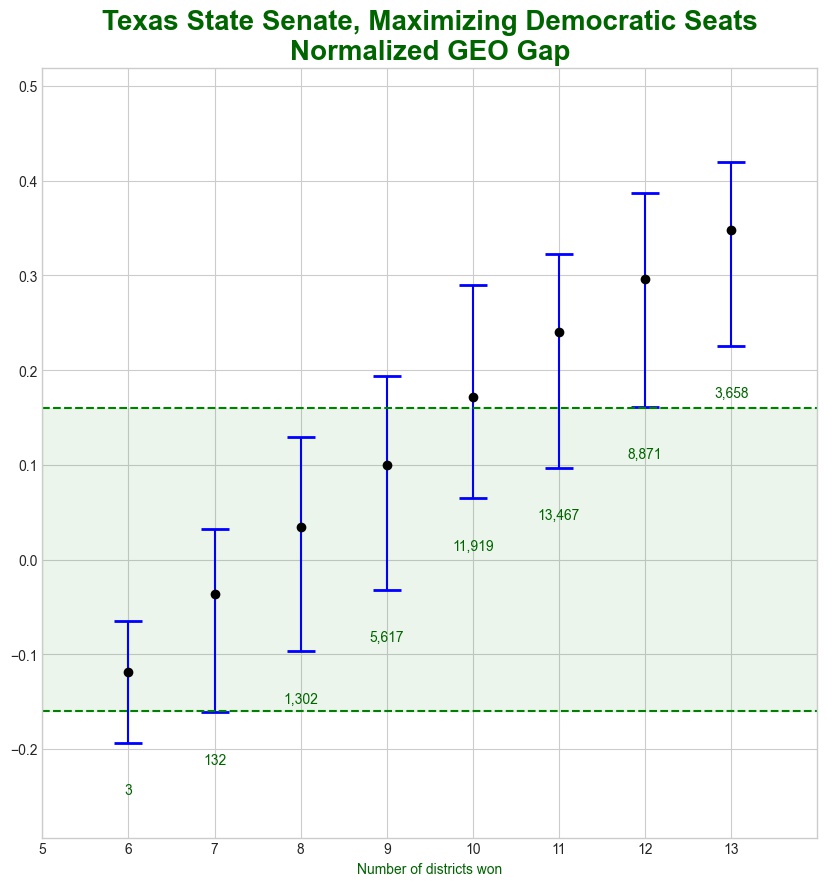}
\includegraphics[width=1.5in]{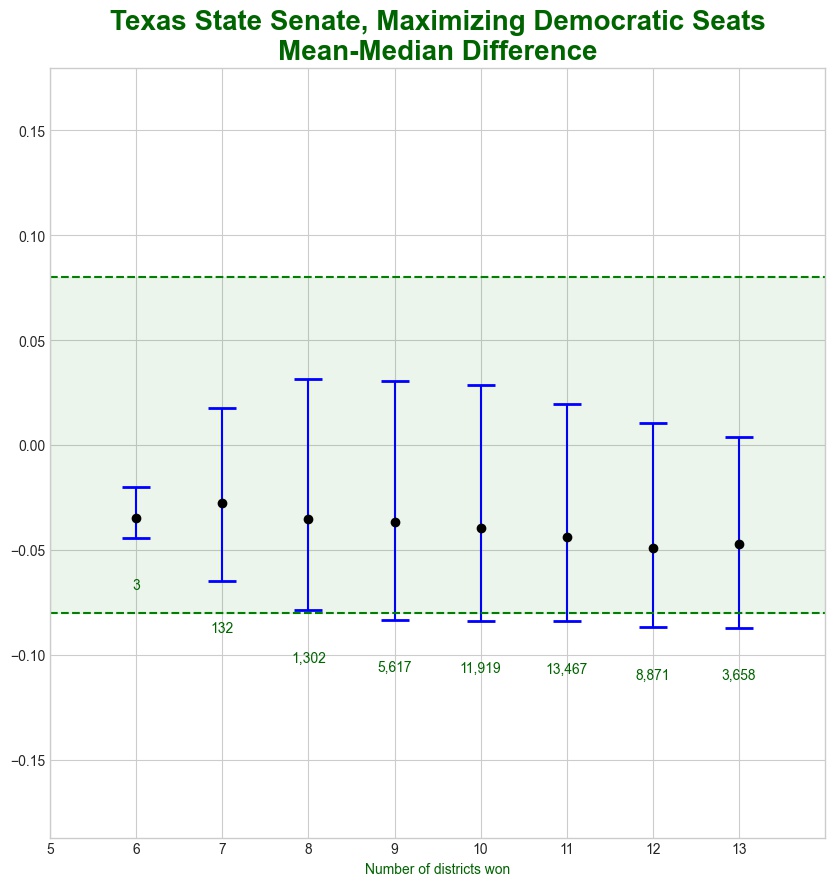}
\includegraphics[width=1.5in]{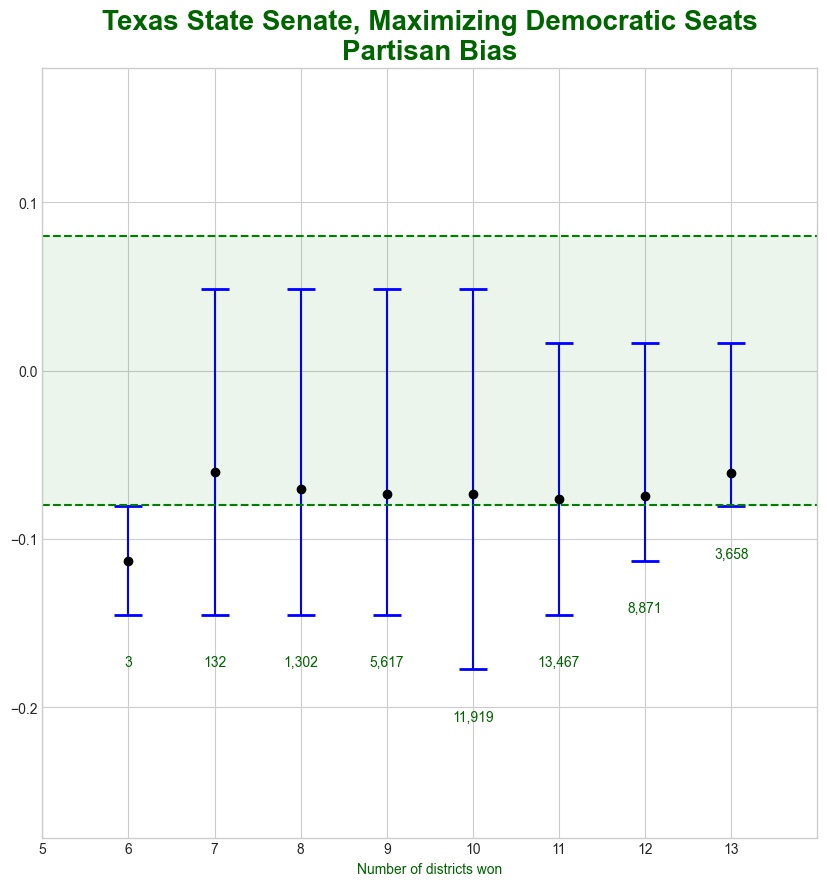}
\caption{Empirical results for Texas's State Senate map and 2014 US Senate election data, searching for maps with as many Democratic-won districts as possible.  Horizontal axis is number of districts won, vertical axis is metric value ranges.  The green region is from $0.16\inf(m)$ to $0.16\sup(m)$ for each metric $m$.  The small number below each metric value range is the number of maps produced that had the corresponding number of districts won.  The dot within each vertical bar is the mean value of that metric on all produced maps with the corresponding number of districts won.}
\label{fig:short_bursts_TXupperD}
\end{figure}

\begin{figure}[h]
\centering
\includegraphics[width=1.5in]{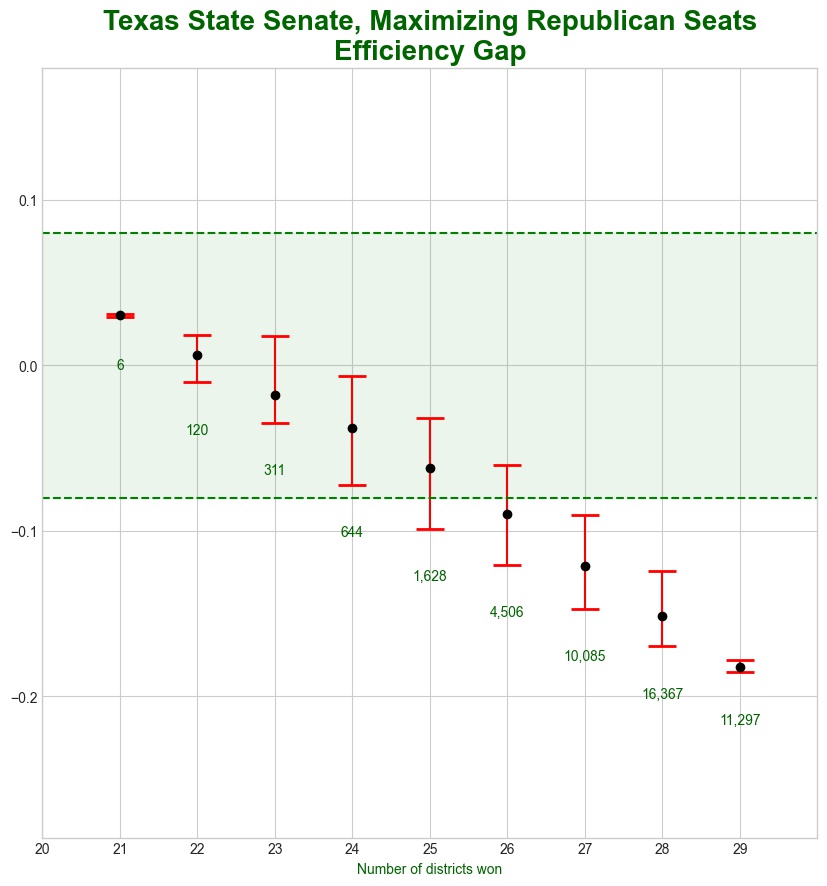}
\includegraphics[width=1.5in]{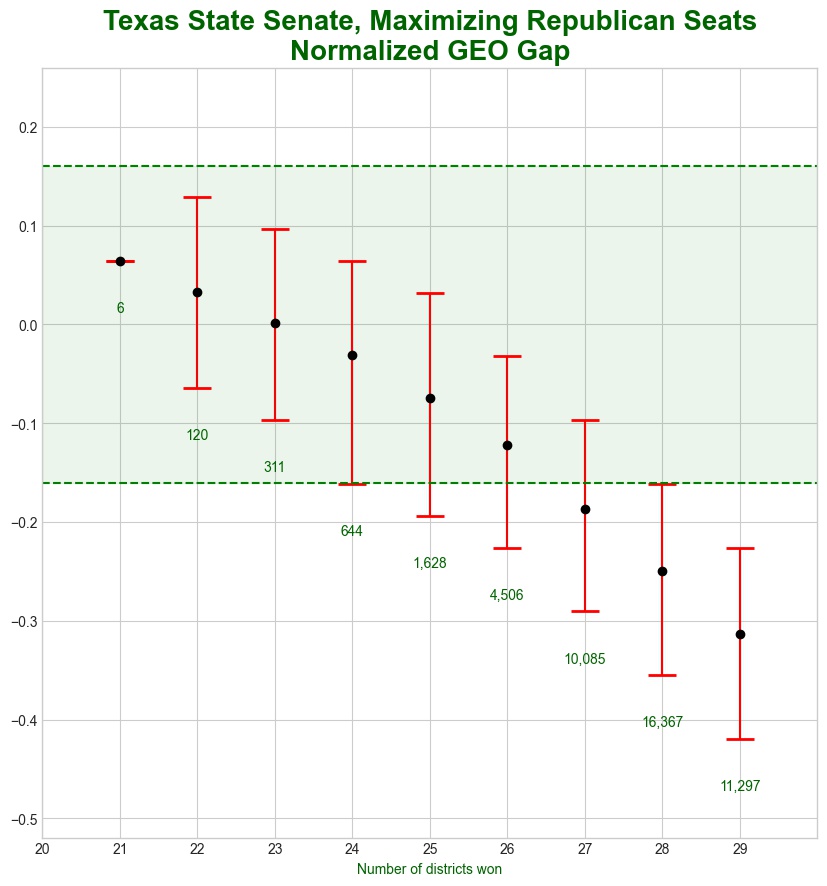}
\includegraphics[width=1.5in]{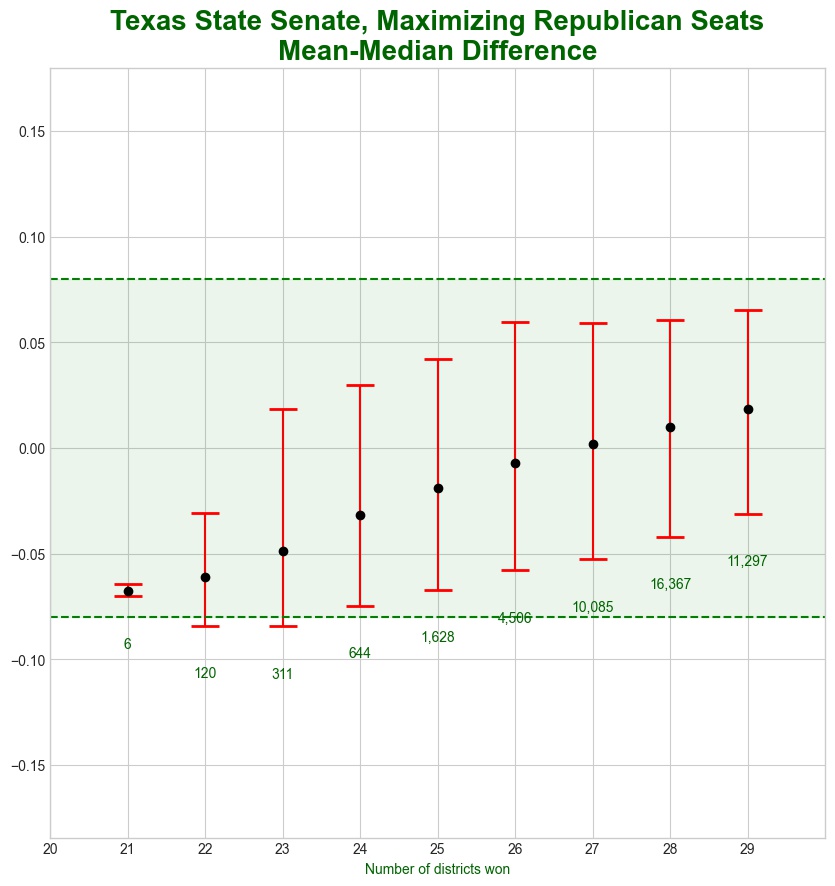}
\includegraphics[width=1.5in]{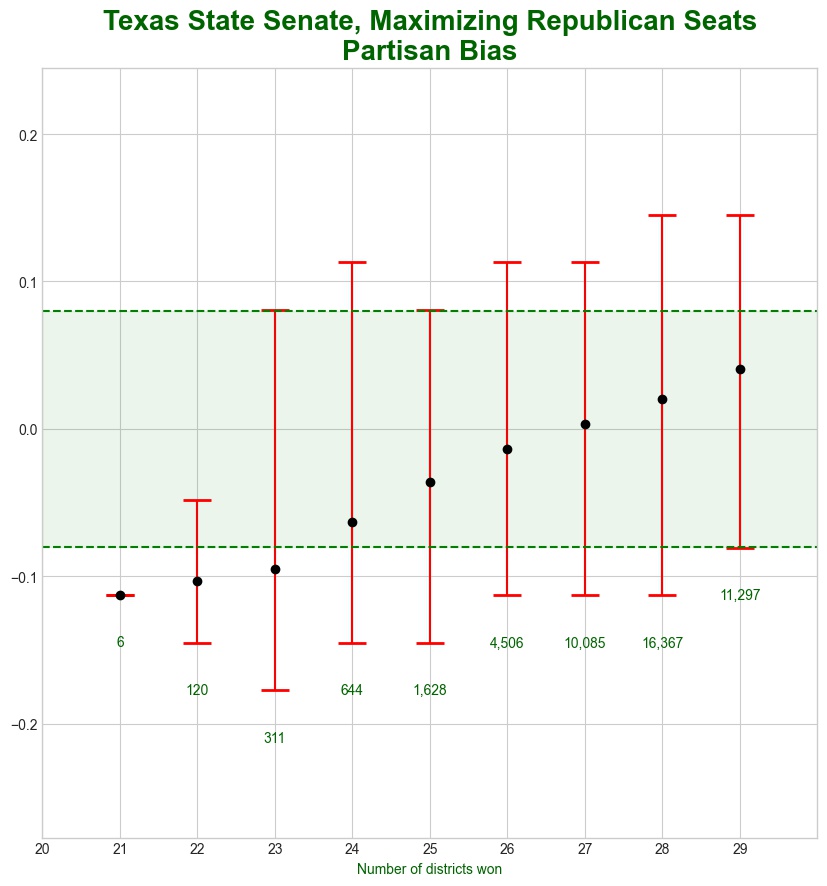}
\caption{Empirical results for Texas's State Senate map and 2014 US Senate election data, searching for maps with as many Republican-won districts as possible.  Horizontal axis is number of districts won, vertical axis is metric value ranges.  The green region is from $0.16\inf(m)$ to $0.16\sup(m)$ for each metric $m$.  The small number below each metric value range is the number of maps produced that had the corresponding number of districts won.  The dot within each vertical bar is the mean value of that metric on all produced maps with the corresponding number of districts won.}
\label{fig:short_bursts_TXupperR}
\end{figure}

\begin{figure}[h]
\centering
\includegraphics[width=1.5in]{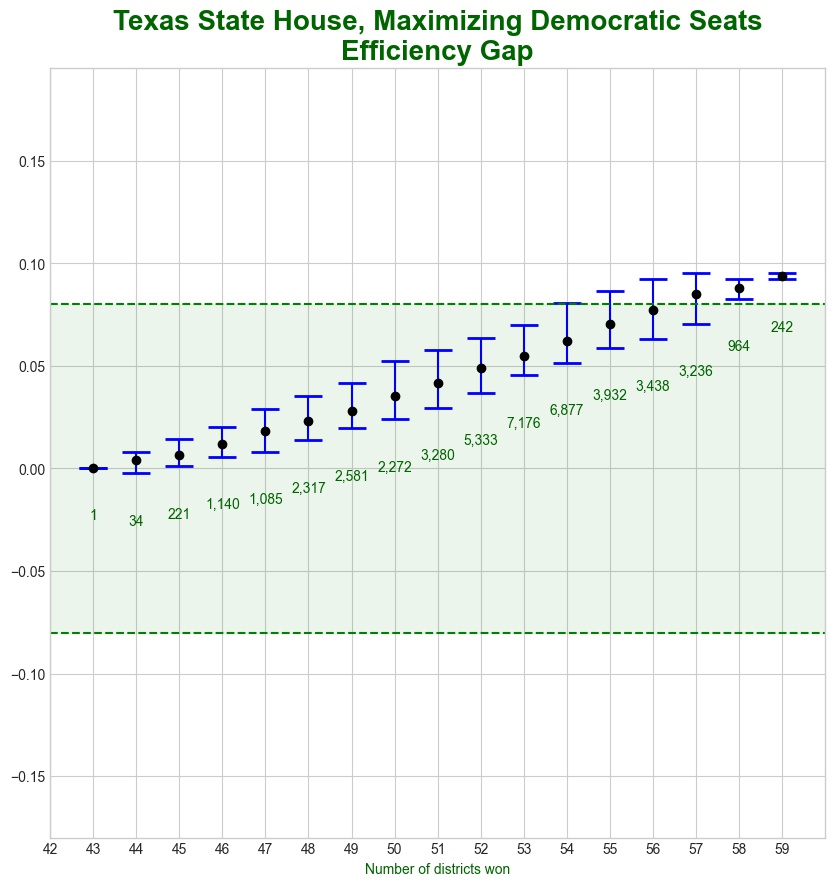}
\includegraphics[width=1.5in]{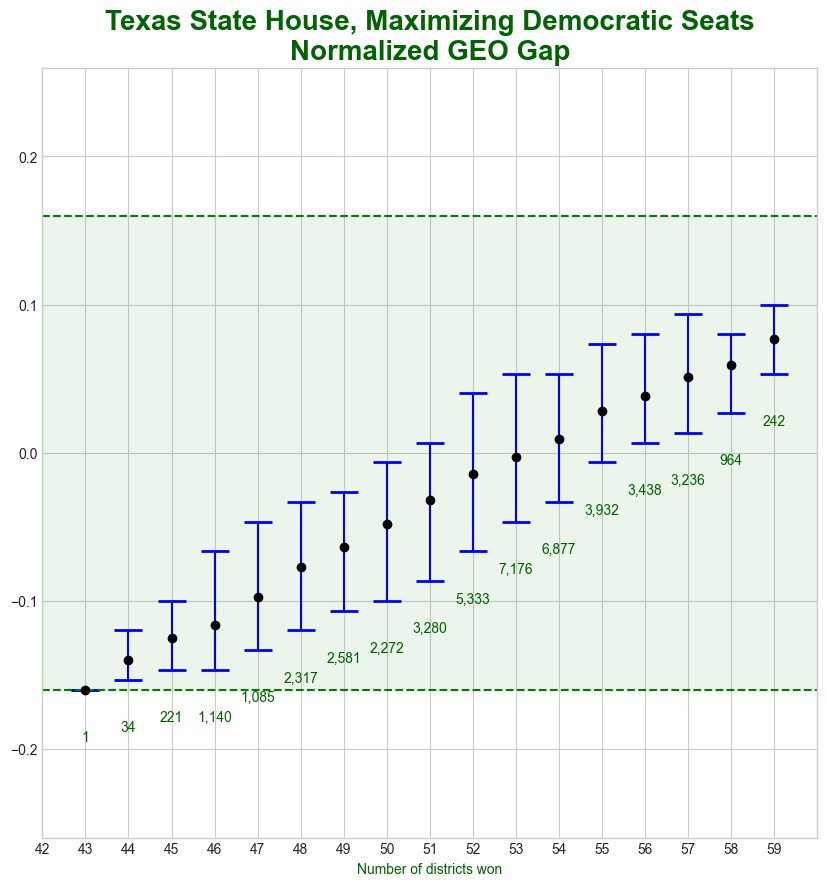}
\includegraphics[width=1.5in]{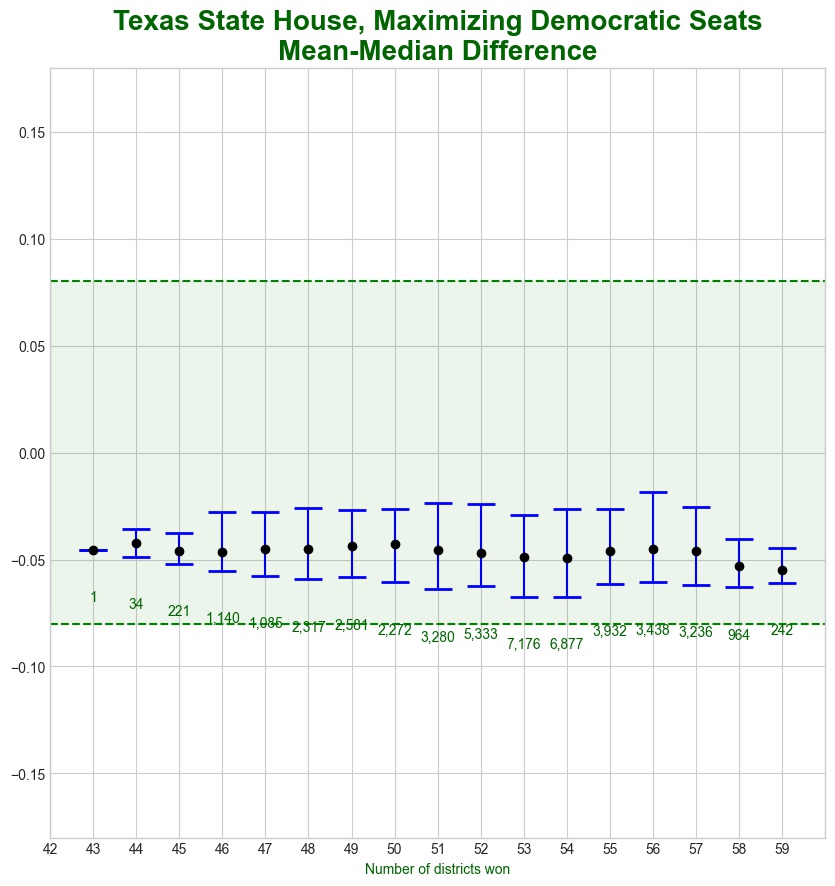}
\includegraphics[width=1.5in]{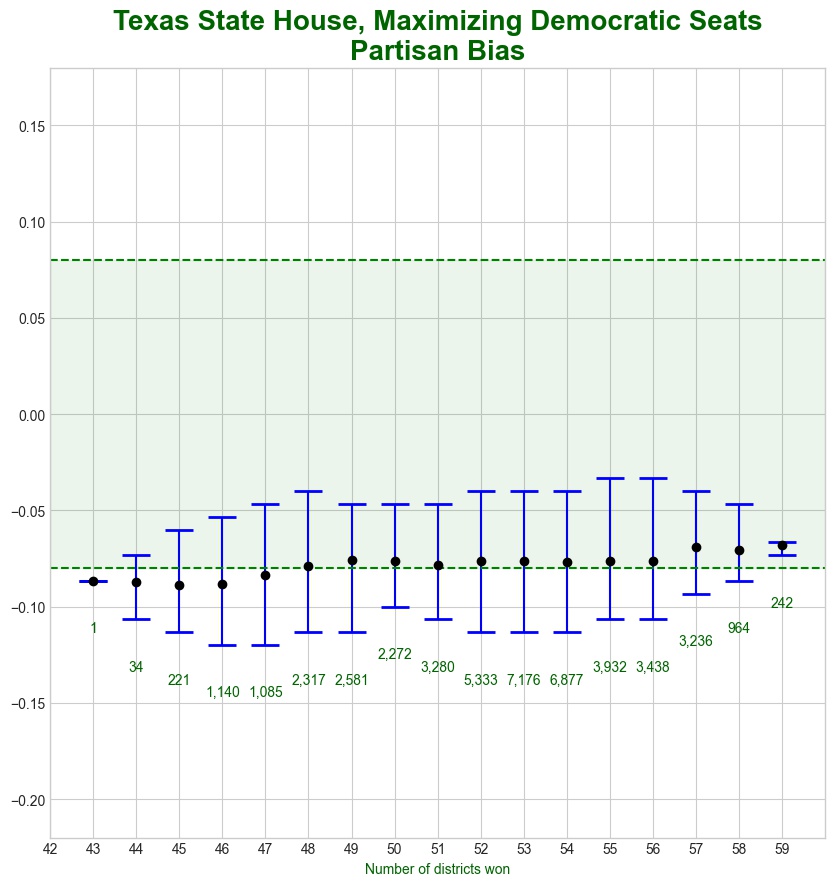}
\caption{Empirical results for Texas's State House map and 2014 US Senate election data, searching for maps with as many Democratic-won districts as possible.  Horizontal axis is number of districts won, vertical axis is metric value ranges.  The green region is from $0.16\inf(m)$ to $0.16\sup(m)$ for each metric $m$.  The small number below each metric value range is the number of maps produced that had the corresponding number of districts won.  The dot within each vertical bar is the mean value of that metric on all produced maps with the corresponding number of districts won.}
\label{fig:short_bursts_TXlowerD}
\end{figure}

\begin{figure}[h]
\centering
\includegraphics[width=1.5in]{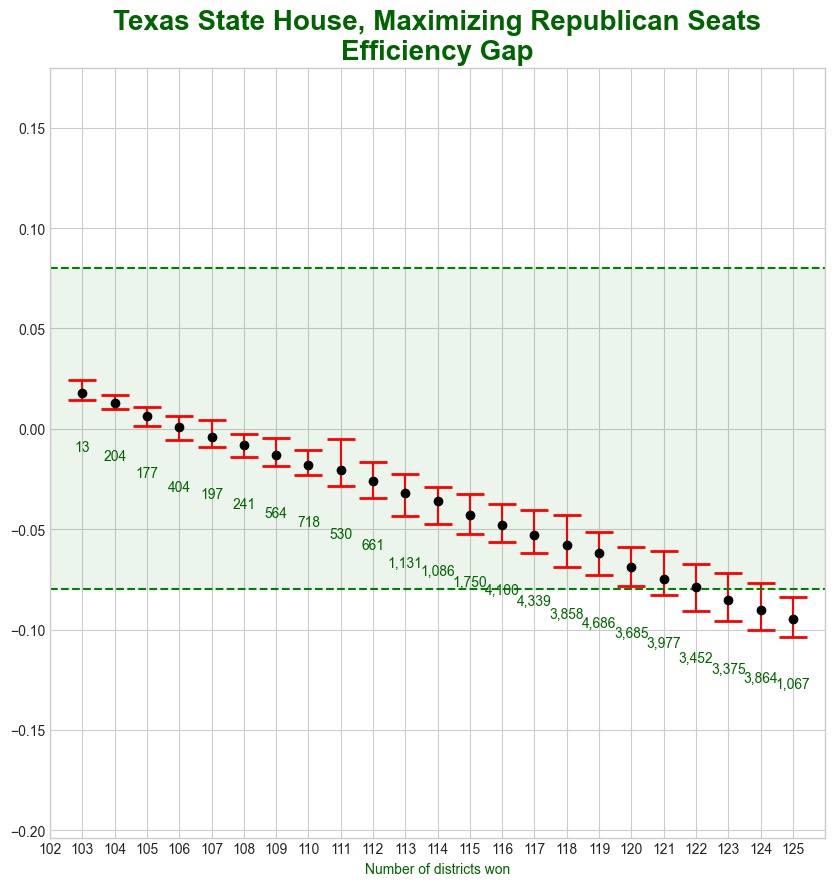}
\includegraphics[width=1.5in]{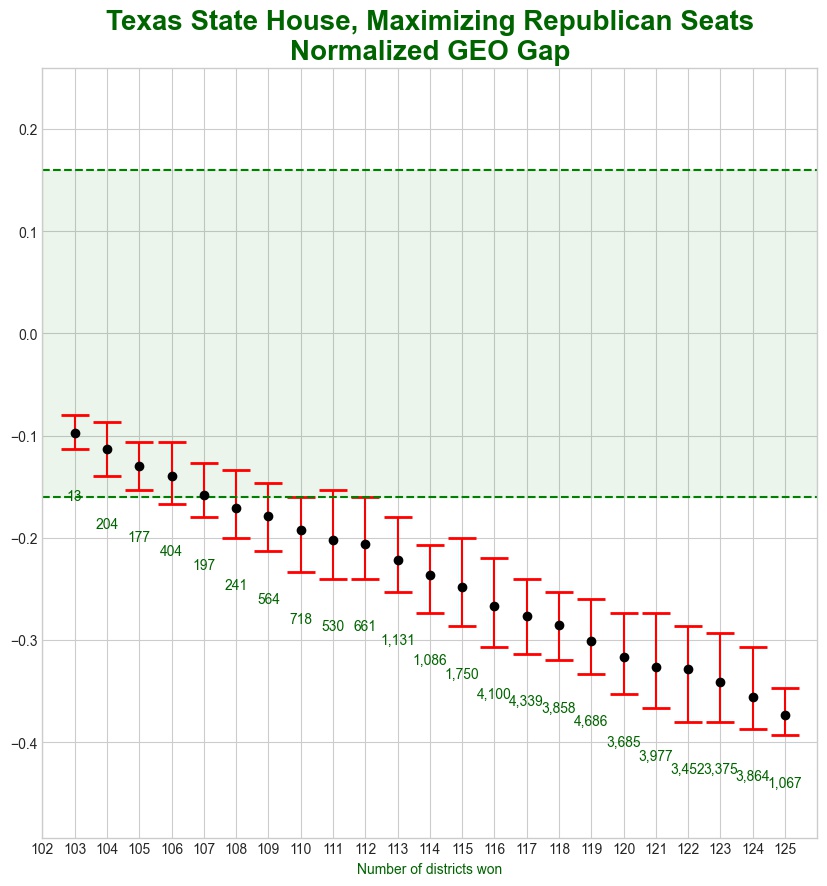}
\includegraphics[width=1.5in]{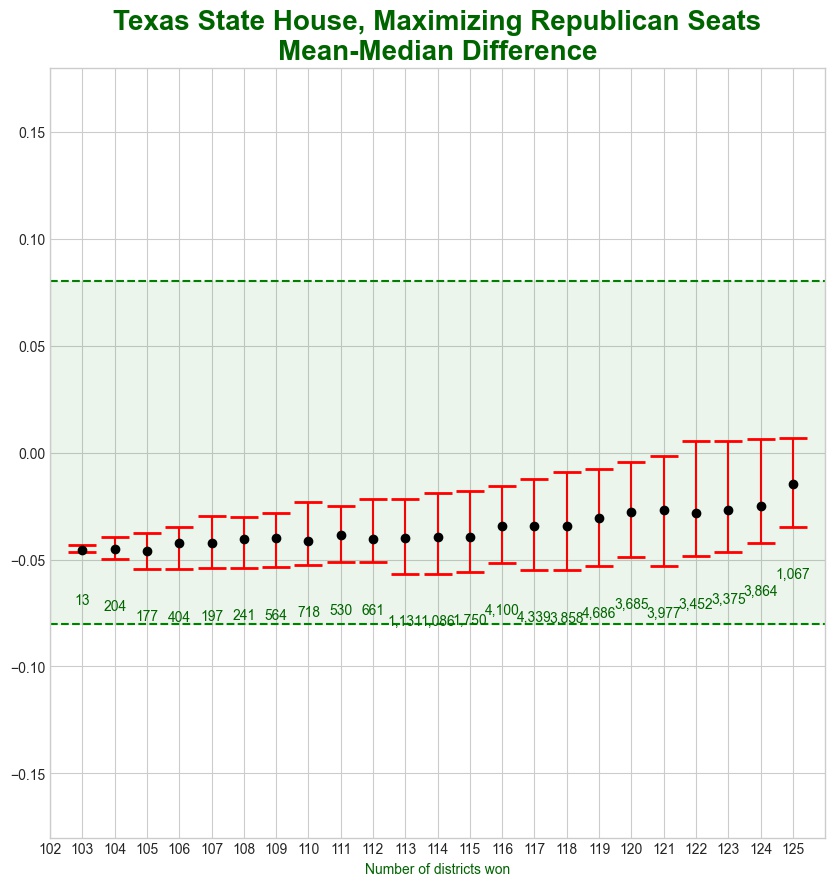}
\includegraphics[width=1.5in]{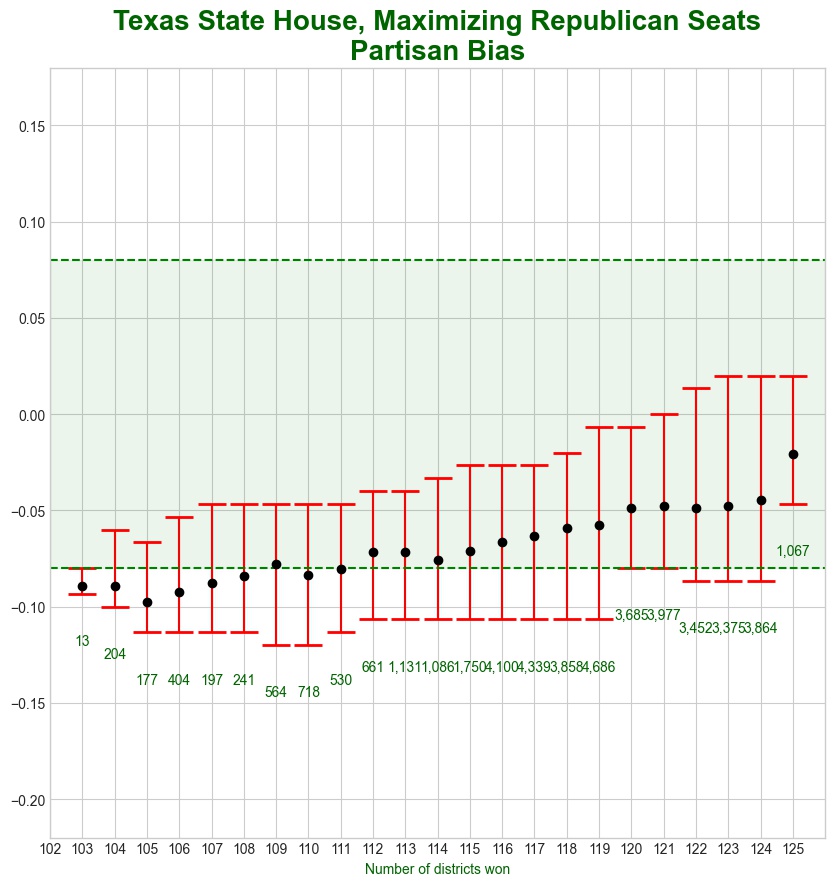}
\caption{Empirical results for Texas's State House map and 2014 US Senate election data, searching for maps with as many Republican-won districts as possible.  Horizontal axis is number of districts won, vertical axis is metric value ranges.  The green region is from $0.16\inf(m)$ to $0.16\sup(m)$ for each metric $m$.  The small number below each metric value range is the number of maps produced that had the corresponding number of districts won.  The dot within each vertical bar is the mean value of that metric on all produced maps with the corresponding number of districts won.}
\label{fig:short_bursts_TXlowerR}
\end{figure}

\end{document}